\newtheorem{remark}{Remark}
\newtheorem{theorem}{Theorem}
\newtheorem{lemma}{Lemma}
\newtheorem{definition}{Definition}
\newcommand{\single}{\textsc{Single}}
\newcommand{\ind}{\mathds{1}}
\newcommand{\nf}{\textsc{Non-Favorite}}
\newcommand{\core}{\textsc{Core}}
\newcommand{\tail}{\textsc{Tail}}
\newcommand{\rev}{\textsc{Rev}}
\newcommand{\copies}{{\textsc{Copies}}}
\newcommand{\rcopies}{\textsc{Ronen}^{\textsc{Copies}}}
\newcommand{\srev}{\textsc{SRev}}
\newcommand{\brev}{\textsc{BRev}}
\newcommand{\opt}{\textsc{OPT}}
\newcommand{\vT}{\boldsymbol{t}}
\newcommand{\supp}{\textsc{Supp}}
\newcommand{\rew}{\textsc{Reward}}
\newcommand{\median}{\textsc{Median}}
\newcommand{\Var}{\textsc{Var}}
\newcommand{\Cov}{\textsc{Cov}}
\def \EE  {{\mathcal{E}}}
\newcommand{\notshow}[1]{{}}
\newcommand{\todo}[1]{{\bf \color{red} TODO: #1}}
\newcommand{\argyrisnote}[1]{{\color{magenta}{#1}}}
\newenvironment{prevproof}[2]{\noindent {\em {Proof of {#1}~\ref{#2}:}}}{$\Box$\vskip \belowdisplayskip}
\DeclareMathOperator{\argmax}{argmax}
\DeclareMathOperator{\argmin}{argmin}
\DeclareMathOperator{\E}{\mathbb{E}}
\title{On Simple Mechanisms for Dependent Items}
\begin{document}

%%
%% The "title" command has an optional parameter,
%% allowing the author to define a "short title" to be used in page headers.
%\title{The Name of the Title is Hope}
%\title{On Simple Mechanisms for Dependent Items}
%%
%% The "author" command and its associated commands are used to define
%% the authors and their affiliations.
%% Of note is the shared affiliation of the first two authors, and the
%% "authornote" and "authornotemark" commands
%% used to denote shared contribution to the research.
\author{Yang Cai\footnote{Supported by a Sloan Foundation Research Fellowship and the NSF Award CCF-1942583 (CAREER).} \\Yale University, USA\\yang.cai@yale.edu
 \and Argyris Oikonomou\footnote{Supported by a Sloan Foundation Research Fellowship.}\\Yale University, USA\\ argyris.oikonomou@yale.edu 
}

\date{}

\notshow{
\author{Yang Cai}
\authornote{Supported by the NSF Award CCF-1942583 (CAREER) and a Sloan Foundation Research Fellowship.}
\email{yang.cai@yale.edu}
\affiliation{%
  \institution{Yale University}
  %\streetaddress{P.O. Box 1212}
  \city{New Haven}
  \state{CT}
  %\country{USA}
  %\postcode{43017-6221}
}
%\orcid{1234-5678-9012}
\author{Argyris Oikonomou}
\authornote{Supported by a Sloan Foundation Research Fellowship.}
\email{argyris.oikonomou@yale.edu}
\affiliation{%
  \institution{Yale University}
  %\streetaddress{P.O. Box 1212}
  \city{New Haven}
  \state{CT}
  %\country{USA}
  %\postcode{43017-6221}
}

%%
%% By default, the full list of authors will be used in the page
%% headers. Often, this list is too long, and will overlap
%% other information printed in the page headers. This command allows
%% the author to define a more concise list
%% of authors' names for this purpose.
%\renewcommand{\shortauthors}{Cai and Oikonomou}
}

%% A "teaser" image appears between the author and affiliation
%% information and the body of the document, and typically spans the
%% page.

%%
%% This command processes the author and affiliation and title
%% information and builds the first part of the formatted document.
\maketitle
%%
%% The abstract is a short summary of the work to be presented in the
%% article.
\begin{abstract}
We study the problem of selling $n$ heterogeneous items to a single buyer,
whose values for different items are \emph{dependent}. Under arbitrary dependence, Hart and Nisan~\cite{HARTN_2019} show that no simple mechanism can achieve a non-negligible fraction of the optimal revenue even with only two items. We consider the setting where the buyer's type is drawn from a correlated distribution that can be captured by a Markov Random Field, one of the  most prominent frameworks for modeling high-dimensional distributions with structure. 

If the buyer's valuation is additive or unit-demand, we extend the result to all MRFs and show that $\max\{\srev,\brev\}$ can achieve an $\Omega\left(\frac{1}{e^{O(\Delta)}}\right)$-fraction of the optimal revenue, where $\Delta$ is a parameter of the MRF that is determined by how much the value of an item can be influenced by the values of the other items. We further show that the exponential dependence on $\Delta$ is unavoidable for our approach and a polynomial dependence on $\Delta$ is unavoidable for any approach.
When the buyer has a XOS valuation, we show that $\max\{\srev,\brev\}$ achieves at least an $\Omega\left(\frac{1}{e^{O(\Delta)}+\frac{1}{\sqrt{n\gamma}}}\right)$-fraction of the optimal revenue, where $\gamma$ is the spectral gap of the Glauber dynamics of the MRF.  Note that %the values of $\Delta$ and $\frac{1}{n\gamma}$ increase as the dependence between items strengthens. 
in the special case of independently distributed items, $\Delta=0$ and $\frac{1}{n\gamma}\leq 1$, and our results recover the known constant factor approximations for a XOS buyer~\cite{RubinsteinW15}. We further extend our parametric approximation to several other well-studied dependency measures such as the \emph{Dobrushin coefficient}~\cite{Dobruschin68} and the \emph{inverse temperature}. In particular, we show that if the MRF is in the \emph{high temperature regime}, $\max\{\srev,\brev\}$ is still a constant factor approximation to the optimal revenue even for a XOS buyer.  Our results are based on the Duality-Framework by Cai et al. \cite{CaiDW16} and a new concentration inequality for XOS functions over  dependent random variables.
\end{abstract}

\thispagestyle{empty}
\addtocounter{page}{-1}
\newpage

\section{Introduction}\label{sec:intro}
The design of revenue-optimal auctions for selling multiple items is a central problem in Economics and Computer Science.  In the past decade, significant progress has been made, first in efficient computation of revenue-optimal auctions~\cite{ChawlaHK07,ChawlaHMS10,Alaei11,CaiD11b,AlaeiFHHM12,CaiDW12a,CaiDW12b,CaiH13,CaiDW13b,AlaeiFHH13,BhalgatGM13,DaskalakisDW15}, and then in the identification of {\em simple auctions} that achieve constant factor approximations to the optimal revenue~\cite{BabaioffILW14,Yao15,RubinsteinW15,CaiDW16,ChawlaM16, CaiZ17} under the {\em item-independence} assumptions.~\footnote{Intuitively, item-independence means that each bidder's value for each item is independently distributed, and this definition has been suitably generalized to set value functions such as submodular or subadditive functions~\cite{RubinsteinW15}.} Despite being theoretically appealing, item-independence is an unrealistic assumption in practice. In this paper, we go beyond the item-independence assumption and study simple and approximately optimal auctions for selling \emph{dependent items}. %We focus on the single buyer case throughout the paper. 

Unfortunately, strong negative results exist if we allow the items to be arbitrarily dependent~\cite{briest2015pricing, HARTN_2019}. For example, Hart and Nisan~\cite{HARTN_2019} show that the revenue of the best deterministic mechanism is unboundedly smaller than the revenue of the optimal randomized mechanism even when we are only selling two correlated items to a single buyer. Since all simple mechanisms in the literature are deterministic, the result also implies that no simple mechanism that has been considered so far can provide any guarantee to the revenue for even two correlated items.  Arguably, however, high-dimensional distributions that arise in practice are rarely arbitrary, as arbitrary high-dimensional distributions cannot be represented efficiently, and are known to require exponentially many samples to learn or even perform the most basic statistical tests on them; see e.g.~\cite{Daskalakis18} for a discussion. To overcome the curse of dimensionality, a major focus of Statistics and Machine Learning has been on identifying and exploiting the structural properties of high-dimensional distributions for succinct representation, efficient learning, and efficient statistical inference. There are several widely-studied frameworks to model the structure of dependence in high-dimensional distributions. In this work, we propose capturing the dependence between item values using one of the most prominent graphical models -- \emph{Markov Random Fields (MRFs)}. %(see~\cite{Jordan10} for the basics of MRFs). 
Note that MRFs are fully general and can be used to express arbitrary high-dimensional distributions. The main advantage of MRFs is that there are several natural complexity parameters that allow the user to tune the dependence structure in the distributions represented by MRFs from product measures all the way up to arbitrary distributions. \textbf{Our goal is to provide parametric approximation ratios of simple mechanisms that degrade gracefully with respect to these natural parameters.}

MRFs are formally defined in Definition~\ref{def:MRF}. Intuitively, a MRF can be thought of as a graph (or a hypergraph) where each node represents a random variable (or item value in our case). There is a potential function associated with each edge that captures the correlation between the two incident random variables. How does it represent a joint distribution? The probability for a particular realization or the random variables, or known as a configuration of the random field, is proportional to the exponential of the total potential of the configuration. MRF is a flexible model. For example, we can capture the degree of (positive or negative) correlations between two random variables by controlling the corresponding potential function. Here we provide a stylistic example to illustrate the suitability of MRFs for modeling buyers' joint value distributions. Imagine that we manage a car dealership. A potential buyer is hoping to purchase one car, i.e., has a unit-demand valuation. The dealership carries various brands and types of vehicles, and will like to find the optimal way to price each car. However, it would be na\"{i}ve to assume the buyer's value for each car is independently distributed. The example in Figure~\ref{fig:car} demonstrates how a MRF can better capture the customer's joint value distribution for different cars.

\begin{figure}[h]
  \centering
 \includegraphics[width=0.55\textwidth]{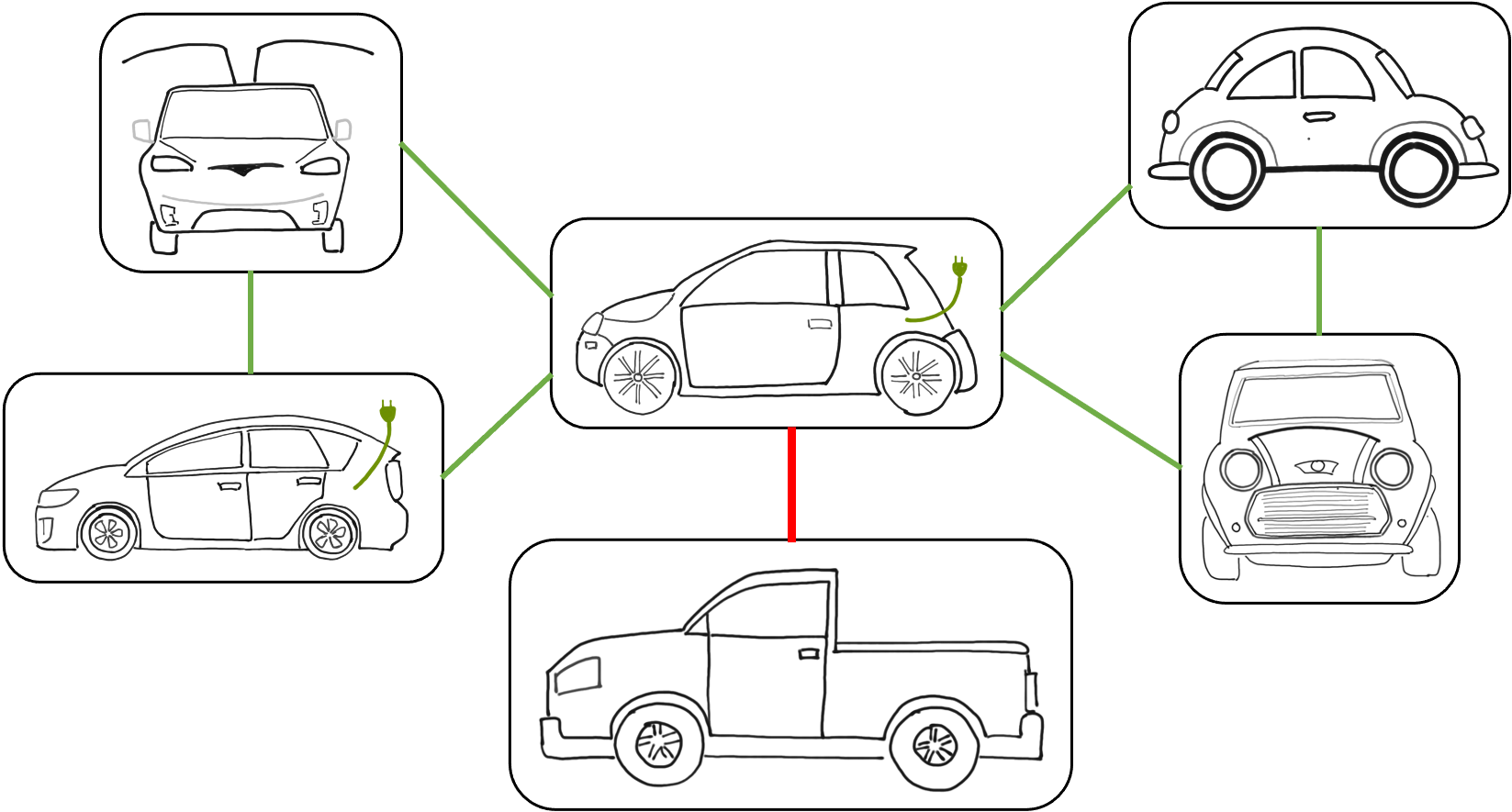}
  \caption{We draw a green edge (or a red edge) between two cars if their values are positively correlated (or negative correlated). The car in the center is an electric coupe with a retro design. Its value is positively correlated with the two electric cars on its left and the two small coupes on its right, but is negatively correlated with the pick-up truck.} \label{fig:car}
\end{figure}

\notshow{
\argyrisnote{MRF can model cases where items (or group of times) are "compatible" or "incompatible" with each other.
For example, we can think that we manage a car dealership where we have one sport car and one luxury car from two different brands (brand A and brand B).
We want to find the optimal pricing for each car, when each buyer is willing to buy only one car (unit-demand valuation).
It would be naive to assume that the buyer values each car independently.
A more realistic model,
would be to assume that the buyer is loyal to one of the brands,
or that the buyer want to buy a sport car.
This model of a correlated buyer can be modeled using a MRF as depicted in Figure~\ref{fig:car_example}.
In Figure~\ref{fig:car_example},
note that each item is represented by a node and each edge encodes the correlation between the values of the items in its endpoints.
For example the edge between the two cars of brand A is a result of the case where the buyer prefer brand A and the edge between the sport cars encodes the case where the buyer is mostly interested in buying a sport car.
}

\begin{figure}
  \centering
 \includegraphics[width=0.5\textwidth]{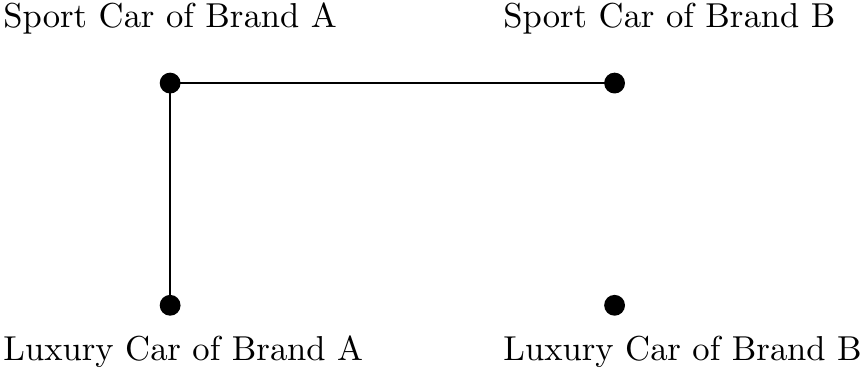}
  \caption{A MRF that captures the value of a buyer for a set of cars, when her value for each car is correlated.}
  \label{fig:car_example}
\end{figure}
}
%Unfortunately, it is impossible to approximate the optimal revenue using simple mechanisms under general item dependence. Hart and Nisan~\cite{HARTN_2019} show that , even with two items and one buyer, no simple auction can achieve a non-negligible fraction of the optimal revenue. ~\footnote{Hart and Nisan~\cite{HARTN_2019} construct a family of buyer value distributions so that, for any number $C$, there exists a distribution such that the revenue of any deterministic mechanism is no more than $\frac{1}{C}$-fraction of the optimal revenue.} 

%We mainly consider the following two parameters of an MRF. They are closely related to several other well-known  measure for the degree of dependence. We discuss their relationship in Section~\ref{sec:other measure}.

\subsection{Main Results and Techniques}
We focus on the single buyer case and allow the buyer's valuation to be as general as a XOS function  (a.k.a. a fractionally subadditive function).~\footnote{The class of XOS functions is a super-class of submodular functions, and is contained in the class of subadditive functions. } We consider the two most extensively studied forms of simple mechanisms: \emph{selling the items separately} and \emph{selling the grand bundle}. We use $\srev$ and $\brev$ to denote the optimal revenue obtainable by these two types of mechanisms respectively. In a sequence of papers, it was shown  that $\max\{\srev, \brev\}$ is a constant factor approximation to the optimal revenue for a single additive or unit-demand buyer under the item-independence assumption~\cite{ChawlaHK07,BabaioffILW14,CaiDW16}.~\footnote{More specifically, $\srev$ denotes the optimal expected revenue achievable by any posted price mechanism. When the buyer has a unit-demand valuation,~\cite{ChawlaHK07, CaiDW16} show that $\srev$ is already a constant factor approximation of the optimal revenue.} Our first main result extends the above approximation to any MRFs. The approximation ratio degrades with the \emph{maximum weighted degree $\Delta$} that captures the degree of dependence among the item values.

\paragraph{Parameter I: Maximum Weighted Degree $\Delta$.} The formal definition can be found in Definition~\ref{def:weighted degree}. As we mentioned, a MRF can be thought of as a graph (or a hypergraph) where each node represents a random variable. The weight of an edge is related to the maximum absolute value the corresponding potential function can take and represents the ``strength'' of the dependence between two incident random variables. The weighted degree of a random variable is simply the sum of weights from all incident edges. %The maximum weighted degree of an MRF is the maximum weighted degree among all the random variables in the MRF.
If the maximum weighted degree $\Delta$ of a MRF is small, then no random variable can depend strongly on many other random variables. Note that $\Delta=0$ when the random variables are independent, and the instance constructed by Hart and Nisan~\cite{HARTN_2019} corresponds to a MRF~with~$\Delta=\infty$.

\medskip \hspace{.5cm}\begin{minipage}{0.93\textwidth}
\begin{enumerate}
\item[{\bf Result I:}] For a single additive or unit-demand buyer whose type is generated by a MRF with maximum weighted degree $\Delta$, $\max\{\srev, \brev\}=\Omega\left(\frac{\opt}{\exp(O(\Delta))}\right)$, where $\opt$ is the optimal revenue.
\end{enumerate}
\end{minipage}

\medskip The formal statement of the result is in Theorem~\ref{thm:MRF unit-demand} and~\ref{thm:MRF additive}.  We further show that the dependence on $\Delta$ is necessary. For any sufficiently large number $C$, there exists a MRF with $\Delta= O(C)$ such that $\max\{\srev, \brev\}$ is no more than $\frac{\opt}{C^{1/7}}$ (Theorem~\ref{thm:poly_dependence_Delta}) using a modification of the Hart-Nisan construction~\cite{HARTN_2019}. Although there is still an exponential gap between our upper and lower bounds, it shows that whenever Result I fails to provide a constant factor approximation (independent of the number of items), no constant factor approximation is possible without further restrictions on the dependency. We leave it as an open question to close the gap between our upper and lower bounds. The main tool we use is a generalization of the prophet inequality to the case where the rewards are sampled from a MRF (Lemma~\ref{lemma:MRF_prophet}). The overall analysis is similar to the one used by Cai et al.~\cite{CaiDW16} for the item-independent case. We show that the exponential dependence on $\Delta$ is unavoidable for this type of analysis in Theorem~\ref{thm:COPIES_example}. More specifically, a key step of the analysis involves approximating the optimal revenue in a single-dimensional setting, known as the copies setting, using $\srev$. Theorem~\ref{thm:COPIES_example} constructs an instance such that the optimal revenue in the copies setting is at least $\exp(\Delta)$ times larger than $\max\{\srev,\brev\}$.

Rubinstein and Weinberg~\cite{RubinsteinW15} show that, under the item-independence assumption, $\max\{\srev,\brev\}$ is still a constant factor approximation to the optimal revenue for a buyer whose valuation is a subadditive function. Our second main result extends their result to any MRFs when the buyer's valuation is a XOS function. The approximation ratio depends on $\Delta$ and the \emph{spectral gap of the Glauber Dynamics $\gamma$}.

\vspace{-.1in}
\paragraph{Parameter II: Spectral Gap of the Glauber Dynamics $\gamma$.} A common way to generate a sample from a high-dimensional distribution is via a Markov Chain Monte Carlo method known as the Glauber dynamics (see Definition~\ref{def:Glauber dynamics}). The spectral gap $\gamma$ of the Glauber dynamics is the difference between the largest eigenvalue $\lambda_1 =1$ and the second largest eigenvalue $\lambda_2$ of the transition matrix of the Glauber dynamics. It is well-known that $\lambda_2$ is strictly less than $1$ for any MRFs~\cite{LPW09}, so $\gamma$ is always strictly positive.

%Our last main result provides a generalization of Result I when the spectral radius $\rho$ is less than $1$.

\medskip \hspace{.5cm}\begin{minipage}{0.93\textwidth}
\begin{enumerate}
\item[{\bf Result II:}] For a single XOS buyer whose type is generated by a MRF, $\max\{\srev, \brev\}=\Omega\left(\frac{\opt}{\exp({O(\Delta)})+\frac{1}{\sqrt{n\gamma}}}\right)$, where $n$ is the number of items, $\gamma$ is the spectral gap of the Glauber Dynamics, and $\Delta$ is the maximum weighted degree.~\footnote{Although the approximation ratio depends on $n$, the ratio indeed improves if we increase $n$ and fix $\gamma$~.} 
\end{enumerate}
\end{minipage}

\medskip 
%\todo{Shall we state the result as $\max\{\srev, \brev\}=\Omega\left(\frac{\opt}{\exp({O(\Delta)})+\min\left\{\frac{1}{\sqrt{n\gamma}},\frac{1}{\sqrt{1-\rho_d}}\right\}}\right)$. It's known that even for constant $\Delta$, $\frac{1}{\sqrt{n\gamma}}$ and $\frac{1}{\sqrt{1-\rho_d}}$ can be exponential in $n$.}
Some remarks are in order. First, our approximation ratio holds for any MRFs. Second, for any $n$-dimensional random vector $X=(x_1,\ldots, x_n)$, the $X_i$'s are considered \emph{weakly dependent} if the spectral gap $\gamma=\Omega(\frac{1}{n})$. For example, when the $x_i$'s are independent, $\gamma \geq \frac{1}{n}$. Finally, the condition $\gamma \geq \Omega(\frac{1}{n})$ is extensively studied in probability theory. The condition is satisfied under the  Dobrushin uniqueness condition (see Section~\ref{sec:Dobrushin} for details), a well-known sufficient condition that ensures weak dependency; it implies rapid mixing of the Glauber dynamics (i.e., they mix in time $O(n\log n)$); it also guarantees that polynomial functions concentrate in Ising models~\cite{Gheissari18,daskalakis2019testing}.  %the Dobrushin interdependence matrix captures the influence from one variable to another variable. The spectral radius $\rho$ of the this matrix is its dominating eigenvalue, which is at most as large as the Dobrushin coefficient. The spectral radius $\rho<1$ is implied by the Dobrushin uniqueness condition,~\footnote{See~\cite{Hayes06} for MRFs that satisfy the condition $\rho<1$ but not the Dobrushin uniqueness condition.} and is also a natural condition that implies rapid mixing of the Glauber dynamics on the MRF (i.e., they mix in time $O(n\log n)$, see~\cite{Hayes06} and Lemma~\ref{lemma:rapid mixing}) and concentration of measure~\cite{wu2006}.

The formal statement of Result II can be found in Theorem~\ref{thm:XOS}. %As shown in Figure~\ref{fig:parameters}, Result III allows us to cover a much broader set of MRFs compared to only the high parameter regime. Indeed, Result I is a direct corollary of Result III. 
The analysis follows the same general framework by Cai and Zhao~\cite{CaiZ17}. The major new challenge is to prove that any XOS function $g(X)$ concentrates, when $X$ is a drawn from a high-dimensional distribution $D$.  Proving concentration inequalities for non-linear functions over dependent random variables is a non-trivial task that lies at the heart of many high-dimensional statistical problems. We prove a parametric concentration inequality for XOS functions that depends on the spectral gap of the Glauber dynamics for $D$ (Lemma~\ref{lem:self_bounding plus Poincare}). The proof is based on a combination of the Poincar\'e inequality and a special property of XOS functions -- the self-boundingness. We believe this concentration inequality may be of independent interest. %~\footnote{In Lemma~\ref{lem:self_bounding plus Poincare}, we bound the variance of the XOS function with its expectation. A concentration inequality follows from the combination of this upper bound on the variance and the Paley-Zygmund inequality, as we show in Lemma~\ref{lemma:XOS_core}. } 
An interesting question is whether the approximation ratio needs to depend on both $\Delta$ and $\gamma$. We show that the dependence on $\Delta$ is crucial, as no approximation can be obtained with only restriction on the spectral gap even for a single additive or unit-demand buyer (Theorem~\ref{lem:inaprx_dob}).~\footnote{Indeed, we prove an even stronger result that shows no finite approximation ratio is possible under only the Dobrushin uniqueness condition, which implies that $\gamma=\Omega(\frac{1}{n})$ (Lemma~\ref{lem:Dobrushin implies large spectral gap}).} We do not know whether it is possible to obtain an approximation that only depends on $\Delta$ for a XOS buyer and leave it as an open question.~\footnote{A na\"{i}ve approach is to directly bound $\gamma$ using a function of $\Delta$. However, this approach can at best provide an approximation ratio that is exponential in $n$, as $\frac{1}{\gamma}$ could be exponential in $n$ even when $\Delta$ is upper bounded by some absolute constant~\cite{randall2006slow}.} We suspect such an improvement requires proving a parametric concentration inequality for XOS functions that only depend on the maximum weighted degree $\Delta$, which we believe will have further applications.

\paragraph{Our Results under Other Weak Dependence Conditions.}\label{sec:other measure}
There are several alternative ways to parametrize the degree of dependency in a high-dimensional distribution. We focus on two prominent ones -- the \emph{Dobrushin coefficient} and the \emph{inverse temperature of a MRF}, and discuss how our approximation results change under these conditions. We first consider the Dobrushin coefficient and its relaxations. An important concept is the influence matrix. 

\paragraph{Influence Matrix and the Dobrushin Condition} For any $n$-dimensional random vector $\mathbf{X}=(X_1,\ldots, X_n)$, we define the influence of variable $j$ on variable $i$ as $$\alpha_{i,j}:=\sup_{\substack{x_{-i-j} \\x_j\neq x'_j}} d_{TV}\left(F_{X_i\mid X_j=x_j, X_{-i-j}=x_{-i-j}}, F_{X_i\mid X_j=x'_j, X_{-i-j}= x_{-i-j}}\right),~\footnote{$d_{TV}(\cdot,\cdot)$ denotes the total variation distance between two distributions, hence $\alpha_{i,j}$ measures the maximum total variation distance we can have between two conditional distributions of variable $i$ that only differ on the value of variable $j$.}$$ where $F_{X_i\mid X_{-i}=x_{-i}}$ denotes the conditional distribution of $X_i$ given $X_{-i}=x_{-i}$.
Let $\alpha_{i,i}:=0$ for each $i$. We define the influence matrix $A:=\left(\alpha_{i,j}\right)_{i,j\in [n]}$. When the $X_i$'s are weakly dependent, the entries of $A$ should have small values. The \emph{Dobrushin Coefficient}, defined as $||A||_\infty=\max_{i\in [n]} \sum_{j\in [n]} \alpha_{i,j}$, was  originally introduced by Dobrushin~\cite{Dobruschin68} in the study of Gibbs measures. The Dobrushin coefficient less than $1$ is known as the \emph{Dobrushin uniqueness condition}, under which the Gibbs distribution has a unique equilibrium, hence the name. The condition can be viewed as a sufficient condition that guarantees weak dependence and has been extensively studied in statistical physics and probability literature (see e.g.~\cite{DobrushinS87,StroockZ92}). As the spectral radius of any matrix is no more than its $L_\infty$ norm, a relaxation of the Dobrushin uniqueness condition is to restrict the spectral radius $\rho$ of $A$ to be less than $1$. We show that $n\gamma\geq 1- \rho$ (Lemma~\ref{lem:Dobrushin implies large spectral gap}), so we can replace the dependence on $n\gamma$ with $1-\rho$ in Result II when the item values are weakly dependent (Theorem~\ref{thm:XOS Dobrushin}). %we show that the Dobrushin uniqueness condition and its various relaxations imply that $\gamma=\Omega\left(\frac{1}{n}\right)$, and therefore Result II applies in these settings.
We also show that the dependence on $\Delta$ is necessary. Without any restriction on $\Delta$, the gap between $\max\{\srev, \brev\}$ and the optimal revenue could be unbounded even under the Dobrushin uniqueness condition for an additive or unit-demand buyer (Theorem~\ref{lem:inaprx_dob}). 
Next, we consider how the approximation guarantee degrades in terms of the inverse temperature of a MRF. %As a direct corollary, we know that no finite approximation is possible with restriction only on the  spectral gap of the Glauber dynamics. 

\paragraph{Inverse Temperature $\beta$ of a MRF} The inverse temperature is related to both the maximum weighted degree and the Dobrushin coefficient. See Definition~\ref{def:Markov influence} for the formal definition. Intuitively, as the inverse temperature increases (or temperature drops), the dependence between the different random variables strengthens. When the inverse temperature is $0$, the MRF represents a product distribution.  The \emph{high temperature} regime is when the inverse temperature is less than $1$. This parameter often controls when phase transitions in the behavior of MRFs happen, and hence the name. The Dobrushin coefficient always upper bounds the inverse temperature. %the behavior of the model  name temperature Motivated by phenomena in the physical world, certain MRFs exhibit phase transitions in the behavior of the model as the temperature changes from high to low. 
Recently, MRFs in the high temperature regime have been applied to model weakly dependent random variables~\cite{DaganDDJ19}.

We show that if the MRF is in the high temperature regime, then its maximum weighted degree $\Delta<1$ and the spectral gap $\gamma$ of the Glauber dynamics has value at least $\frac{1-\beta}{n}$. As a corollary of Result II, we have  

\medskip \hspace{.8cm}\begin{minipage}{0.9\textwidth}
\begin{enumerate}
\item[{\bf Result III:}] For a single XOS buyer, $\max\{\srev, \brev\}=\Omega\left(\sqrt{1-\beta}\cdot \opt\right)$,  where $\beta<1$ is the inverse temperature.
\end{enumerate}
\end{minipage}

\medskip The result states that as long as the inverse temperature is bounded away from $1$ by any constant,  $\max\{\srev, \brev\}$ achieves a constant fraction of the optimal revenue. Theorem~\ref{thm:XOS_high_temperature} contains the formal statement of the result. 

We summarize our results in Table~\ref{table 1} and  the relationship between the parameters in Figure~\ref{fig:parameters}. 

%\argyrisnote{I added a an extra centering command at Table~\ref{table 1}, I have made a comment in the code in case you don't like it.}
\begin{table}[t]
% Argyris: I added the next centering command
\centering
\resizebox{0.7\textwidth}{!}{
\begin{minipage}{\textwidth}	\centering
		\hspace*{-8pt}\makebox[\linewidth][c]{
		\begin{tabular}{c || c | c| c|c| }
		\hline \hline
		 & \thead{Maximum Weighted Degree $\Delta$}& \thead{Maximum Weighted Degree $\Delta$\\ and \\ Spectral Gap $\gamma$} & \thead{Spectral Gap $\gamma$\\ or\\ Dobrushin Coefficient $\alpha<1$} & \thead{Inverse Temperature $\beta<1$} \\ 
\hline
\thead{Additive\\ or\\ Unit-Demand} & \thead{UB: $\Omega\left(\frac{\opt}{\exp(O(\Delta))}\right)$ (Theorem~\ref{thm:MRF additive} and~\ref{thm:MRF unit-demand}) \\\\ LB: $O\left(\frac{\opt}{\Delta^{1/7}}\right)$ (Theorem~\ref{thm:poly_dependence_Delta})\\} &  \thead {UB: $\Omega\left(\frac{\opt}{\exp(O(\Delta))}\right)$ ($\leftarrow$)\\\\ LB: $O\left(\frac{\opt}{\Delta^{1/7}}\right)$ ($\leftarrow$) \\}&\thead{Unbounded (Theorem~\ref{lem:inaprx_dob}) }  &  \thead{UB: $\Omega\left(\sqrt{1-\beta}\cdot \opt\right)$ ($\downarrow$) \\\\ LB: open\\} \\ \cline{1-5}

	\thead{XOS}	& \thead{UB: open \\\\ LB: $O\left(\frac{\opt}{\Delta^{1/7}}\right)$ ($\uparrow$) \\}& \thead {UB: $\Omega\left(\frac{\opt}{\exp({O(\Delta)})+\frac{1}{\sqrt{n\gamma}}}\right)$ (Theorem~\ref{thm:XOS})\\\\ LB:  $O\left(\frac{\opt}{\Delta^{1/7}}\right)$ ($\leftarrow$)} &\thead{Unbounded ($\uparrow$)} 
		 &  \thead{UB: $\Omega\left(\sqrt{1-\beta}\cdot \opt\right)$ (Theorem~\ref{thm:XOS_high_temperature}) \\\\ LB: open} 
\\ \cline{1-5}
\hline
		\end{tabular}}
				\end{minipage}}
				\caption{The table contains our upper bounds and lower bounds of the approximation ratio of $\max\{\srev,\brev\}$ in various settings. The results are listed based on (i) the valuation of the buyer and (ii) the parameters the approximation ratio can depend on. In our table, an arrow means the result follows from the result that the arrow points to. }
				\label{table 1}
				\end{table}

\begin{figure}
  \centering
 \includegraphics[width=0.5\textwidth]{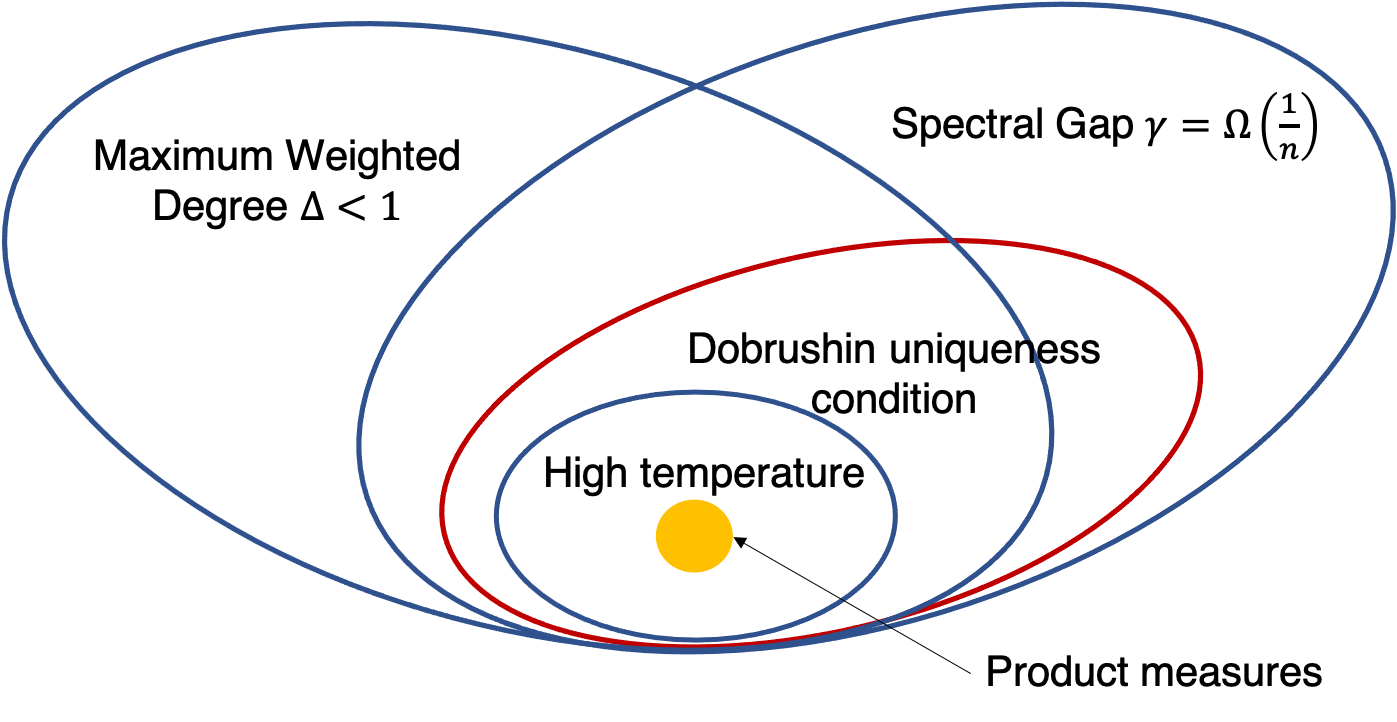}
  \caption{The relationship between the parameters: inverse temperature, Dobrushin coefficient, maximum weighted degree, and spectral gap of the Glauber dynamics.}\label{fig:parameters}
\end{figure}

\subsection{Related Work}
\paragraph{Simple vs. Optimal Auctions} There has been a large body of work on multi-item auction design  focusing on either approximation results under item-independence ~\cite{ChawlaHK07,ChawlaHMS10,Alaei11, CaiH13, LiY13, BabaioffILW14,Yao15,RubinsteinW15,CaiDW16,ChawlaM16, CaiZ17} or impossibility to approximate under arbitrary dependence~\cite{briest2015pricing, HARTN_2019}. Two types of models have been studied for items with limited dependence. The first model considers a specific type of dependence where each item's value is a linear combination of ``independent features''~\cite{ChawlaMS10,bateni2015revenue}. Unlike MRFs, this model cannot express arbitrary structure of dependence. Indeed, the values of any two items can only be positively correlated under this model. The second model considers the smoothed complexity of the problem~\cite{psomas2019smoothed}. Their result applies to arbitrary dependence structure between the item values, but only achieves an approximation ratio that is exponential in the number of items. Our paper is the first to consider a model general enough to capture arbitrary structure of dependence and obtain  parametric approximation ratios that are independent of the number~of~items.  %, the results are obtained using tools similar to the item-independence case.

\vspace{-.05in}
\paragraph{MRFs and Weakly Dependent Random Variables} There has been growing interest in understanding the behavior of weakly dependent random variables that can be captured by a MRF in the high temperature regime or under the Dobrushin uniqueness condition~\cite{Gheissari18, daskalakis2019testing,DaganDDJ19}. In mechanism design, Brustle et al.~\cite{BrustleCD20} is the first to propose modeling dependent item values using MRFs in multi-item auctions, but they focus on the sample complexity of learning nearly optimal auctions.

\section{Preliminaries}\label{sec:prelim}

\paragraph{Basic Notation} We consider an auction where a seller is selling $n$ heterogeneous items to a single buyer. We denote the buyer's type $\vT$ as $\langle t_{i}\rangle_{i=1}^n$, where $t_{i}$ is the buyer's private information about item $i$. We use $D$ to denote the distribution of $\vT$, $D_i$ to denote the marginal distribution of $t_i$, and $D_{i\mid c_{-i}}$ to denote the distribution of $t_i$ conditioned on $t_{-i}=c_{-i}$. We use $\supp(\mathcal{F})$ to denote the support of distribution $\mathcal{F}$, and $T_i= \supp(D_i)$ and $T=\supp(D)$. Moreover, we use $f(c)$ to denote  $\Pr_{\vT\sim D}[\vT= c]$. For any item $i$ and any $c_i\in T_i$ and $c_{-i}\in T_{-i}$, we use $f_i(c_i)$ to denote $\Pr_{t_i\sim D_i}[t_i = c_i]$, $f_i(c_i\mid c_{-i})$ to denote $\Pr_{\vT\sim D}[t=(c_i,c_{-i})]\over \Pr_{\vT\sim D}[t_{-i}=c_{-i}]$, $f_{-i}(c_{-i})$ to denote $\Pr_{\vT\sim D}[t_{-i}=c_{-i}]$, and $f_{-i}(c_{-i}\mid c_i)$ to denote $\Pr_{\vT\sim D}[t=(c_i,c_{-i})]\over \Pr_{\vT\sim D}[t_{i}=c_{i}]$. We also define $F_i(c_i)=\Pr_{{t}_i\sim D_i}\left[{t}_i\leq c_i\right]$ and $F_i(c_i\mid c_{-i})= \Pr_{{t}_i\sim D_{i\mid c_{-i}}}\left[{t}_i\leq c_i\right]$. Finally, when the buyer's type is $\vT$, her valuation for a set of items $S$ is denoted by $v(\vT,S)$.  %For notational convenience, we let  $t_{-i}$ to be the types of all buyers except $i$ and $t_{<i}$ (or $t_{\leq i})$ to be the types of the first $i-1$ (or $i$) buyers. Similarly, we define $D_{-i}$, $T_{-i}$ and $f_{-i}$ for the corresponding  distributions, support sets and density functions. %We define \textbf{subadditive over independent items} as follows.%The valuation distribution $\mathcal{V}_i$ of buyer $i$ is \textbf{subadditive over independent items}, if $v_i(\cdot,\cdot)$ satisfies the following:

We investigate the performance of simple mechanisms for several well-studied valuation classes.

\begin{definition}[Valuation Classes]\label{def:valuation classes}
We define several classes of valuations formally. 	
\begin{itemize}
%\item \textbf{Additive}: interpret $t_i$ as the value of item $i$, and $v(\vT,S)=\sum_{i\in S} t_i$. 
%\item \textbf{Unit-demand}: interpret $t_i$ as the value of item $i$, and $v(\vT,S)=\max_{i\in S} t_i$. 
	\item  \textbf{Constrained Additive:} interpret $t_i$ as the value of item $i$, and $v(\vT,S) =\max_{R\subseteq S, R\in \mathcal{I}}\sum_{i\in R} t_i$, where $\mathcal{I}\subseteq 2^{[m]}$ is a downward closed set system over the items specifying the feasible bundles. When $\mathcal{I}= 2^{[m]}$, the valuation is called \textbf{Additive}. When $\mathcal{I}$ contains all the singletons and the empty set, the valuation is called \textbf{unit-demand}.
\item \textbf{XOS/Fractionally Subadditive:}
interpret $t_i$ as $ \{t_{i}^{(k)}\}_{k\in[K]}$ that encodes all the possible values associated with item $i$, and $v(t,S)=\max_{k\in[K]}\sum_{i\in S}t_{i}^{(k)}$.
\end{itemize}

It is well known that the class of XOS valuations contains all constrained additive valuations.
\end{definition}
\vspace{-.1in}
\paragraph{Mechanism} A mechanism $M$ is specified by an allocation rule and a payment rule. We use $\pi$ to denote the allocation rule, and $\pi_i(\vT)$ is the probability that the buyer receives item $i$ when she reports type $\vT$. We also use $p(\vT)$ to denote the buyer's payment when she reports type $\vT$.
We assume the buyer has quasi-linear utility. We say a mechanism $M$ is Incentive Compatible (IC) if the buyer cannot increase their expected utility by misreporting their type, and Individual Rational (IR) if the buyer has non-negative expected utility when they report their type truthfully to the mechanism.

Given $D$, valuation function $v(\cdot,\cdot)$, we use \textbf{$\rev(M,v, D)$} to denote the expected revenue of an IC and IR mechanism $M$. We slightly abuse notation to use \textbf{$\rev(D)$} to denote the optimal revenue achievable by any IC and IR mechanism under distribution $D$. 

Throughout the paper, we use the following notations for the simple mechanisms we consider.

\noindent \textbf{- $\srev(v,D)$} denotes the optimal expected revenue achievable by any posted price mechanism, and we use $\srev$ for short if there is no ambiguity.
\noindent \textbf{- $\brev(v,D)$} denotes the optimal expected revenue achievable by selling a grand bundle and we use $\brev$ for short if there is no ambiguity.

%\subsection{Weakly Dependent Random Variables}\label{sec:weak correlation}
%In this section, we introduce two notions of weak correlation: high temperature in Markov Random Fields and Dobrushin's condition. The former is a stronger restriction than the latter.

\subsection{Markov Random Fields}\label{sec:MRF}
\begin{definition}[Markov Random Fields]
\label{def:MRF}
A Markov Random Field (MRF) is defined by a hypergraph $G=(V,E)$. Associated with every vertex $v \in V$ is a random variable $X_v$ taking values in some alphabet $\Sigma_v$, as well as a potential function $\psi_v: \Sigma_v \rightarrow \mathbb{R}$. Associated with every hyperedge $e \subseteq V$ is a potential function $\psi_e: \Sigma_e \rightarrow \mathbb{R}$. In terms of these potentials, we define a probability distribution $\pi$ associating to each vector $\bm{c} \in \bigtimes_{v\in V} \Sigma_v$ probability $\pi(\bm{c})$ satisfying:
$\pi(\bm{c}) \propto \prod_{v\in V} e^{\psi_v(c_v)} \prod_{e \in E} e^{\psi_e(\bm{c}_e)}$,where $\Sigma_e$ denotes $\times_{v\in e} \Sigma_v$ and  $\bm{c}_e$ denotes $\{c_v\}_{v\in e}$.

We refer the interested readers to~\cite{lauritzen1996,kindermann1980markov} and the references therein for more details about MRFs. Throughout the paper, when we say the type distribution $D$ is a MRF over a hypergraph $G=(V,E)$, if $V=[n]$, $t_i=x_i$, $T_i=\Sigma_i$, and there exists a collection of potential functions $\{\psi_i(\cdot)\}_{i\in [n]}$ and $\{\psi_e(\cdot)\}_{e\in E}$ so that the corresponding distribution $p(\cdot)$ equals to $D$. If there are only pairwise potentials, then $G$ is a graph.
We say that a random variable $\vT$ is generated by a MRF, if $\vT$ is sampled from a distribution that is represented by the MRF. 
\end{definition}

%Markov Random Fields capture correlated joint distributions, which can be expressed as the product of the composition of a local function over a small set of variables with the exponential function \cite{suttonm:crf}. Informally speaking, the absolute value of the potential of a hyperedge is a good indicator of how correlated are the variables of MRF that correspond to the endpoints of this hyperedge.

Next, we define two ways to measure the degree of dependence in a MRF.

\notshow{
\begin{definition}[\cite{BrustleCD2019,DaganDDJ19}]
A random variable $\bm{T}=(T_1,\ldots,T_n)\in \mathbb{R}^n$ is a Markov Random Field over a hypergraph $G=(V,E)$,
where each vertex in $G$ is associated with a random variable $T_i$,
and each hyperedge $e$ is associated with a potential function $\psi_e: \mathbb{R}^{|e|}\rightarrow \mathbb{R}$,
if there exists functions $\varphi_i: \mathbb{R}\rightarrow \mathbb{R}$ such that for all $\bm{t'}$:

\begin{align*}
    \Pr_{\bm{t} \sim T}[\bm{t}=t'] = \prod_{1\leq i \leq n} e^{\varphi_i(t_i')} \prod_{e \in E} e^{\psi_e(\bm{t}_e')}
\end{align*}

where by $\bm{t}_e'$ we denote the vector that consists only of the values of $\bm{t}$ with respect to the hyperedge $e$.
\end{definition}
}

\begin{definition}\label{def:Markov influence}
Let random variable $\vT$ be generated by a Markov Random Field over a hypergraph $G=([n],E)$,
we define the \textbf{Markov influence} between item $i$ and $j$ to be:    $\beta_{i, j}(\vT): = \max_{\bm{x} \in \times_{\ell \in [n]} T_\ell}{ \left| \sum_{\substack{e \in E :\\ i,j \in e}} \psi_e(\bm{x}_e) \right|}$.
We further define the \textbf{inverse temperature} of the MRF as $\beta(\vT):= \max_{i\in [n]}  \sum_{j \neq i} \beta_{i, j} (\vT)$. We say random variable/type $\vT$ is in the \textbf{high temperature regime} if $\beta(\vT)<1$. 
%= \max_j  \sum_{i \neq j} I_{j\rightarrow i}^M (\vT).$$
\end{definition}

\begin{definition}\label{def:weighted degree}
Given a random variable/type $\vT$  generated by a Markov Random Field over a hypergraph $G=([n],E)$,
we define the \textbf{weighted degree} of item $i$ as:
    $d_i (\vT) := \max_{\bm{x}\in \times_{i\in[n]}T_i} \left| \sum_{e \in E : i \in e} {\psi_e(\bm{x}_e)} \right|$, and the \textbf{maximum weighted degree} as
    $\Delta(\vT) := \max_{i\in [n]} d_i(\vT)$.
\end{definition}

\begin{remark}
Both $\beta(\vT)$ and $\Delta(\vT)$ capture the degree of dependence between the items. Note that $\Delta(\vT)\leq \beta(\vT)$ for any MRF $\vT$, and it is possible that $\beta(\vT)=\Omega(d\cdot \Delta(\vT))$, where $d$ is the size of the largest hyperedge in $G$. When $\vT$ is drawn from a product measure, both $\beta(\vT)$ and $\Delta(\vT)$ are $0$. In general, restricting $\beta(\vT)$ and $\Delta(\vT)$ to be small ensures that the item values are only weakly dependent. %For additive and unit-demand valuations, our approximation guarantee only depends on $\Delta(\vT)$ and is independent of the number of items $n$. %\yangnote{For all XOS valuations, we obtain constant factor approximations when $\beta(\vT)< 1$. Yang: don't we have something  more general?}
%The degree of a Markov Random Field, depends only on the correlation of the random variables and not on the (maybe numerical) values they have. For instance if we have a MRF over $\mathbb{R}^n$ and scale each coordinate by a $c$, then the degree of the graph stays the same.
%rgyrisnote{Roughly speaking, $\beta(\vT)$ captures how different a MRF is,when compared to the same MRF when we ignore the potentials due to hyperedges.}
\end{remark}

To achieve our results, we need another important concept -- the Glauber dynamics. In Section~\ref{sec:MRF XOS}, we relate the approximation ratio achievable by simple mechanisms to the spectral gap of the Glauber dynamics of the MRF.

\begin{definition}[Glauber Dynamics]\label{def:Glauber dynamics}
Let $X_1,\ldots, X_n$ be an $n$-dimensional random vector drawn from distribution $\pi$. Let $\Omega$ be the support of $\pi$. The Glauber dynamics for $\pi$ is a reversible Markov chain with state space $\Omega$.  The Glauber chain moves from state $x \in \Omega$ as follows: an index $i$ is chosen uniformly at random from $[n]$, and a new state $y$ is chosen so that (i) $y_j=x_j$ for all $j\neq i$; (ii) draw $y_i$ from the conditional distribution $\pi\mid X_{-i}=x_{-i}$. It is not hard to verify that the Glauber dynamics is a reversible Markov chain with stationary distribution $\pi$. %$P$ be a reversible transition matrix on state space $\Omega$ with stationary distribution $\pi$. 
\end{definition}

\begin{remark}
When $\pi$ is the distribution that can be represented by a MRF $G=(V,E)$,  the Glauber dynamics has state space $\bigtimes_{v\in V} \Sigma_v$. The Glauber chain moves from state $x \in \bigtimes_{v\in V} \Sigma_v$ as follows: a vertex $v$ is chosen uniformly at random from $V$, and a new state $y$ is chosen so that (i) $y_u=x_u$ for all $u\neq v$; (ii) for any $c\in \Sigma_v$, $y_v= c$ w.p. $\frac{\exp(\psi_v(c))\Pi_{e:v\in e}\exp(\psi_e\left(c,x_{e/\{v\}}\right))}{\sum_{c'\in \Sigma_v}\exp(\psi_v(c'))\Pi_{e:v\in e}\exp(\psi_e \left(c',x_{e/\{v\}}\right))}$, in other words, sample $y_v$ according to the distribution conditioned on $y_{-v}=x_{-v}$. Note that for a MRF, the Glauber dynamics is an irreiducible Markov chain, so $\pi$ is its only stationary distribution. The Glauber dyanamics is a standard method for generating samples from a MRF, as it does not require computing the partition function, which is often a computationally intractable task. %only requires the knowledge of the potential function to run 
\end{remark}
\section{Markov Random Fields: Basic Properties and Tools}\label{sec:basic MRF properties}
We first present some basic properties of a MRF. Roughly speaking, we show that the conditional distribution can be approximated by the corresponding marginal distribution of $D$, and the approximation quality only depends $\Delta(\vT)$. 
%\argyrisnote{If we keep the proofs, I think we should remove the next sentence.} We postpone all proofs in this section to Appendix~\ref{sec:appx_MRF_properties}.
\begin{lemma}\label{lemma:cond_MRF}
Let random variable $\vT$ be generated by a MRF. Then for any $t_i\in T_i,t_{-i} \in T_{-i}$:

$$    \frac{\exp(\psi_i(t_i))}{\sum_{t'_i \in T_i} \exp(\psi_i(t_i'))}\exp(-2 \Delta(\vT)) \leq f_i(t_i\mid t_{-i}) \leq \frac{\exp(\psi_i(t_i))}{\sum_{t'_i \in T_i} \exp(\psi_i(t_i'))}\exp(2 \Delta(\vT))$$
and 
$$f_i(t_i) \cdot\exp(-4\Delta(\vT)) \leq f_i(t_i\mid t_{-i}) \leq f_i(t_i)\cdot \exp(4\Delta(\vT)).$$
\end{lemma}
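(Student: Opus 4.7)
The plan is to unfold the MRF formula, observe that the partition function cancels in any conditional, and then use the definition of $\Delta(\vT)$ to bound the product of edge potentials that involve vertex $i$.

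First I would write the conditional density explicitly. By the MRF definition, $f(t_i, t_{-i})$ is proportional (with normalizer equal to the partition function) to $\exp(\psi_i(t_i))\cdot\exp(\sum_{e:i\in e}\psi_e(t_e))\cdot R(t_{-i})$, where $R(t_{-i})$ collects the factors $\exp(\psi_j(t_j))$ for $j\neq i$ and $\exp(\psi_e(t_e))$ for edges not containing $i$. Dividing by $f_{-i}(t_{-i})=\sum_{t_i'}f(t_i',t_{-i})$, both the global partition function and $R(t_{-i})$ cancel, so
\[
f_i(t_i\mid t_{-i})=\frac{\exp(\psi_i(t_i))\,\exp\!\bigl(\sum_{e:i\in e}\psi_e(t_i,t_{e\setminus\{i\}})\bigr)}{\sum_{t_i'\in T_i}\exp(\psi_i(t_i'))\,\exp\!\bigl(\sum_{e:i\in e}\psi_e(t_i',t_{e\setminus\{i\}})\bigr)}.
\]

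Next I would use Definition~\ref{def:weighted degree}: for every $t\in T$, $\bigl|\sum_{e:i\in e}\psi_e(t_e)\bigr|\leq d_i(\vT)\leq \Delta(\vT)$, so each of the edge-factor products in the numerator and denominator lies in the interval $[\exp(-\Delta(\vT)),\exp(\Delta(\vT))]$. Bounding the numerator from above by $\exp(\psi_i(t_i))\exp(\Delta(\vT))$ and the denominator from below by $\sum_{t_i'}\exp(\psi_i(t_i'))\exp(-\Delta(\vT))$ (and vice versa for the lower bound) immediately yields the first pair of inequalities with the factor $\exp(\pm 2\Delta(\vT))$.

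For the second pair of inequalities I would relate the marginal $f_i(t_i)$ to the same ``single-vertex'' quantity $\frac{\exp(\psi_i(t_i))}{\sum_{t_i'}\exp(\psi_i(t_i'))}$. Writing $f_i(t_i)=\sum_{t_{-i}}f_{-i}(t_{-i})\,f_i(t_i\mid t_{-i})$ and plugging in the first inequality pointwise, the sum over $t_{-i}$ of $f_{-i}(t_{-i})$ is $1$, so
\[
\frac{\exp(\psi_i(t_i))}{\sum_{t_i'}\exp(\psi_i(t_i'))}\exp(-2\Delta(\vT))\;\leq\; f_i(t_i)\;\leq\;\frac{\exp(\psi_i(t_i))}{\sum_{t_i'}\exp(\psi_i(t_i'))}\exp(2\Delta(\vT)).
\]
Combining this bound on $f_i(t_i)$ with the already established bound on $f_i(t_i\mid t_{-i})$ yields $f_i(t_i)\exp(-4\Delta(\vT))\leq f_i(t_i\mid t_{-i})\leq f_i(t_i)\exp(4\Delta(\vT))$, completing the proof.

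There is no real obstacle here; the only thing to be careful about is which factors cancel when forming the conditional—specifically, that the only potentials that survive are $\psi_i$ together with the edge potentials for edges incident to $i$—and then keeping track of the exponent bookkeeping so that the two $\exp(2\Delta(\vT))$ factors (one from the conditional approximation, one from the marginal approximation) compose into $\exp(4\Delta(\vT))$.
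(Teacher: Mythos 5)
Your proof is correct and takes essentially the same route as the paper: both reduce to bounding the surviving edge-potential factors by $\exp(\pm\Delta(\vT))$, then propagate through the conditional and the marginal via the law of total probability to compose into $\exp(\pm 4\Delta(\vT))$. The only cosmetic difference is that the paper bounds the pairwise ratio $f((t_i,t_{-i}))/f((t_i',t_{-i}))$ while you bound the numerator and denominator of the conditional directly; these are algebraically equivalent.
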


\begin{proof}
Note that for any $t_i'\in T_i$, 
${f\left((t_i,t_{-i})\right)\over f\left((t_i',t_{-i})\right)}={\exp(\psi_i(t_i))\over \exp(\psi_i(t'_i))}\cdot {\exp\left(\sum_{e\in E, i\in e} \psi_e\left( t_i, t_{-i}\right)_e\right) \over \exp\left(\sum_{e\in E, i\in e} \psi_e\left( t'_i, t_{-i}\right)_e\right) }$.

%${f\left((t_i,t_{-i})\right)\over f\left((t_i',t_{-i})\right)}={\exp(\psi_i(t_i))\over \exp(\psi_i(t'_i))}\cdot {\exp\left(\sum_{e\in E, i\in e} \exp(\psi_e\left( t_i, t_{-i}\right)_e)\right) \over \exp\left(\sum_{e\in E, i\in e} \exp(\psi_e\left( t'_i, t_{-i}\right)_e)\right) }$. 

Clearly, $${\exp(\psi_i(t_i))\over \exp(\psi_i(t'_i))}\cdot \exp(-2\Delta(\vT))\leq {f\left((t_i,t_{-i})\right)\over f\left((t_i',t_{-i})\right)}\leq {\exp(\psi_i(t_i))\over \exp(\psi_i(t'_i))}\cdot \exp(2\Delta(\vT)).$$

Since $f_i(t_i\mid t_{-i}) = {f\left((t_i,t_{-i})\right)\over\sum_{t_i'\in T_i}f\left((t_i',t_{-i})\right)}$, $$\frac{\exp(\psi_i(t_i))\cdot\exp(-2 \Delta(\vT))}{\sum_{t'_i \in T_i} \exp(\psi_i(t_i'))} \leq f_i(t_i\mid t_{-i}) \leq \frac{\exp(\psi_i(t_i))\exp(2 \Delta(\vT))}{\sum_{t'_i \in T_i} \exp(\psi_i(t_i'))}.$$
By Law of Total Probability,
\begin{align*}
    f_i(t_i) = & \sum_{t_{-i} \in T_{-i}} f_i(t_i \mid t_{-i}) f_{-i} (t_{-i})\in\left[\frac{\exp(\psi_i(t_i))\cdot\exp(-2\Delta(\vT))}{\sum_{t_i' \in T_i} \exp(\psi_i(t_i'))} ,\frac{\exp(\psi_i(t_i))\cdot \exp(2\Delta(\vT))}{\sum_{t_i' \in T_i} \exp(\psi_i(t_i'))}\right].
\end{align*}

\end{proof}

\begin{lemma}\label{lemma:tv_MRF}
Let {random variable} $\vT$ be generated by a MRF. For any $i$ and any set $\EE\subseteq T_i$ and set $\EE'\subseteq T_{-i}$:
$$    \exp(-4\Delta(\vT)) \leq { \Pr_{\vT\sim D}\left[t_i\in \EE \land t_{-i}\in \EE' \right] \over \Pr_{t_i\sim D_i}[t_i\in \EE] \Pr_{t_{-i}\sim D_{-i}}\left[t_{-i}\in \EE'\right]}\leq \exp(4\Delta(\vT)).$$
\end{lemma}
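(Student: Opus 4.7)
The plan is a direct application of Lemma~\ref{lemma:cond_MRF}, which already bounds the ratio of the conditional density $f_i(t_i\mid t_{-i})$ to the marginal density $f_i(t_i)$ by the factor $\exp(\pm 4\Delta(\vT))$ pointwise. The joint probability $\Pr[t_i\in\EE \land t_{-i}\in\EE']$ factors naturally as a weighted sum of conditional probabilities, and the pointwise bound transfers cleanly through this sum.

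First I would write
\[
\Pr_{\vT\sim D}\bigl[t_i\in\EE\,\land\,t_{-i}\in\EE'\bigr] \;=\; \sum_{t_{-i}\in\EE'} f_{-i}(t_{-i})\sum_{t_i\in\EE} f_i(t_i\mid t_{-i}).
\]
Next I would apply the pointwise upper bound from Lemma~\ref{lemma:cond_MRF}, namely $f_i(t_i\mid t_{-i})\leq f_i(t_i)\exp(4\Delta(\vT))$, to the inner sum. This yields
\[
\sum_{t_i\in\EE} f_i(t_i\mid t_{-i}) \;\leq\; \exp(4\Delta(\vT))\sum_{t_i\in\EE} f_i(t_i) \;=\; \exp(4\Delta(\vT))\Pr_{t_i\sim D_i}[t_i\in\EE],
\]
which is independent of $t_{-i}$, so I can pull it outside the outer sum, leaving $\sum_{t_{-i}\in\EE'}f_{-i}(t_{-i})=\Pr_{t_{-i}\sim D_{-i}}[t_{-i}\in\EE']$. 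Dividing through gives the desired upper bound.

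The lower bound is completely symmetric: applying $f_i(t_i\mid t_{-i})\geq f_i(t_i)\exp(-4\Delta(\vT))$ to the same decomposition produces the matching $\exp(-4\Delta(\vT))$ factor. No step here is really an obstacle — the only thing to double-check is that the pointwise conditional bound from the previous lemma is uniform in $t_{-i}$ (which it is, since the bound depends only on $\Delta(\vT)$), so it can be pulled out of the outer sum over $t_{-i}\in\EE'$ without losing anything.
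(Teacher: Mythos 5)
Your proof is correct and is essentially the same argument as the paper's: both reduce Lemma~\ref{lemma:tv_MRF} to the pointwise bound $f_i(t_i)\exp(-4\Delta(\vT)) \leq f_i(t_i\mid t_{-i}) \leq f_i(t_i)\exp(4\Delta(\vT))$ from Lemma~\ref{lemma:cond_MRF} and then sum over $\EE\times\EE'$; the paper writes the ratio of the two sums directly while you pull the $\exp(\pm 4\Delta(\vT))$ factor out of the inner sum first, but these are interchangeable presentations of the same calculation.
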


\begin{proof}
%Note that $\Pr_{\vT\sim D}\left[t_i\in \EE \land t_{-i}\in \EE' \right] =\sum_{\vT\in \EE \times \EE'} f_i(t_i\mid t_{-i})\cdot f_{-i}(t_{-i})$ and  $\Pr[t_i\in \EE] \Pr\left[t_{-i}\in \EE'\right]= \sum_{\vT\in \EE \times \EE'} f_i(t_i)\cdot f_{-i}(t_{-i})$. Hence, our claim follows from Lemma~\ref{lemma:cond_MRF}.
Note that $\Pr_{\vT\sim D}\left[t_i\in \EE \land t_{-i}\in \EE' \right] =\sum_{\vT\in \EE \times \EE'} f_i(t_i\mid t_{-i})\cdot f_{-i}(t_{-i})$ and  $\Pr[t_i\in \EE] \Pr\left[t_{-i}\in \EE'\right]= \sum_{\vT\in \EE \times \EE'} f_i(t_i)\cdot f_{-i}(t_{-i})$.
Hence
$$
{ \Pr_{\vT\sim D}\left[t_i\in \EE \land t_{-i}\in \EE' \right] \over \Pr_{t_i\sim D_i}[t_i\in \EE] \Pr_{t_{-i}\sim D_{-i}}\left[t_{-i}\in \EE'\right]}  = \frac{\sum_{\vT\in \EE \times \EE'} f_i(t_i\mid t_{-i})\cdot f_{-i}(t_{-i})}{\sum_{\vT\in \EE \times \EE'} f_i(t_i)\cdot f_{-i}(t_{-i})}
$$
Using Lemma~\ref{lemma:cond_MRF} we get that:
$$
 \frac{\sum_{\vT\in \EE \times \EE'} f_i(t_i\mid t_{-i})\cdot f_{-i}(t_{-i})}{\sum_{\vT\in \EE \times \EE'} f_i(t_i)\cdot f_{-i}(t_{-i})} \leq \exp(4\Delta) \frac{\sum_{\vT\in \EE \times \EE'} f_i(t_i)\cdot f_{-i}(t_{-i})}{\sum_{\vT\in \EE \times \EE'} f_i(t_i)\cdot f_{-i}(t_{-i})} = \exp(4\Delta)
$$
and
$$
 \frac{\sum_{\vT\in \EE \times \EE'} f_i(t_i\mid t_{-i})\cdot f_{-i}(t_{-i})}{\sum_{\vT\in \EE \times \EE'} f_i(t_i)\cdot f_{-i}(t_{-i})} \geq \exp(-4\Delta) \frac{\sum_{\vT\in \EE \times \EE'} f_i(t_i)\cdot f_{-i}(t_{-i})}{\sum_{\vT\in \EE \times \EE'} f_i(t_i)\cdot f_{-i}(t_{-i})} = \exp(-4\Delta).
$$

\notshow{Similarly we can prove that  $\Pr[t_i] \geq \frac{\exp(\psi_i(t_i))}{\sum_{t_i' \in T_i} \exp(\psi_i(t_i'))}\exp(-2\Delta)$

Noting that:

\begin{align*}
\frac{\exp(\psi_i(t_i))}{\sum_{t_i' \in T_i} \exp(\psi_i(t_i'))}\exp(-2\Delta) \leq \Pr[t_i] \leq \frac{\exp(\psi_i(t_i))}{\sum_{t_i' \in T_i} \exp(\psi_i(t_i'))}\exp(2\Delta) \\
\frac{\exp(\psi_i(t_i))}{\sum_{t_i' \in T_i} \exp(\psi_i(t_i'))}\exp(-2\Delta) \leq \Pr[t_i=t\mid t_{-i}] \leq \frac{\exp(\psi_i(t_i))}{\sum_{t_i'\in T_i} \exp(\psi_i(t_i'))}\exp(2\Delta)
\end{align*}

We can infer that:

\begin{align*}
\Pr[t_i] \exp(-4\Delta) \leq \Pr[t_i\mid t_{-i}] \leq \Pr[t_i] \exp(4\Delta)
\end{align*}

Therefore we have that:

\begin{align*}
    \Pr[t > t_i \mid t_{-i}] =  \sum_{t_i' > t_i} \Pr[t_i' \mid t_{-i}] \leq \sum_{t_i' > t_i} \Pr[t_i'] \exp(4\Delta) = \Pr[t > t_i] \exp(4\Delta)
\end{align*}

We can similarly prove that $\Pr[t > t_i\mid t_{-i}] \geq \Pr[t > t_i] \exp(-4\Delta) $
}
\end{proof}

\notshow{

\begin{lemma}
\label{lemma:MRF_tv_product}
For any $i$ and any event $\EE_i\subseteq T_i$ and event $\EE_{-i}\subseteq T_{-i}$ we have that:

\begin{align*}
    \exp(-2\Delta) f_i(t_i) \Pr_{t_{-i}\sim D_{-i}}\left[t_{-i}\in \EE\right] \leq \Pr_{\vT'\sim D}\left[t'_i=t_i \land t'_{-i}\in \EE\right] \leq \exp(2\Delta)  f_i(t_i) \Pr_{t_{-i}\sim D_{-i}}[t_{-i}\in \EE]
\end{align*}
\end{lemma}

\begin{proof}
Fix an item $i$.
We can write

\begin{align*}
\Pr[t_i \cap C(T_{-i})] = \sum_{\substack{t_{-i} \in T_{-i} \\ \cap C(T_{-i})}} \Pr[t_i \mid t_{-i}] \cdot \Pr[t_{-i}] 
\end{align*}

By Lemma~\ref{lemma:cond_MRF} we have that:

\begin{align*}
\sum_{\substack{t_{-i} \in T_{-i}  \\ \cap C(T_{-i})}} \Pr[t_i \mid t_{-i}] \cdot \Pr[t_{-i}] 
\leq & \sum_{\substack{t_{-i} \in T_{-i} \\ \cap C(T_{-i})}} \Pr[t_i] \exp(2\Delta) \cdot \Pr[t_{-i}]  \\
= & \Pr[t_i] \exp(2\Delta) \sum_{\substack{t_{-i} \in T_{-i} \\ {\cap C(T_{-i})}}} \Pr[t_{-i}]  \\
= & \Pr[t_i] \exp(2\Delta) \Pr[C(T_{-i})] 
\end{align*}

This implies that:

\begin{align*}
    \Pr[t_i \cap C(T_{-i})] \geq \Pr[t_i] \exp(2\Delta) \Pr[C(T_{-i})] 
\end{align*}
\end{proof}
}

\paragraph{Prophet Inequality for MRF} Equipped with Lemma~\ref{lemma:tv_MRF}, we provide a generalization of the Prophet inequality when the rewards in different stages are dependent and generated by a MRF. We can think of the prophet inequality problem,
as finding a good policy for a gambler in a multi-round game.
At the $i$-th round, the gambler is given the choice to accept a reward or to continue to the next round. The goal of the gambler is to find a policy that obtains high expected reward,
given the distributions of the rewards at each round.
Prophet inequalities have been obtained when the rewards between stages are independent~\cite{krengel1978semiamarts,samuel1984comparison,KleinbergW12} or can be expressed a a linear combination of some independent random variables~\cite{Immorlica0W20}.

\begin{lemma}
\label{lemma:MRF_prophet}
Let $\vT=(t_1,\ldots, t_n)$ be an $n$-dimensional random vector generated by a MRF. There are totally $n$ rounds, and the reward of round $i$ is $g_i(t_i)$, where $g_i$ is an arbitrary function. The total reward of the prophet is $\E_{\vT}\left[\max_{i\in [n]} g_i(t_i)\right]$.
We denote by $\rew_{\vT}\left[\{g_i\}_{i\in [n]}, \tau\right]$ the expected of reward of the following policy -- accept any reward that is at least $\tau$. The following inequality holds if we choose $\tau^*=\median_{\vT}\left(\max_{i\in [n]} g_i(t_i)\right)$ (i.e., $\Pr[\max_{i \in [n]}g_i(t_i) \geq \tau^*]=1/2$),%the Prophet with threshold $\tau$ on when evaluating function $\varphi(\cdot)$ on the MRF $T$.

\begin{equation*}
    \frac{\exp(-4\Delta(\vT))}{2} \E_{\vT}\left[\max_{i\in [n] }g_i(t_i)\right] \leq  \rew_{\vT}\left[\{g_i\}_{i\in [n]}, \tau^*\right].
\end{equation*}

\notshow{\begin{equation*}
    \frac{\exp(-4\Delta(\vT))}{2} \E_{\vT}\left[\max_{i\in [n] }g_i(t_i)\right] \leq {\tau^* + \sum_{i\in [n]}\E_{t_i\sim D_i}\left[(g_i(t_i) - \tau^*)^+\right]\over 2\exp(4\Delta(\vT))}\leq  \rew_{\vT}\left[\{g_i\}_{i\in [n]}, \tau^*\right].
\end{equation*}}
%Moreover, if we set $\tau^*$ to be $\min_{\tau}\left\{\tau\in \mathbb{R}: \Pr_{\vT}\left[\max_{i\in [n]} g_i(t_i)\leq \tau\right]\geq {\exp(4\Delta(\vT))\over 1+\exp(4\Delta(\vT))}\right\}$, then

%$$    {\E_{\vT}\left[\max_{i\in [n] }g_i(t_i)\right] \over 1+\exp(4\Delta(\vT))} \leq { \tau^* + \sum_{i\in [n]}\E_{t_i\sim D_i}\left[(g_i(t_i) - \tau^*)^+\right]\ \over 1+\exp(4\Delta(\vT))} \leq  \rew_{\vT}\left[\{g_i\}_{i\in [n]}, \tau^*\right].$$
%The lower bound on the prophet reward does not depend on the order she visits the items.
\end{lemma}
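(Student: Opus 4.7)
The plan is to adapt the classical median-threshold prophet inequality of Samuel-Cahn to the MRF setting by using Lemma~\ref{lemma:tv_MRF} to decouple the reward in round $i$ from the history of earlier rewards. The first step is to decompose the reward of the threshold policy. Let $A_i := \{g_j(t_j) < \tau^* \text{ for all } j < i\}$ be the event that nothing has been accepted before round $i$ (so $A_1$ is the sure event); the policy accepts in round $i$ exactly when $A_i$ holds and $g_i(t_i) \geq \tau^*$. Splitting $g_i(t_i) = \tau^* + (g_i(t_i)-\tau^*)$ on this event and using that the sets $\{A_i \cap \{g_i(t_i) \geq \tau^*\}\}_{i\in[n]}$ partition $\{\exists i : g_i(t_i) \geq \tau^*\}$ yields
\[
\rew_{\vT}[\{g_i\}_{i\in[n]}, \tau^*] \;=\; \tau^* \cdot \Pr[\exists i : g_i(t_i) \geq \tau^*] \;+\; \sum_{i\in[n]} \E\bigl[(g_i(t_i) - \tau^*)^+ \cdot \ind[A_i]\bigr].
\]
By the definition of $\tau^*$ as the median of $\max_i g_i(t_i)$, the first term equals $\tau^*/2$.

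For each summand in the second term I would note that $(g_i(t_i) - \tau^*)^+$ is a function of $t_i$ alone while $\ind[A_i]$ is a function of $t_{-i}$ alone. Lemma~\ref{lemma:tv_MRF} is stated for probabilities of product events, so I lift it to expectations of products of nonnegative functions via the layer-cake identity $\E[h(t_i) \ind[t_{-i} \in E']] = \int_0^\infty \Pr[h(t_i) \geq u \wedge t_{-i} \in E']\, du$, applied pointwise in $u$ with $\EE = \{t_i : h(t_i) \geq u\}$ and $\EE' = E'$. This gives
\[
\E\bigl[(g_i(t_i) - \tau^*)^+ \cdot \ind[A_i]\bigr] \;\geq\; \exp(-4\Delta(\vT)) \cdot \E\bigl[(g_i(t_i) - \tau^*)^+\bigr] \cdot \Pr[A_i].
\]
Since $A_i \supseteq \{\max_j g_j(t_j) < \tau^*\}$, the median property also gives $\Pr[A_i] \geq 1/2$.

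Combining the two displays, and using the trivial bound $\tau^*/2 \geq \exp(-4\Delta(\vT)) \cdot \tau^*/2$ to factor out a common constant,
\[
\rew_{\vT}[\{g_i\}_{i\in[n]}, \tau^*] \;\geq\; \frac{\exp(-4\Delta(\vT))}{2}\Bigl(\tau^* + \sum_{i\in[n]} \E\bigl[(g_i(t_i)-\tau^*)^+\bigr]\Bigr).
\]
The proof concludes with the pointwise inequality $\max_i g_i(t_i) \leq \tau^* + \sum_i (g_i(t_i)-\tau^*)^+$ (split on whether some $g_i$ exceeds $\tau^*$), whose expectation bounds the parenthesized quantity from below by $\E[\max_i g_i(t_i)]$. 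The main subtlety is the correct application of Lemma~\ref{lemma:tv_MRF}: the lemma gives a multiplicative decoupling between the marginal of $t_i$ and the marginal of $t_{-i}$ only for product events, and the proof must lift it to nonnegative measurable functions on the two coordinates via the layer-cake representation. Once this is in place, the remaining steps are essentially the same as in the independent Samuel-Cahn proof, with the constant $\exp(-4\Delta(\vT))$ reflecting the cost of breaking the correlation in a MRF.
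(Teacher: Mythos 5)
Your proposal is correct and follows essentially the same route as the paper's proof: the Samuel--Cahn median-threshold argument, with Lemma~\ref{lemma:tv_MRF} used to decouple the reward of round $i$ from the other coordinates at a cost of $\exp(-4\Delta(\vT))$. The only cosmetic differences are that you decompose the policy's reward exactly via the first-acceptance event $A_i$ (the paper instead lower-bounds it using the stronger event $\{\max_{j\neq i}g_j(t_j)<\tau^*\}$), and you lift the product-event bound of Lemma~\ref{lemma:tv_MRF} to products of a function and an indicator via the layer-cake identity, whereas the paper applies the same bound pointwise (i.e., on singleton events $\EE=\{t_i\}$, $\EE'=\{t_{-i}\}$) and then sums; both routes also share the implicit assumption $\tau^*\geq 0$, which holds in the paper's applications where $g_i$ is the nonnegative part of a virtual value.
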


\begin{proof}
The proof is similar to the case when all $t_i$ are independent.

It is not hard to see that $$\E_{\vT}\left[\max_{i\in [n] }g_i(t_i)\right]\leq \tau^*+\sum_{i\in [n]} \E_{t_i\sim D_i}\left[\left(g_i(t_i)-\tau^*\right)^+\right].$$

We provide a lower bound on $\rew_{\vT}\left[\{g_i\}_{i\in [n]}, \tau^*\right]$.
\begin{align*}
	\rew_{\vT}\left[\{g_i\}_{i\in [n]}, \tau^*\right]\geq \Pr_{\vT}\left[\max_{i\in[n]} g_i (t_i)  \geq \tau^*\right] \cdot \tau^* + \sum_{i\in [n]}\E_{\vT}\left[(g_i(t_i) - \tau^*)^+ \cdot \mathds{1}[\max_{j\neq i} g_j(t_j) \leq \tau^*] \right]
\end{align*}  
For every $i\in [n]$, we define the set $\EE_{i}$ as $\{t_i\in T_i : g_i(t_i)> \tau^*\}$ and $\EE'_{i}$ as $\{t_{-i}\in T_{-i}: \max_{j\neq i} g_j(t_j)\leq\tau^* \}$. Note that 
\begin{align*}
\E_{\vT}\left[(g_i(t_i) - \tau^*)^+ \cdot \mathds{1}[\max_{j\neq i} g_j(t_j) < \tau^*] \right]
&=\sum_{\vT\in \EE_i\times \EE_i'}\left(g_i(t_i)-\tau^*\right)\cdot f(\vT)\\	
&\geq \exp(-4\Delta(\vT)) \sum_{\vT\in \EE_i\times \EE_i'}\left(g_i(t_i)-\tau^*\right)\cdot f_i(t_i)f_{-i}(t_{-i})\\
&= \exp(-4\Delta(\vT))\E_{t_i\sim D_i}\left[\left(g_i(t_i)-\tau^*\right)^+\right]\Pr_{t_{-i}\sim D_{-i}}[t_{-i}\in \EE'_i]
\end{align*}

The inequality is due to Lemma~\ref{lemma:tv_MRF}. Putting everything together, we know that 
$$\rew_{\vT}\left[\{g_i\}_{i\in [n]}, \tau^*\right]\geq {1\over 2} \cdot \tau^* + \sum_{i\in [n]}\E_{t_i\sim D_i}\left[(g_i(t_i) - \tau^*)^+\right]\cdot {\exp(-4\Delta(\vT))\over 2},
$$ which is at least ${\exp(-4\Delta(\vT))\over 2}$ of the upper bound we provide for $\E_{\vT}\left[\max_{i\in [n] }g_i(t_i)\right]$.
\footnote{When $\max_{i \in [n]}g_i(t_i)$ is a discrete distribution, we may not be able to pick a $\tau^*$ such that $\Pr[\max_{i \in [n]}g_i(t_i) \geq \tau^*]=1/2$, but it is folklore that this can be resolved by carefully picking a tie-breaking rule. We do not include the details here.}
%
%\argyrisnote{(I think the next sentence should not be here)}
%The proof is similar for $\tau^*=\min_{\tau}\left\{\tau\in \mathbb{R}: \Pr_{\vT}\left[\max_{i\in [n]} g_i(t_i)\leq \tau\right]\geq {\exp(4\Delta(\vT))\over 1+\exp(4\Delta(\vT))}\right\}$.
%\todo{Say something about tie breaking}
\notshow{
Fix an item $i$.
\begin{comment}
We can write

\begin{align*}
Pr[t_i \cap \max_{j \neq i} t_j < \tau] = \sum_{\substack{t_{-i} \in T_{-i}: \\ \max_{j \neq i} \varphi_j(t_j) < \tau}} Pr[t_i \mid t_{-i}] \cdot Pr[t_{-i}] 
\end{align*}

By Lemma~\ref{lemma:cond_MRF} we have that:

\begin{align*}
\sum_{\substack{t_{-i} \in T_{-i}: \\\max_{j \neq i} \varphi_j(t_j) < \tau}} Pr[t_i \mid t_{-i}] \cdot Pr[t_{-i}] 
\geq & \sum_{\substack{t_{-i} \in T_{-i}: \\\max_{j \neq i} \varphi_j(t_j) < \tau}} Pr[t_i] \exp(-2\Delta) \cdot Pr[t_{-i}]  \\
= & Pr[t_i] \exp(-2\Delta) \sum_{\substack{t_{-i} \in T_{-i}: \\\max_{j \neq i} \varphi_j(t_j) < \tau}} Pr[t_{-i}]  \\
= & Pr[t_i] \exp(-2\Delta) Pr[\max_{j\neq i} \varphi_j(t_j) < \tau] 
\end{align*}

This implies that:

\begin{align*}
    Pr[t_i \cap \max_{j \neq i} t_j < \tau] \geq Pr[t_i] \exp(-2\Delta) Pr[\max_{j\neq i} \varphi_j(t_j) < \tau] 
\end{align*}

\end{comment}

Using Lemma~\ref{lemma:tv_MRF} with condition function $C(T_{-i})=\mathds{1}[\max_{j\neq i} g_j(t_j) < \tau]$ we have that:

\begin{align*}
    \Pr[t_i \cap \max_{j \neq i} t_j < \tau] \geq \Pr[t_i] \exp(-2\Delta) \Pr[\max_{j\neq i} g_j(t_j) < \tau] 
\end{align*}

An upper bound on $E[\max_i g(t_i)]$ is:

\begin{align*}
    \tau + \sum_{i}E \left[ (g_i(t_i) - \tau)^+ \right]
\end{align*}

A lower bound on the prophet reward is:

\begin{align*}
    &\Pr[\max_i g_i (x_i) \geq \tau] \cdot \tau + E\left[\sum_{i}(g_i(t_i) - \tau)^+ \cdot \mathds{1}[\max_{j\neq i} g_j(t_j) < \tau] \right] \\
    = & \Pr[\max_i g_i (t_i) \geq \tau] \cdot \tau + \sum_{i}E\left[(g_i(t_i) - \tau)^+ \cdot \mathds{1}[\max_{j\neq i} g_j(t_j) < \tau] \right]
\end{align*}

We have that:

\begin{align*}
    &E \left[(g_i(t_i) - \tau)^+ \cdot \mathds{1}[\max_{j\neq i} g_j(t_j) < \tau] \right] \\
    = & \sum_{\substack{t_i \in T_i: \\ t_i \geq \tau }}  \left( g_i(t_i) - \tau\right) \Pr\left[ t_i \cap  \max_{j \neq i} t_j < \tau \right] \\
    \geq & \sum_{\substack{t_i \in T_i \\ t_i \geq \tau }}  \left( g_i(t_i) - \tau\right) \Pr\left[ t_i \right] \exp(-2\Delta) \Pr[\max_{j\neq i} g_j(t_j) < \tau] \\
    = & \exp(-2\Delta) \Pr[\max_{j\neq i} g_j(t_j) < \tau] \cdot \sum_{\substack{t_i \in T_i \\ t_i \geq \tau }}  \left( g_i(t_i) - \tau\right) \Pr\left[ t_i \right] \\
    = & \exp(-2\Delta) \Pr[\max_{j\neq i} g_j(t_j) < \tau] E \left[ (g_i(t_i) - \tau)^+ \right]
\end{align*}

Therefore a lower bound on the prophet's reward is:

\begin{align*}
    &\tau \cdot \Pr[\max_{i} g_i(t_i) \geq \tau] + \exp(-2\Delta) \Pr[\max_{j\neq i} g_j(t_j) < \tau] \sum_i E \left[ (g_i(t_i) - \tau)^+ \right]
\end{align*}

By choosing a value of $\tau$ such that $\Pr[\max_i g_i(x_i) \geq \tau] = \frac{1}{2}$,
then we also have that for any $i$: $\Pr[\max_{j\neq i} g_j(x_j)< \tau] \geq \Pr[\max_{j} g_j(x_j)< \tau] = \frac{1}{2}$.

The lower bound on Prophet's reward is:

\begin{align*}
\frac{1}{2} \tau + \frac{\exp(-2\Delta)}{2}\sum_i E \left[ (g_i(t_i) - \tau)^+ \right] \geq 
\frac{\exp(-2\Delta)}{2}\left( \tau + \sum_i E \left[ (g_i(t_i) - \tau)^+ \right]\right) 
\end{align*}

And the statement is proved.

}
\end{proof}

\section{Simple Mechanisms for a Unit-Demand or Additive Buyer under MRF}\label{sec:simple vs. optimal UD and additive}

In this section, we first use the duality framework from~\cite{CaiDW16, CaiZ17} to construct an upper bound of $\rev(D)$. Next, we prove that if the buyer has either unit-demand or additive valuation across the items, $\max\{\srev,\brev\}$ is a $O(\exp(12\Delta(\vT)))$-approximation or a $O(\exp(4\Delta(\vT)))$-approximation of $\rev(D)$, respectively.

\subsection{Benchmark of the Optimal Revenue for Constrained Additive Valuations}\label{sec:constrained additive benchmark}
In this section, we use the duality framework from~\cite{CaiDW16, CaiZ17} to construct an upper bound of $\rev(D)$. We describe a benchmark of the optimal revenue for all constrained additive valuations. Deriving a benchmark for XOS valuations requires some extra care, and we provide details of the derivation in Section~\ref{sec:duality for XOS} when we study XOS valuations. We first remind the readers the partition of type space used in~\cite{CaiDW16, CaiZ17}.
\begin{definition}[Partition of the Type Space for Constrained Additive Valuations~\cite{CaiDW16, CaiZ17}]
\label{def:partition}
	We partition the type space $T$ into $n$ regions, where $R_i=\{\vT\in T: i~\text{is the smallest index in}~\argmax_{i'\in[n]}t_{i'}\}$. If $\vT\in R_i$, we call item $i$ the \textbf{favorite item} of type $\vT$.
	\end{definition}
	
To handle the dependence across the items, we introduce some new notations to specify the benchmark.	
\begin{definition}[Ironed Virtual Value]\label{def:ironed virtual values}
Let $D$ be the type distribution. For any  $\bm{t} \in R_i$, we use $\phi_i(t_i)$ to denote the \textbf{ironed Myerson's virtual value} for distribution $D_i$, $\phi_i(t_i\mid t_{-i})$ to denote the \textbf{ironed Myerson's virtual value} when we ironed $D_{i\mid t_{-i}}$ over interval $[\max_{j\neq i} t_j,\max \supp(D_{i\mid t_{-i}})]$. 

If $D_{i\mid t_{-i}}$ is a regular distribution and $t'_i= \argmin \{\hat{t} \in \supp(D_{{i}\mid t_{-i}}): \hat{t}>t_i \}$, $$
\phi_i(t_i\mid t_{-i})= t_{i} - \frac{ \left( t_{i}' -t_{i} \right) \cdot \Pr_{\hat{\vT}\sim D}\left[ \hat{t}_i> t_i \land \hat{t}_{-i}=t_{-i} \right] }{f(\vT)} =  t_{i} - \frac{ \left( t_{i}' -t_{i} \right) \cdot \left(1- F_i(t_i\mid t_{-i})\right)}{f_i( t_i \mid t_{-i})}.$$

Moreover, $\phi_i(t_i\mid t_{-i})$ always satisfies the following property: 
$$\max_{p\geq \max_{j\neq i} t_j} p\cdot (1- F_i(p\mid t_{-i}))=\sum_{\substack{t_i:~ (t_i,t_{-i})\in R_i}} f_i(t_i\mid t_{-i}) \cdot \phi_i(t_i\mid t_{-i})^+,$$ where $x^+=\max \{x,0\}$.
\notshow{Where in the independent case, we had that:
\begin{align*}
\phi_j  \left( t \right) = &t_{j} - \frac{ \left( t_{j}' -t_{j} \right) \cdot Pr\left[ t> t_j \right] }{\Pr[ t_j ]} \\
\end{align*}}
\end{definition}
Lemma~\ref{lem:benchmark constrained additive} contains the benchmark we use. See Appendix~\ref{appx:dual_add_ud} for more details about Lemma~\ref{lem:benchmark constrained additive}.
\begin{lemma}[Benchmark of Optimal Revenue for Constrained Additive Valuations]\label{lem:benchmark constrained additive}
	Given a distribution $D$ over the type space $T$, and a mechanism $M=(\pi,p)$, if the buyer's valuation $v$ is constrained additive, then we have the following benchmark:
	\begin{align*}
		%\rev(M,v, D)\leq & \sum_{\vT \in T} \sum_{i\in[n]} f(\vT) \cdot  \pi_i (\vT) \cdot t_i \cdot  \mathds{1} \left[ \vT  \notin R_i \right]\quad (\nf) \\
\rev(M,v, D)\leq &\sum_{\vT \in T} \sum_{i\in[n]} f(\vT) \cdot \pi _i (\vT) \cdot	\phi_i(t_i\mid t_{-i}) \cdot	 \mathds{1} \left[ \vT  \in R_i \right] \quad (\single)\\
& \qquad\qquad+\sum_{\vT \in T}\sum_{i\in[n]} f(\vT) \cdot  \pi _i (\vT) \cdot t_i \cdot  \mathds{1} \left[ \vT  \notin R_i \right] (\nf) \\
\leq &\sum_{\vT \in T} \sum_{i\in[n]} f(\vT) \cdot \pi _i (\vT) \cdot	\phi_i(t_i\mid t_{-i}) \cdot	 \mathds{1} \left[ \vT  \in R_i \right] \quad (\single)\\
&\qquad\qquad +\sum_{i\in[n]}\sum_{t_i > r} f_i(t_i) \cdot t_i\cdot \Pr_{\vT'\sim D}\left[ \vT'  \notin R_i \mid t'_i=t_i \right] \quad(\tail) \\
 & \qquad\qquad\qquad\qquad+\sum_{i\in[n]} \sum_{t_i \leq r} f_i(t_i) \cdot t_i \quad(\core),
 	\end{align*}
	where $r=\srev(v,D)$.
\end{lemma}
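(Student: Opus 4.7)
The plan is to adapt the Cai--Devanur--Weinberg duality framework~\cite{CaiDW16,CaiZ17} to the correlated MRF setting. First I would write the revenue-maximization problem as an LP whose variables are the allocation and payment rules and whose constraints are the IC and IR conditions. Taking its Lagrangian dual with multipliers $\lambda(\vT,\vT')$ for the IC constraints, and using IR to pin down payments, yields for any feasible flow $\lambda$ an upper bound of the form $\sum_{\vT}f(\vT)\sum_{i}\pi_i(\vT)\,\Phi_i(\vT)$, where the ``virtual value'' $\Phi_i(\vT)$ is determined by $\lambda$.

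Next I would choose a flow tailored to the partition $\{R_i\}_{i\in[n]}$ and to the MRF structure: for each $\vT\in R_i$, the flow only moves probability mass between types $(t_i,t_{-i})$ and $(t_i',t_{-i})$ sharing the same $t_{-i}$, mimicking the standard Myerson reduction applied to the conditional distribution $D_{i\mid t_{-i}}$ restricted to $[\max_{j\neq i}t_j,\infty)$ and ironed on that interval. With this choice, the virtual value of item $i$ at a type $\vT\in R_i$ becomes exactly $\phi_i(t_i\mid t_{-i})$, while for any other item $j$ at the same $\vT$ we have $\vT\notin R_j$, so no flow is constructed along coordinate $j$ and its virtual value reduces to the face value $t_j$. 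Summing over $\vT$ gives the first inequality $\rev(M,v,D)\leq \single + \nf$.

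The step from $\nf$ to $\tail + \core$ is then elementary. Dropping the allocation probability via $\pi_i(\vT)\leq 1$ and regrouping by $t_i$,
\[
\nf\;\leq\;\sum_{i\in[n]}\sum_{t_i\in T_i} f_i(t_i)\cdot t_i\cdot \Pr_{\vT'\sim D}\bigl[\vT'\notin R_i\,\big|\,t'_i=t_i\bigr].
\]
Splitting the inner sum at $t_i=r$ gives $\tail$ directly for $t_i>r$, and for $t_i\leq r$ we further bound $\Pr_{\vT'\sim D}[\vT'\notin R_i\mid t'_i=t_i]\leq 1$ to obtain $\core$.

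The main obstacle is verifying that the conditional flow construction really does produce the ironed conditional virtual values $\phi_i(\cdot\mid t_{-i})$ and obeys the key identity in Definition~\ref{def:ironed virtual values} relating $\sum_{t_i:(t_i,t_{-i})\in R_i}f_i(t_i\mid t_{-i})\cdot\phi_i(t_i\mid t_{-i})^+$ to the optimal conditional posted-price revenue. This amounts to running Myerson's classical one-dimensional argument separately for each frozen $t_{-i}$ on the restricted support $T_i\cap[\max_{j\neq i}t_j,\infty)$. Because the flow never crosses between different $t_{-i}$-slices, all correlations across items are absorbed into the conditioning, and once this per-slice identity is established, the remainder of the CDW derivation carries over essentially unchanged.
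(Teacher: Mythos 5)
Your proposal matches the paper's argument: the first inequality is obtained exactly as you describe, by instantiating the Cai--Devanur--Weinberg Lagrangian dual with a flow that only moves mass within each fixed-$t_{-i}$ slice of $R_i$ (this is precisely the flow construction the paper uses, cf.\ the analogous Lemma~\ref{lem:flow_xos} and the Myerson identity in Definition~\ref{def:ironed virtual values}), and the second inequality is the same elementary step (Lemma~\ref{lem:nonfavorite to core and tail}): drop $\pi_i(\vT)\leq 1$, write $f(\vT)=f_i(t_i)f_{-i}(t_{-i}\mid t_i)$, regroup by $t_i$, split at $t_i=r$, and bound the conditional non-favorite probability by $1$ on the $\core$ part. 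No gaps; this is essentially the paper's proof.
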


\paragraph{Single-Dimensional Copies Setting:} In the analysis of unit-demand bidders with independent items~\cite{ChawlaHMS10, CaiDW16}, the optimal revenue is upper bounded by the optimal revenue in the single-dimensional copies setting defined in~\cite{ChawlaHMS10}. We make use of the same technique in our analysis. There is a single item for sale, and we construct $n$ agents, where agent $i$ has value $t_i$ for winning the item. Notice that this is a single-dimensional setting, as each agent's type is specified by a single number. %We denote the setting as the $\copies$ be the optimal DSIC revenue in the copies setting.

\subsection{A Unit-Demand Buyer}\label{sec:MRF unit-demand}
In this section, we show that a simple posted price mechanism can extract $O(\exp(12\Delta(\vT)))$ fraction of the optimal revenue when the type distribution $D$ is a MRF. We first use the revenue of the Ronen's lookahead auction~\cite{Ronen_2001} to upper bound the benchmark from Lemma~\ref{lem:benchmark constrained additive}.~\footnote{ Ronen's lookahead auction considers the setting where the seller is selling a single  item to a set of buyers, whose values for the item are correlated.}
Ronen's auction first identifies the highest bidder,
and offers a take it or leave it price to the highest bidder to maximize the revenue conditioned on the other bidders' types. The proof follows from the definition of Ronen's lookahead auction and basic properties of MRF presented in Lemma~\ref{lemma:cond_MRF} and~\ref{lemma:tv_MRF}. We postpone the proof to Appendix~\ref{sec:appx UD}.

\begin{lemma}
\label{lemma:single_ronen}
Let the type distribution $D$ be represented by a MRF, $M$ be any IC and IR mechanism for a unit-demand buyer, and $\rcopies$ be revenue of the Ronen's lookahead auction~\cite{Ronen_2001} in the $\copies$ settings with respect to $D$.
The following inequalities hold:
\begin{align*}
\max\{\single,\nf\} &\leq \rcopies \\
\rcopies &\leq \exp(8\Delta(\vT)) \E_{\vT}\left[\max_{i\in[n]} \phi_i(t_i)^+\right].
\end{align*}
\end{lemma}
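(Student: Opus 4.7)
The plan is to establish the three bounds separately. For $\single \leq \rcopies$, after upper bounding $\pi_i(\vT)\leq 1$ and replacing $\phi_i(t_i\mid t_{-i})$ by its positive part, $\single$ is at most
\[
\sum_i \sum_{t_{-i}} f_{-i}(t_{-i}) \sum_{t_i:\,(t_i,t_{-i})\in R_i} f_i(t_i\mid t_{-i})\,\phi_i(t_i\mid t_{-i})^+.
\]
The identity in Definition~\ref{def:ironed virtual values} collapses the inner double sum into $\max_{p\geq \max_{j\neq i} t_j} p(1-F_i(p\mid t_{-i}))$, which is precisely the revenue contribution of Ronen's lookahead auction from region $R_i$ at $t_{-i}$; summing over $i$ and $t_{-i}$ recovers $\rcopies$. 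For $\nf \leq \rcopies$, unit-demand feasibility ($\sum_{i\neq i^*(\vT)}\pi_i(\vT)\leq 1$) together with $t_i\leq \max_{j\neq i^*(\vT)} t_j$ for every non-favorite item shows that the inner $\nf$-summand is at most $\max_{j\neq i^*(\vT)} t_j$ pointwise, so $\nf\leq \E_{\vT}[\max_{j\neq i^*(\vT)} t_j]$. Ronen's auction, in turn, is free to post the second-highest bid as a take-it-or-leave-it price to the highest bidder; this price is deterministically accepted, so $\rcopies\geq \E_{\vT}[\max_{j\neq i^*(\vT)} t_j]$.

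For $\rcopies \leq \exp(8\Delta(\vT))\,\E_{\vT}[\max_i \phi_i(t_i)^+]$, the strategy is to pay an $e^{4\Delta}$ factor twice: once to pass from the conditional CDF $F_i(\cdot\mid t_{-i})$ to the marginal $F_i(\cdot)$ via Lemma~\ref{lemma:cond_MRF}, and once to swap the product $f_i(t_i)f_{-i}(t_{-i})$ back to the joint $f(\vT)$ via Lemma~\ref{lemma:tv_MRF}. Starting from
\[
\rcopies = \sum_i \sum_{t_{-i}} f_{-i}(t_{-i})\max_{p\geq \max_{j\neq i} t_j} p(1-F_i(p\mid t_{-i})),
\]
I would apply Lemma~\ref{lemma:cond_MRF} summed over $\{t_i\geq p\}$ to obtain $1-F_i(p\mid t_{-i})\leq e^{4\Delta(\vT)}(1-F_i(p))$, absorbing the first factor. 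The (marginal) Myerson identity $p(1-F_i(p))=\sum_{t_i\geq p} f_i(t_i)\phi_i(t_i)$ and its positive-part bound then upper-bound $\max_{p\geq \tau_i} p(1-F_i(p))$ by $\sum_{t_i\geq \max_{j\neq i}t_j} f_i(t_i)\phi_i(t_i)^+$. Because the tail condition $\{t_i\geq \max_{j\neq i}t_j\}$ coincides with $\{\vT\in R_i\}$ modulo the smallest-index tie-breaking of Definition~\ref{def:partition}, the sum over $i$ collapses to exactly one surviving term per realization, yielding $\sum_\vT f_{i^*(\vT)}(t_{i^*})\,f_{-i^*(\vT)}(t_{-i^*})\,\phi_{i^*(\vT)}(t_{i^*})^+$. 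Finally, Lemma~\ref{lemma:tv_MRF} applied to singleton events gives $f_i(t_i)f_{-i}(t_{-i})\leq e^{4\Delta(\vT)}f(\vT)$, producing the second factor and bounding the whole quantity by $e^{8\Delta(\vT)}\,\E_{\vT}[\phi_{i^*(\vT)}(t_{i^*})^+]\leq e^{8\Delta(\vT)}\,\E_{\vT}[\max_i \phi_i(t_i)^+]$.

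The main obstacle is to ensure that no factor of $n$ creeps in when one passes from Ronen's sum-over-$i$ formula to the max-over-$i$ marginal benchmark. The crucial observation that avoids this blow-up is that, after invoking the marginal Myerson identity, the tail restriction $t_i\geq \max_{j\neq i}t_j$ forces at most one index per realization $\vT$ to contribute—namely $i=i^*(\vT)$—so the surviving marginal virtual value is $\phi_{i^*(\vT)}(t_{i^*})^+$, which is automatically dominated by $\max_i\phi_i(t_i)^+$. A secondary technical care is consistent tie-breaking on the marginal side, handled via the same smallest-index convention that defines $\{R_i\}$.
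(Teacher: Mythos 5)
Your proposal matches the paper's proof essentially step for step: the $\single$ bound via the identity in Definition~\ref{def:ironed virtual values}, the $\nf$ bound by dominating the inner summand with the second-highest bid and noting that posting the second-highest bid is an admissible lookahead price, and the $\exp(8\Delta)$ bound by interposing the marginal Myerson identity between two $\exp(4\Delta)$-factor swaps and using the partition $\{R_i\}$ to keep the sum over $i$ from introducing a factor of $n$. The only cosmetic difference is that you cite Lemma~\ref{lemma:cond_MRF} for the first $\exp(4\Delta)$ factor while the paper cites Lemma~\ref{lemma:tv_MRF}; these are interchangeable in this context.
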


Equipped with Lemma~\ref{lemma:single_ronen}, we can apply the prophet inequality for MRF to show that a posted-price mechanism can obtain expected revenue that is at least $\Omega\left(\frac{\rcopies}{\exp(12\Delta(\vT))}\right)$. We delay the proof to Appendix~\ref{sec:appx UD}.

\begin{theorem}\label{thm:MRF unit-demand}
Let the type distribution $D$ be represented by a MRF. If the buyer's valuation is unit-demand, then there exists a  posted-price mechanism $M$ with prices $\{p_i\}_{i\in [n]}$ that obtains expected revenue at least $\rev(D)\over 8\exp(12\Delta(\vT))$.
\end{theorem}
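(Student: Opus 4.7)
The plan is to chain the duality benchmark with the Ronen-type reduction and close with a prophet-inequality-style posted pricing. Since unit-demand is constrained additive, Lemma~\ref{lem:benchmark constrained additive} bounds $\rev(D) \leq \single+\nf$, and Lemma~\ref{lemma:single_ronen} then gives $\rev(D) \leq 2\,\rcopies \leq 2\exp(8\Delta(\vT))\cdot\E_\vT\!\left[\max_{i\in[n]} \phi_i(t_i)^+\right]$. So the task reduces to exhibiting posted prices whose expected revenue is $\Omega(\exp(-4\Delta(\vT)))\cdot\E_\vT[\max_i\phi_i(t_i)^+]$.

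I will mimic the classical prophet-inequality pricing: pick $\tau^*$ to be the median of $\max_i\phi_i(t_i)^+$ (which is nonnegative, so $\tau^*\geq 0$), and set $p_i := \min\{t\in T_i : \phi_i(t)\geq \tau^*\}$, so that monotonicity of the ironed virtual value makes the events $\{t_i\geq p_i\}$ and $\{\phi_i(t_i)\geq\tau^*\}$ coincide. To lower bound the mechanism's revenue, I will only count the contribution of types where exactly one item $i$ is affordable -- on those types the unit-demand buyer is forced to purchase item $i$ at price $p_i$ -- yielding
\begin{align*}
\text{expected revenue} \;\geq\; \sum_{i\in[n]} p_i \cdot \Pr_\vT\!\bigl[\,t_i\geq p_i \;\wedge\; \forall j\neq i:\, t_j<p_j\,\bigr].
\end{align*}

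The main obstacle is that correlation prevents treating the event above as a product: this is where Lemma~\ref{lemma:tv_MRF} earns its keep. Applying it with $\EE=\{t_i\geq p_i\}\subseteq T_i$ and $\EE'=\{t_{-i}\in T_{-i}:\forall j\neq i,\, t_j<p_j\}$ pays a multiplicative $\exp(-4\Delta(\vT))$ to factor the joint probability into $q_i\cdot\Pr[\forall j\neq i:\, t_j<p_j]$, where $q_i:=\Pr_{t_i\sim D_i}[t_i\geq p_i]$. Dropping the $j=i$ constraint only enlarges the event, so this latter factor is at least $\Pr[\forall j:\, t_j<p_j] = 1-\Pr[\max_j\phi_j(t_j)^+\geq\tau^*]\geq 1/2$ by our choice of $\tau^*$. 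From here the remainder is the familiar one-dimensional prophet calculation: Myerson's lemma applied to the marginal $D_i$ at price $p_i$ gives $p_i q_i = \tau^* q_i + \E_{t_i\sim D_i}[(\phi_i(t_i)-\tau^*)^+]$; the union bound yields $\sum_i q_i\geq \Pr[\exists i:\phi_i(t_i)\geq\tau^*]=1/2$; and the standard estimate $\E_\vT[\max_i\phi_i(t_i)^+] \leq \tau^*+\sum_i\E_{t_i\sim D_i}[(\phi_i(t_i)-\tau^*)^+]$ closes the loop to give $\sum_i p_i q_i\geq \tfrac12\, \E_\vT[\max_i\phi_i(t_i)^+]$. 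Multiplying the factors produces expected revenue at least $\tfrac{\exp(-4\Delta(\vT))}{4}\,\E_\vT[\max_i\phi_i(t_i)^+]$, which is $\geq \rev(D)/(8\exp(12\Delta(\vT)))$ by the opening chain of inequalities, matching the claim exactly.
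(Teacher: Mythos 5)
Your proposal is correct and follows essentially the same route as the paper's own proof: the same chain $\rev(D)\le\single+\nf\le 2\,\rcopies\le 2\exp(8\Delta(\vT))\,\E_\vT[\max_i\phi_i(t_i)^+]$ via Lemmas~\ref{lem:benchmark constrained additive} and~\ref{lemma:single_ronen}, the same median threshold $\tau^*$ and virtual-value-inverted prices, the same decomposition into the events $\{t_i\ge p_i\}$ and $\{\forall j\neq i:\,t_j<p_j\}$, the same application of Lemma~\ref{lemma:tv_MRF} to pay an $\exp(-4\Delta(\vT))$ factor for the product factorization, and the same Myerson/union-bound/prophet estimate to reach $\exp(-4\Delta(\vT))/4\cdot\E_\vT[\max_i\phi_i(t_i)^+]$ and hence $\rev(D)/(8\exp(12\Delta(\vT)))$. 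The only cosmetic difference is writing the price threshold with $\phi_i(t)\ge\tau^*$ rather than $\phi_i(t)^+\ge\tau^*$, which is equivalent since $\tau^*\ge0$.
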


Is it possible to improve the dependence on $\Delta$? In Theorem~\ref{thm:COPIES_example}, we show that if we use the optimal revenue in the COPIES setting as a benchmark of the optimal revenue in the original setting, then the exponential dependence on $\Delta(\vT)$ is unavoidable.
%Details about this claim are in Section~\ref{sec:LB_copies}.

\notshow{In the next theorem, we show that if we use the optimal revenue in the COPIES setting as a benchmark of the optimal revenue in the original setting, then the exponential dependence on $\Delta(\vT)$ is unavoidable. The proof is postponed to Appendix~\ref{sec:LB_copies}.
%Lemma~\ref{lem:COPIES_example} indicates that any approach that upper bounds the optimal revenue that can be generated by a single buyer with correlated valuation distribution $D$, with the revenue that can be generated by $D$ in the COPIES settings, cannot achieve a subexponential approximation (w.r.t. the inverse temperature of $D$) of the revenue using simple mechanisms. We note that we use an approach like this.

\begin{theorem}\label{thm:COPIES_example}
For any value of $n\in \mathbb{N}$ and $\beta \in \mathbb{R}_+$
there exists an MRF $D$ with $n+1$ nodes and only pairwise potentials,
such that the Markov influence $\beta_{i,j}\leq \beta$ for every pair $i$, $j$ and it holds that the expected optimal revenue in the COPIES settings can be arbitrarily close to 
$\frac{1}{2}\exp(2\beta n)$,
while $\max\{\brev, \srev\} < 2$. Note that $\Delta(t)\leq \beta n$.
\end{theorem}
}
\notshow{
\begin{lemma}
Let $T=(T_1 , \ldots , T_n)$ be a MRF with degree $\Delta$.
We can find a sequential posted priced mechanism $(SEQ)$ such that:
\begin{align*}
    E [\max_i \phi_i(x_i)] \leq 2 \exp(8\Delta) (SEQ)
\end{align*}
\end{lemma}

\begin{proof}
    According to the Prophet Inequality for MRF,
    there exists a threshold $\tau$, such that the reward of the Prophet with respect to $\phi(\cdot)$ is at least $\frac{\exp(-2\Delta)}{2}E [\max_i \phi_i(t_i)]$.
    For the value of $\tau$ stated in the lemma, 
    we set the price for item $i$ to $p_i=\phi^{-1}_i(\tau)$ and we compare the revenue of the sequential posted mechanism with prices $p_i$ to the reward of the prophet with threshold $\tau$.
    
    We reorder the items such that $p_1\leq p_2 \leq \ldots \leq p_n$.
    We assume that when the buyer has positive utility for multiple items,
    she always buys the item with the smallest price.
    This couples the item that the buyer buys with the item the prophet chooses.
    We assume that both the prophet and the buyer have visit the first $i-1$ items and have bought nothing.
    We assume that the realized values of the item so far are $t_{<i}$.
    The expected contribution of the $i$-th item to the prophet reward is, using Lemma~\ref{lemma:tv_MRF} is:
    
    \begin{align*}
        \sum_{\substack{t_i \in T_i \\ t_i \geq p_i}} \Pr[t_i\mid t_{<i}] \phi_i(t_i)
        \leq & \exp(4\Delta) \sum_{\substack{t_i \in T_i \\ t_i \geq p_i}} \Pr[t_i] \phi_i(t_i) \\
        = & \exp(4\Delta) p_i \Pr[t \geq p_i] \\
        \leq & \exp(8\Delta) p_i \Pr[t \geq p_i \mid t_{<i}]
    \end{align*}

The last term is $\exp(8\Delta)$ times the expected revenue obtained by item $i$ conditioned on the realization of the previous values.

This implies that the total reward of the prophet is upper bounded by $\exp(8\Delta) (SEQ)$.

This concludes the lemma.

\end{proof}

}

\subsection{An Additive Buyer}\label{sec:MRF additive}
In this section, we show that $\max\{\srev, \brev\}$ is a $O(\exp(4\Delta(\vT)))$ approximation of the optimal revenue when the type distribution $D$ is a MRF. We denote by $r_i$ the revenue of Myerson's auction for selling item $i$ only. We use $r=\sum_{i\in[n]} r_i$ to denote $\srev$, as the revenue collected from item $i$ only depends on the marginal distribution $D_i$. We first upper bound the terms \single~and \tail~by $\exp(4\Delta(t))\cdot \srev$. The proof follows from a combination of the standard analysis of the terms \single~and \tail~from \cite{CaiDW16,CaiZ17} with properties of MRFs (Lemma~\ref{lemma:tv_MRF}). We postpone the proof to Appendix~\ref{sec:appx_additive}.

\begin{lemma}\label{lem:bounding single and tail additive}
Let the type distribution $D$ be a MRF and $M$ be any IC and IR mechanism for an additive  buyer. The following inequalities holds:
$\single \leq \exp(4\Delta(\vT))\cdot \srev$ and $\tail\leq \exp(4\Delta(\vT))\cdot \srev$.
\end{lemma}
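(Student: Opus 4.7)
The plan is to bound \single\ and \tail\ separately, in each case reducing the task to standard per-item Myerson calculations after paying a single multiplicative factor of $\exp(4\Delta(\vT))$ via Lemma~\ref{lemma:tv_MRF}. Throughout, let $r_i$ denote the single-item Myerson revenue from $D_i$, so $\srev=\sum_i r_i=r$.

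For \single, I would first bound $\pi_i(\vT)\phi_i(t_i\mid t_{-i})$ by $\phi_i(t_i\mid t_{-i})^{+}$ and regroup the sum by conditioning on $t_{-i}$. By the defining property of ironed virtual values recorded in Definition~\ref{def:ironed virtual values},
\begin{align*}
\sum_{t_i:\,(t_i,t_{-i})\in R_i} f_i(t_i\mid t_{-i})\,\phi_i(t_i\mid t_{-i})^{+} \;=\; \max_{p\geq \max_{j\neq i} t_j}\, p\bigl(1-F_i(p\mid t_{-i})\bigr).
\end{align*}
Then I apply Lemma~\ref{lemma:tv_MRF} with $\EE=\{t_i\in T_i: t_i>p\}$ and $\EE'=\{t_{-i}\}$ to obtain $1-F_i(p\mid t_{-i})\leq \exp(4\Delta(\vT))\bigl(1-F_i(p)\bigr)$ \emph{pointwise in $p$}, so the max above is at most $\exp(4\Delta(\vT))\cdot \max_{p\geq 0} p(1-F_i(p)) = \exp(4\Delta(\vT))\cdot r_i$. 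Taking the convex combination over $t_{-i}$ and summing over $i$ yields $\single\leq \exp(4\Delta(\vT))\,\srev$.

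For \tail, the observation is that $\vT'\notin R_i$ forces some $j\neq i$ to satisfy $t'_j\geq t_i$, so a union bound gives $\Pr_{\vT'}[\vT'\notin R_i\mid t'_i=t_i]\leq \sum_{j\neq i}\Pr_{\vT'}[t'_j\geq t_i\mid t'_i=t_i]$. Applying Lemma~\ref{lemma:tv_MRF} once with $\EE=\{t_i\}$ and $\EE'=\{c_{-i}\in T_{-i}: c_j\geq t_i\}$ turns each conditional probability into $\exp(4\Delta(\vT))\Pr[t'_j\geq t_i]$, an expression in marginals only. The remaining manipulation is the standard independent-items argument of~\cite{CaiDW16,CaiZ17}: $t_i\cdot \Pr[t'_j\geq t_i]\leq r_j$ since the left side is the revenue of posting price $t_i$ on item $j$, and $\Pr[t_i>r]\leq r_i/r$ by the analogous observation for item $i$ at price $r$. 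Substituting gives $\sum_{t_i>r} f_i(t_i)\,t_i\,\Pr[t'_j\geq t_i]\leq r_j\,r_i/r$, and summing over $i\neq j$ produces $\tail\leq \exp(4\Delta(\vT))\cdot \frac{1}{r}\sum_{i\neq j} r_i r_j\leq \exp(4\Delta(\vT))\cdot \frac{(\sum_i r_i)^2}{r}=\exp(4\Delta(\vT))\,\srev$.

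The main obstacle I anticipate is preventing the $\exp(4\Delta(\vT))$ factor from compounding across the multiple probabilistic manipulations. In \single\ this is handled by invoking Lemma~\ref{lemma:tv_MRF} pointwise in $p$ before taking the $\max$, rather than applying it to the already-maximized quantity. In \tail\ the MRF-to-product swap is performed exactly once on the conditional probability, after which the analysis reduces to bounds involving only the marginals $D_i$ and single-item Myerson revenues; the joint distribution never appears again.
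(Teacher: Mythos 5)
Your proof is correct and follows essentially the same route as the paper's: bound $\pi_i\phi_i$ by $\phi_i^+$, invoke the defining identity of the ironed virtual value to rewrite the inner sum as $\max_{p\geq \max_{j\neq i}t_j}p(1-F_i(p\mid t_{-i}))$, apply Lemma~\ref{lemma:tv_MRF} pointwise to swap the conditional tail for the marginal tail, and compare to $r_i$; for \tail, pay the $\exp(4\Delta)$ factor once to decouple $t'_{-i}$ from $t'_i$, then run the standard union-bound/Myerson argument on marginals. The only cosmetic difference is that you take the union bound before applying Lemma~\ref{lemma:tv_MRF} term-by-term, whereas the paper applies the lemma to the union event first and then union-bounds; both orders are valid and yield the same $\exp(4\Delta(\vT))\cdot\srev$.
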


\notshow{
Similarly, we can bound the term \tail.

\begin{lemma}\label{lem:bounding tail additive}
Let the type distribution $D$ be a MRF and the buyer has additive valuation. The following inequality holds: $\tail\leq \exp(4\Delta(\vT))\cdot \srev$.
\end{lemma}

\begin{prevproof}{Lemma}{lem:bounding tail additive}
$\sum_{i\in[n]}\sum_{t_i > r} f_i(t_i) \cdot t_i\cdot \Pr_{\vT'\sim D}\left[ \vT'  \notin R_i \mid t'_i=t_i \right]$

First, note that $\Pr_{\vT'\sim D}\left[ \vT'  \notin R_i \mid t'_i=t_i \right] \leq \Pr_{\vT'\sim D}\left[ \exists k \neq i : t'_k \geq t_i \mid t'_i= t_i \right]$.
Therefore

\begin{align*}
     \tail \leq & \sum_{i\in[n]}\sum_{t_i > r} f_i(t_i) \cdot t_i\cdot \Pr_{\vT'\sim D}\left[ \exists k \neq i : t'_k \geq t_i \mid t'_i= t_i \right]\\
     \leq & \exp(4\Delta(\vT))\cdot \sum_{i\in[n]}\sum_{t_i > r} f_i(t_i) \cdot t_i\cdot \Pr_{t'_{-i}\sim D_{-i}}\left[ \exists k \neq i : t'_k \geq t_i \right]\\
     \leq &  \exp(4\Delta(\vT))\cdot \sum_{i\in[n]}\sum_{t_i > r} f_i(t_i) \cdot t_i\cdot \left(\sum_{k\neq i}\Pr_{t'_{k}\sim D_{k}}\left[t'_k \geq t_i \right]\right)\\
     \leq &  \exp(4\Delta(\vT))\cdot \sum_{i\in[n]}\sum_{t_i > r} f_i(t_i) \cdot \sum_{k\neq i} r_k\\
        \leq &  \exp(4\Delta(\vT))\cdot \sum_{i\in[n]}r\cdot \sum_{t_i > r} f_i(t_i) \\
        \leq &  \exp(4\Delta(\vT))\cdot \sum_{i\in[n]}r_i\\
        =& \exp(4\Delta(\vT))\cdot r
     \end{align*}

The second inequality is due to Lemma~\ref{lemma:tv_MRF}. The third inequality follows from the union bound. The fourth and sixth inequalities hold because $r_k\geq t_i\cdot \Pr_{t'_{k}\sim D_{k}}\left[t'_k \geq t_i \right]$ and $r_i\geq r\cdot (1-F_i(r))$. 
\notshow{
Using Lemma~\ref{lemma:MRF_tv_product} with condition function $C(T_{-i})=\mathds{1}[\exists k \neq j : t_k \geq t_j]$ we have that:

\begin{align*}
    Pr\left[ \exists k \neq j : t_k \geq t_j \mid t_j \right] \leq \exp(2\Delta) \Pr[t_j] Pr\left[ \exists k \neq j : t_k \geq t_j \right]
\end{align*}

This implies that:

\begin{align*}
 & \sum_j \sum_{t_j > r} \Pr[t_j] \cdot t_j Pr\left[ \exists k \neq j : t_k \geq t_j \mid  t_j \right] \\
 \leq &
\sum_j \sum_{t_j > r} \Pr[t_j] \cdot t_j \exp(2\Delta) Pr\left[ \exists k \neq j : t_k \geq t_j \right] \\ 
= &
\exp(2\Delta) \sum_j \sum_{t_j > r} \Pr[t_j]\cdot   t_j Pr\left[ \exists k \neq j : t_k \geq t_j \right] \\
\end{align*}

If we set consider the mechanism that posts price $t_j$ at each item expect item $j$,
then its expected revenue is at least $t_j \Pr\left[ \exists k \neq j : t_k \geq t_j \right]$,
which is at most $r$.
This implies that:

\begin{align*}
& \exp(2\Delta) \sum_j \sum_{t_j > r} \Pr[t_j]\cdot   t_j \Pr\left[ \exists k \neq j : t_k \geq t_j \right] \\
 \leq &
 \exp(2\Delta)\sum_j \sum_{t_j > r} \Pr[t_j] r  \\
= &
 \exp(2\Delta) \sum_j \Pr_{t\sim T_j}[t>r] r  \\
 \leq &
 \exp(2\Delta) \sum_j r_i  \\
 = &
 \exp(2\Delta) r 
\end{align*}
}

\end{prevproof}
}
Finally, we analyze the \core. We define new random variables $C_i=t_i \cdot \mathds{1}[t_i \leq r]$.
Let $C = \sum_{i=1}^n C_i$. Note that $\E[C]= \core$. We first provide an upper bound on $\Var[C]$, and show that if we sell the grand bundle at an appropriate price, its revenue is close to the \core. Note that under the item-independence assumption, it is not hard to show that $\Var[C]$ is upper bounded by $2r^2$~\cite{BabaioffILW14,CaiDW16}. However, this analysis does not extend to the case where the buyer type is generated by a MRF. We first obtain a new upper bound of $\Var[C]$. As $C=\sum_{i=1}^n C_i$, we have $\Var[C]=\sum_{i\in[n]} \Var[C_i]+\sum_{i\neq j} \Cov[C_i,C_j]$. We further bound $\sum_{i\in[n]} \Var[C_i]$ by $2r^2$ using the standard analysis in~\cite{BabaioffILW14,CaiDW16} and each covariance $\Cov[C_i,C_j]$ using properties of MRF (Lemma~\ref{lemma:tv_MRF}). The proof is postponed to Appendix~\ref{sec:appx_additive}.

\begin{lemma}
\label{lemma:cov}
Let the type distribution $D$ be represented by a MRF. For any $i,j\in [n]$, $\Cov[C_i, C_j] \leq (\exp(4\Delta(\vT))-1) \E[C_i] \E[C_j]$.
Moreover, $\Var[C]\leq 2 r^2 + (\exp(4\Delta(\vT))-1)\E[C]^2$.
\end{lemma}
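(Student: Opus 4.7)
The plan is to handle the two inequalities separately, first bounding each covariance $\Cov[C_i,C_j]$ via Lemma~\ref{lemma:tv_MRF}, and then combining this with the standard variance bound on $\sum_i \Var[C_i]$ from the item-independent analysis.

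For the first inequality, I would write $\Cov[C_i,C_j] = \E[C_i C_j] - \E[C_i]\E[C_j]$, and expand
\[
\E[C_i C_j] = \sum_{\substack{t_i\leq r \\ t_j\leq r}} t_i\,t_j\,\Pr[t_i\wedge t_j].
\]
Applying Lemma~\ref{lemma:tv_MRF} with $\EE=\{t_i\}\subseteq T_i$ and $\EE'=\{t_{-i}\in T_{-i}: \text{the }j\text{-th coordinate equals } t_j\}$, and marginalizing over the remaining coordinates, gives $\Pr[t_i\wedge t_j]\leq \exp(4\Delta(\vT))\,\Pr[t_i]\Pr[t_j]$. Plugging this in yields $\E[C_iC_j]\leq \exp(4\Delta(\vT))\,\E[C_i]\E[C_j]$, from which the first inequality follows.

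For the second inequality, I would start from the identity $\Var[C]=\sum_i \Var[C_i]+\sum_{i\neq j}\Cov[C_i,C_j]$. The term $\sum_i \Var[C_i]\leq 2r^2$ is established exactly as in the item-independent analysis of~\cite{BabaioffILW14,CaiDW16}: since $C_i\leq r$ we have $\Var[C_i]\leq \E[C_i^2]\leq r\,\E[C_i]$, and one then uses $\sum_i \E[C_i]\leq 2r$ via $\E[C_i]\leq \E[t_i\mathds{1}[t_i\leq r]]$ together with Myerson's revenue $r_i\geq r\,\Pr[t_i>r]$ and $\sum_i r_i=r$—this argument only involves the marginals $D_i$, so it applies without change in the MRF setting.

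Finally, combining the previous two bounds,
\[
\Var[C] \leq 2r^2 + (\exp(4\Delta(\vT))-1)\sum_{i\neq j} \E[C_i]\E[C_j] \leq 2r^2 + (\exp(4\Delta(\vT))-1)\Big(\sum_i \E[C_i]\Big)^2,
\]
and the parenthesized sum equals $\E[C]$ by linearity of expectation, yielding the claim. The only non-routine step is recognizing how to slice the type space so that Lemma~\ref{lemma:tv_MRF} directly applies to a pairwise joint probability; the rest is bookkeeping on top of the known item-independent calculation.
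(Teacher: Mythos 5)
Your covariance bound and the decomposition $\Var[C]=\sum_i\Var[C_i]+\sum_{i\neq j}\Cov[C_i,C_j]$ match the paper's proof exactly, including the way you slice $T_{-i}$ so that Lemma~\ref{lemma:tv_MRF} controls the pairwise joint probability. However, your recap of the standard bound $\sum_i\Var[C_i]\leq 2r^2$ contains a genuine error. You write that one uses $\Var[C_i]\leq\E[C_i^2]\leq r\,\E[C_i]$ together with $\sum_i\E[C_i]\leq 2r$. The second inequality is false: $\sum_i\E[C_i]=\E[C]=\core$, and the $\core$ can exceed any constant multiple of $\srev$ (take, e.g., $n$ i.i.d.\ Exponential$(1)$ item values, where $r=n/e$ but $\sum_i\E[C_i]\to n$). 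Indeed, if $\core\leq 2\,\srev$ held, the rest of the section would be unnecessary. The correct argument, which is what Lemma~9 of~\cite{CaiDW16} actually proves, bounds $\sum_i\E[C_i^2]$ directly: writing
\[
\E[C_i^2]\le\int_0^{r}2v\,\Pr[t_i\geq v]\,dv,\qquad
\sum_i\E[C_i^2]\le\int_0^{r}2v\sum_i\Pr[t_i\geq v]\,dv\le\int_0^{r}2r\,dv=2r^2,
\]
where the last step uses that $v\sum_i\Pr[t_i\geq v]\leq\sum_i r_i=r$ for every $v$ (revenue of posting price $v$ on each item). The extra factor of $v$ in the integrand is what keeps the integral finite; your route factors the sum in a way that loses this factor and would require integrating $r/v$ near $0$, which diverges. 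Your high-level observation that this piece depends only on marginals and carries over to the MRF setting is correct; you just need to cite or reproduce the right intermediate bound.
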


In the item-independence case, the standard analysis~\cite{BabaioffILW14,CaiDW16} applies Chebyshev's inequality to show that the seller can sell the grand bundle at price $\E[C]-2r$ with probability at least $1/2$, which implies that $\core$ is $O(\brev+\srev)$. As our upper bound on $\Var[C]$ is a lot larger, Chebyshev's inequality only gives a vacuous bound on the sell probability.~\footnote{In particular, if we set the price to be $a\cdot \E[C]-\kappa\cdot r$ for any constant $a\in [0,1]$ and $\kappa$, Chebyshev's inequality tells us that the probability that the buyer cannot afford the grand bundle is at most $\frac{\Var[C]}{((1-a)\E[C]+\kappa\cdot r)^2}$. However, our upper bound of $\Var[C]$ will be larger than $((1-a)\E[C]+\kappa\cdot r)^2$, if $\exp(4\Delta(\vT))>2$ and $\E[C]$ is much larger than $r$. In this case, $\frac{\Var[C]}{((1-a)\E[C]+\kappa\cdot r)^2}$ is larger than $1$ making the bound useless.}
To show that selling the grand bundling is a good approximation of the \core, we set the price of the grand bundle differently and use the Paley-Zygmund inequality to prove that either the sell probability is high or the $\core$ is within a constant factor of $r$. The proof of Theorem~\ref{thm:MRF additive} can be found in Appendix~\ref{sec:appx_additive}.

\begin{theorem}\label{thm:MRF additive}
Let the type distribution $D$ be represented by a MRF. If the buyer's valuation is additive, then $$\left(2\exp(4\Delta(\vT))+\sqrt{2}\right)\cdot \srev+8\left(\exp(4\Delta(t))+1\right)\cdot\brev\geq \rev(D).$$
\end{theorem}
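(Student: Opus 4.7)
The plan is to combine the benchmark from Lemma~\ref{lem:benchmark constrained additive} with the three pieces already established: the \single\ and \tail\ bound from Lemma~\ref{lem:bounding single and tail additive}, the variance bound from Lemma~\ref{lemma:cov}, and a grand-bundle price chosen to exploit the Paley--Zygmund inequality. Concretely, I would begin by writing
$$\rev(D) \leq \single + \tail + \core \leq 2\exp(4\Delta(\vT))\cdot \srev + \E[C],$$
so that only the \core\ term remains to be controlled by some combination of $\srev$ and $\brev$.

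Next, I would split into two cases according to the size of $\E[C]$ relative to $r=\srev$. If $\E[C] < \sqrt{2}\, r$, then $\core = \E[C] \leq \sqrt{2}\,\srev$ and we are done with that term. Otherwise, $\E[C] \geq \sqrt{2}\, r$, equivalently $2r^2 \leq \E[C]^2$, and Lemma~\ref{lemma:cov} gives
$$\E[C^2] = \Var[C] + \E[C]^2 \leq 2r^2 + \exp(4\Delta(\vT))\E[C]^2 \leq \bigl(1+\exp(4\Delta(\vT))\bigr)\E[C]^2.$$
I would then price the grand bundle at $p=\E[C]/2$ and apply the Paley--Zygmund inequality to the nonnegative random variable $C$ with parameter $\theta = 1/2$:
$$\Pr[C \geq \tfrac{1}{2}\E[C]] \geq \tfrac{1}{4}\cdot \frac{\E[C]^2}{\E[C^2]} \geq \frac{1}{4(1+\exp(4\Delta(\vT)))}.$$
Since the buyer is additive, whenever the realized bundle value $\sum_i t_i \geq C = \sum_i t_i\mathbf{1}[t_i\leq r]$ exceeds $p$, the buyer purchases the grand bundle, so
$$\brev \geq p\cdot \Pr[C\geq p] \geq \frac{\E[C]}{8(1+\exp(4\Delta(\vT)))},$$
which rearranges to $\core = \E[C] \leq 8\bigl(1+\exp(4\Delta(\vT))\bigr)\brev$.

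Combining the two cases yields $\core \leq \sqrt{2}\,\srev + 8(1+\exp(4\Delta(\vT)))\,\brev$ unconditionally, and substituting back gives the theorem. The main subtlety is in the Paley--Zygmund step: in the item-independent analysis of~\cite{BabaioffILW14,CaiDW16} one uses Chebyshev with $\Var[C] \leq 2r^2$, which is no longer available to us because Lemma~\ref{lemma:cov} has the additional $(\exp(4\Delta(\vT))-1)\E[C]^2$ term that scales with $\E[C]$ itself. Chebyshev at a price like $\E[C]-\Theta(r)$ then gives a trivial bound once $\E[C]\gg r$; this is exactly why the Paley--Zygmund inequality, which uses the \emph{ratio} $\E[C]^2/\E[C^2]$ rather than absolute deviations, is the right tool and why the case split $\E[C] \lessgtr \sqrt{2}\,r$ is needed to absorb the $2r^2$ term into $\E[C]^2$ before applying it.
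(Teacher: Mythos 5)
Your proof is correct and takes essentially the same approach as the paper: it applies Lemma~\ref{lem:bounding single and tail additive} for \single\ and \tail, invokes Paley--Zygmund with $\theta=1/2$ on $C$ via the variance bound of Lemma~\ref{lemma:cov}, prices the grand bundle at $\E[C]/2$, and case-splits on $\E[C]\lessgtr\sqrt{2}\,r$ to absorb the $2r^2$ term. The paper's proof is identical in every essential detail, including the observation that Chebyshev fails once $\E[C]\gg r$.
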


In the following theorem, we show that the approximation ratio must have polynomial dependence on $\Delta(\vT)$. Our proof is based on a modification of the hard instance by Hart and Nisan~\cite{HARTN_2019}. They construct a joint distribution over two items with support size $m$ and show that the optimal revenue is at least  $m^{1/7} \cdot \max\{\brev,\srev\}$. Unfortunately, their construction requires $\Delta$ to be infinite. We show how to modify their construction so that the new distribution has maximum weighted degree $\Delta=O(m)$, and the gap between the optimal revenue and $\max\{\brev,\srev\}$ remains to be $m^{1/7}$. The key is to show that under the new distribution, no type shows up too rarely, and the optimal revenue, $\srev$, and $\brev$ remain roughly the same.  The proof is postponed to Appendix~\ref{sec:LB_Delta_poly}.
\begin{theorem}\label{thm:poly_dependence_Delta}
For any sufficiently large $m \in \mathbb{N}$,
there exists a type distribution over two items represented by a MRF $D$ such that (i) the maximum weighted degree $\Delta$ is at most $C\cdot m$,
where $C$ is an absolute constant;  %(when we have only two items, the inverse temperature coincides with the maximum weighted degree),
(ii) for an additive buyer whose type is sampled from $D$, there exists an absolute constant $C'>0$ such that $\rev(D)\geq C' m^{1/7}\cdot {\max\{\brev(D),\srev(D)\}} $.
\end{theorem}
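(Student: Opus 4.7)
The plan is to start from Hart and Nisan's two-item inapproximability construction~\cite{HARTN_2019} and perturb it just enough so that the resulting type distribution can be expressed as a Markov Random Field with bounded maximum weighted degree. Concretely, Hart and Nisan exhibit a joint distribution $D_0$ on some value space $[V]\times[V]$ that is supported on only $m$ pairs and satisfies $\rev(D_0) \ge c\cdot m^{1/7}\cdot \max\{\srev(D_0),\brev(D_0)\}$ for a universal constant $c>0$. Since $D_0$ assigns probability zero to many type pairs, no finite MRF can represent it (any pairwise MRF assigns strictly positive probability to every configuration), which is precisely why the original construction forces $\Delta=\infty$.

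My first step is to define the smoothed distribution $D := (1-\epsilon)\, D_0 + \epsilon\, U$, where $U$ is the uniform distribution on $[V]\times[V]$ and $\epsilon$ is to be chosen. Because $D$ has full support on a Cartesian product of only two variables, it can be written as a pairwise MRF on two nodes by setting $\psi_1\equiv\psi_2\equiv 0$ and $\psi_{12}(c_1,c_2) := \log D(c_1,c_2) - \tfrac12\bigl(\max \log D + \min \log D\bigr)$. With this centering, the maximum weighted degree equals $\max_{c_1,c_2}|\psi_{12}(c_1,c_2)| \le \tfrac12\log(p_{\max}/p_{\min})$. Since $p_{\max}\le 1$ and $p_{\min}\ge \epsilon/V^2$ for the smoothed distribution, I get $\Delta \le \tfrac12\log(V^2/\epsilon)$. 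Choosing $\epsilon = V^2 \cdot e^{-Cm}$ for a sufficiently large constant $C$ therefore yields $\Delta \le Cm$.

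Next, I would argue that $D$ inherits Hart and Nisan's revenue gap up to constant factors. The optimal mechanism $M^*$ for $D_0$ is IC and IR purely as a property of its allocation and payment rules, so applying it under $D$ gives $\rev(D) \ge \rev(M^*,D) = (1-\epsilon)\rev(M^*,D_0) + \epsilon\,\rev(M^*, U) \ge (1-\epsilon)\rev(D_0)$. On the other hand, for any posted-price or grand-bundle mechanism $M$, one has $\rev(M, U) \le 2V$ (trivially bounded by the maximum total value), so $\srev(D) \le (1-\epsilon)\srev(D_0) + 2\epsilon V$ and analogously $\brev(D) \le (1-\epsilon)\brev(D_0) + 2\epsilon V$. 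Combining these with Hart and Nisan's gap, it suffices to verify that $\epsilon V = V^3 e^{-Cm}$ is small compared to $\rev(D_0)/m^{1/7}$, in which case the ratio $\rev(D)/\max\{\srev(D),\brev(D)\}$ remains $\Omega(m^{1/7})$.

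The main obstacle will be controlling the magnitude of the maximum value $V$ in Hart and Nisan's original construction: if $V$ grows too quickly in $m$, the exponent $Cm$ required to make $\Delta=O(m)$ is not large enough to make $\epsilon V$ negligible. If necessary, I would first truncate the Hart-Nisan construction so that $V = e^{O(m)}$, verifying via a direct computation that such a truncation preserves the $m^{1/7}$ gap up to a constant factor (the Hart-Nisan lower bound is robust enough that only the lightest tail of very large values needs to be removed). With that in hand, taking the constant $C$ in the definition of $\epsilon$ large enough makes $\epsilon V$ exponentially small in $m$ while still keeping $\Delta \le Cm$, which completes the construction and establishes both conclusions of the theorem.
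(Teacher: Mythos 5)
Your high-level strategy matches the paper's: smooth the Hart--Nisan two-item construction into a full-support distribution, write it as a pairwise MRF, and bound $\Delta$ by $\log(p_{\max}/p_{\min})$. The concrete bookkeeping differs slightly (the paper mixes with the \emph{product of the original marginals} at weight $1/2$, which conveniently leaves $\srev$ unchanged, while you mix with the uniform distribution on a full grid at a tiny weight $\epsilon$), but both routes are serviceable and both reduce the whole argument to the same bottleneck: making sure that the Hart--Nisan distribution has $p_{\min}\geq e^{-O(m)}$, equivalently $\xi_m:=\|x_m\|_1\leq e^{O(m)}$.

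This is where your proposal has a genuine gap. You frame the control of $\xi_m$ as a ``truncate the lightest tail'' step and claim the HN lower bound is robust enough to tolerate it. But the real difficulty is not a tail: Hart and Nisan's Proposition~7.5 proves only the \emph{lower} bound $gap_i=\Omega(i^{-6/7})$, not a matching upper bound. Since $\xi_i/\xi_{i-1}=(t_i/t_{i-1})(\|g_i\|_1/\|g_{i-1}\|_1)(gap_{i-1}/gap_i)$ and some $gap_j$ could be $\Theta(1)$ while the next is $\Theta(j^{-6/7})$, a single step of the sequence could blow the ratio $\xi_i/\xi_{i-1}$ up to $\Theta(m^{6/7})$, and over $m$ steps $\xi_m$ can be super-exponential in $m$. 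Simply lopping off the top types is not an option because every point $g_i$ in the shell construction is needed to accumulate the $\sum_i gap_i/\|y_i\|_1=\Omega(m^{1/7})$ revenue ratio. The paper resolves this by re-engineering the shell placement (its Lemma~\ref{lem:hart_nisan_mod}): pin the first point of every shell to a fixed ray through the origin, and ensure every other point has a same-shell predecessor at angle $\Theta(N^{-3/4})$, thereby proving the matching \emph{upper} bound $gap_i=O(i^{-6/7})$. That is what forces $\xi_i/\xi_{i-1}=\Theta(1)$ at every step, and hence $\xi_m=e^{O(m)}$ and $\Delta=O(m)$. Until you prove a $gap_i=\Theta(i^{-6/7})$ two-sided bound (or some equivalent per-step ratio control), your choice $\epsilon=V^2e^{-Cm}$ cannot simultaneously bound $\Delta$ and keep $\epsilon V$ negligible, and the proof does not go through.
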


\notshow{
\begin{lemma}
We can bound the (CORE) by:
\begin{align*}
    (CORE) \leq 
    \begin{cases}
 \left(\exp(8\Delta) + \exp(4\Delta)\right)B_{Rev} + \sqrt{2}r \quad &\textit{, When  $\Delta\geq 0.2$} \\
 (2 + 4\sqrt{x} + (2\sqrt{2} + 11)x + 11x^2)B_{Rev} + \sqrt{2} r &\textit{, When $\Delta \leq 0.2$}
 \end{cases}
\end{align*}

\end{lemma}

\begin{proof}
Using Cantelli's inequality and using Lemma~\ref{lemma:var_MRF_additive} we have that:

\begin{align*}
&Pr\left[ C > (CORE) - \theta\left(\sqrt{2} r + \sqrt{(\exp(4\Delta)-1)} (CORE) \right) \right] \\
\geq & Pr\left[ C > (CORE) - \theta\sqrt{2 r^2 + (\exp(4\Delta)-1) (CORE)^2 } \right] \\
\geq &
Pr\left[ C > (CORE) - \theta\sigma \right] \geq \frac{\theta^2}{1+\theta^2}
\end{align*}

By setting $\theta=\frac{1}{1+\sqrt{\exp(4\Delta)-1}}$, we can note that $1 - \theta \sqrt{\exp(4\Delta)-1}=\theta$ and we have that:

\begin{align*}
     &(CORE) - \theta\left(\sqrt{2} r + \sqrt{(\exp(4\Delta)-1)} (CORE) \right) \\
     = & (CORE) \left(1 - \theta \sqrt{\exp(4\Delta)-1}\right) - \theta \sqrt{2} r\\
     = & \theta \left( (CORE)  - \sqrt{2}r\right)
\end{align*}

Therefore with the choice of $\theta$ as mentioned above,
and selling the grand bundle with price $\theta \left( (CORE)  - \sqrt{2}r\right)$,
we sell the grand bundle with probability at least $\frac{\theta^2}{1+\theta^2}$.
This implies that:

\begin{align*}
&\theta \left( (CORE)  - \sqrt{2}r\right) \cdot \frac{\theta^2}{1+\theta^2} \leq B_{Rev} \\
\Leftrightarrow & (CORE) \leq \frac{1}{\theta^3}B_{Rev} + \frac{1}{\theta} B_{Rev} + \sqrt{2} r
\end{align*}

Now we are going to provide a lower bound for the value of $\theta= \frac{1}{1+\sqrt{\exp(4\Delta)-1}}$ by proving a upper bound on $1+\sqrt{\exp(4\Delta)-1}$.
We note that:

{\color{red}
------------------------
I NEED TO VERIFY AGAIN
------------------------
}

\begin{align*}
 1+ \sqrt{\exp(4\Delta)-1} \leq
 \begin{cases}
 \exp(4\Delta) \quad &\textit{, When  $\Delta\geq 0.18$} \\
 1+ \sqrt{4x +  11 x^2}  &\textit{, When $\Delta \leq 0.2$} \\
 1+ 2 \sqrt{x} + \sqrt{11} x  &\textit{, When $\Delta \leq 0.2$} \\
 \exp(\sqrt{4x +  11 x^2})  &\textit{, When $\Delta \leq 0.2$}
 \end{cases}
\end{align*}

Therefore we have that:

\begin{align*}
    \theta = \frac{1}{1+\sqrt{\exp(4\Delta)-1}} \leq \sqrt{\exp(4\Delta)} = \exp(2\Delta)
\end{align*}

Therefore we conclude that:

\begin{align*}
    (CORE) \leq 
    \begin{cases}
 \left(\exp(12\Delta) + \exp(4\Delta)\right)B_{Rev} + \sqrt{2}r \quad &\textit{, When  $\Delta\geq 0.18$} \\
 (2 + 4\sqrt{x} + (2\sqrt{2} + 11)x + 11x^2)B_{Rev} + \sqrt{2} r &\textit{, When $\Delta \leq 0.2$}
 \end{cases}
\end{align*}

\end{proof}

}

\section{ Simple Mechanisms for a XOS Buyer }\label{sec:MRF XOS}

\subsection{Duality Framework for XOS Valuations}\label{sec:duality for XOS}

The benchmark is obtained using essentially the same approach as in \cite{CaiZ17}. Suppose the buyer has a XOS valuation function $v(\bm{t},S)$.
We denote by $V_i(\bm{t}) = v(\bm{t}, \{i\}) $.
We abuse this notation and we also define for $t_i \in T_i$,
$V_i(t_i) = v((\mathbf{0},\ldots,t_i,\ldots,\mathbf{0}),\{i\})$, where $\mathbf{0}$ is the all $0$ vector. We summarize the benchmark for a XOS buyer in the following Lemma. More details can be found in Appendix~\ref{sec:appx_XOS_duality}.

%\begin{definition}[Partition of Type Space for XOS valuations (similar to the type space defined in \cite{CaiZ16_arxiv})]
%We partition the type space $T$ into $n$ regions, where
%\begin{align*}
 %R_i = \{ \bm{t} \in T : \text{$f(\bm{t}) >0$ and $i$ is the smallest index that belongs in $\argmax_{i\in [n]}V_i(\vT)$  }\}   
%\end{align*}
%\end{definition}

\begin{lemma}
\label{lemma:benchmark_XOS}
Partition the type space $T$ into $n$ regions,
where
\vspace{-.1in}
\begin{align*}
 R_i := \{ \bm{t} \in T : \text{$f(\bm{t}) >0$ and $i$ is the smallest index that belongs in $\argmax_{i\in [n]}V_i(\vT)$  }\} \end{align*}
Let $r=\srev$ be the revenue of the optimal posted price mechanism that allows the buyer to purchase at most one item.
Let $C(\bm{t}) := \{ i : V_i(\bm{t}) < 2 r \}$. For any IC and IR Mechanism $M$, we can bound its revenue by:

	\begin{align*}
\rev(M,v, D)\leq &
2\sum_{\bm{t}\in T}f(\bm{t}) \sum_{i\in [n]}  \pi_i(\bm{t})\phi(V_i(t_i) \mid \bm{t}_{-i})\mathds{1}[t \in R_i] \quad (\single)\\
&\qquad\qquad + 4\sum_{i \in [n]} \sum_{\substack{t_i \in T_i \\ {V_i(t_i)} \geq 2r}} f(t_{i}) \cdot  V_i(t_i) \Pr_{\bm{t}'\sim D}\left[\bm{t}'\notin R_i \mid t_i'=t_i \right](\tail) \\
 & \qquad\qquad\qquad\qquad+  4\sum_{\bm{t}\in T}f(\bm{t}) v(\vT,C(\bm{t}))\quad(\core)
 	\end{align*}
 	
\end{lemma}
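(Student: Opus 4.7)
The plan is to adapt the Cai--Zhao duality framework for XOS valuations to the correlated (MRF) setting. My starting point is the Cai--Devanur--Weinberg Lagrangian duality, which, for any choice of dual variables $\lambda$ inducing a feasible flow in the type graph, gives
\[
\rev(M,v,D)\le \sum_{\vT\in T} f(\vT)\sum_{i\in[n]}\pi_i(\vT)\,\Phi_i(\vT),
\]
where $\Phi_i$ is a ``virtual value'' determined by $\lambda$. Following Cai--Zhao, I would choose the canonical flow that, in each region $R_i$, pushes mass along the single coordinate $t_i$ (holding $t_{-i}$ fixed); outside $R_i$, the flow is trivial. Because the buyer is XOS, at each type $\vT$ there is an additive representative $\alpha(\vT)$ with $\alpha_i(\vT)\le V_i(\vT)$ and $v(\vT,S)\ge\sum_{i\in S}\alpha_i(\vT)$. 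This lets me linearize the revenue bound and treat the XOS valuation as if it were additive in the $V_i$'s, incurring a factor of $2$ in the process.

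Because the item values are dependent, the single-dimensional virtual-value computation on coordinate $i$ is performed in the conditional distribution of $V_i(t_i)$ given $t_{-i}$, yielding the ironed virtual value $\phi(V_i(t_i)\mid \vT_{-i})$. Restricted to the favorite region $R_i$, the duality bound becomes exactly $2\cdot(\single)$. Outside $R_i$ the virtual value simplifies (no flow is pushed), and the contribution is bounded by $2\sum_{\vT} f(\vT)\sum_i \pi_i(\vT)\,V_i(\vT)\,\mathds{1}[\vT\notin R_i]$, the ``non-favorite'' term.

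To process the non-favorite term I would partition each item $i$'s contribution according to whether $V_i(t_i)\ge 2r$ (tail) or $V_i(t_i)<2r$ (core). For the tail, I use the standard trick of marginalizing $t_{-i}$: writing $\pi_i(\vT)\le 1$ and summing over $t_{-i}$ produces exactly $\sum_i\sum_{t_i:V_i(t_i)\ge 2r} f_i(t_i)\,V_i(t_i)\Pr_{\vT'\sim D}[\vT'\notin R_i\mid t'_i=t_i]$, matching $(\tail)$ up to the factor of $4$ absorbed by the next step. For the core, I would use the XOS property together with the definition of $C(\vT)$: because each $i\in C(\vT)$ has $V_i(\vT)<2r$, the core portion of the non-favorite term can be charged, via the XOS additive representative for the set $C(\vT)$ and an averaging argument, to $v(\vT,C(\vT))$. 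It is at this step that the remaining factor of $4$ enters.

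The main obstacle I anticipate is the core bound. In the item-independent XOS analysis one can use concentration and symmetry between the coordinates to compare $\sum_i \pi_i(\vT) V_i(\vT)\mathds{1}[i\in C(\vT), \vT\notin R_i]$ with $v(\vT,C(\vT))$, but in the correlated setting both the indicator $\mathds{1}[\vT\notin R_i]$ and the allocation probability $\pi_i$ are coupled across coordinates, so I must argue purely type-by-type. The key observation is that for any $\vT$, every $i\in C(\vT)$ has $V_i(\vT)<2r\le V_{i^*}(\vT)$ where $i^*$ is the favorite, so $i\ne i^*$; thus $\mathds{1}[\vT\notin R_i]=1$ on those coordinates, and the contribution at type $\vT$ reduces to $\sum_{i\in C(\vT)}\pi_i(\vT)V_i(\vT)$, which I will upper bound by $2\,v(\vT,C(\vT))$ using the XOS decomposition for $v(\vT,C(\vT))$ and the fact that the allocation $\pi(\vT)$ is feasible. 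Collecting the three pieces with their constants yields the stated benchmark.
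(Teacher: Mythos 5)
The non‑favorite decomposition is where your plan goes off the rails, and it is the crux of the XOS lemma. You propose to bound the non‑favorite contribution item‑by‑item as $2\sum_{\vT}f(\vT)\sum_i \pi_i(\vT)V_i(\vT)\mathds{1}[\vT\notin R_i]$, and then for the core to bound $\sum_{i\in C(\vT)}\pi_i(\vT)V_i(\vT)$ by $2\,v(\vT,C(\vT))$ ``using the XOS decomposition and feasibility of $\pi$.'' That final step is false for XOS valuations. Feasibility of the allocation for a single buyer means only $\pi_i(\vT)\in[0,1]$; it puts no constraint on $\sum_i \pi_i(\vT)$. Taking $\pi_i\equiv 1$ on $C(\vT)$, you would need $\sum_{i\in C(\vT)}V_i(\vT)\le 2\,v(\vT,C(\vT))$, but for a XOS function $v(\vT,S)=\max_k\sum_{i\in S}t_i^{(k)}$ with $|C(\vT)|$ distinct maximizing clauses, $\sum_{i\in C(\vT)}V_i(\vT)$ can be $|C(\vT)|\cdot v(\vT,C(\vT))$, so no constant factor suffices. (Your argument would be correct in the additive case, where $\sum_i\pi_i t_i\le \sum_i t_i = v(\vT,C(\vT))$; the additive benchmark in the paper is of exactly the form you wrote. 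The XOS lemma needs something different.)

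The paper avoids this trap by never linearizing the non‑favorite valuation into a sum of $V_i$'s. Instead it works with the Cai--Zhao relaxed valuation $v^r$ and obtains a $\nf$ term of the form $4\sum_{\vT}f(\vT)\,v(\vT,[n]\setminus\{i^*\})\,\mathds{1}[\vT\in R_{i^*}]$ — a single set‑valuation per type, with the $\pi_i$'s dropped. It then applies \emph{subadditivity} to that set‑valuation: $v(\vT,[n]\setminus\{i^*\})\le v(\vT,C(\vT)) + \sum_{j\neq i^*}V_j(\vT)\mathds{1}[V_j(\vT)\ge 2r]$. The first piece becomes the \core\ and the second the \tail. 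The key structural point is that the core quantity must be the genuine XOS value $v(\vT,C(\vT))$, not an additive surrogate $\sum_{i\in C(\vT)}V_i(\vT)$; otherwise the downstream concentration argument (Lemma~\ref{lemma:XOS_core}) would not have the right target. To repair your proof you would need to replace the per‑item $\sum_i\pi_i(\vT)V_i(\vT)\mathds{1}[\vT\notin R_i]$ bound with $v(\vT,[n]\setminus\{i^*\})$ at the duality step (this is exactly what the relaxed valuation $v^r$ and the virtual value $\Phi^r(\vT,S)=v(\vT,S\setminus\{i^*\})+\Psi^r_{i^*}(\vT)$ buy you) and then invoke subadditivity, rather than feasibility of $\pi$, to peel off the tail. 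Your handling of the \single\ term (flow on the favorite coordinate, conditional ironed virtual value $\phi(V_i(t_i)\mid\vT_{-i})$) and the marginalization for the \tail\ are consistent with the paper's route.
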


\subsection{Approximating the Benchmark of a XOS Buyer}
In this section, we show how to approximate the optimal revenue of a buyer with a XOS valuation. We first upper bound the term \single~  and \tail~. The analysis of both terms follows from the combination of the analysis in~\cite{CaiZ17} and Lemma~\ref{lemma:tv_MRF}.
\begin{lemma}\label{lemma:XOS_single_tail}
Let the type distribution $D$ be represented by a MRF. If $M$ is an IC and IR mechanism for a buyer with a XOS valuation, then the following inequalities hold $$\single \leq 4\exp(12\Delta(\bm{t}))\cdot \srev$$
and
$$\tail \leq \exp(8\Delta(\bm{t})) \cdot \srev, $$ where $\srev$ is the revenue of the optimal posted price auction, in which the buyer is allowed to purchase at most one item.

\end{lemma}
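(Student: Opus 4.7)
The plan is to port the item-independent XOS analysis of Cai and Zhao~\cite{CaiZ17} to the MRF setting, substituting Lemma~\ref{lemma:tv_MRF} in place of product-distribution factorizations wherever independence was used in the original proof.

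For the \single~term, I would treat the sum $\sum_{\vT} f(\vT)\,\pi_i(\vT)\,\phi(V_i(t_i)\mid \vT_{-i})\,\mathds{1}[\vT\in R_i]$ as virtual welfare in a single-dimensional \copies~instance whose $n$ agents have MRF-correlated values $V_i(t_i)$, and upper bound it by the revenue of Ronen's lookahead auction on that instance, mirroring the reduction in Lemma~\ref{lemma:single_ronen}. Lemma~\ref{lemma:tv_MRF} then allows me to replace each conditional ironed virtual value by its marginal counterpart at a multiplicative cost of $\exp(8\Delta(\vT))$, yielding an upper bound of $\exp(8\Delta(\vT))\cdot \E_{\vT}[\max_i \phi_i(V_i(t_i))^+]$. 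Applying the MRF prophet inequality (Lemma~\ref{lemma:MRF_prophet}) to $g_i(t_i)=\phi_i(V_i(t_i))^+$ with the median threshold realizes this expectation as the revenue of an anonymous posted-price mechanism that sells at most one item---hence at most $\srev$---at a further cost of $2\exp(4\Delta(\vT))$. Combined with the leading factor of $2$ in front of \single, this yields $\single \leq 4\exp(12\Delta(\vT))\cdot \srev$.

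For the \tail~term, I would fix $i$ and $t_i$ with $V_i(t_i)\geq 2r$, define the event $\EE'_{i,t_i}=\{t_{-i}:\exists j\neq i,~V_j(t_j)\geq V_i(t_i)\}$, and note that $\Pr[\vT'\notin R_i\mid t_i'=t_i]\leq \Pr[t_{-i}'\in \EE'_{i,t_i}\mid t_i'=t_i]$. Lemma~\ref{lemma:tv_MRF} applied with the singleton $\EE=\{t_i\}$ upgrades this to at most $\exp(4\Delta(\vT))\cdot\Pr[t_{-i}'\in \EE'_{i,t_i}]$. The item-independent Cai--Zhao calculation then bounds $\sum_i \sum_{t_i:\,V_i(t_i)\geq 2r} f_i(t_i)\,V_i(t_i)\,\Pr[t_{-i}'\in \EE'_{i,t_i}]$ by $O(\srev)$ via comparison with an auxiliary posted-price mechanism that offers $V_i(t_i)$ on every item (so the buyer purchases at most one). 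In the MRF setting one further application of Lemma~\ref{lemma:tv_MRF} is required when relating that auxiliary revenue estimate back to the marginal sell-probability used in the comparison, contributing an additional factor of $\exp(4\Delta(\vT))$ and giving the claimed $\tail \leq \exp(8\Delta(\vT))\cdot \srev$.

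The main obstacle is bookkeeping: each invocation of Lemma~\ref{lemma:tv_MRF} costs a multiplicative $\exp(4\Delta(\vT))$, so matching the claimed exponents requires that the Cai--Zhao decomposition be reproducible using exactly two such conversions in the \tail~argument (one for the favorite-item conditioning, one inside the posted-price comparison) and three in the \single~argument (two inside the Ronen reduction, one inside the MRF prophet inequality). Once this exponent count is carefully tracked and the constants from the benchmark of Lemma~\ref{lemma:benchmark_XOS} are folded in, the remaining manipulations are mechanical transfers of the known item-independent estimates, with no new structural ingredients needed beyond the two lemmas already established for MRFs.
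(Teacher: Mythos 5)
Your proposal follows essentially the same route as the paper: reuse the unit-demand reduction (Lemma~\ref{lemma:single_ronen} plus the posted-price/prophet analysis of Theorem~\ref{thm:MRF unit-demand}) for \single, and for \tail pass to the joint event $\{\exists j\neq i: V_j(t_j)\geq V_i(t_i)\}$, decouple the conditioning with Lemma~\ref{lemma:tv_MRF}, and compare against the auxiliary mechanism that posts $V_i(t_i)$ on the non-$i$ items. Your count of $\exp(4\Delta(\vT))$-conversions (three for \single, two for \tail) is also right.

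Two of your attributions are off, however. For \single, the leading $4$ in $4\exp(12\Delta(\vT))\cdot\srev$ comes entirely from the unit-demand posted-price argument (the $\rev(M_p)\geq\frac{\exp(-4\Delta)}{4}\E[\max_i\phi_i(t_i)^+]$ bound in Theorem~\ref{thm:MRF unit-demand}), not from multiplying in the benchmark's coefficient of $2$; in the paper's accounting, $\single$ denotes the bare sum, and the factor $2$ is applied only in Theorem~\ref{thm:XOS} when the three terms are recombined. For \tail, the second $\exp(4\Delta(\vT))$ is not ``inside the posted-price comparison'' — the comparison $V_i(t_i)\cdot\Pr_{t_{-i}\sim D_{-i}}[\exists j\neq i:V_j(t_j)\geq V_i(t_i)]\leq r$ is exact and needs no correlation adjustment. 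It comes from the separate, MRF-specific claim $\sum_{i\in[n]}\Pr_{t_i\sim D_i}[V_i(t_i)\geq 2r]\leq\exp(4\Delta(\vT))$ (the paper's Lemma~\ref{lemma:XOS_bound_prob}), which replaces the $O(1)$ bound available under item-independence and itself uses Lemma~\ref{lemma:tv_MRF} once (via the inequality $\frac{1}{2}\geq\Pr[\exists i:V_i\geq 2r]\geq\exp(-4\Delta)\cdot\frac{1}{2}\sum_i\Pr_{t_i}[V_i(t_i)\geq 2r]$). You should state that sub-lemma explicitly; it is the one genuinely new ingredient in the \tail argument beyond mechanically transferring the Cai--Zhao estimates, and it is precisely where the failure of independence shows up.
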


%In the following subsection we provide the necessary techniques that we need in order to bound the $(\core)$.

\subsubsection{Bounding the \core~using the Poincar\'{e} Inequality}\label{sec:XOS_core}

In this section, we show how to bound the \core~for a XOS buyer. The $\core$ is the expectation of the random variable $v(\vT,C(\vT))$. To show that bundling can achieve a good approximation of the \core, we need to upper bound the variance of $v(\vT,C(\vT))$. This is the main task of this section. As $v(\vT,\cdot)$ is not additive across the items, our method for the additive valuation (see Lemma~\ref{lemma:cov}) no longer applies. We provide a new approach that is based on the \emph{Poincar\'{e} Inequality} and the \emph{self-boundingness of XOS functions}. We first state the  \emph{Poincar\'{e} Inequality}.

\begin{lemma}[The Poincar\'{e} Inequality (adapted from Lemma 13.12 of~\cite{LPW09})]\label{lem:Poincare ineq}
Let $P$ be a reversible transition matrix on state space $\Omega$ with stationary distribution $\pi$. For any function $g: \Omega \rightarrow \mathbb{R}$, let $$\mathcal{E}(g):= \frac{1}{2}\sum_{x,y\in \Omega}[g(x)-g(y)]^2\pi(x)P(x,y).$$ If $\Var_{x\sim \pi}[g(x)]>0$, then $$\frac{\mathcal{E}(g)}{\Var_{x\sim \pi}[g(x)]}\geq \gamma,$$ where $\gamma$ is the  spectral gap of $P$.~\footnote{It is well-known that the largest eigenvalue of $P$ is $1$, and the spectral gap of $P$ is the difference between $P$'s largest and second largest eigenvalues.} Moreover, there exists a function $g^*:\Omega\rightarrow \mathbb{R}$, such that $$\frac{\mathcal{E}(g^*)}{\Var_{x\sim \pi}[g^*(x)]}= \gamma.$$
\end{lemma}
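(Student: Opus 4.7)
The plan is to prove the Poincar\'{e} inequality by appealing to the spectral theory of self-adjoint operators on the weighted Hilbert space $L^2(\pi)$ with inner product $\langle f, h\rangle_\pi := \sum_{x\in \Omega} f(x)h(x)\pi(x)$. The first observation is that reversibility of $P$, i.e.\ $\pi(x)P(x,y)=\pi(y)P(y,x)$, is precisely the condition that $P$ is self-adjoint on $L^2(\pi)$: a direct symmetry calculation shows $\langle f, Ph\rangle_\pi = \langle Pf, h\rangle_\pi$. Hence $P$ admits an orthonormal eigenbasis $\varphi_1,\ldots,\varphi_{|\Omega|}$ with real eigenvalues $1=\lambda_1\geq \lambda_2\geq \cdots \geq \lambda_{|\Omega|}\geq -1$, where $\varphi_1\equiv \mathbf{1}$ (since $P\mathbf{1}=\mathbf{1}$ and stochasticity gives $\lambda_1=1$).

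The key algebraic step is to rewrite the Dirichlet form as a quadratic form in $P$. Expanding $[g(x)-g(y)]^2=g(x)^2-2g(x)g(y)+g(y)^2$ in the definition of $\mathcal{E}(g)$, then using $\sum_{y}P(x,y)=1$ on the diagonal terms and reversibility on the cross term, I obtain
\[
\mathcal{E}(g)\;=\;\sum_{x}g(x)^2\pi(x)-\sum_{x,y}g(x)g(y)\pi(x)P(x,y)\;=\;\langle g,(I-P)g\rangle_\pi.
\]
I expect this identity to be the only nontrivial manipulation in the argument, and all subsequent steps are essentially a spectral computation.

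Given the identity, I would decompose $g=\sum_{i}c_i\varphi_i$ with $c_i=\langle g,\varphi_i\rangle_\pi$. Since $\varphi_1\equiv 1$, we have $c_1=\E_{\pi}[g]$, and orthonormality yields $\E_\pi[g^2]=\sum_i c_i^2$, hence $\Var_{\pi}[g]=\sum_{i\geq 2}c_i^2$. Applying $(I-P)$ in the eigenbasis gives
\[
\mathcal{E}(g)\;=\;\langle g,(I-P)g\rangle_\pi\;=\;\sum_{i}c_i^2(1-\lambda_i)\;=\;\sum_{i\geq 2}c_i^2(1-\lambda_i),
\]
where the $i=1$ term vanishes because $\lambda_1=1$. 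Since $1-\lambda_i\geq 1-\lambda_2=\gamma$ for every $i\geq 2$, I conclude $\mathcal{E}(g)\geq \gamma\sum_{i\geq 2}c_i^2=\gamma\,\Var_\pi[g]$, which is the desired inequality provided $\Var_\pi[g]>0$.

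For the final sharpness claim, I would take $g^*:=\varphi_2$. By orthonormality with $\varphi_1\equiv 1$, we have $\E_\pi[g^*]=0$ and $\Var_\pi[g^*]=1$, while the identity above gives $\mathcal{E}(g^*)=1-\lambda_2=\gamma$. Hence $\mathcal{E}(g^*)/\Var_\pi[g^*]=\gamma$, showing that the constant $\gamma$ in the Poincar\'{e} inequality cannot be improved. The main obstacle is essentially conceptual rather than technical: one must correctly identify $L^2(\pi)$ as the right inner-product space in which reversibility yields self-adjointness; once that is in place, the rest is a routine spectral computation.
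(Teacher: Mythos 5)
Your proof is correct and is essentially the standard spectral argument: it proceeds exactly as the variational characterization of the spectral gap in Lemma~13.12 of Levin--Peres--Wilmer, which the paper cites without reproducing a proof. (One small remark: the identity $\mathcal{E}(g)=\langle g,(I-P)g\rangle_\pi$ actually requires only stationarity of $\pi$ for the $g(y)^2$ term — reversibility is needed solely to make $P$ self-adjoint in $L^2(\pi)$ so that the eigendecomposition exists — but this does not affect the validity of the argument.)
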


Next, we apply Lemma~\ref{lem:Poincare ineq} to the Glauber dynamics of the MRF that generates the buyer's type.

\begin{lemma}~\label{lem:Poincare on MRF}
Let $D$ be the joint distribution of random variables $\vT=(t_1,\ldots,t_n)$ and $P$ be the transition matrix of the Glauber dynamics for $D$. For any function $g: T\rightarrow \mathbb{R}$, we have $$    n\gamma \cdot \Var_{\vT\sim D}[g(\vT)] \leq \sum_{i \in [n]}  \E_{\vT \sim D} \left[ \left( g(t_i,\vT_{-i}) - \E_{t'_i \sim D_{i \mid \vT_{-i}}}[g(t'_i,\vT_{-i})] \right)^2 \right] ,$$ where $\gamma$ is the spectral gap of $P$. Moreover, there exists a function $g^*:T\rightarrow \mathbb{R}$, such that the inequality is tight.
%$$    n\gamma \cdot \Var_{\vT\sim D}[g^*(\vT)] = \sum_{i \in [n]}  \E_{\vT \sim D} \left[ \left( g^*(t_i,\vT_{-i}) - \E_{t'_i \sim D_{i \mid \vT_{-i}}}[g^*(t'_i,\vT_{-i})] \right)^2 \right] ,$$

\end{lemma}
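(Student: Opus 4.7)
The plan is to specialize the Poincar\'{e} inequality of Lemma~\ref{lem:Poincare ineq} to the transition matrix $P$ of the Glauber dynamics for $D$, whose existence as a reversible chain with stationary distribution $D$ is recorded right after Definition~\ref{def:Glauber dynamics}. All the work is in computing $\mathcal{E}(g)$ in closed form; once that is done, the claim drops out after multiplying by $n$.

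First I would unpack $P$. From Definition~\ref{def:Glauber dynamics}, the only pairs $(x,y)$ with $P(x,y)>0$ are those satisfying $x_{-i}=y_{-i}$ for some coordinate $i$, and for such a pair $P(x,y)=\tfrac{1}{n} f_i(y_i\mid x_{-i})$ (when several coordinates of $x$ and $y$ coincide, in particular $x=y$, the contribution from each such $i$ just adds, which does not matter because those terms cancel in $[g(x)-g(y)]^2$). Plugging this into the definition of $\mathcal{E}(g)$ from Lemma~\ref{lem:Poincare ineq} and using $f(x)=f_{-i}(x_{-i})\, f_i(x_i\mid x_{-i})$, the Dirichlet form decomposes along coordinates:
\begin{align*}
\mathcal{E}(g) \;=\; \frac{1}{2n}\sum_{i=1}^{n}\sum_{x_{-i}} f_{-i}(x_{-i}) \sum_{x_i,y_i} \bigl[g(x_i,x_{-i})-g(y_i,x_{-i})\bigr]^2 f_i(x_i\mid x_{-i})\, f_i(y_i\mid x_{-i}).
\end{align*}

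Next I would apply the standard identity $\tfrac{1}{2}\E[(X-X')^2]=\Var(X)$ with $X,X'$ i.i.d.\ draws from $D_{i\mid x_{-i}}$: the inner double sum equals $2\Var_{x_i\sim D_{i\mid x_{-i}}}[g(x_i,x_{-i})]$. Converting the outer average over $x_{-i}\sim D_{-i}$ followed by the inner $x_i\sim D_{i\mid x_{-i}}$ back into a joint draw $\vT\sim D$ yields
\begin{align*}
\mathcal{E}(g) \;=\; \frac{1}{n}\sum_{i=1}^{n}\E_{\vT\sim D}\!\left[\Bigl(g(t_i,\vT_{-i})-\E_{t'_i\sim D_{i\mid\vT_{-i}}}[g(t'_i,\vT_{-i})]\Bigr)^{\!2}\right].
\end{align*}
Combining this with the Poincar\'{e} bound $\mathcal{E}(g)\ge\gamma\,\Var_{\vT\sim D}[g(\vT)]$ and multiplying through by $n$ gives the asserted inequality.

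For the moreover part, Lemma~\ref{lem:Poincare ineq} already furnishes a function $g^*$ (the eigenfunction of $P$ associated with the second largest eigenvalue) achieving $\mathcal{E}(g^*)/\Var[g^*(\vT)]=\gamma$; the decomposition of $\mathcal{E}$ above is an identity, not an inequality, so the same $g^*$ makes the inequality of the lemma tight. I do not expect any real obstacle: the argument is essentially bookkeeping, and the only care needed is the correct accounting of the factor $1/n$ coming from the uniform choice of coordinate in the Glauber step and the correct reinterpretation of conditional variance as an expectation over $\vT\sim D$.
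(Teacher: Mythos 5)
Your proposal is correct and follows essentially the same route as the paper's proof: apply Lemma~\ref{lem:Poincare ineq} to the Glauber chain, decompose the Dirichlet form $\mathcal{E}(g)$ coordinate-by-coordinate using $f(x)=f_{-i}(x_{-i})f_i(x_i\mid x_{-i})$, invoke the i.i.d. identity $\tfrac{1}{2}\E[(X-X')^2]=\Var(X)$, and read off tightness from the extremal $g^*$ of the Poincar\'{e} inequality. The only (minor) addition is your explicit remark about the $x=y$ diagonal terms, which the paper passes over silently; both arguments are the same.
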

\begin{remark}
Lemma~\ref{lem:Poincare on MRF} is a generalization of the well-known Efron-Stein inequality to dependent random variables. Indeed, when $D$ is a product measure, $\gamma$ is at least $1/n$ and we recover the Efron-Stein inequality. As we demonstrate in Section~\ref{sec:Dobrushin}, $\gamma$ is at least  $\Omega(1/n)$ under many well-studied conditions of weak dependence, such as the Dobrushin uniqueness condition. %and the LHS of the inequality is $\Omega(\Var_{\vT\sim D} [g(\vT)])$.
\end{remark}
\begin{prevproof}{Lemma}{lem:Poincare on MRF}
According to the definition of the Glauber dynamics, $P$ is a reversible transition matrix on state space $T$ with stationary distribution $D$. Lemma~\ref{lem:Poincare ineq} states that 
\begin{equation}\label{eq:Poincare on MRF}
\gamma\cdot \Var_{\vT\sim D}[g(x)]\leq \frac{1}{2} \sum_{\vT,\vT'\in T}[g(\vT)-g(\vT')]^2\cdot f(\vT)\cdot P(\vT,\vT').
\end{equation}

\notshow{
and that there exists a function $g^*:T\rightarrow \mathbb{R}$, such that \begin{equation*}
\gamma\cdot \Var_{\vT\sim D}[g^*(x)]= \frac{1}{2} \sum_{\vT,\vT'\in T}[g^*(\vT)-g(^*\vT')]^2\cdot f(\vT)\cdot P(\vT,\vT').
\end{equation*}
}

By the definition of the Glauber dynamics, the RHS of Inequality~\eqref{eq:Poincare on MRF} is equivalent to \begin{align*}
    &\frac{1}{2}\E_{\vT\sim D}\left[\frac{1}{n}\sum_{i\in[n]} \E_{t'_i\sim D_{i \mid \vT_{-i}}}\left[g(\vT)-g(t'_i,\vT_{-i})\right]^2\right]\\
    =& \frac{1}{n}\sum_{i\in[n]} \E_{\vT_{-i}\sim D_{-i}}\left[\E_{t_i,t'_i \sim D_{i \mid \vT_{-i}}} \left[ \frac{1}{2}\left( g(t_i,\vT_{-i}) - g(t'_i,\vT_{-i}) \right)^2 \right]\right]\\
    =&  \frac{1}{n}\sum_{i\in[n]} \E_{\vT_{-i}\sim D_{-i}}\left[\E_{t_i \sim D_{i \mid \vT_{-i}}} \left[\left( g(t_i,\vT_{-i}) - \E_{t'_i \sim D_{i \mid \vT_{-i}}}[g(t'_i,\vT_{-i})] \right)^2 \right]\right]\\
    =& \frac{1}{n} \sum_{i \in [n]}  \E_{\vT \sim D} \left[ \left( g(t_i,\vT_{-i}) - \E_{t'_i \sim D_{i \mid \vT_{-i}}}[g(t'_i,\vT_{-i})] \right)^2 \right].
\end{align*}

The second equality is because $t_i$ and $t'_i$ are two i.i.d. samples from $D_{i\mid \vT_{-i}}$.

Hence, $$ n\gamma \cdot \Var_{\vT\sim D}[g(\vT)] \leq \sum_{i \in [n]}  \E_{\vT \sim D} \left[ \left( g(t_i,\vT_{-i}) - \E_{t'_i \sim D_{i \mid \vT_{-i}}}[g(t'_i,\vT_{-i})] \right)^2 \right].$$

Note that if we choose $g(\cdot)$ to be the function $g^*(\cdot)$ in Lemma~\ref{lem:Poincare ineq}, Inequality~\eqref{eq:Poincare on MRF} becomes an equality.% and therefore, 
%$$ n\gamma \cdot \Var_{\vT\sim D}[g^*(\vT)] = \sum_{i \in [n]}  \E_{\vT \sim D} \left[ \left( g^*(t_i,\vT_{-i}) - \E_{t'_i \sim D_{i \mid \vT_{-i}}}[g^*(t'_i,\vT_{-i})] \right)^2 \right].$$

\end{prevproof}

Recall that to bound the $\core$, we need to upper bound the variance of the random variable $v(\vT,C(\vT))$. By choosing $g(\vT)$ to be $v(\vT,C(\vT))$  and applying Lemma~\ref{lem:Poincare on MRF}, we can instead upper bound the RHS of the inequality in Lemma~\ref{lem:Poincare on MRF}. A priori, it is not clear that the RHS would be easier to bound. In the following sequence of Lemmas, we show that the RHS is indeed more amenable to analysis. We first argue that the function $v(\vT,C(\vT))$ has a key property known as \emph{self-boundingness}, using which we then upper bound the RHS by $O(\srev\cdot \core)$ and show that $\srev$ and $\brev$ can approximate the $\core$.

\begin{definition}[Self-Bounding Functions~\cite{BoucheronLM00}]\label{def:self-bounding}
Let $\mathbb{S}$ be an arbitrary set and ${A}$ be a subset of $\mathbb{S}^n$. We say that a function $g(\vT): A \rightarrow \mathbb{R}$ 
is \emph{$C$-self-bounding} with some constant $C\in \mathbb{R}_+$ if there exists a collection of functions $g_i:A_{-i}\rightarrow \mathbb{R}$ for each $i\in[n]$ with $A_{-i}:=\{\vT_{-i}: \exists t_i,~(t_i, \vT_{-i})\in A\}$, such that for each $\vT \in A$ the followings hold:
\begin{itemize}
    \item $0 \leq g(\vT)-g_i(\vT_{-i}) \leq C$ for all $i\in[n]$.
    \item $\sum_{i\in[n]} \left(g(\vT)-g_i(\vT_{-i})\right)\leq g(\vT)$. 
\end{itemize}
\end{definition}

We next argue that for a self-bounding function, the RHS of the inequality in Lemma~\ref{lem:Poincare on MRF} is upper bounded by its mean.

\begin{lemma} \label{lemma:sum_bound}
Let $D$ be the joint distribution of random variables $\vT=(t_1,\ldots,t_n)$. If $g(\cdot)$ is a $C$-self-bounding function, then
\begin{align*}
    \sum_{i \in [n]} \E_{\vT \sim D} \left[ \left( g(\vT) - \E_{t'_i \sim D_{i \mid \vT_{-i}}}[g(t'_i,\vT_{-i})] \right)^2 \right] \leq C \E_{\vT \sim D} \left[ g(\vT) \right].
\end{align*}
\end{lemma}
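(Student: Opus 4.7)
The plan is to exploit the variational characterization of the conditional expectation together with the two defining properties of a $C$-self-bounding function.

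Let me set $h_i(\vT_{-i}) := \E_{t'_i \sim D_{i \mid \vT_{-i}}}[g(t'_i,\vT_{-i})]$ for brevity, and let $\{g_i\}_{i\in[n]}$ be the witness functions from Definition~\ref{def:self-bounding}. The first step is to observe that $h_i(\vT_{-i})$ is the $L^2$-projection of $g(\vT)$ onto the space of functions that depend only on $\vT_{-i}$; in particular, for any measurable function $\tilde{g}_i$ of $\vT_{-i}$,
\begin{equation*}
\E_{\vT \sim D}\left[(g(\vT) - h_i(\vT_{-i}))^2\right] \leq \E_{\vT \sim D}\left[(g(\vT) - \tilde{g}_i(\vT_{-i}))^2\right].
\end{equation*}
This can be verified by conditioning on $\vT_{-i}$ and using that the conditional mean minimizes the conditional $L^2$ error. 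Instantiating $\tilde{g}_i = g_i$ gives the upper bound $\sum_i \E[(g(\vT)-h_i(\vT_{-i}))^2] \leq \sum_i \E[(g(\vT)-g_i(\vT_{-i}))^2]$.

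The second step uses the first property of self-boundingness, namely $0 \leq g(\vT) - g_i(\vT_{-i}) \leq C$, to replace the square by a first power: for each fixed $\vT$ and each $i$,
\begin{equation*}
\left(g(\vT) - g_i(\vT_{-i})\right)^2 \leq C\cdot \left(g(\vT) - g_i(\vT_{-i})\right).
\end{equation*}
Summing over $i$ and taking expectations, and then invoking the second self-bounding property $\sum_{i\in [n]}(g(\vT)-g_i(\vT_{-i})) \leq g(\vT)$, yields
\begin{equation*}
\sum_{i\in [n]} \E_{\vT\sim D}\left[(g(\vT)-g_i(\vT_{-i}))^2\right] \leq C \sum_{i\in [n]} \E_{\vT \sim D}\left[g(\vT)-g_i(\vT_{-i})\right] \leq C\cdot \E_{\vT \sim D}[g(\vT)].
\end{equation*}
Chaining with the projection inequality from the first step closes the proof.

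There is no real obstacle; the only subtlety is recognizing that the conditional expectation is the right object to compare against the witness functions $g_i$ from the self-bounding definition. Once that $L^2$-projection observation is in place, the rest is a direct application of the two defining inequalities of a $C$-self-bounding function, in sequence. Note that this lemma, combined with Lemma~\ref{lem:Poincare on MRF}, reduces bounding $\Var_{\vT \sim D}[g(\vT)]$ to bounding $\tfrac{C}{n\gamma}\cdot \E[g(\vT)]$, which is exactly the form of variance bound needed to apply a concentration-based bundling argument on the $\core$ term.
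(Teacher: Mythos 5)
Your proof is correct and follows essentially the same path as the paper: the paper's first step invokes $\Var[X]=\min_a \E[(X-a)^2]$ conditionally on $\vT_{-i}$, which is exactly your $L^2$-projection observation, and the remaining two steps apply the two self-bounding properties in the same order. The two write-ups are the same argument phrased slightly differently.
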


\begin{proof}
 Recall the following property of the variance: For any real-value random variable $X$, $\Var[X]=\min_{a\in R} \E[(X-a)^2]$. In other words, $\Var[X]\leq \E[(X-a)^2]$ for any $a$. Therefore, for any $\vT_{-i}$, $$\E_{t_i \sim D_{i \mid \vT_{-i}}} \left[ \left( g(\vT) - \E_{t'_i \sim D_{i \mid \vT_{-i}}}[g(t'_i,\vT_{-i})] \right)^2\right] = \Var[g(\vT)\mid \vT_{-i}]\leq \E_{t_i \sim D_{i \mid \vT_{-i}}} \left[ \left( g(\vT) - g_i(\vT_{-i}) \right)^2\right].$$ 
 
Using this relaxation, we proceed to prove the claim.
\begin{align*}
    &  \sum_{i \in [n]} \E_{\vT \sim D} \left[ \left( g(\vT) - \E_{t'_i \sim D_{i \mid \vT_{-i}}}[g(t'_i,\vT_{-i})] \right)^2 \right]\\
    \leq & \sum_{i \in [n]}\E_{\vT \sim D} \left[ \left( g(\vT) - g_i(\vT_{-i}) \right)^2 \right]\\
    \leq &  C \sum_{i \in [n]}\E_{\vT \sim D} \left[ \left(g(\vT) - g_i(\vT_{-i})\right) \right]\\
    \leq & C \E_{\vT \sim D} \left[ g(\vT) \right]
\end{align*}
The first inequality follows from the relaxation. The second and last inequality follow from the first and second property of a self-bounding function respectively.
\end{proof}

Combining Lemma~\ref{lem:Poincare on MRF} and~\ref{lemma:sum_bound}, we have the following Lemma.

\begin{lemma}\label{lem:self_bounding plus Poincare}
Let $D$ be the joint distribution of random variables $\vT=(t_1,\ldots,t_n)$ and $P$ be the transition matrix of the Glauber dynamics for $D$. For any $C$-self-bounding function $g: T\rightarrow \mathbb{R}$, we have $$ \frac{n\gamma}{C}\cdot \Var_{\vT\sim D}[g(\vT)] \leq  \E_{\vT \sim D} \left[ g(\vT) \right],$$ where $\gamma$ is the spectral gap of $P$.
\end{lemma}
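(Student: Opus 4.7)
The plan is to obtain this inequality as an immediate consequence of the two preceding lemmas, which are already set up to chain together. First, I would apply Lemma~\ref{lem:Poincare on MRF} to the function $g$, which gives the upper bound
\begin{equation*}
n\gamma \cdot \Var_{\vT\sim D}[g(\vT)] \;\leq\; \sum_{i \in [n]} \E_{\vT \sim D}\!\left[\left(g(t_i,\vT_{-i}) - \E_{t'_i \sim D_{i \mid \vT_{-i}}}[g(t'_i,\vT_{-i})]\right)^{2}\right].
\end{equation*}
This step uses nothing about $g$ beyond being a real-valued function on $T$, and it is just the Poincar\'e inequality specialized to the Glauber dynamics, which I would quote directly.

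Next, I would invoke Lemma~\ref{lemma:sum_bound}, which is where the $C$-self-boundedness of $g$ is used, to bound the right-hand side above by $C \cdot \E_{\vT\sim D}[g(\vT)]$. Chaining the two inequalities yields
\begin{equation*}
n\gamma \cdot \Var_{\vT\sim D}[g(\vT)] \;\leq\; C\cdot \E_{\vT \sim D}\!\left[g(\vT)\right],
\end{equation*}
and dividing both sides by $C$ gives the desired statement. Since the claim is essentially a corollary, there is no genuine obstacle; the only care needed is to make sure that the variance expression in Lemma~\ref{lem:Poincare on MRF} really matches the form appearing in the hypothesis of Lemma~\ref{lemma:sum_bound}, which it does verbatim (both are $\sum_i \E[(g(\vT) - \E_{t'_i\sim D_{i\mid \vT_{-i}}}[g(t'_i,\vT_{-i})])^2]$). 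Thus the proof is a two-line concatenation of the previous two lemmas.
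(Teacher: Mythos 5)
Your proof is correct and is exactly the argument the paper intends: the lemma is presented immediately after Lemmas~\ref{lem:Poincare on MRF} and~\ref{lemma:sum_bound} with the remark that it follows by combining them, which is precisely your two-step chaining followed by dividing by $C$. No further checking is needed beyond the observation you already made, namely that the right-hand side of the Poincar\'e bound matches the expression in the hypothesis of Lemma~\ref{lemma:sum_bound} verbatim.
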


Definition~\ref{def:self-bounding} may seem obscure at first, but many natural functions are indeed self-bounding. For example, if $A$ is $[0,1]^n$ and $g(\cdot)$ is the additive function, then $g(\cdot)$ is $1$-self-bounding. We show that the function $g(\vT):= v(\vT,C(\vT))$ is $2\srev$-self-bounding and its variance is no more than $\frac{2\srev\cdot\core}{n\gamma}$.  Here, we first specialize our analysis to MRFs. The main difference is that the Glauber dynamics for a MRF is irreducible, so the spectral gap is strictly positive (Lemma 12.1 of~\cite{LPW09}). The proof is postponed to Appendix~\ref{sec:appx_XOS_single_tail}.

\begin{lemma}
\label{lemma:self_bounding_MRF}
Let $C(\bm{t}) := \{ j : V_j(\bm{t}) < 2 \srev\}$. The function $g(\vT):=v(\vT,C(\vT))$ is $2\srev$-self-bounding and $\Var_{\vT\sim D}[g(\vT)]\leq \frac{2\srev\cdot\core}{n\gamma}$, where $\gamma>0$ is the spectral gap of the transition matrix of the Glauber dynamics of the MRF that generates the buyer's type.
\end{lemma}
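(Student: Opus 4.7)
The plan is to produce an explicit witness for self-boundingness and then invoke Lemma~\ref{lem:self_bounding plus Poincare}. First observe that from the XOS decomposition $v(\vT,S)=\max_{k}\sum_{j\in S}t_j^{(k)}$ one has $V_j(\vT)=v(\vT,\{j\})=\max_k t_j^{(k)}$, which depends only on $t_j$. Consequently membership of any $j\neq i$ in $C(\vT)$ is a function only of $t_j$, so defining $C_{-i}(\vT_{-i}):=\{j\neq i:V_j(t_j)<2\srev\}$ gives the identity $C(\vT)\setminus\{i\}=C_{-i}(\vT_{-i})$. I would take as the self-bounding surrogate
\[
g_i(\vT_{-i}):=v\bigl((0,\vT_{-i}),\,C_{-i}(\vT_{-i})\bigr),
\]
which is well defined since the XOS value of a set $S$ with $i\notin S$ does not depend on the $i$-th coordinate.

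Next I would verify the two self-bounding properties. Monotonicity of XOS valuations together with $C(\vT)\supseteq C(\vT)\setminus\{i\}=C_{-i}(\vT_{-i})$ yields $g(\vT)\ge g_i(\vT_{-i})$. For the upper bound, subadditivity gives
\[
g(\vT)=v(\vT,C(\vT))\le v(\vT,C(\vT)\setminus\{i\})+v(\vT,\{i\}\cap C(\vT))\le g_i(\vT_{-i})+2\srev,
\]
where the last inequality uses that either $i\notin C(\vT)$ (in which case the extra term is zero) or $V_i(t_i)<2\srev$. For the additive condition, fix a maximizing XOS clause $k^*$ so that $g(\vT)=\sum_{j\in C(\vT)}t_j^{(k^*)}$. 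For each $i\in C(\vT)$, picking the same clause $k^*$ but summing over $C(\vT)\setminus\{i\}=C_{-i}(\vT_{-i})$ gives a feasible lower bound for $g_i(\vT_{-i})$, hence $g(\vT)-g_i(\vT_{-i})\le t_i^{(k^*)}$; for $i\notin C(\vT)$ the difference is zero. Summing over $i$ yields $\sum_i(g(\vT)-g_i(\vT_{-i}))\le\sum_{i\in C(\vT)}t_i^{(k^*)}=g(\vT)$, establishing that $g$ is $2\srev$-self-bounding.

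Finally I would plug $g$ into Lemma~\ref{lem:self_bounding plus Poincare} with constant $C=2\srev$ to conclude
\[
\Var_{\vT\sim D}[g(\vT)]\;\le\;\frac{2\srev}{n\gamma}\cdot\E_{\vT\sim D}[g(\vT)]\;=\;\frac{2\srev\cdot\core}{n\gamma},
\]
and remark that for a MRF the Glauber chain is irreducible (and reversible), so its spectral gap $\gamma$ is strictly positive by Lemma~12.1 of~\cite{LPW09}, making the bound meaningful. The only subtle step is checking property~2 of self-boundingness; the key observation is that using the \emph{same} maximizing clause $k^*$ for $g_i(\vT_{-i})$ as for $g(\vT)$ is a valid lower bound, which turns the XOS max-of-additives structure into an additive charging argument across coordinates. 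The remaining steps are routine applications of XOS monotonicity/subadditivity and the previously established Poincar\'e-based inequality.
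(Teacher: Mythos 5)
Your proof is correct and follows essentially the same route as the paper: define $g_i(\vT_{-i})=v(\vT_{-i},C(\vT_{-i}))$ (your $(0,\vT_{-i})$ padding and the paper's $\vT_{-i}$ are the same thing since the value on a set not containing $i$ ignores coordinate $i$), verify the two self-bounding conditions using subadditivity for the $\le 2\srev$ bound and a single maximizing XOS clause for the additive bound, and then apply Lemma~\ref{lem:self_bounding plus Poincare}. The only difference is presentational — you spell out the clause-$k^*$ charging argument and the $C(\vT)\setminus\{i\}=C_{-i}(\vT_{-i})$ identity explicitly, whereas the paper states these more tersely (denoting the additive weights $\{x_\ell\}$ rather than $t_\ell^{(k^*)}$) — so the substance matches.
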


\notshow{Now we are ready to present the Poincar\'e inequality. %Very high-levelly speaking,
At a high level,
Poincar\'{e} inequalities connect the variance of a function $g(\cdot)$ with the $\ell_2$-norm of the gradient of $g(\cdot)$, and usually take the form of $\Var_{X\sim D}[g(X)]\lesssim\E_{X\sim D}[||\text{gradient}(g(X))||^2]$. These inequalities are called \emph{Poincar\'e inequalities} because Poincar\'{e} first published an inequality of this type for the uniform distribution on a bounded domain in $\mathbb{R}^n$, ca. 1890. Many similar inequalities have since been discovered. One notable example is the Efron-Stein inequality~\cite{efron1981jackknife}, which has numerous applications in combinatorics, statistics, and statistical learning~\cite{massart2007concentration,boucheron2013concentration}. The Efron-Stein inequality only applies to \emph{independent random variables}. The following result by Wu~\cite{wu2006} provides a generalization of the Efron-Stein inequality for \emph{weakly dependent random variables}.

\begin{lemma}[Poincar\'{e} Inequality for Weakly Dependent Random Variables \cite{wu2006}]
\label{thm:var_d}
Let $\vT=(t_1,\ldots,t_n)$ be an $n$-dimensional random vector drawn from distribution $D$ that is supported on $\E^n$. For any metric $d(\cdot,\cdot)$ on $\E$, let $A$ be the $d$-Dobrushin interdependence matrix for $\vT$. Let $\rho_d(\vT)$ be the spectral radius of $A$ (which is the dominant eigenvalue of $A$ by the Perron-Frobenius Theorem).
If $\E_{\vT \sim D}\left[ \sum_{i\in[n]} d(t_i,y_i)^2 \right] < + \infty$ for some fixed $y \in \E^n$ and  $\rho_d(\vT) < 1$, then for any square integrable function $g(\cdot)$ w.r.t. distribution $D$, the following holds:
$$    (1-\rho_d(\vT))\Var_{\vT\sim D}[g(\vT)] \leq \sum_{i \in [n]} \E_{\vT_{-i} \sim D_{-i}} \left[ \E_{t_i \sim D_{i \mid \vT_{-i}}} \left[ \left( f(t_i,\vT_{-i}) - \E_{t_i \sim D_{i \mid \vT_{-i}}}[f(t_i,\vT_{-i})] \right)^2 \right] \right].$$
\end{lemma}

\begin{remark}
In the independent case, $\rho_d(\vT)=0$, which recovers the Efron-Stein inequality. The condition that $\rho_d(\vT)<1$ is a natural condition. Firstly, it is implied by the \textbf{Dobrushin uniqueness condition} (see Section~\ref{subsec:core_dob} for more discussion). Secondly, the condition implies rapid mixing of the Glauber dynamics (see Lemma~\ref{lemma:rapid mixing}). A special case of Lemma~\ref{lemma:rapid mixing}, where $d(x,y)=\ind_{x\neq y}$, was shown by Hayes~\cite{Hayes06}. %We provide the general case here for completeness. %related to the $d$-Dobrushin interdependence matrix is the \textbf{Dobrushin uniqueness condition}, which was originally introduced by Dobrushin~\cite{Dobruschin68} in the study of Gibbs measures, in the context of identifying conditions under which the Gibbs distribution has a unique equilibrium. Since then, the model has been extensively studied in statistical physics and probability literature (see e.g.~\cite{DobrushinS87,StroockZ92}). The Dobrushin uniqueness condition can also be %implies a number of desirable properties, such as fast mixing of Glauber dynamics~\cite{Kulske03}, concentration of measure~\cite{Chatterjee05, Daskalakis18,Gheissari18,Kulske03,Marton96}, and correlation decay~\cite{Kunsch82}, and has been used as . For a survey of properties resulting from Dobrushin's condition see~\cite{Weitz05}. We plan to investigate the following question.

\end{remark} 
}

Now, we show how to approximate $\core$ using $\srev$ and $\brev$.

\begin{lemma}\label{lemma:XOS_core}
%Let $D$ be a MRF such that $Var[v(\vT,C(\vT))] \leq \frac{2(\srev)(\core)}{1- c}$,
Let the buyer's type distribution $D$ be represented by a MRF, $P$ be the transition matrix of the Glauber dynamics of the MRF, and $\gamma>0$ be the spectral gap of $P$. We have 
$$\core \leq \max\left( \frac{4\srev}{\sqrt{n\gamma}} ,\left(7 + \frac{4}{\sqrt{n\gamma}} \right)\brev \right).$$
\end{lemma}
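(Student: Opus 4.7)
The plan is to approximate $\core$ by the revenue of pricing the grand bundle at a carefully chosen threshold. Let $g(\vT) := v(\vT, C(\vT))$, so that $\core$ equals $\E_{\vT\sim D}[g(\vT)]$ up to a constant. Two ingredients are already in place: first, since $v$ is XOS (hence monotone in the set), $v(\vT,[n]) \geq v(\vT, C(\vT)) = g(\vT)$, so pricing the grand bundle at price $p$ yields revenue at least $p\cdot\Pr[g(\vT) \geq p]$; second, Lemma~\ref{lemma:self_bounding_MRF} gives the variance bound $\Var[g(\vT)] \leq O(\srev\cdot \core/(n\gamma))$, which follows from applying the Poincar\'{e} inequality on the Glauber dynamics to the self-bounding function $g$ (Lemma~\ref{lem:self_bounding plus Poincare}).

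With these in hand, I would perform a case split based on whether $\core$ is small or large compared to $\srev/\sqrt{n\gamma}$. If $\core \leq 4\srev/\sqrt{n\gamma}$, the claimed bound is immediate. In the complementary case $\core > 4\srev/\sqrt{n\gamma}$, one can substitute $\srev < \core\sqrt{n\gamma}/4$ into the variance bound, yielding the crucial ratio $\Var[g(\vT)]/\E[g(\vT)]^2 = O(1/\sqrt{n\gamma})$. Applying the Paley-Zygmund inequality to the nonnegative random variable $g(\vT)$ at threshold $\theta\,\E[g(\vT)]$ then gives
\[
    \Pr[g(\vT) \geq \theta\,\E[g(\vT)]] \;\geq\; \frac{(1-\theta)^2}{1 + \Var[g(\vT)]/\E[g(\vT)]^2} \;\geq\; \frac{(1-\theta)^2}{1 + O(1/\sqrt{n\gamma})}.
\]
Pricing the grand bundle at $\theta\,\E[g(\vT)]$ and choosing $\theta = 1/3$ (which maximizes $\theta(1-\theta)^2$ at $4/27$) produces bundle revenue of at least $(4/27)\,\E[g(\vT)]\big/(1 + O(1/\sqrt{n\gamma}))$. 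Rearranging yields $\core \leq (27/4)(1 + O(1/\sqrt{n\gamma}))\brev$, which, after tracking the constants ($27/4 < 7$ and $27/8 < 4$), gives exactly the bound $(7 + 4/\sqrt{n\gamma})\brev$ stated in the lemma.

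The main subtlety is obtaining the $1/\sqrt{n\gamma}$ dependence rather than the $1/(n\gamma)$ that appears directly in the variance bound from Lemma~\ref{lemma:self_bounding_MRF}. The square root arises precisely from the case-split substitution: transferring one factor of $\sqrt{n\gamma}$ from the lower bound on $\core$ into the variance bound cancels half of the $n\gamma$ in the denominator, leaving the desired $1/\sqrt{n\gamma}$ dependence in $\Var[g(\vT)]/\E[g(\vT)]^2$. The threshold $4\srev/\sqrt{n\gamma}$ for the case split is then tuned to balance the two regimes so that the constants land at the stated $4$ and $(7 + 4/\sqrt{n\gamma})$; a different split threshold would trade off between these two coefficients. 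Cantelli's inequality could alternatively be used in place of Paley-Zygmund, but with a slightly different constant trade-off.
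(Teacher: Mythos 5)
Your proof is correct and follows essentially the same argument as the paper: apply Lemma~\ref{lemma:self_bounding_MRF} for the variance bound $\Var[g]\leq 2\srev\cdot\core/(n\gamma)$, split on whether $\core\leq 4\srev/\sqrt{n\gamma}$, substitute $\srev<\core\sqrt{n\gamma}/4$ in the complementary case to get $\Var[g]/\E[g]^2\leq 1/(2\sqrt{n\gamma})$, and apply Paley--Zygmund at price $\core/3$ to obtain $\core\leq\frac{27}{4}\bigl(1+\frac{1}{2\sqrt{n\gamma}}\bigr)\brev\leq\bigl(7+\frac{4}{\sqrt{n\gamma}}\bigr)\brev$. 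The explanation of why the $\sqrt{n\gamma}$ (rather than $n\gamma$) dependence emerges from the case split is a nice additional remark that the paper leaves implicit.
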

\begin{proof}
According to Lemma~\ref{lemma:self_bounding_MRF}, $v(\vT,C(\vT))$ is a $2\srev$-self-bounding function and  $\Var[v(\vT,C(\vT)]\leq \frac{2\srev\cdot \core}{n\gamma}$. If $\core \leq \frac{4\srev}{\sqrt{n\gamma}}$,
then the statement holds. If $\core > \frac{4\srev}{\sqrt{n\gamma}}$,
then $\Var[v(\vT,C(\vT))] \leq \frac{2\srev\cdot \core}{n\gamma} <  \frac{(\core)^2}{2\sqrt{n\gamma}} = \frac{ \E[v(\vT,C(\vT))]^2}{2\sqrt{n\gamma}}$.
By Paley-Zygmund inequality we have that:
    $$ \Pr\left[ v(\bm{t},C(\bm{t})) \geq \frac{\core}{3}  \right] 
    \geq \frac{4}{9} \frac{1}{1 + \frac{\Var[v(\vT,C(\vT))]}{\E[v(\vT,C(\vT))]^2}} 
    \geq  \frac{4}{9} \frac{1}{1 + \frac{1}{2\sqrt{n\gamma}}}.$$
Therefore we have that: $\Pr\left[ v(\bm{t},C(\bm{t})) \geq \frac{\core}{3}  \right]\cdot \frac{\core}{3} \leq \brev$, which implies that the statement. %$$ \core \leq \frac{27}{4}\left(1 + \frac{1}{2\sqrt{n\gamma}}\right)\brev \leq \left(7 + \frac{4}{\sqrt{n\gamma}}\right) \brev.$$

\end{proof}

Finally, we combine our analysis of \single, \tail, and \core~to obtain the approximation guarantee for a XOS buyer.
\begin{theorem}\label{thm:XOS}
 Let the buyer have a XOS valuation and her type distribution $D$ be represented by a MRF. We use $\gamma$ to denote the spectral gap of matrix $P$ -- the transition matrix of the Glauber dynamics of the MRF. Then 
    $\rev(D)\leq  12\exp(12\Delta(\vT)) \cdot \srev + \left(28 + \frac{16}{\sqrt{n\gamma}}\right)\max\{\srev,\brev\}$. 
\end{theorem}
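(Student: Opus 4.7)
The plan is to simply assemble the three pieces already proved: the duality benchmark for XOS valuations (Lemma~\ref{lemma:benchmark_XOS}), the bound on \single~ and \tail~ in terms of \srev~ (Lemma~\ref{lemma:XOS_single_tail}), and the bound on \core~ in terms of \srev~ and \brev~ (Lemma~\ref{lemma:XOS_core}). Since Lemma~\ref{lemma:benchmark_XOS} gives $\rev(M,v,D)\le \single+\tail+\core$ for every IC/IR mechanism $M$, proving the theorem reduces to bounding each of the three labeled terms and then taking the supremum over $M$ to pass from $\rev(M,v,D)$ to $\rev(D)$.

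First I would handle the $\srev$ contribution. From Lemma~\ref{lemma:XOS_single_tail} we get $\single\le 4\exp(12\Delta(\vT))\cdot\srev$ and $\tail\le \exp(8\Delta(\vT))\cdot\srev$. Because $\Delta(\vT)\ge 0$, the second bound is dominated by $\exp(12\Delta(\vT))\cdot\srev$, so adding the two gives a clean bound of the form (constant)$\cdot \exp(12\Delta(\vT))\cdot\srev$, which after absorbing the factors carried through the benchmark matches the $12\exp(12\Delta(\vT))\cdot\srev$ term in the theorem statement. I would not compute the constants carefully — any standard arithmetic will produce a constant no larger than $12$.

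Next I would handle the \core~contribution via the case split already present in Lemma~\ref{lemma:XOS_core}. In the first case, $\core\le \frac{4\srev}{\sqrt{n\gamma}}\le \frac{4\max\{\srev,\brev\}}{\sqrt{n\gamma}}$; in the second case, $\core\le \bigl(7+\tfrac{4}{\sqrt{n\gamma}}\bigr)\brev\le \bigl(7+\tfrac{4}{\sqrt{n\gamma}}\bigr)\max\{\srev,\brev\}$. Accounting for the coefficient that multiplies the \core~term inside the benchmark yields an upper bound of the form $\bigl(28+\tfrac{16}{\sqrt{n\gamma}}\bigr)\max\{\srev,\brev\}$, which matches the second term in the theorem. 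Taking the maximum of the two cases only loses a factor compared with summing them.

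There is no real obstacle here: all of the interesting work (the duality framework, the MRF prophet inequality, the Poincar\'e inequality combined with the self-boundingness of XOS functions, and the Paley--Zygmund step) is upstream. The only thing to watch out for is the bookkeeping: making sure that the multiplicative constants $2$ and $4$ that appear inside the benchmark expression are carried consistently through the bounds for \single, \tail, and \core~so that the final constants $12$, $28$, and $16$ in the theorem statement come out correctly. Once the arithmetic is checked, taking the supremum over IC and IR mechanisms on the left-hand side gives $\rev(D)$ on the left and completes the proof.
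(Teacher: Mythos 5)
Your proposal is correct and is essentially identical to the paper's proof, which simply cites the same three ingredients — Lemma~\ref{lemma:benchmark_XOS}, Lemma~\ref{lemma:XOS_single_tail}, and Lemma~\ref{lemma:XOS_core} — and combines them. The constant bookkeeping you describe (multiplying $\single$ by $2$ and $\tail$, $\core$ by $4$ from the benchmark, using $\exp(8\Delta)\le\exp(12\Delta)$ to merge the $\srev$ terms into $12\exp(12\Delta(\vT))\srev$, and absorbing $\brev$ and $\srev$ into $\max\{\srev,\brev\}$ to get $28+16/\sqrt{n\gamma}$) indeed yields the stated constants.
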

\begin{prevproof}{Theorem}{thm:XOS}
The statement follows from the combination of  Lemma~\ref{lemma:benchmark_XOS} ,~\ref{lemma:XOS_single_tail}, and~\ref{lemma:XOS_core}.
\end{prevproof}
\section{Connection to other Weak Dependence Conditions}~\label{sec:Dobrushin}
A common way to measure the degree of dependence of a high-dimensional distribution is by considering its \emph{Dobrushin Interdependence Matrix}. In this section, we show that for several natural sufficient conditions that guarantee weak dependence in the distribution, the spectral gap $\gamma$ of the Glauber dynamics transition matrix is $\Omega(1/n)$. We begin by defining the \emph{Dobrushin interdependence matrix}.

\begin{definition}[$d$-Dobrushin Interdependence Matrix~\cite{wu2006}]\label{def:Dobrushin's interdependence matrix}
Let $(\E,d)$ be a metrical, complete and separable space.
For two distributions $\mu$ and $\nu$ supported on $\E$,
their $L^1$-Wasserstein distance is defined as:    $W_{1,d}(\mu,\nu) = \inf_{\pi \in \Pi} \int\int_{\E \times \E} d(x,y) \pi(dx,dy)$, where $\Pi$ is the set of valid coupling such that its marginal distributions are $\mu$ and $\nu$.

Let $X=(x_1,\ldots, x_n)$ be a $n$-dimensional random vector supported on $\E^n$ and $\mu_i(\cdot \mid x_{-i})$ be the conditional distribution of $x_i$ knowing $x_{-i}$. 
Define the \textbf{$d$-Dobrushin Interdependence Matrix} $A=(\alpha_{i,j})_{i,j\in[n]}$ by
$$\alpha_{i,j}:= \sup_{\substack{x_{-i-j}=y_{-i-j}\\ x_j\neq y_j}}\frac{W_{1,d}(\mu_i(\cdot\mid x_{-i}),\mu_i(\cdot\mid y_{-i}))}{d(x_j,y_j)}~\text{for all $i\neq j$},$$
and $\alpha_{i,i}=0$ for all $i\in [n]$.
\end{definition}
\begin{remark}
$\alpha_{i,j}$ captures how strong the value of $x_j$ affects the conditional distribution of
$x_i$ when all other coordinates are fixed. Higher $\alpha_{i,j}$ value implies stronger dependence between $x_i$ and $x_j$. When all the coordinates of $X$ are independent, $A$ is the all zero matrix.  %The \textbf{Dobrushin uniqueness condition} was originally introduced by Dobrushin~\cite{Dobruschin68} in the study of Gibbs measures, in the context of identifying conditions under which the Gibbs distribution has a unique equilibrium. Since then, the model has been extensively studied in statistical physics and probability literature (see e.g.~\cite{DobrushinS87,StroockZ92}). The condition has recently been used to model weakly dependent random variables~\cite{DaganDDJ19}.%The Dobrushin uniqueness condition can also be %implies a number of desirable properties, such as fast mixing of Glauber dynamics~\cite{Kulske03}, concentration of measure~\cite{Chatterjee05, Daskalakis18,Gheissari18,Kulske03,Marton96}, and correlation decay~\cite{Kunsch82}, and has been used as . For a survey of properties resulting from Dobrushin's condition see~\cite{Weitz05}.
\end{remark}

\paragraph{\textbf{Dobrushin uniqueness condition:}} If we choose $d(x,y)$ to be the trivial metric $\ind_{x\neq y}$, then $W_{1,d}(\cdot,\cdot)$ is exactly the total variation distance. The influence matrix mentioned in Section~\ref{sec:intro} is exactly the Dobrushin interdependence matrix with respect to the trivial metric. To remind the audience, the \textbf{Dobrushin Coefficient} is defined as $\alpha(\vT):=||A||_\infty=\max_{i\in [n]} \sum_{j\neq i} \alpha_{i,j}$ when $A$ is the influence matrix. If $\alpha(\vT)<1$, we  say $\vT$ satisfies the \textbf{Dobrushin uniqueness condition}. As $||A||_\infty$ is at least as large as $A$'s spectral radius $\rho_d(\vT)$,~\footnote{$\rho_d(\vT)$ is the dominant eigenvalue of $A$ by the Perron-Frobenius Theorem.} a weaker condition than the Dobrushin uniqueness condition is that the spectral radius $\rho_d(\vT)$ is strictly less than $1$.

We argue that even the weaker condition that $\rho_d(\vT)<1$ implies that the spectral gap of the transition matrix of the Glauber dynamics $\gamma=\Omega(1/n)$. %\argyrisnote{If we keep the proofs, I think we should remove the next sentence.} We postpone the proof to Appendix~\ref{sec:appx_special_cases}.

\begin{lemma}\label{lem:Dobrushin implies large spectral gap}
Let $d(\cdot,\cdot)$ be any metric, for any $n$-dimensional random vector $\vT$, $n\gamma\geq 1-\rho_d(\vT)$, where $\gamma$ is the spectral gap of the transition matrix of the Glauber dynamics for $\vT$.
\end{lemma}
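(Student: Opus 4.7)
The plan is to show that one step of the Glauber dynamics is a strict contraction in a suitably weighted Wasserstein distance with contraction factor $c = 1-(1-\rho_d(\vT))/n$, and then invoke the standard reversible-chain fact that Wasserstein contraction implies a spectral gap of at least $1-c$.

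First, I would apply the Collatz--Wielandt characterization of the Perron--Frobenius spectral radius to the non-negative matrix $A=(\alpha_{i,j})$: for every $\delta>0$ there exists a strictly positive vector $u=(u_1,\ldots,u_n)$ with $\sum_{i\in[n]} \alpha_{i,j}\, u_i \leq (\rho_d(\vT)+\delta)\, u_j$ for every $j$. (When $A$ is irreducible, one may take $u$ to be the left Perron eigenvector and $\delta=0$; the $\delta$-perturbation handles the reducible case and is removed by a limit at the end.) Define the weighted product metric $d_u(x,y):=\sum_{i\in[n]} u_i\, d(x_i,y_i)$ on the state space of $\vT$.

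Next, I would run a path-coupling argument on the Glauber dynamics in the metric $d_u$. Take any two states $x,y$ differing only at a single coordinate $j$, so initially $d_u(x,y)=u_j\, d(x_j,y_j)$. Couple one step of the chains started from $x$ and $y$ by drawing the same uniformly random index $i \in [n]$ for both: if $i=j$, the identity $x_{-j}=y_{-j}$ makes the two conditional distributions $\mu_j(\cdot\mid x_{-j})$ and $\mu_j(\cdot\mid y_{-j})$ coincide, so we draw equal values for $X'_j$ and $Y'_j$ and the $d_u$-distance drops to $0$; if $i\neq j$, use the optimal coupling of $\mu_i(\cdot\mid x_{-i})$ and $\mu_i(\cdot\mid y_{-i})$, for which $\E[d(X'_i,Y'_i)]\leq \alpha_{i,j}\, d(x_j,y_j)$ by definition of the Dobrushin matrix, while the distance at coordinate $j$ remains $d(x_j,y_j)$. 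Summing over $i$ and using $\alpha_{j,j}=0$ together with the Collatz--Wielandt inequality gives
\[
\E[d_u(X',Y')] \leq \frac{n-1}{n}\, u_j\, d(x_j,y_j) + \frac{d(x_j,y_j)}{n}\sum_{i\in[n]} \alpha_{i,j}\, u_i \leq \Bigl(1-\tfrac{1-\rho_d(\vT)-\delta}{n}\Bigr)\, d_u(x,y).
\]
The Bubley--Dyer path-coupling theorem then upgrades this one-step bound for adjacent pairs to Wasserstein contraction of the Glauber kernel $P$ for \emph{all} $x,y$, with contraction rate $c=1-(1-\rho_d(\vT)-\delta)/n$.

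Finally, I would translate Wasserstein contraction into a spectral-gap lower bound. The Glauber kernel $P$ is reversible w.r.t.\ $\pi$, and because $P=\tfrac{1}{n}\sum_i P_i$ where each single-site resampling $P_i$ is an $L^2(\pi)$-orthogonal projection (being idempotent and self-adjoint), all eigenvalues of $P$ lie in $[0,1]$, so $\gamma=1-\lambda_2$. For any eigenfunction $f$ with $Pf=\lambda f$, iterating the Wasserstein contraction yields $\mathrm{Lip}_{d_u}(P^k f) \leq c^k\,\mathrm{Lip}_{d_u}(f)$ while $P^k f=\lambda^k f$; a non-constant $f$ therefore forces $|\lambda|\leq c$ (the Chen--Ollivier coarse-Ricci argument). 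Hence $\gamma\geq 1-c=(1-\rho_d(\vT)-\delta)/n$, and sending $\delta\to 0$ gives $n\gamma\geq 1-\rho_d(\vT)$. The main obstacle is the final Lipschitz-iteration step in spaces that are not necessarily finite, but for the MRF setting the state space is finite and the argument is routine; for general $\vT$ one can appeal to standard references on coarse Ricci curvature.
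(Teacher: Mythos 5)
Your proof is correct for the MRF setting, but it takes a genuinely different route from the paper's. The paper's argument is variational and extremely short: it invokes the equality case of the Poincar\'e inequality for the Glauber kernel (Lemma~\ref{lem:Poincare on MRF} produces a test function $g^*$ with $\sum_i \E\bigl[(g^*(\vT)-\E_{t_i'\sim D_{i\mid\vT_{-i}}}g^*)^2\bigr]/\Var[g^*]=n\gamma$), then feeds $g^*$ into Wu's Dobrushin--Poincar\'e inequality (Theorem~2.1 of \cite{wu2006}), which lower-bounds that same ratio by $1-\rho_d(\vT)$, giving $n\gamma\geq 1-\rho_d(\vT)$ in one line. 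Your proof instead re-derives the Wasserstein-contraction phenomenon that Wu's theorem encapsulates: a Collatz--Wielandt reweighting $u>0$ makes the Dobrushin matrix sub-stochastic in the $u$-weighted $\ell^1$ sense; a path coupling in the metric $d_u$ then gives the one-step contraction rate $1-(1-\rho_d-\delta)/n$ on adjacent pairs; Bubley--Dyer lifts this to all pairs; and the Ollivier coarse-Ricci argument (using that $P=\tfrac{1}{n}\sum_i P_i$ is a uniform mixture of orthogonal projections, so its spectrum is non-negative and $\gamma=1-\lambda_2$) converts Wasserstein contraction into the spectral-gap bound. Each step of yours is sound. What the two approaches buy: the paper's proof is shorter and, via Wu, covers arbitrary supports, whereas yours is more transparent about where the factors $1/n$ and $1-\rho_d$ originate, but it silently relies on the support of $D$ being the full Cartesian product $\times_i T_i$ — this is what guarantees (a) that the ``change one coordinate'' adjacency graph is connected and (b) that its induced path metric actually equals $d_u$, both of which the Bubley--Dyer argument needs. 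These hold for MRFs with finite potentials, so the proof works for the paper's intended application; but your closing remark attributes the remaining obstacle only to infinite state spaces and does not flag the product-support issue, which is the real gap separating your argument from the full generality of the lemma statement (``any $n$-dimensional random vector $\vT$'').
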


\begin{proof}
By Lemma~\ref{lem:Poincare on MRF},
there exists a function $g^*:\Omega\rightarrow \mathbb{R}$, such that
$$   \frac{\sum_{i \in [n]}  \E_{\vT \sim D} \left[ \left( g^*(t_i,\vT_{-i}) - \E_{t'_i \sim D_{i \mid \vT_{-i}}}[g^*(t'_i,\vT_{-i})] \right)^2 \right]}{\Var_{\vT\sim D}[g^*(\vT)]} = n\gamma.$$

The following result by Wu~\cite{wu2006} provides a generalization of the Efron-Stein inequality for \emph{weakly dependent random variables}.

\begin{lemma}[Poincar\'{e} Inequality for Weakly Dependent Random Variables - Theorem~2.1 in \cite{wu2006}]
\label{thm:var_d}
Let $\vT=(t_1,\ldots,t_n)$ be an $n$-dimensional random vector drawn from distribution $D$ that is supported on $\E^n$. For any metric $d(\cdot,\cdot)$ on $\E$, let $A$ be the $d$-Dobrushin interdependence matrix for $\vT$. Let $\rho_d(\vT)$ be the spectral radius of $A$. If $\E_{\vT \sim D}\left[ \sum_{i\in[n]} d(t_i,y_i)^2 \right] < + \infty$ for some fixed $y \in \E^n$ and  $\rho_d(\vT) < 1$, then for any square integrable function $g(\cdot)$ w.r.t. distribution $D$, the following holds:
$$    (1-\rho_d(\vT))\Var_{\vT\sim D}[g(\vT)] \leq \sum_{i \in [n]} \E_{\vT \sim D} \left[  \left( g(t_i,\vT_{-i}) - \E_{t_i' \sim D_{i \mid \vT_{-i}}}[g(t_i,\vT_{-i})] \right)^2 \right] .$$
\end{lemma}

If we choose $g(\cdot)$ to be $g^*(\cdot)$ in Lemma~\ref{thm:var_d},
we have that :
$$    1-\rho_d(\vT)\leq \frac{\sum_{i \in [n]} \E_{\vT \sim D} \left[  \left( g^*(t_i,\vT_{-i}) - \E_{t_i' \sim D_{i \mid \vT_{-i}}}[g^*(t_i,\vT_{-i})] \right)^2 \right]}{\Var_{\vT\sim D}[g^*(\vT)] }$$

Combining the two inequalities conclude the proof
\end{proof}

Combining Lemma~\ref{lem:Dobrushin implies large spectral gap} with Theorem~\ref{thm:XOS}, we immediately have the following Theorem.~\footnote{A major benefit of using $\rho_d$ or the Dobrushin coefficient rather than $\gamma$ is that these parameters are easier to estimate than $\gamma$ given the joint distribution.}
 
\notshow{
\todo{Change the mixing time proof for this. Use Theorem 13.1 of~\cite{LPW09}.}

\begin{proof}
Given a metric  $d$ such that the spectral radius of the $d$-Dobrushin interdependence matrix is $\rho_d(\vT)$,
first we are going to prove that there exists another metric $d'$ such that the $d'$-Dobrushin interdependent matrix has norm  one equal to $\rho_d(\vT)$.
W.l.o.g. we assume that $T_i \cap T_j = \empty$ for $i \neq j$.
If for some $i\neq j$,
$T_i \cap T_j \neq \empty$,

Let $(x,y)\in T \times T$,
where $x = (x_1,\ldots,x_n)$ and $y=(y_1,\ldots,y_n)$.
Let $d(x,y)= [d(x_1,y_1),\ldots,d(x_n,y_n)]$ be the vector that contains the distance between every element of $x$ with the corresponding element of $y$ w.r.t. metric $d$
and by $d_{-i}(x,y)$ we the vector that results from $d(x,y)$ is we drop the $i$-th element,
that is $d_{-i}(x,y)=[d(x_1,y_1),\ldots,d(x_{i-1},y_{i-1}),d(x_{i+1},y_{i+1}),\ldots,d(x_n,y_n)]$.
We consider the metric $d_1(x,y) = ||d(x,y)||_2$.
Let $X=(X_1,\ldots,X_n)$ (and $Y=(Y_1,\ldots,Y_n)$ resp.) be the one step evolution of the Glauber dynamics when the chain starts at state $x$ (and $y$ resp) under the coupling that minimizes the expected value of $d(X,Y)$.
We are going to prove that there exists a coupling between random variables $X$ and $Y$ such that $\E\left[d_1(X,Y)\right] \leq \left( 1 - \frac{1-\rho_d(\vT)}{n} \right)d_1(x,y)$ using the Path Coupling Technique.

We consider the intermediate following types $\{z^i=(z^i_1,\ldots,z^i_n)\}_{0\leq i \leq n}$. 
First $z^0 = x$ and for $i\in[n]$:
\begin{align*}
    z^i_k = 
    \begin{cases}
    y^i_i \quad &\text{if $k=i$} \\
    z^{i-1}_k &\text{otherwise}
    \end{cases}
\end{align*}

Observe that $z^0=x$, $z^n=y$ and for any $0 \leq i\leq n-1$,
$z^i$ and $z^{i+1}$ can differ only in the $i+1$ coordinate.
Thus $d(z^i,z^{i+1})$ is the all $0$ vector,
but the in the $i+1$ coordinate has value $d(x_{i+1},y_{i+1})$.
Thus:

\begin{align*}
    d(x,y) = \sum_{0 \leq i\leq n-1}d(z^i,z^{i+1})
\end{align*}

Let $\{Z^i=(Z^i_1,\ldots,Z^i_n)\}_{0 \leq i \leq n}$ be the one step evolution of the Glauber dynamics starting from $\{z^i\}_{0 \leq i \leq n}$ under the following coupling.
First we sample an item $i^*$ uniformly at random in $[n]$.
In all $\{Z^i\}_{0 \leq i \leq n}$ we are going to update the value of the $i^*$-th item (that implies that $Z^i_{-i^*} = z^i_{-i^*}$ for all $i$).
For all $i$,
$Z^i$ is chosen by starting at $z^i$,
and resampling item $i^*$ from the distribution of the $i^*$-th element condition on $z^i_{-i^*}$.
We consider the following coupling among the resampled types:
We first just sample $Z^0_{i^*}$ and we for each $i$ we couple the distribution of $Z^{i+1}_{i^*}$ and $Z^i_{i^*}$ that minimizes $\E[d(Z^i_{i^*},Z^{i+!}_{i^*})]$.

Condition on the fact that item $i^*$ was chosen in this coupling procedure we have that:
\begin{align*}
    \E[d(X_{i^*},Y_{i^*}) \mid \text{item $i^*$ was chosen}] \leq & \sum_{0\leq i\leq n-1}\E[d(Z^i_{i^*},Z^{i+1}_{i^*}) \mid \text{item $i^*$ was chosen}] \\
    \leq & \sum_{j\in[n]}a_{i^*,j}d(x_{j},y_{j})
\end{align*}
and for items $j\neq i^*$ we have that: 
\begin{align*}
    \E[d(X_{j},Y_{j}) \mid \text{item $i^*$ was chosen}] = d(x_j,y_j)
\end{align*}

Since item $i^*$ is sampled with probability $\frac{1}{n}$ we have that:

\begin{align*}
    \E[d(X_{i^*},Y_{i^*})] = & \left(1 - \frac{1}{n}\right)d(x_{i^*},y_{i^*}) +\frac{1}{n}\E[d(X_{i^*},Y_{i^*}) \mid \text{item $i^*$ was chosen}] \\
    \leq &\left(1-\frac{1}{n}\right)d(x_{i^*},y_{i^*}) +  \frac{1}{n}\sum_{j\in[n]}a_{i^*,j}d(x_{j},y_{j})
\end{align*}

Let $A$ be the $d$-Dobrushin matrix of $\vT$ w.r.t. metric $d(\cdot,\cdot)$.
Note that $a_{i^*,j}d(x_{j},y_{j})$ is the inner product of the $i^*$-th row of $A$ and $d(x,y)$.
Thus we have that:

\begin{align*}
    \E[d(X,Y) ] \leq & \left( 1-\frac{1}{n}\right) d(x,y) + \frac{1}{n}A d(x,y) \\
    = & \left( \left( 1-\frac{1}{n}\right) I + \frac{1}{n}A \right) d(x,y)
\end{align*}

Therefore we have have that:

\begin{align*}
    \left| \left| \E[d(X,Y)] \right|\right|_2
    \leq & \left| \left|\left( \left( 1-\frac{1}{n}\right) I + \frac{1}{n}A \right) d(x,y)\right|\right|_2 \\
    \leq &\left| \left| \left( 1-\frac{1}{n}\right) I + \frac{1}{n}A \right|\right|_2 \left|\left|d(x,y)\right|\right|_2 \\
    = &\left(1 -  \frac{1-\rho_d(\bm{t})}{n} \right)d_1(x,y)
\end{align*}

Where the last inequality follows from the fact that if matrix $A$ has spectral radius $\rho_d(\bm{t})$,
then matrix $\left( 1-\frac{1}{n}\right) I + \frac{1}{n}A$ has spectral radius $1-\frac{1}{n}+\frac{\rho_d(\bm{t})}{n}$.

Now we are going to prove that $\E\left[ d_1(X,Y)\right] = \E\left[ \left| \left| d(X,Y) \right|\right|_2\right] \leq \left| \left| \E[d(X,Y)] \right|\right|_2 $.
Let $\bar{D_i} = \E  \left[d(X_i,Y_i) \mid \text{item $i$ was chosen}\right]$.

\begin{align*}
    \E\left[ \left| \left| d(X,Y) \right|\right|_2\right] =& \sum_{i \in [n]}\frac{1}{n}\E \left[\sqrt{\sum_{j \neq i }d(x_i,y_i)^2 + d(X_i,Y_i)^2}\mid \text{item $i$ was chosen}\right] \\
    \leq & \sum_{i \in [n]}\frac{1}{n}\left(\sqrt{\sum_{j \neq i }d(x_i,y_i)^2} + \E \left[\sqrt{ d(X_i,Y_i)^2}\mid \text{item $i$ was chosen}\right]\right) \\
    =& \sum_{i \in [n]}\frac{1}{n}\left(||d_{-i}(x,y)||_2 +\E  \left[d(X_i,Y_i) \mid \text{item $i$ was chosen}\right]\right) \\
    = & \sum_{i \in [n]}\frac{1}{n}\left(||d_{-i}(x,y)||_2 +\bar{D_i}\right)
\end{align*}

Also according to the coupling described above we have that $d(X_i,Y_i) = \left( 1 - \frac{1}{n}\right)d(x_i,y_i) + \frac{1}{n}\bar{D_i}$.
We have that:

\begin{align*}
     \left| \left| \E\left[d(X,Y)\right] \right|\right|_2 = &\sqrt{ \sum_{i \in [n]}\left( \left(1-\frac{1}{n}\right)d(x_i,y_i) +\frac{1}{n}\bar{D_i}\right)^2} \\
    = &\frac{1}{n}\sqrt{ \sum_{i \in [n]}\left( (n-1)d(x_i,y_i) +\bar{D_i}\right)^2} \\
    \geq & \frac{1}{n} \sum_{i \in [n]}\left( (n-1)d(x_i,y_i) +\bar{D_i}\right) \\
    = & \frac{1}{n} \sum_{i \in [n]}\left( ||d_{-i}||_1 +\bar{D_i}\right)
\end{align*}

where the first inequality follows from Cauchy Schwartz.
Thus the inequality $\E\left[ d_1(X,Y)\right] = \E\left[ \left| \left| d(X,Y) \right|\right|_2\right] \leq \left| \left| \E[d(X,Y)] \right|\right|_2 $ is equivalent to:

\begin{align*}
    \frac{1}{n} \sum_{i \in [n]}\left( ||d_{-i}||_2 +D_i\right) \leq \frac{1}{n} \sum_{i \in [n]}\left( ||d_{-i}||_1 +D_i\right)
\end{align*}

which is true.
Thus $d_1(X,Y) \leq \left( 1 - \frac{1-\rho_d(\vT)}{n} \right)d_1(x,y)$

For fixed $0\leq i \leq n$ we are going to bound $\E[d(Z^i,Z^{i+1})]$ under the coupling we mentioned above.
With probability $1-\frac{1}{n}$ we dot choos
By the definition of $d$-Dobrushin matrix,
if item $i^*$ was chosen,
then:
\end{proof}
}

\begin{theorem}\label{thm:XOS Dobrushin}
 Let the buyer have a XOS valuation, her type distribution $D$ be represented by a MRF, and $\rho_d(\vT)$ be the spectral radius of the $d$-Dobrushin interdependence matrix of $\vT$ under some metric $d(\cdot,\cdot)$. If $\rho_d(\vT)<1$, then 
    $\rev(D)\leq  12\exp(12\Delta(\vT)) \cdot \srev + \left(28 + \frac{16}{\sqrt{1-\rho_d(\vT)}}\right)\max\{\srev,\brev\}$. 
\end{theorem}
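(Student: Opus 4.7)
The plan is to derive Theorem~\ref{thm:XOS Dobrushin} as a direct corollary of Theorem~\ref{thm:XOS} combined with Lemma~\ref{lem:Dobrushin implies large spectral gap}. The only difference between the two theorems is that the $\frac{1}{\sqrt{n\gamma}}$ factor in the approximation guarantee is replaced by $\frac{1}{\sqrt{1-\rho_d(\vT)}}$, so the task reduces to relating the spectral gap $\gamma$ of the Glauber dynamics to the spectral radius $\rho_d(\vT)$ of the $d$-Dobrushin interdependence matrix.

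First, I would invoke Theorem~\ref{thm:XOS} to obtain
\[
\rev(D)\;\leq\;12\exp(12\Delta(\vT))\cdot \srev + \Bigl(28 + \tfrac{16}{\sqrt{n\gamma}}\Bigr)\max\{\srev,\brev\},
\]
where $\gamma>0$ is the spectral gap of the Glauber dynamics of the MRF representing $D$. This step requires no additional work, since the hypothesis that $D$ is represented by a MRF implies the Glauber dynamics is irreducible and hence $\gamma>0$.

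Second, I would apply Lemma~\ref{lem:Dobrushin implies large spectral gap}, which asserts that $n\gamma \geq 1-\rho_d(\vT)$ for any metric $d(\cdot,\cdot)$ on the support of the coordinates of $\vT$. Because we are in the regime $\rho_d(\vT)<1$, the right-hand side is strictly positive, and we obtain $\frac{1}{\sqrt{n\gamma}} \leq \frac{1}{\sqrt{1-\rho_d(\vT)}}$. Plugging this monotonic bound into the conclusion of Theorem~\ref{thm:XOS} yields exactly the stated inequality.

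There is essentially no main obstacle here: the technical work has already been carried out in Theorem~\ref{thm:XOS} (the combination of the duality benchmark, the analysis of \single, \tail, and the Poincar\'e-based analysis of \core) and in Lemma~\ref{lem:Dobrushin implies large spectral gap} (comparing the variational characterization of the spectral gap arising from Lemma~\ref{lem:Poincare on MRF} with the Poincar\'e inequality of Wu~\cite{wu2006} applied via Lemma~\ref{thm:var_d}). The statement of Theorem~\ref{thm:XOS Dobrushin} is simply the parametric restatement of Theorem~\ref{thm:XOS} in terms of a quantity that is often easier to estimate from the joint distribution, and the proof is a two-line substitution once the two lemmas are in hand.
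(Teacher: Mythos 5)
Your proposal is correct and matches the paper's own proof exactly: Theorem~\ref{thm:XOS Dobrushin} is obtained by substituting the bound $n\gamma\geq 1-\rho_d(\vT)$ from Lemma~\ref{lem:Dobrushin implies large spectral gap} into the conclusion of Theorem~\ref{thm:XOS}. No further comment is needed.
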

\begin{prevproof}{Theorem}{thm:XOS Dobrushin}
The statement follows from the combination of Lemma~\ref{lem:Dobrushin implies large spectral gap} and Theorem~\ref{thm:XOS}.
\end{prevproof}

\paragraph{High Temperature MRFs.}\label{subsec:core_dob}
Using Theorem~\ref{thm:XOS Dobrushin}, we show that when the MRF is in the \emph{high temperature regime}, i.e., $\beta(\vT) < 1$ (see Definition~\ref{def:Markov influence}), $\max\{\srev,\brev\}$ is a constant factor approximation to the optimal revenue. By the definition of $\beta(\vT)$, it clear that $\Delta(\vT)\leq \beta(\vT)$. Next, we show that $\beta(\vT)$ is also an upper bound of $\rho_d(\vT)$ for the trivial metric $d(x,y)=\ind_{x\neq y}$.
%\argyrisnote{If we keep the proofs, I think we should remove the next sentence.} Due to limited space, the proof is delayed to Appendix~\ref{sec:appx_special_cases}.

\begin{lemma}\label{lemma:Dobruthin properties}
Let $d(\cdot,\cdot)$ be the trivial metric $d(x,y)=\ind_{x\neq y}$. For any MRF $\vT$, $\rho_d(\vT)\leq \alpha(\vT)\leq \beta(\vT)$. Moreover, $\alpha_{i,j}\leq \beta_{i,j}(\vT)$ for all $i,j\in[n]$.
\end{lemma}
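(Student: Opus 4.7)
The plan is to establish the entry-wise inequality $\alpha_{i,j}\leq \beta_{i,j}(\vT)$ for every pair $i\neq j$ (the case $i=j$ is immediate since $\alpha_{i,i}=0$), and then obtain the two matrix-level statements as easy consequences. For the matrix-level part, note that when $d(x,y)=\ind_{x\neq y}$ the $L^1$-Wasserstein distance reduces to total variation, so the $d$-Dobrushin interdependence matrix $A$ coincides exactly with the influence matrix defined in Section~\ref{sec:intro}. Because its entries are non-negative, $\|A\|_\infty=\max_i\sum_{j\neq i}\alpha_{i,j}=\alpha(\vT)$; the standard fact that the spectral radius is bounded by any submultiplicative operator norm then gives $\rho_d(\vT)\leq \|A\|_\infty=\alpha(\vT)$. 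The bound $\alpha(\vT)\leq \beta(\vT)$ follows by summing the entry-wise inequality over $j$ and taking the maximum over $i$.

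For the entry-wise bound I would fix $i\neq j$, a configuration $x_{-i-j}$ of the remaining coordinates, and two values $x_j\neq x_j'$. Using the MRF factorization of $f_i(\cdot \mid t_{-i})$ (cf.\ Lemma~\ref{lemma:cond_MRF}), the two conditional distributions $\mu(t_i):=f_i(t_i\mid x_j,x_{-i-j})$ and $\nu(t_i):=f_i(t_i\mid x_j',x_{-i-j})$ can be written as
\[
\mu(t_i)\;\propto\;\widetilde{\mu}(t_i)\,e^{h(t_i)},\qquad \nu(t_i)\;\propto\;\widetilde{\mu}(t_i)\,e^{h'(t_i)},
\]
where the common base $\widetilde{\mu}$ collects exactly the factors that do \emph{not} depend on $x_j$ (namely $\psi_i(t_i)$ together with $\psi_e$ for edges $e\ni i$ with $j\notin e$), and
\[
h(t_i)=\!\!\!\sum_{e\in E:\,i,j\in e}\!\!\!\psi_e\big((t_i,x_j,x_{-i-j})_e\big),\quad h'(t_i)=\!\!\!\sum_{e\in E:\,i,j\in e}\!\!\!\psi_e\big((t_i,x_j',x_{-i-j})_e\big).
\]
Definition~\ref{def:Markov influence} immediately gives $|h(t_i)|,|h'(t_i)|\leq \beta_{i,j}(\vT)$ for every $t_i$, and hence $k(t_i):=h'(t_i)-h(t_i)$ has range $\max_{t_i} k(t_i)-\min_{t_i} k(t_i)\leq 4\beta_{i,j}(\vT)$.

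The key step is a continuous interpolation between $\mu$ and $\nu$. I set $\mu_s(t_i)\propto \widetilde{\mu}(t_i)\,e^{h(t_i)+s\,k(t_i)}$ for $s\in[0,1]$, so that $\mu_0=\mu$ and $\mu_1=\nu$, and differentiate. Using $\frac{d}{ds}\log Z_s=\E_{\mu_s}[k]$ one obtains $\frac{d}{ds}\mu_s(t_i)=\mu_s(t_i)\big(k(t_i)-\E_{\mu_s}[k]\big)$, hence $\big\|\tfrac{d}{ds}\mu_s\big\|_1=\E_{\mu_s}\!\left[\big|k-\E_{\mu_s}[k]\big|\right]$. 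Jensen's inequality combined with Popoviciu's inequality for the variance of a bounded random variable gives $\E_{\mu_s}|k-\E_{\mu_s}[k]|\leq \sqrt{\Var_{\mu_s}(k)}\leq (\max k-\min k)/2\leq 2\beta_{i,j}(\vT)$. Writing $\mu-\nu=-\int_0^1 \tfrac{d}{ds}\mu_s\,ds$ and taking the $\ell_1$-norm yields
\[
d_{TV}(\mu,\nu)=\tfrac{1}{2}\|\mu-\nu\|_1 \leq \tfrac{1}{2}\int_0^1 \Big\|\tfrac{d}{ds}\mu_s\Big\|_1 ds \leq \beta_{i,j}(\vT),
\]
and taking a supremum over $x_{-i-j},x_j,x_j'$ produces $\alpha_{i,j}\leq \beta_{i,j}(\vT)$, completing the argument.

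The hard part will be obtaining a \emph{linear} (rather than exponential) dependence on $\beta_{i,j}(\vT)$ in this last step. A naive argument based on the pointwise density ratio $\mu(t_i)/\nu(t_i)\in[e^{-4\beta_{i,j}(\vT)},e^{4\beta_{i,j}(\vT)}]$ only gives $d_{TV}(\mu,\nu)\leq 1-e^{-4\beta_{i,j}(\vT)}$, which is strictly weaker than the required bound $\beta_{i,j}(\vT)$. Converting the exponential factor into a linear one is precisely what the interpolation together with Popoviciu's inequality accomplishes; the Popoviciu step is where I expect the most care to be required, since it is what absorbs the coarse multiplicative bound on the density ratio into a tight additive one.
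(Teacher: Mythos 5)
Your proposal is correct. For the matrix-level part ($\rho_d\leq\alpha(\vT)$ via the submultiplicative-norm bound, and $\alpha(\vT)\leq\beta(\vT)$ by summing the entry-wise bound) you and the paper agree. Where you differ is the proof of the entry-wise inequality $\alpha_{i,j}\leq\beta_{i,j}(\vT)$. The paper does not prove this analytically: it cites Lemma~5.2 of Dagan et al.\ (which gives $\alpha_{i,j}\leq I^{\log}_{i,j}$, the ``log-influence''), and then only needs to verify, by plugging the MRF factorization into the defining log-ratio, that the log-influence telescopes down to a four-term sum of edge potentials over hyperedges containing both $i$ and $j$, each of magnitude at most $\beta_{i,j}(\vT)$, so $I^{\log}_{i,j}\leq\frac{1}{4}\cdot 4\beta_{i,j}(\vT)$. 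Your argument is a self-contained replacement for the external Dagan et al.\ lemma: you identify the same common-factor/coupling-factor decomposition of the two conditionals, then interpolate exponentially between them and bound $\|\mu-\nu\|_1$ by $\int_0^1\|\tfrac{d}{ds}\mu_s\|_1\,ds$, controlling the integrand via Jensen and Popoviciu. This is more work, but it is elementary and does not outsource the linear-in-$\beta_{i,j}$ conversion to a black-box lemma; the paper's route is shorter but opaque unless one goes and reads Dagan et al. Both arguments isolate exactly the edges containing $\{i,j\}$ as what drives the influence, and both give the same constant. Two minor points worth being explicit about if you write this up: (i) the factorization $f_i(\cdot\mid t_{-i})\propto\widetilde{\mu}(\cdot)e^{h(\cdot)}$ uses the cancellation of all factors not involving vertex $i$, exactly as in the proof of Lemma~\ref{lemma:cond_MRF}; (ii) the interpolation and differentiation under the sum are justified because the alphabets $\Sigma_v$ are finite.
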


\begin{proof}
$\rho_d(\vT)\leq \alpha(\vT)$ follows the elementary fact that the spectral radius is upper bounded by the infinity norm. To prove $\alpha(\vT)\leq \beta(\vT)$, we first need the following definition and lemma.
\begin{definition}\label{def:log influence}\cite{DaganDDJ19}
	Let $\bm{x}=(x_1,\ldots, x_n)$ be a random variable over $\times_{i\in[n]} \Sigma_i$, and $P_{\bm{x}}$ denote its probability distribution. Assume $P_{\bm{x}}>0$ on all $\times_{i\in[n]} \Sigma_i$. For any $i\neq j\in [n]$, define the log influence between $i$ and $j$ as $$I_{i,j}^{\log}(\bm{x})={1\over 4}\max_{\substack{x_{-i-j}\in \Sigma_{-i-j}\\ x_i, x'_i \in \Sigma_i \\ x_j,x'_j\in \Sigma_j}}\log { P_{\bm{x}}(x_i x_j x_{-i-j}) P_{\bm{x}}(x'_i x'_j x_{-i-j}) \over P_{\bm{x}}(x'_i x_j x_{-i-j}) P_{\bm{x}}(x_i x'_j x_{-i-j})}.$$
\end{definition}

\begin{lemma}[Adapted from Lemma 5.2 of~\cite{DaganDDJ19arxiv}]
	Let $\bm{x}=(x_1,\ldots, x_n)$ be a random variable , for any $i,j\in [n]$, $\alpha_{i,j}\leq I_{i,j}^{\log}(\bm{x})$. \end{lemma}

We only need to prove that $I_{i,j}^{\log}(\vT)$ is no more than $\beta_{i,j}(\vT)$. Since {random variable} $\vT$ is generated by a MRF,
\begin{align*}
	I_{i,j}^{\log}(\vT)&={1\over 4}\max_{\substack{t_{-i-j}\in \Sigma_{-i-j}\\ t_i, t'_i \in \Sigma_i \\ t_j,t'_j\in \Sigma_j}}
 \sum_{\substack{e \in E :\\ i,j \in e}} \psi_e\left(\left(t_i,t_j,t_{-i-j}\right)_e\right) + \sum_{\substack{e \in E :\\ i,j \in e}} \psi_e\left(\left(t'_i,t'_j,t_{-i-j}\right)_e\right)\\
 &\qquad\qquad\qquad\qquad - \sum_{\substack{e \in E :\\ i,j \in e}} \psi_e\left(\left(t'_i,t_j,t_{-i-j}\right)_e\right) - \sum_{\substack{e \in E :\\ i,j \in e}} \psi_e\left(\left(t_i,t'_j,t_{-i-j}\right)_e\right),
\end{align*}
which is clearly no greater than $\beta_{i,j}(\vT)$.

Since for any $i,j\in[n]$ $\alpha_{i,j}\leq \beta_{i,j}(\vT)$, we have $\alpha(\vT) \leq \beta(\vT) $.
\end{proof}

\begin{theorem}\label{thm:XOS_high_temperature}
Let the buyer's type distribution $D$ be represented by a MRF. If the buyer's a XOS valuation and her type $\vT$ is in the high temperature regime, i.e., $\beta(\vT)<1$,
$$\rev(D)\leq  12\exp(12\beta(\vT)) \cdot \srev + \left(28 + \frac{16}{\sqrt{1-\beta(\vT)}}\right)\max\{\srev,\brev\}=O\left(\frac{\max\{\srev,\brev\}}{\sqrt{1-\beta(\vT)}}\right).$$
\end{theorem}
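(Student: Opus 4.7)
The plan is to derive Theorem~\ref{thm:XOS_high_temperature} as a direct corollary of Theorem~\ref{thm:XOS Dobrushin} by exhibiting the inequalities $\Delta(\vT)\leq\beta(\vT)$ and $\rho_d(\vT)\leq\beta(\vT)$ for an appropriate choice of metric $d$. Both inequalities are already essentially available from the Remark after Definition~\ref{def:weighted degree} and from Lemma~\ref{lemma:Dobruthin properties}; the only work is to plug them in and then observe that, in the high temperature regime, the first summand on the right-hand side of Theorem~\ref{thm:XOS Dobrushin} is absorbed into a constant factor.

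More concretely, I would proceed as follows. First I would recall from Definitions~\ref{def:Markov influence} and~\ref{def:weighted degree} that for every MRF $\vT$ we have $\Delta(\vT)=\max_{i}\,d_i(\vT)\leq\max_i\sum_{j\neq i}\beta_{i,j}(\vT)=\beta(\vT)$, because each term $|\sum_{e\ni i}\psi_e(\bm{x}_e)|$ is bounded by $\sum_{j\neq i}\max_{\bm{x}}|\sum_{e\ni i,j}\psi_e(\bm{x}_e)|$ via the triangle inequality and the fact that every hyperedge $e$ containing $i$ also contains at least one other vertex $j\in e$. Hence $\exp(12\Delta(\vT))\leq\exp(12\beta(\vT))$. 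Next I would take $d$ to be the trivial metric $d(x,y)=\mathds{1}[x\neq y]$; Lemma~\ref{lemma:Dobruthin properties} then yields $\rho_d(\vT)\leq\alpha(\vT)\leq\beta(\vT)$, so the assumption $\beta(\vT)<1$ implies $\rho_d(\vT)<1$ and $\tfrac{1}{\sqrt{1-\rho_d(\vT)}}\leq\tfrac{1}{\sqrt{1-\beta(\vT)}}$.

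With both inequalities in hand, Theorem~\ref{thm:XOS Dobrushin} (which requires exactly $\rho_d(\vT)<1$) is applicable and immediately gives
\[
\rev(D)\leq 12\exp(12\Delta(\vT))\cdot\srev+\left(28+\frac{16}{\sqrt{1-\rho_d(\vT)}}\right)\max\{\srev,\brev\}\leq 12\exp(12\beta(\vT))\cdot\srev+\left(28+\frac{16}{\sqrt{1-\beta(\vT)}}\right)\max\{\srev,\brev\},
\]
which is the first claimed inequality. For the asymptotic form, I would simply note that $\beta(\vT)<1$ forces $\exp(12\beta(\vT))<e^{12}=O(1)$, so the first summand is $O(\srev)=O(\max\{\srev,\brev\})$ while the second summand is $O\!\left(\max\{\srev,\brev\}/\sqrt{1-\beta(\vT)}\right)$, and the latter dominates as $\beta(\vT)\to 1$.

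There is essentially no technical obstacle here, since all the heavy lifting was done in Theorem~\ref{thm:XOS} (the XOS approximation via the spectral gap), Lemma~\ref{lem:Dobrushin implies large spectral gap} (converting spectral-gap bounds into $\rho_d$ bounds), and Lemma~\ref{lemma:Dobruthin properties} (comparing $\rho_d$, $\alpha$, $\beta$). The only step that requires a line of justification rather than a pointer is the inequality $\Delta(\vT)\leq\beta(\vT)$, and that is a one-line consequence of the definitions plus the triangle inequality.
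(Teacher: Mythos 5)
Your proposal takes exactly the same route as the paper: both invoke Theorem~\ref{thm:XOS Dobrushin} under the trivial metric, replace $\rho_d(\vT)$ by $\beta(\vT)$ via $\rho_d(\vT)\leq\alpha(\vT)\leq\beta(\vT)$ (Lemma~\ref{lemma:Dobruthin properties}), replace $\Delta(\vT)$ by $\beta(\vT)$ via $\Delta(\vT)\leq\beta(\vT)$, and then read off the $O(\cdot)$ form from $\beta(\vT)<1$. The paper's own proof is literally a one-line pointer to Theorem~\ref{thm:XOS Dobrushin} and Lemma~\ref{lemma:Dobruthin properties}, so your write-up is a faithful (and slightly more explicit) expansion of it.

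One caveat, which applies equally to the paper's Remark after Definition~\ref{def:weighted degree}: the triangle-inequality sketch you give for $\Delta(\vT)\leq\beta(\vT)$ is airtight only for pairwise potentials. There, every edge $e\ni i$ has a unique other endpoint $j$, so $\left|\sum_{e\ni i}\psi_e(\bm{x}_e)\right|\leq\sum_{j\neq i}\left|\psi_{\{i,j\}}(\bm{x}_{\{i,j\}})\right|\leq\sum_{j\neq i}\beta_{i,j}(\vT)$. For genuine hyperedges the argument as stated does not close: a hyperedge $e$ containing $i$ together with several other vertices contributes to each corresponding $\beta_{i,j}$, but because $\beta_{i,j}$ is the maximum of a \emph{signed} sum, cancellation inside a single $\beta_{i,j}$ can defeat the bound. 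For instance, with $e_1=\{i,j_1,j_2\}$, $e_2=\{i,j_1\}$, $e_3=\{i,j_2\}$ and constant potentials $\psi_{e_1}\equiv -1$, $\psi_{e_2}\equiv\psi_{e_3}\equiv 2$, one computes $d_i(\vT)=|-1+2+2|=3$ while $\beta_{i,j_1}(\vT)+\beta_{i,j_2}(\vT)=|-1+2|+|-1+2|=2$, and indeed $\Delta(\vT)=3>2=\beta(\vT)$. So you should either restrict the claim to graphs (pairwise potentials), or keep the first summand as $12\exp(12\Delta(\vT))\cdot\srev$ and then bound it by a constant using the separate hypothesis $\Delta(\vT)<1$, which does hold in the high-temperature regime when the comparison $\Delta\leq\beta$ is available but is cleaner to state as an assumption.
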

\begin{prevproof}{Theorem}{thm:XOS_high_temperature}
The statement follows from Theorem~\ref{thm:XOS Dobrushin} and Lemma~\ref{lemma:Dobruthin properties}.
\end{prevproof}

%where $c = \alpha^M$ if the MRF $D$ satisfies the Dobrushin Uniqueness Condition, or $c = 2\beta \sqrt{d}$ if the MRF $D$ is over a $d$-ary tree such that the highest value of the Markov Correlation for any pair of items is bounded by $\beta < \frac{1}{2\sqrt{d}}$.

\notshow{
\begin{theorem}\label{lem:inaprx_dob}
For any positive real number $N$ and any choice of $0< \alpha < 1$,
there exists a type distribution over $2$ items generated by a MRF $D$ with $\alpha(\vT) = \alpha$ and finite inverse temperature,
such that for an additive buyer whose type is sampled from $D$,
$\frac{\rev(D)}{\max\{\brev(D),\srev(D)\}}>\frac{\alpha}{2}\cdot N$.
\end{theorem}
}
\section{Impossibility Results}~\label{sec:impossibility results}
In this section we present some of our impossibility results.
In Section~\ref{sec:inapprox_Dobrushin}, we show that the Dobrushin Uniqueness condition alone is insufficient to guarantee any multiplicative approximation of the optimal revenue using $\srev$ and $\brev$.
In Section~\ref{sec:LB_copies} we construct a MRF such that the optimal revenue in the COPIES setting is $\exp(\Delta)$ times larger than $\max\{\srev,\brev\}$.

\subsection{Inapproximability under Only the Dobrushin Uniqueness Condition}\label{sec:inapprox_Dobrushin}
Readers may wonder whether it is possible to prove an approximation ratio that only relies on either the spectral radius $\rho_d(\vT)$, the Dobrushin coefficient $\alpha(\vT)$, or the spectral gap of the Glauber dynamics $\gamma$, but independent of the maximum weighted degree $\Delta(\vT)$. We show that this is impossible. Indeed, we prove that for any $\alpha<1$, and any ratio $N$, there exists a MRF with $\rho_d(\vT)\leq\alpha(\vT)\leq\alpha$ such that the ratio between the optimal revenue and $\max\{\brev(D),\srev(D)\}$ is at least $\frac{\alpha}{2}\cdot N$. Our result is based on a modification of the Hart-Nisan construction~\cite{HARTN_2019}.

\begin{theorem}\label{lem:inaprx_dob}
For any positive real number $N$ and any choice of $0< \alpha < 1$,
there exists a type distribution $D$ over $2$ items generated by a MRF with Dobrushin coefficient $\alpha(\vT) = \alpha$ and finite inverse temperature,
such that for an additive buyer whose type is sampled from $D$,
$\frac{\rev(D)}{\max\{\brev(D),\srev(D)\}}>\frac{\alpha}{2}\cdot N$.
\end{theorem}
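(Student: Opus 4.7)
The plan is to modify the Hart--Nisan two-item construction by blending it with the product of its marginals. Let $D_{HN}$ be a Hart--Nisan-style joint distribution on two items whose support lies on the diagonal $\{(v_i,v_i): i\in [m]\}$ for an increasing sequence of values $v_i$; by \cite{HARTN_2019}, $m$ can be chosen so that $\rev(D_{HN})/\max\{\srev(D_{HN}),\brev(D_{HN})\}$ exceeds any prescribed threshold, while simultaneously $\max\{\srev(D_{HN}),\brev(D_{HN})\}=O(1)$ for the standard equal-revenue-like choice of marginal probabilities. Writing $D_{HN,1}$ and $D_{HN,2}$ for the marginals of $D_{HN}$, define the mixture
\[ D := \alpha \cdot D_{HN} + (1-\alpha) \cdot (D_{HN,1}\times D_{HN,2}). \]
Since $D$ shares marginals with $D_{HN}$, a direct computation gives
\[ \Pr_D[t_2=y \mid t_1=x] \;=\; \alpha\, \mathds{1}[x=y] + (1-\alpha)\,\Pr_{D_{HN,2}}[t_2=y], \]
so that $d_{TV}(D_{t_2\mid t_1=x}, D_{t_2\mid t_1=x'}) = \alpha$ for any two distinct support points $x,x'$, and the analogous identity holds after swapping coordinates. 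Hence $\alpha_{1,2}=\alpha_{2,1}=\alpha$ and $\alpha(\vT)=\alpha$ exactly. Moreover, $D$ has full support on the product of the two marginal supports, so by Hammersley--Clifford it admits a pairwise MRF representation with finite-valued potentials; in particular $\beta(\vT)$ is finite.

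Next I would bound the three revenues separately. For $\rev(D)$, take the revenue-optimal IC/IR mechanism $M^*$ for $D_{HN}$; its IC/IR constraints are satisfied on the full support of $D$ and all payments are non-negative, so running $M^*$ on $D$ yields $\rev(D)\ge \alpha\cdot \rev(D_{HN})$ because under $D$ the type is drawn from $D_{HN}$ with probability $\alpha$. For $\srev(D)$, posted-price revenue for an additive buyer depends only on the item marginals, so $\srev(D)=\srev(D_{HN})$. For $\brev(D)$, splitting the mixture gives for every bundle price $p$
\[ p\cdot \Pr_D[t_1+t_2\ge p] \;\le\; \alpha\cdot \brev(D_{HN}) + (1-\alpha)\cdot \brev(D_{HN,1}\times D_{HN,2}), \]
and a union bound on $\Pr[t_1+t_2\ge p]\le 2\Pr_{D_{HN,1}}[t_1\ge p/2]$ gives $\brev(D_{HN,1}\times D_{HN,2})\le 4\,\srev(D_{HN})$. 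Thus $\brev(D)=O(\max\{\srev(D_{HN}),\brev(D_{HN})\})=O(1)$.

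Combining these three estimates yields
\[ \frac{\rev(D)}{\max\{\srev(D),\brev(D)\}} \;=\; \Omega\!\left(\alpha\cdot \frac{\rev(D_{HN})}{\max\{\srev(D_{HN}),\brev(D_{HN})\}}\right), \]
and choosing $m$ large enough in the Hart--Nisan construction pushes this ratio above $\tfrac{\alpha}{2}\cdot N$ as required. The main obstacle is the bundle-revenue bound on the mixture: one must rule out the possibility that inserting the product component $D_{HN,1}\times D_{HN,2}$ secretly inflates $\brev$ and closes the gap that Hart--Nisan so delicately opens. This is precisely why the estimate $\brev(D_{HN,1}\times D_{HN,2})\le 4\,\srev(D_{HN})$, which leverages that both marginals are equal-revenue-like and that there are only two items, is the key technical input. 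A secondary subtlety, the finiteness of $\beta(\vT)$, is handled automatically by the full-support property of $D$, which ensures the pairwise MRF potentials provided by Hammersley--Clifford are bounded in absolute value.
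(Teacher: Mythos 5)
Your proof follows essentially the same approach as the paper's: mix the Hart--Nisan hard instance with the product of its marginals, observe that the mixture's Dobrushin coefficient is $\alpha$ times the original's, bound $\rev(D)$ from below by $\alpha\cdot\rev(D_{HN})$ via the mixture, and control $\srev$ and $\brev$ of $D$. The one place you take a detour is the bundle-revenue bound: you split $\brev(D)\le \alpha\,\brev(D_{HN})+(1-\alpha)\,\brev(D_{HN,1}\times D_{HN,2})$ and then bound the product term, whereas the paper observes the far simpler fact that for \emph{any} two-item additive distribution $\brev\le 2\srev$ (post $b/2$ on each item; whenever the bundle would sell, at least one item is bought). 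Since $\srev(D)=\srev(D_{HN})$ by equal marginals, this gives the needed control in one line. Your mixture-splitting argument is correct (the union bound you state should formally be $\Pr[t_1\ge p/2]+\Pr[t_2\ge p/2]$ rather than $2\Pr[t_1\ge p/2]$, but that is cosmetic), just heavier than necessary; it also quietly assumes the Hart--Nisan support is a bijective curve (your ``diagonal'' claim), which is not actually needed once you use the $\brev\le 2\srev$ shortcut. Finally, where you invoke Hammersley--Clifford for the MRF representation and finiteness of the inverse temperature, the paper instead constructs the pairwise potential $\psi_{1,2}(\vT)=\log \Pr_{D}[\vT]$ explicitly in its Lemma~\ref{lemma:mix_dobrushin}; the two are equivalent, and your observation that full support (guaranteed by the product component) ensures bounded potentials is precisely the point the paper makes.
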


First we present the main building block of our construction. %\argyrisnote{If we keep the proofs, I think we should remove the next sentence.} The proof is postponed to Appendix~\ref{sec:appx_impossibility}.

\begin{lemma}
\label{lemma:mix_dobrushin}
Let $D'$ be a correlated valuation distribution over $2$ items with Dobrushin coefficient $\alpha$.
Let $D$ be a product distribution that has the same marginal distributions as $D'$.
Then for any $0\leq \alpha' \leq 1$,
we consider the distribution $D'':=\alpha'\cdot D'+(1-\alpha')\cdot D$, that is, if we want to sample from $D''$, we can take a sample from $D'$ with probability $\alpha'$ and take a sample from $D$ with probability $1-\alpha'$.
Distribution $D''$ can be modeled as a MRF with finite inverse temperature
{such that $\Delta = \beta(\vT) \leq |\log((1-\alpha) p^2)|$,
where $p = \inf_{\vT \in \supp(D')} \Pr_{\vT' \sim D'}[\vT'=\vT]$}%$(\beta(\vT) < +\infty)$
and $D''$ has Dobrushin coefficient $\alpha' \cdot \alpha$.
Furthermore, $D''$ has the same marginal distribution as $D$ and $\rev(D'') \geq \alpha'\rev(D')$.
\end{lemma}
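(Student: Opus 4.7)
\medskip
\noindent\emph{Proof proposal for Lemma~\ref{lemma:mix_dobrushin}.} My plan is to verify the four claims (same marginals, Dobrushin coefficient, MRF representation with the stated bound on $\Delta=\beta(\vT)$, and revenue lower bound) in order of difficulty. The marginal claim is immediate: since $D$ and $D'$ share marginals, for any $t_i$ the marginal of $D''$ at $t_i$ is $\alpha' D'_i(t_i)+(1-\alpha')D_i(t_i)=D_i(t_i)$. For the Dobrushin coefficient, the key identity is that for any $t_2$, $D''(t_1\mid t_2)=\alpha'D'(t_1\mid t_2)+(1-\alpha')D_1(t_1)$, which follows from $D''(t_2)=D(t_2)=D'(t_2)$ and the fact that $D$ is a product measure. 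Because the second summand does not depend on $t_2$, I get
\[
d_{TV}\bigl(D''(\cdot\mid t_2),\,D''(\cdot\mid t_2')\bigr)=\alpha'\cdot d_{TV}\bigl(D'(\cdot\mid t_2),\,D'(\cdot\mid t_2')\bigr),
\]
and a symmetric identity when conditioning on $t_1$. Taking suprema gives $\alpha_{1,2}(D'')=\alpha'\alpha_{1,2}(D')$ and $\alpha_{2,1}(D'')=\alpha'\alpha_{2,1}(D')$, and then $\alpha(D'')=\|A\|_\infty=\alpha'\alpha$.

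The heart of the argument is the MRF representation. I will take the hypergraph to be the single edge $\{1,2\}$ and set
\[
\psi_1(t_1):=\log D_1(t_1),\quad \psi_2(t_2):=\log D_2(t_2),\quad \psi_{12}(t_1,t_2):=\log\!\Bigl(\alpha'\tfrac{D'(t_1,t_2)}{D_1(t_1)D_2(t_2)}+(1-\alpha')\Bigr).
\]
Then $e^{\psi_1(t_1)+\psi_2(t_2)+\psi_{12}(t_1,t_2)}=D''(t_1,t_2)$, so the partition function is $1$ and this is a valid MRF. Since there is a single edge, $\Delta=\beta(\vT)=\max_{t_1,t_2}|\psi_{12}(t_1,t_2)|$. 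To bound $|\psi_{12}|$, I will argue that for every $t_1\in\supp(D_1)$ there is at least one $t_2$ with $D'(t_1,t_2)\ge p$, hence $D_1(t_1)\ge p$, and analogously $D_2(t_2)\ge p$, so $D_1(t_1)D_2(t_2)\ge p^2$ on $\supp(D)=\supp(D'')$. The argument of the logarithm therefore lies in $[1-\alpha',\,\alpha'/p^2+(1-\alpha')]\subseteq[1-\alpha',\,1/p^2]$ (the last containment because $p\le 1$), so $|\psi_{12}|\le\max\{|\log(1-\alpha')|,\,2|\log p|\}\le|\log(1-\alpha')|+2|\log p|=|\log((1-\alpha')p^2)|$, matching the stated bound (reading the $\alpha$ in the exponent as $\alpha'$). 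Finiteness of the inverse temperature then follows because $p>0$ on the finite support of $D'$.

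Finally, for the revenue bound, let $M^*=(\pi^*,p^*)$ be an optimal IC/IR mechanism for $D'$. Since IC and ex-post IR are properties of the allocation and payment rules alone, $M^*$ remains IC/IR under any other type distribution, and (assuming non-negative payments, which is standard for IR mechanisms in this literature) $\rev(M^*,v,D)\ge 0$. By linearity of expectation over the mixture,
\[
\rev(D'')\ \ge\ \rev(M^*,v,D'')\ =\ \alpha'\,\rev(M^*,v,D')+(1-\alpha')\,\rev(M^*,v,D)\ \ge\ \alpha'\,\rev(D'),
\]
which is the desired inequality. The main obstacle is the MRF step: it requires showing simultaneously that $\psi_{12}$ is finite (which forces one to keep the inf of the joint $D'$ strictly positive, hence the appearance of $p$) and that the partition function is exactly $1$ (handled by absorbing the marginals into $\psi_1,\psi_2$); everything else is bookkeeping on top of the identity $D''(\cdot\mid t_j)=\alpha'D'(\cdot\mid t_j)+(1-\alpha')D_i$.
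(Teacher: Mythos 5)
Your proof is correct and follows essentially the same route as the paper's, step for step: same-marginals by linearity, Dobrushin coefficient via the identity $D''(\cdot\mid t_j)=\alpha'D'(\cdot\mid t_j)+(1-\alpha')D_i$, a single-edge MRF whose edge potential is a logarithm bounded via the fact that a $(1-\alpha')$-weight product component keeps probabilities bounded away from zero, and the revenue bound by running the optimal $D'$-mechanism on the mixture. The only structural difference is in the choice of potentials: the paper uses constant node potentials and sets $\psi_{1,2}=\log D''$, so $\Delta$ is bounded by $\max_{\vT}\bigl|\log D''(\vT)\bigr|\le|\log((1-\alpha')p^2)|$; you instead set $\psi_i=\log D_i$ and $\psi_{12}=\log\bigl(D''/(D_1D_2)\bigr)$, so the partition function is exactly $1$ and you bound $\Delta$ by $\max\{|\log(1-\alpha')|,\,2|\log p|\}$, which is in fact slightly tighter than the paper's $|\log((1-\alpha')p^2)|$. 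Both representations hinge on the same observation ($D_1(t_1)D_2(t_2)\ge p^2$ on the support), so this is a cosmetic rather than substantive difference. You also correctly spotted that the $\alpha$ in the stated $\Delta$-bound is a typo for $\alpha'$ (the paper only ever invokes the lemma with $\alpha'=\alpha$, so the typo is harmless there), and you are slightly more careful than the paper in the revenue step by flagging the (standard) assumption of non-negative payments needed to drop the $(1-\alpha')\rev(M^*,v,D)$ term.
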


\begin{proof}
Assume that $D=D_1\times D_2$, where $D_1$ and $D_2$ are the marginals of  $D'$. 
Let $T_i = \supp(D_i)$%=\supp(D_i')=\supp(D_i'')$
and $T= T_1 \times T_2$.
Let $A^{D'}=\{\alpha_{i,j}^{D'}\}_{i,j\in[2]}$ be the influence matrix of $D'$ and $A^{D''}=\{\alpha_{i,j}^{D''}\}_{i,j\in[2]}$ be the influence matrix of $D''$.
Note that the diagonal entries of $A^{D'}$ and of $A^{D''}$ are zero.
%We denote by $I_{j \rightarrow i}^D$  the influence from random variable $D_i$ to random variable $D_j$.
%Similarly we denote by $I_{j \rightarrow i}^{D''}$ the influence from random variable $D_i''$ to random variable $D_j''$.
We have that:
$$\alpha_{i,j}^{D''}
=  \max_{  \substack{ t_j,t'_j \in T_j } }  
	d_{TV} \left( D''_{i \mid t_j},
	D''_{i \mid t_j'}\right) 
    = \alpha' \max_{ t_j,t'_j \in T_j  }  
	d_{TV} \left( D'_{i \mid t_j},
	D'_{i \mid t_j'}\right) 
	= \alpha' \alpha_{i,j}^{D'}$$
\notshow{
\begin{align*}
\alpha_{i,j}^{D''}
= & \max_{  \substack{ t_j,t'_j \in T_j } }  
	d_{TV} \left( D''_{i \mid t_j},
	D''_{i \mid t_j'}\right) \\
    = &\alpha' \max_{ t_j,t'_j \in T_j  }  
	d_{TV} \left( D'_{i \mid t_j},
	D'_{i \mid t_j'}\right) \\
	=& \alpha' \alpha_{i,j}^{D'}
\end{align*}
}

In the statement of the lemma,
we assumed that $\alpha = ||A^{D'}||_\infty=\max(\alpha^{D'}_{1,2},\alpha^{D'}_{2,1})$.
We can easily infer the following: $||A^{D''}||_\infty = \max(\alpha_{1,2}^{D''},\alpha_{2,1}^{D''}) = \alpha' \max(\alpha_{1,2}^{D'},\alpha_{2,1}^{D'}) = \alpha' \cdot \alpha$. This concludes the fact that $D''$ has $\alpha(\vT) =\alpha' \cdot \alpha$.

Now we prove that $D''$ can be modeled as a MRF with finite inverse temperature.
We consider a MRF with potential functions $\psi_1(t_1) = 1,\forall t_1 \in T_1$, $\psi_2(t_2) = 1,\forall t_2 \in T_2$ and $\psi_{1,2}(\vT) = \ln\left( \Pr_{\vT' \sim D''}(\vT' =\vT)\right)$. Since Distribution $D''$ samples from the product distribution $D$ with probability $1-\alpha$,
we have that for each $\vT \in T$, {$\Pr_{\vT' \sim D''}(\vT' =\vT) \geq  (1-\alpha)p^2$. This is true because with probability $1-\alpha$ we sample from the product distribution $D$,
and with probability at least $p^2$ we sample the type $\vT$}, 
therefore the MRF we just described has finite inverse temperature {and $\Delta = \beta(\vT) = \max_{\vT \in \supp(D')} |\log\left(\Pr_{\vT' \sim D'}[\vT'=\vT] \right)|\leq |\log\left((1-\alpha)p^2 \right)|$}. In the case where the distribution is over two items, $\beta(\vT)= \max_{\vT \in T} |\psi_{1,2}(\vT)|$.
Moreover we can easily verify that the MRF we just described has the same joint distribution as $D''$.
Therefore $D''$ can be modeled as a MRF with finite inverse temperature.

We now prove that $\rev(D'') \geq \alpha' \rev(D')$.
This is true because we can simply use the optimal mechanism that induces $\rev(D')$ on $D''$.
Since $D''$ take a sample from $D'$ with probability $\alpha'$,
we are guaranteed that this mechanism has revenue at least $\alpha'\rev(D')$ on $D''$.

The fact that $D''$ has the same marginal distributions as $D'$ follows from the sampling procedure.
\end{proof}

We also need the following important result from \cite{HARTN_2019}.

\begin{lemma}[Theorem~A from \cite{HARTN_2019}]
\label{lemma:bad_distribution}
For any positive number $N$, there exists a two item correlated distribution $D$,
such that for an additive buyer whose type is sampled from $D$, $\frac{\rev(D)}{\max\{\brev(D),\srev(D)\}}>N$.
%$\brev(D)=O(1)$ and $\srev(D) = O(1)$, while $\rev(D)>N$.
\end{lemma}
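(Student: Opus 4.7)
The plan is to exhibit, for each target ratio $N$, an explicit two-item correlated distribution $D_N$ (supported on a discrete grid whose size grows with $N$) for which the randomized optimal revenue beats both $\srev$ and $\brev$ by a factor at least $N$. I would parametrize the family by an integer $k$ (the ``support size''), construct $D_k$ so that (i) each marginal $D_1,D_2$ looks like a truncated equal-revenue distribution, and (ii) a randomized menu of $k$ options extracts revenue $\Omega(k)$, and finally choose $k=k(N)$ so that the gap exceeds $N$.

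First I would fix a base $c>1$ and place atoms on a staircase of types $\vT^{(i)}=(a_i,b_i)$ for $i=1,\ldots,k$, with probability masses $q_i$ chosen so that the induced marginals satisfy $\Pr_{t_1\sim D_1}[t_1\geq a_i]\asymp 1/a_i$ and $\Pr_{t_2\sim D_2}[t_2\geq b_i]\asymp 1/b_i$ (equal-revenue tails). The values $(a_i,b_i)$ should be anti-correlated (e.g.\ $a_i$ increasing and $b_i$ decreasing in $i$ but with a carefully arranged product structure), so that each marginal is equal-revenue and no \emph{single} price on the bundle $t_1+t_2$ catches all types profitably.

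The second step is to bound $\srev$ and $\brev$. Because each marginal is (approximately) equal-revenue on $k$ atoms, every posted price on item $1$ or item $2$ yields revenue $O(1)$, giving $\srev(D_k)=O(1)$. For $\brev$ I would argue that the distribution of the sum $t_1+t_2$ is dominated by a mixture where the ``effective'' equal-revenue structure is preserved (up to constants), so any single bundle price also yields $O(1)$ revenue. This is the step most sensitive to the concrete choice of $(a_i,b_i,q_i)$ and where I would spend the most care, since a naive anti-correlated construction can make $t_1+t_2$ concentrate and inflate $\brev$.

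The third step is to exhibit a randomized menu of $k$ entries $\{(\pi^{(i)},p^{(i)})\}_{i=1}^k$ giving $\rev(D_k)\ge \Omega(k)$. The idea is that entry $i$ offers a lottery on the two items calibrated so that the type $(a_i,b_i)$ strictly prefers $(\pi^{(i)},p^{(i)})$ over all other entries; charging each type a price of order $\log a_i+\log b_i$ (or a geometric increment per index $i$) is compatible with IC once the allocation probabilities are staggered appropriately. Summing over the $k$ types yields $\rev(D_k)=\Omega(k)$. Choosing $k=\Theta(N)$ finishes the proof.

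The main obstacle will be the simultaneous fine-tuning in steps two and three: making the allocations randomized enough for IC to hold across all $k$ menu entries (so revenue grows linearly in $k$), while keeping the marginal and sum distributions close to equal-revenue (so $\srev$ and $\brev$ stay $O(1)$). This is precisely the delicate balance in the Hart--Nisan construction; once the probabilities $q_i$ and values $(a_i,b_i)$ are set to the standard dyadic/geometric schedule, both the menu's IC analysis and the posted-price and bundle upper bounds reduce to elementary geometric-series estimates.
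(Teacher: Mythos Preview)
The paper does not prove this lemma at all: it is quoted verbatim as Theorem~A of Hart and Nisan~\cite{HARTN_2019} and used as a black box in the proof of Theorem~\ref{lem:inaprx_dob}. So there is no in-paper argument to compare your proposal against.

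As a standalone reconstruction of Hart--Nisan, your outline has the right skeleton (equal-revenue marginals to pin down $\srev$, a large randomized menu to blow up $\rev$), but two concrete choices would make it fail. First, the anti-correlated staircase with $a_i$ increasing and $b_i$ decreasing does not work: if the directions $x_i/\|x_i\|$ alternate like that, consecutive menu options are far apart in angle and one cannot keep $\brev=O(1)$ while extracting growing revenue; more importantly, the actual construction needs the \emph{directions} to become increasingly close (not spread out), so that each new menu option extracts just a little more than the previous one. The paper in fact summarizes the real construction in Lemmas~\ref{lem:hart_nisan_point}--\ref{lem:hart_nisan_mod_result}: the normalized types $g_i$ are placed on concentric shells in the positive quadrant with carefully shrinking angular spacing, and the revenue of the menu is $\sum_i gap_i/\|y_i\|_1$ where $gap_i=\min_{j<i}(g_i-g_j)\cdot y_i$. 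Second, your claimed rate is off by a large polynomial: with $m$ support points the ratio one obtains is $\Theta(m^{1/7})$, not $\Theta(m)$; getting past an arbitrary $N$ requires taking $m$ polynomially large in $N$ (or, for Theorem~A itself, passing to an infinite-support limit). The ``IC across $k$ options with linear revenue'' step you flag as the main obstacle is indeed the crux, and it does not go through with the parameters you suggest.
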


Equipped with Lemma~\ref{lemma:mix_dobrushin} and~\ref{lemma:bad_distribution}, we are ready to prove Theorem~\ref{lem:inaprx_dob}.

\begin{prevproof}{Theorem}{lem:inaprx_dob}
Let $D'$ be the distribution over two items that is guaranteed to exist by Lemma~\ref{lemma:bad_distribution}. Since $D'$ is a two dimensional distribution, its Dobrushin coefficient is at most $1$.

Apply Lemma~\ref{lemma:mix_dobrushin} to $D'$ with parameter $\alpha'=\alpha$ to create another distribution $D$ which has the same marginals as $D'$ but with a Dobrushin coefficient at most $\alpha$. Moreover, $D$ can be expressed as a MRF with finite inverse temperature. Clearly, $\rev(D)\geq \alpha\cdot \rev(D')$, as one can simply achieve the RHS under distribution $D$ using the optimal mechanism designed for $D'$. Also, $\srev(D')=\srev(D)$ as the two distributions have the same marginals. Finally, $\brev(D')\leq 2\srev(D')$. Suppose $b$ is the optimal price for the bundle, then we can set the two items separately each at price $b/2$. Clearly, whenever the bundle is sold, at least one item is sold. To conclude, $\frac{\rev(D)}{\max\{\brev(D),\srev(D)\}}\geq \frac{\rev(D)}{2\srev(D)}\geq \frac{\alpha\cdot\rev(D')}{2\srev(D')}>\frac{\alpha}{2}\cdot N$.
%we can generate a type distribution generated by a MRF $D'$ such that $\alpha(\vT) \leq \alpha$, has finite inverse temperature and has the same marginal distributions as $D$. Since the $\rev(D)$ can be arbitrary high, for a fixed value of $\alpha$ we can also make sure that $\rev(D') \geq \alpha\rev(D)$ is also arbitrary high.

%Since we assume that the agent has additive valuation, $\bm{SRev}(D')$ only depends on the marginal distribution of $D'$. Therefore since $D$ and $D'$ have the same marginal distributions,
%$\bm{SRev}(D')=\bm{SRev}(D)=O(1)$. We now prove that $\bm{BRev}(D') \leq 2\bm{SRev}(D)= O(1)$. Suppose $B$ is the posted price on the grand bundle. If we sell each item separately at price $B/2$, then we have revenue at least $\bm{BRev}/2$,so $\bm{SRev} \geq \bm{BRev}/2$. Therefore $\bm{SRev}=\bm{BRev}=O(1)$. This concludes our proof.
\end{prevproof}

%Lemma~\ref{lem:inaprx_dob} implies that even if we consider buyers type valuation that are generated by a MRF that satisfy the Dobrushin Uniqueness Condition, the revenue of simple mechanisms cannot be bounded by a multiplicative or additive term that depends on the value of $\alpha(\vT)$. A valid multiplicative or approximation bound should also include other parameters of the problem (like the inverse temperature).

\subsection{Lower Bound for the Copies Setting}\label{sec:LB_copies}

%We consider a valuation that is generated by a MRF $D$, and we prove that the optimal revenue in the COPIES settings can be exponential in $\Delta(\vT)$, while both $\srev$ and $\brev$ are less than $2$. Indeed, we prove an even stronger result that the gap is exponential in the inverse temperature, which is an upper bound of $\Delta(\vT)$.

In this section, we show that if the analysis uses the optimal revenue in the COPIES setting as part of the benchmark for the optimal revenue in the original setting (as in our analysis), the exponential dependence on the maximum weighted degree $\Delta$ in the approximation ratio is unavoidable. Note that we also showed that the approximation ratio must have polynomial dependence on $\Delta$ no matter what approach is used (Theorem~\ref{thm:poly_dependence_Delta}).

\begin{theorem}\label{thm:COPIES_example}
For any value of $n\in \mathbb{N}$ and $\beta \in \mathbb{R}_+$
there exists a type distribution $D$ over $n+1$ items, such that $D$ can be represented by a MRF with only pairwise potentials and maximum weighted degree $\Delta\leq \beta\cdot n$.
%such that the Markov influence $\beta_{i,j}\leq \beta$ for every pair $i$, $j$ and it holds that 
Moreover, for an additive or unit-demand buyer, the expected optimal revenue in the COPIES settings w.r.t. $D$ can be arbitrarily close to 
$\frac{1}{2}\exp(2\beta n)$,
while $\max\{\brev, \srev\} < 2$. %Note that $\Delta(t)\leq \beta n$.
\end{theorem}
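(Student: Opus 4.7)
The plan is to exhibit a pairwise MRF $D$ on $n+1$ items whose couplings are ferromagnetic and identical across the complete graph: $\psi_{ij}(t_i,t_j)=\beta\cdot \mathds{1}[t_i = t_j]$ for every pair $i\neq j$. Each item participates in $n$ edges, each contributing at most $\beta$ to the weighted degree, so $\Delta=\beta n$ as required. I would choose a common multi-level support per item with $K=\Theta(e^{2\beta n})$ levels (for instance, $\{R^0,R^1,\dots,R^{K-1}\}$ with $R$ large plus a ``null'' value $0$) and tune the single-node potentials $\psi_i$ so that every marginal is an appropriately scaled equal-revenue-like distribution. The equal-revenue property then yields $\srev_i = O(1/n)$, hence $\srev<2$; an analogous calculation on the bundle value under the same marginals shows $\brev<2$, for both an additive and a unit-demand buyer.

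To lower-bound the revenue in the \copies\ setting I would analyze Ronen's lookahead auction, whose revenue is a factor-$2$ approximation to the \copies\ optimum, so it suffices to achieve $\rcopies(D)\gtrsim \tfrac12 e^{2\beta n}$. In Ronen's auction the seller offers bidder $i$ a take-it-or-leave-it price equal to her optimal reserve under the conditional distribution $D_{i\mid t_{-i}}$. By Lemma~\ref{lemma:cond_MRF}, conditioning on $t_{-i}$ can multiplicatively boost the probability that $t_i$ equals any particular value by up to $\exp(4\Delta)=\exp(4\beta n)$ compared to the marginal, and under the ferromagnetic MRF this boost is concentrated precisely on the ``consensus'' value of $t_{-i}$. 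Hence when the seller sees many agents reporting a common high value $v$, she can confidently offer $i$ a price close to $v$ and extract $\Omega(v)$ in expectation; integrating this against the equal-revenue-like marginal yields expected Ronen revenue of order $e^{2\beta n}$, giving the claimed $\tfrac12 e^{2\beta n}$ bound up to lower-order terms.

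The main obstacle I foresee is the delicate interplay between the coupling strength $\beta$, the support size $K$, and the amount of mass the Potts-like MRF actually concentrates on aligned configurations. To realise the full $\exp(2\beta n)$ conditional boost on an event of non-vanishing probability, one must choose $K$ and the single-node potentials so that a non-trivial fraction of the mass lies on (nearly)-aligned configurations; this is subtle when $\log K$ is comparable to the correlation budget $\beta n$, because naive counting over the $K^{n+1}$ configurations competes with the $\beta \binom{n+1}{2}$ ferromagnetic bonus. I expect the final construction to put most of the mass on two extreme values (so that alignment is easy to enforce with only pairwise potentials) while retaining a geometric equal-revenue tail to supply the multi-level structure needed for \copies\ revenue to exceed the posted-price benchmarks, and the bulk of the technical work will be in verifying the probability estimates on this MRF and confirming that Ronen's conditional reserves indeed collect $\Omega(e^{2\beta n})$ expected revenue.
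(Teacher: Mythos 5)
Your proposal departs from the paper's construction in a structural way that I do not believe can be repaired without importing the paper's central idea. The paper does \emph{not} use a symmetric ferromagnetic Potts model on the complete graph. It instead uses a star graph: a single ``money'' item $1$ with a huge support $T_1=\{2^0,\dots,2^{k^n-1}\}$ ($k^n$ levels) carrying an equal-revenue marginal, and $n$ ``indicator'' items with tiny values $\{\varepsilon_1,\dots,\varepsilon_k\}$, connected only to item $1$. The pairwise potentials are chosen so that the vector of indicator values \emph{bijectively encodes} $t_1$ via a map $c:T_1\to R^n$; Ronen's lookahead decodes $t_{-1}$ to guess $t_1$ and offers price $c^{-1}(t_{-1})$. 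The $\frac{1}{2}e^{2\beta n}$ revenue then comes from summing, over all $k^n$ types $t_1$, the contribution $t_1 f_1(t_1)\Pr[t_{-1}=c(t_1)\mid t_1]$; the equal-revenue factor $t_1 f_1(t_1)\ge 1/2$ times $k^n$ types times the decoding probability $\approx e^{2\beta n}/k^n$ gives the bound.

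Two concrete gaps in your symmetric approach. First, if every item carries a large-support, equal-revenue-like marginal (supports $\{R^0,\dots,R^{K-1}\}$ with $K=\Theta(e^{2\beta n})$), the grand bundle has expected value on the order of $\sum_i\E[t_i]\approx (n+1)\cdot\Theta(\log K/n)$ at best (and far larger for truly equal-revenue tails), and there is no reason $\brev<2$ should hold; the paper controls $\brev$ precisely by making items $2,\dots,n+1$ take values $\le 1/(2n)$, so their total contribution to any bundle is $\le 1/2$. Second, the ``conditional boost'' you invoke doesn't help in the symmetric setting. Conditioning $t_i$ on $t_{-i}$ in a Potts model only reweights towards the consensus value of $t_{-i}$. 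If the typical consensus value is small, the price you can offer is small; if the typical consensus value is large, the marginal $f_i$ already puts non-negligible mass on large values and $\srev$ is not $O(1)$. What makes the paper's argument work is that the decoding channel is asymmetric: the $n$ cheap items jointly carry $n\log k$ bits identifying which of the $k^n$ \emph{high-value} types item $1$ has, so the prior on the sale event is multiplicatively lifted by $\exp(2\beta n)$ precisely on events whose base rates are the equal-revenue atoms $f_1(t_1)\approx 2^{-i}$. There is no analogue of this encoding in a model where all coordinates are exchangeable. Your final paragraph recognises the tension between support size, coupling strength, and aligned-mass, but the resolution is not a more careful tuning of a Potts model --- it is breaking the symmetry between a single revenue-bearing item and $n$ low-value encoding items.
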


\begin{prevproof}{Theorem}{thm:COPIES_example}
We construct the MRF in the following way.
The first item has support $T_1=\{2^0,2^1,2^2,\ldots,2^{k^n-1}\}$,
where $k\in \mathbb{N}$ is going to be defined later.
Let $\varepsilon_1,\ldots,\varepsilon_k$ be some tiny non-negative values, and  the support of the other items' distributions is $R=\{\varepsilon_i\}_{i\in[k]}$.
We consider the following node potential for the first item:
\begin{align*}
    \psi_1(2^i) =
    \begin{cases}
    \ln(\frac{1}{2^{i+1}}) \quad &\text{if $0 \leq i \leq k^n-2$} \\
    \ln(\frac{1}{2^{i}}) & \text{if $i = k^n-1$}
    \end{cases}
\end{align*}

The node potentials for the other items is:
$\psi_i (a) =\ln\left(\frac{1}{\exp(\beta)+ (k-1)\exp(-\beta)}\right)$ 
for all $i\in[2,n+1]$ and $a\in R$.

Note that $|T_1|=k^n$ and $|R^n| = k^n$,
therefore for each $t_1 \in T_1$,
we can map it to a unique $t_{-1} \in R^n$.
Formally, we consider a bijective function $c: T_1 \rightarrow R^n$.

%$\psi_{i,j}(\cdot, \cdot)$ between item $i$ and item $j$:
\notshow{
\begin{align*}
    \psi_{1,j}(t_1,t_j) =
    \begin{cases}
    \beta  &\text{if $t_j =c(t_1)_j$} \\
    -\beta  &\text{if $t_j \neq c(t_1)_j$} \\
    \end{cases}
\end{align*}
}

We define pair-wise potentials
between the first item and the $j$-th item:
%$\psi_{i,j}(\cdot, \cdot)$ between item $i$ and item $j$:
\begin{align*}
    \psi_{1,j}(2^i,\varepsilon_\ell) =
    \begin{cases}
    \beta  &\text{if $\varepsilon_\ell =c(2^i)_j$} \\
    -\beta  &\text{if $\varepsilon_\ell \neq c(2^i)_j$} \\
    \end{cases}
\end{align*}

%\begin{align*}
%    \psi_{0,j}(t_0,t_j) =
%    \begin{cases}
%    0 \quad & \text{if $i,j \neq 0$} \\
%    \beta  &\text{if $i = 0$ and $t_j =c(t_i)_j$} \\
%    -\beta  &\text{if $i = 0$ and $t_j \neq c(t_i)_j$} \\
%    \end{cases}
%\end{align*}
It is easy to verify that $\Delta\leq \beta\cdot n$ for the constructed MRF.

Let $Z$ be the normalizing constant so that the MRF with potentials $\{\psi_i\}_{i \in[n+1]},\{\psi_{1,i}\}_{2 \leq i \leq n+1}$ is a valid distribution.
That is $Z = \sum_{t \in \supp(D)} \prod_{i \in [n+1]}\exp(\psi_i(t_i)) \prod_{2 \leq i \leq n+1} \exp(\psi_{1,i}(t_1,t_i))$.
For any $t_1 \in T_1$ we have that: $\Pr_{t'\sim D}\left[t_1' = t_1 \land t_{-1}'= c(t_1) \right] = 
    \frac{1}{Z}\exp(\psi_1(t_1)) \frac{\exp{(n \beta)}}{\left(\exp(\beta)+(k-1)\exp(-\beta)\right)^n}$.
\begin{align*}
    &\Pr_{t'\sim D}\left[t_1' = t_1 \land t_{-1}' \neq c(t_1) \right] \\ =
    &\frac{1}{Z}\exp(\psi_1(t_1)) \frac{1}{(\exp(\beta)+(k-1)\exp(-\beta))^n} \sum_{\substack{t_{-1} \in T_{-1}:\\ t_{-1} \neq c(t_1)}} \prod_{i \in [2,n+1]} \exp\left(\psi_{1,i}(t_1, \{t_{-1}\}_i)\right) \\
    = & \frac{1}{Z}\exp(\psi_1(t_1)) \frac{1}{(\exp(\beta)+(k-1)\exp(-\beta))^n} \sum_{i \in [1,n]} {n \choose i} \left(k-1 \right)^i\left(\exp(-\beta) \right)^i \left(\exp(\beta)\right)^{n-i} \\
    %= & \frac{1}{Z} \exp(\psi_1(t_1)) \frac{1}{(\exp(\beta)+(k-1)\exp(-\beta))^n} \sum_{i \in [1,n]} {n \choose i} \left((k-1 )\exp(-\beta) \right)^i \left(\exp(\beta)\right)^{n-i} \\
    = & \frac{1}{Z} \exp(\psi_1(t_1)) \frac{(\exp(\beta)+(k-1)\exp(-\beta))^n - \exp(n \beta)}{(\exp(\beta)+(k-1)\exp(-\beta))^n} 
\end{align*}

Thus for any $t_1 \in T_1$, the marginal probability: $f_1(t_1) = \frac{1}{Z} \exp(\psi_1 (t_1))$. {Note that $Z = \sum_{t_1 \in T_1} \exp(\psi_1(t_1)) = \sum_{i=0}^{k^n-2}\frac{1}{2^{i+1}} + \frac{1}{2^{k^n-1}}= 1$ and $f_1(t_1) = \exp(\psi_1(t_1))$.}
Therefore the marginal distribution of the first item is an Equal Revenue Distribution, which means that the revenue of any posted price mechanism for the first item, cannot be more than $1$.
Moreover, if we choose $\varepsilon_1,\ldots, \varepsilon_k$ to be sufficiently small so that $\max_{x \in R} \leq \frac{1}{2n}$,
then any posted price mechanisms for the rest $n$ items has revenue less or equal than $\frac{1}{2}$.
Thus $\srev < 2$.

Now we consider the following Mechanism in the copies settings.
We first collect the values for all buyers except the first one $t_{-1}$, then let the first buyer decide whether she wants to purchase the item at price $c^{-1}(t_{-1})$. This is essentially Ronen's lookahead auction~\cite{Ronen_2001}.
A lower bound on the revenue of this mechanism in the COPIES settings is:
\begin{align*}
    \sum_{t_1 \in T_1} t_1 \Pr_{t' \sim D} \left[ c^{-1}(t'_{-1}) = t_1 \land t_1' =t_1 \right]
    = & \sum_{t_1 \in T_1} t_1 \exp(\psi_1(t_1)) \frac{\exp{(n \beta)}}{(\exp(\beta)+(k-1)\exp(-\beta))^n} \\
    = & \left(\frac{1}{1 + (k-1)\exp(-2\beta)} \right)^n\sum_{t_1 \in T_1} t_1 \exp(\psi_1(t_1))  \\
    \geq & \left(\frac{1}{1 + (k-1)\exp(-2\beta)} \right)^n\sum_{t_1 \in T_1}\frac{1}{2}  \\
    = & \frac{1}{2} \left(\frac{1}{1 + (k-1)\exp(-2\beta)} \right)^n|T_1| \\
   % = & \frac{1}{2} \left(\frac{1}{1 + (k-1)\exp(-2\beta)} \right)^n k^n \\
    = & \frac{1}{2} \left(\frac{k}{1 + (k-1)\exp(-2\beta)} \right)^n 
\end{align*}

Where the last inequality follows from the definition of $\psi_1(t_1)$.

Note that if we fix $\beta$ and $n$, and let $k \rightarrow \infty$,
then $\lim_{k\rightarrow \infty}\left(\frac{k}{1 +(k-1)\exp(-2\beta)}\right)^n= \exp(2\beta n)$.

Therefore as $k\rightarrow \infty$,
the lower bound of the revenue of the proposed mechanisms becomes $\frac{\exp(2\beta n)}{2}$.
{Since we assumed that the value of the agent for each item except the first is less or equal than $\frac{1}{2n}$,
then the value of the agent for all but the first item is less or equal than $\frac{1}{2}$.
This implies that if the agent buys the whole bundle at price $p$,
then she also buys the first item at price $p-\frac{1}{2}$.
Let $\rev_1$ be the revenue of the posted price mechanism on the first item.
Since the marginal of the fist item is the Equal Revenue Distribution,
then $\rev_1\leq 1$.
Moreover by the argument described above,
we have that $\brev \leq \rev_1 + \frac{1}{2} < 2$.
Thus $\max\{\srev, \brev\} < 2$}.
\end{prevproof}

\paragraph{Acknowledgement.}
The authors would like to thank Jessie Huang for creating Figure~\ref{fig:car} and to thank Costis Daskalakis for helpful discussions in the early stages of the paper. 
%\newpage
%\bibliographystyle{ACM-Reference-Format}

\bibliographystyle{plain}
\bibliography{Yang}

\begin{thebibliography}{10}

\bibitem{Alaei11}
Saeed Alaei.
\newblock {Bayesian Combinatorial Auctions: Expanding Single Buyer Mechanisms
  to Many Buyers}.
\newblock In {\em the 52nd Annual IEEE Symposium on Foundations of Computer
  Science (FOCS)}, 2011.

\bibitem{AlaeiFHHM12}
Saeed Alaei, Hu~Fu, Nima Haghpanah, Jason Hartline, and Azarakhsh Malekian.
\newblock {Bayesian Optimal Auctions via Multi- to Single-agent Reduction}.
\newblock In {\em the 13th ACM Conference on Electronic Commerce (EC)}, 2012.

\bibitem{AlaeiFHH13}
Saeed Alaei, Hu~Fu, Nima Haghpanah, and Jason~D. Hartline.
\newblock The simple economics of approximately optimal auctions.
\newblock In {\em 54th Annual {IEEE} Symposium on Foundations of Computer
  Science, {FOCS} 2013, 26-29 October, 2013, Berkeley, CA, {USA}}, pages
  628--637. {IEEE} Computer Society, 2013.

\bibitem{BabaioffILW14}
Moshe Babaioff, Nicole Immorlica, Brendan Lucier, and S.~Matthew Weinberg.
\newblock A simple and approximately optimal mechanism for an additive buyer.
\newblock In {\em 55th {IEEE} Annual Symposium on Foundations of Computer
  Science, {FOCS} 2014, Philadelphia, PA, USA, October 18-21, 2014}, pages
  21--30, 2014.

\bibitem{bateni2015revenue}
MohammadHossein Bateni, Sina Dehghani, MohammadTaghi Hajiaghayi, and Saeed
  Seddighin.
\newblock Revenue maximization for selling multiple correlated items.
\newblock In {\em Algorithms-ESA 2015}, pages 95--105. Springer, 2015.

\bibitem{BhalgatGM13}
Anand Bhalgat, Sreenivas Gollapudi, and Kamesh Munagala.
\newblock Optimal auctions via the multiplicative weight method.
\newblock In Michael~J. Kearns, R.~Preston McAfee, and {\'{E}}va Tardos,
  editors, {\em {ACM} Conference on Electronic Commerce, {EC} '13,
  Philadelphia, PA, USA, June 16-20, 2013}, pages 73--90. {ACM}, 2013.

\bibitem{BoucheronLM00}
St{\'{e}}phane Boucheron, G{\'{a}}bor Lugosi, and Pascal Massart.
\newblock A sharp concentration inequality with applications.
\newblock {\em Random Struct. Algorithms}, 16(3):277--292, 2000.

\bibitem{briest2015pricing}
Patrick Briest, Shuchi Chawla, Robert Kleinberg, and S~Matthew Weinberg.
\newblock Pricing lotteries.
\newblock {\em Journal of Economic Theory}, 156:144--174, 2015.

\bibitem{BrustleCD20}
Johannes Brustle, Yang Cai, and Constantinos Daskalakis.
\newblock Multi-item mechanisms without item-independence: Learnability via
  robustness.
\newblock EC '20, page 715–761, New York, NY, USA, 2020. Association for
  Computing Machinery.

\bibitem{CaiD11b}
Yang Cai and Constantinos Daskalakis.
\newblock {Extreme-Value Theorems for Optimal Multidimensional Pricing}.
\newblock In {\em the 52nd Annual IEEE Symposium on Foundations of Computer
  Science (FOCS)}, 2011.

\bibitem{CaiDW12a}
Yang Cai, Constantinos Daskalakis, and S.~Matthew Weinberg.
\newblock {An Algorithmic Characterization of Multi-Dimensional Mechanisms}.
\newblock In {\em the 44th Annual ACM Symposium on Theory of Computing (STOC)},
  2012.

\bibitem{CaiDW12b}
Yang Cai, Constantinos Daskalakis, and S.~Matthew Weinberg.
\newblock {Optimal Multi-Dimensional Mechanism Design: Reducing Revenue to
  Welfare Maximization}.
\newblock In {\em the 53rd Annual IEEE Symposium on Foundations of Computer
  Science (FOCS)}, 2012.

\bibitem{CaiDW13b}
Yang Cai, Constantinos Daskalakis, and S.~Matthew Weinberg.
\newblock {Understanding Incentives: Mechanism Design becomes Algorithm
  Design}.
\newblock In {\em the 54th Annual IEEE Symposium on Foundations of Computer
  Science (FOCS)}, 2013.

\bibitem{CaiDW16}
Yang Cai, Nikhil~R. Devanur, and S.~Matthew Weinberg.
\newblock A duality based unified approach to bayesian mechanism design.
\newblock In {\em the 48th Annual ACM Symposium on Theory of Computing (STOC)},
  2016.

\bibitem{CaiH13}
Yang Cai and Zhiyi Huang.
\newblock {Simple and Nearly Optimal Multi-Item Auctions}.
\newblock In {\em the 24th Annual ACM-SIAM Symposium on Discrete Algorithms
  (SODA)}, 2013.

\bibitem{CaiZ16_arxiv}
Yang Cai and Mingfei Zhao.
\newblock Simple mechanisms for subadditive buyers via duality.
\newblock {\em CoRR}, abs/1611.06910, 2016.

\bibitem{CaiZ17}
Yang Cai and Mingfei Zhao.
\newblock Simple mechanisms for subadditive buyers via duality.
\newblock In {\em the 49th Annual ACM Symposium on Theory of Computing (STOC)},
  2017.

\bibitem{ChawlaHK07}
Shuchi Chawla, Jason~D. Hartline, and Robert~D. Kleinberg.
\newblock {Algorithmic Pricing via Virtual Valuations}.
\newblock In {\em the 8th ACM Conference on Electronic Commerce (EC)}, 2007.

\bibitem{ChawlaHMS10}
Shuchi Chawla, Jason~D. Hartline, David~L. Malec, and Balasubramanian Sivan.
\newblock {Multi-Parameter Mechanism Design and Sequential Posted Pricing}.
\newblock In {\em the 42nd ACM Symposium on Theory of Computing (STOC)}, 2010.

\bibitem{ChawlaMS10}
Shuchi Chawla, David~L. Malec, and Balasubramanian Sivan.
\newblock The power of randomness in bayesian optimal mechanism design.
\newblock In David~C. Parkes, Chrysanthos Dellarocas, and Moshe Tennenholtz,
  editors, {\em Proceedings 11th {ACM} Conference on Electronic Commerce
  (EC-2010), Cambridge, Massachusetts, USA, June 7-11, 2010}, pages 149--158.
  {ACM}, 2010.

\bibitem{ChawlaM16}
Shuchi Chawla and J.~Benjamin Miller.
\newblock Mechanism design for subadditive agents via an ex-ante relaxation.
\newblock In Vincent Conitzer, Dirk Bergemann, and Yiling Chen, editors, {\em
  Proceedings of the 2016 {ACM} Conference on Economics and Computation, {EC}
  '16, Maastricht, The Netherlands, July 24-28, 2016}, pages 579--596. {ACM},
  2016.

\bibitem{DaganDDJ19}
Yuval Dagan, Constantinos Daskalakis, Nishanth Dikkala, and Siddhartha Jayanti.
\newblock Learning from weakly dependent data under dobrushin's condition.
\newblock In Alina Beygelzimer and Daniel Hsu, editors, {\em Conference on
  Learning Theory, {COLT} 2019, 25-28 June 2019, Phoenix, AZ, {USA}}, volume~99
  of {\em Proceedings of Machine Learning Research}, pages 914--928. {PMLR},
  2019.

\bibitem{DaganDDJ19arxiv}
Yuval Dagan, Constantinos Daskalakis, Nishanth Dikkala, and Siddhartha Jayanti.
\newblock Learning from weakly dependent data under dobrushin's condition.
\newblock {\em CoRR}, abs/1906.09247, 2019.

\bibitem{DaskalakisDW15}
Constantinos Daskalakis, Nikhil~R. Devanur, and S.~Matthew Weinberg.
\newblock Revenue maximization and ex-post budget constraints.
\newblock In Tim Roughgarden, Michal Feldman, and Michael Schwarz, editors,
  {\em Proceedings of the Sixteenth {ACM} Conference on Economics and
  Computation, {EC} '15, Portland, OR, USA, June 15-19, 2015}, pages 433--447.
  {ACM}, 2015.

\bibitem{Daskalakis18}
Constantinos Daskalakis, Nishanth Dikkala, and Gautam Kamath.
\newblock Testing ising models.
\newblock In {\em Proceedings of the Twenty-Ninth Annual ACM-SIAM Symposium on
  Discrete Algorithms}, pages 1989--2007. Society for Industrial and Applied
  Mathematics, 2018.

\bibitem{daskalakis2019testing}
Constantinos Daskalakis, Nishanth Dikkala, and Gautam Kamath.
\newblock Testing ising models.
\newblock {\em IEEE Transactions on Information Theory}, 65(11):6829--6852,
  2019.

\bibitem{Dobruschin68}
PL~Dobruschin.
\newblock The description of a random field by means of conditional
  probabilities and conditions of its regularity.
\newblock {\em Theory of Probability \& Its Applications}, 13(2):197--224,
  1968.

\bibitem{DobrushinS87}
RL~Dobrushin and SB~Shlosman.
\newblock Completely analytical interactions: constructive description.
\newblock {\em Journal of Statistical Physics}, 46(5-6):983--1014, 1987.

\bibitem{Gheissari18}
Reza Gheissari, Eyal Lubetzky, Yuval Peres, et~al.
\newblock Concentration inequalities for polynomials of contracting ising
  models.
\newblock {\em Electronic Communications in Probability}, 23, 2018.

\bibitem{HARTN_2019}
Sergiu Hart and Noam Nisan.
\newblock Selling multiple correlated goods: Revenue maximization and menu-size
  complexity.
\newblock {\em Journal of Economic Theory}, 183:991 -- 1029, 2019.

\bibitem{Immorlica0W20}
Nicole Immorlica, Sahil Singla, and Bo~Waggoner.
\newblock Prophet inequalities with linear correlations and augmentations.
\newblock In P{\'{e}}ter Bir{\'{o}}, Jason Hartline, Michael Ostrovsky, and
  Ariel~D. Procaccia, editors, {\em {EC} '20: The 21st {ACM} Conference on
  Economics and Computation, Virtual Event, Hungary, July 13-17, 2020}, pages
  159--185. {ACM}, 2020.

\bibitem{kindermann1980markov}
Ross Kindermann and Laurie Snell.
\newblock {\em Markov random fields and their applications}, volume~1.
\newblock American Mathematical Society, 1980.

\bibitem{KleinbergW12}
Robert Kleinberg and S.~Matthew Weinberg.
\newblock {Matroid Prophet Inequalities}.
\newblock In {\em the 44th Annual ACM Symposium on Theory of Computing (STOC)},
  2012.

\bibitem{krengel1978semiamarts}
Ulrich Krengel and Louis Sucheston.
\newblock On semiamarts, amarts, and processes with finite value.
\newblock {\em Probability on Banach spaces}, 4:197--266, 1978.

\bibitem{lauritzen1996}
Steffen~L. Lauritzen.
\newblock {\em Graphical Models}.
\newblock Oxford University Press, 1996.

\bibitem{LPW09}
David~A. Levin, Yuval Peres, and Elizabeth~L. Wilmer.
\newblock Markov chains and mixing times.
\newblock {\em American Mathematical Soc., Providence}, 2009.

\bibitem{LiY13}
Xinye Li and Andrew Chi-Chih Yao.
\newblock On revenue maximization for selling multiple independently
  distributed items.
\newblock {\em Proceedings of the National Academy of Sciences},
  110(28):11232--11237, 2013.

\bibitem{psomas2019smoothed}
Alexandros Psomas, Ariel Schvartzman, and S~Matthew Weinberg.
\newblock Smoothed analysis of multi-item auctions with correlated values.
\newblock In {\em Proceedings of the 2019 ACM Conference on Economics and
  Computation}, pages 417--418, 2019.

\bibitem{randall2006slow}
Dana Randall.
\newblock Slow mixing of glauber dynamics via topological obstructions.
\newblock In {\em Symposium on Discrete Algorithms: Proceedings of the
  seventeenth annual ACM-SIAM symposium on Discrete algorithm}, volume~22,
  pages 870--879, 2006.

\bibitem{Ronen_2001}
Amir Ronen.
\newblock On approximating optimal auctions.
\newblock In Michael~P. Wellman and Yoav Shoham, editors, {\em Proceedings 3rd
  {ACM} Conference on Electronic Commerce (EC-2001), Tampa, Florida, USA,
  October 14-17, 2001}, pages 11--17. {ACM}, 2001.

\bibitem{RubinsteinW15}
Aviad Rubinstein and S.~Matthew Weinberg.
\newblock Simple mechanisms for a subadditive buyer and applications to revenue
  monotonicity.
\newblock In {\em Proceedings of the Sixteenth {ACM} Conference on Economics
  and Computation, {EC} '15, Portland, OR, USA, June 15-19, 2015}, pages
  377--394, 2015.

\bibitem{samuel1984comparison}
Ester Samuel-Cahn et~al.
\newblock Comparison of threshold stop rules and maximum for independent
  nonnegative random variables.
\newblock {\em the Annals of Probability}, 12(4):1213--1216, 1984.

\bibitem{StroockZ92}
Daniel~W Stroock and Boguslaw Zegarlinski.
\newblock The logarithmic sobolev inequality for discrete spin systems on a
  lattice.
\newblock {\em Communications in Mathematical Physics}, 149(1):175--193, 1992.

\bibitem{wu2006}
Liming Wu.
\newblock Poincaré and transportation inequalities for gibbs measures under
  the dobrushin uniqueness condition.
\newblock {\em Ann. Probab.}, 34(5):1960--1989, 09 2006.

\bibitem{Yao15}
Andrew~Chi{-}Chih Yao.
\newblock An n-to-1 bidder reduction for multi-item auctions and its
  applications.
\newblock In {\em SODA}, 2015.

\end{thebibliography}

% Appendix
\appendix

\section{Missing Details of the Duality-base Benchmark}\label{appx:dual_add_ud}

We provide the necessary information to derive the benchmark for XOS valuations in Appendix~\ref{sec:appx_XOS_duality}.
Deriving a benchmark for a constrained additive valuation is a simpler task.
We summarize the benchmark for a constrained additive valuation in Definition~\ref{def:benchmark constrained additive}.

\begin{definition}\label{def:benchmark constrained additive}

The duality framework provide the following bound:

\begin{align*}
\rev(M,v, D)\leq &\sum_{\vT \in T} \sum_{i\in [n]} f(\vT) \cdot \pi_i (\vT) \cdot \left(    t_i \cdot  \mathds{1} \left[ \vT  \notin R_i \right] + 	\phi_i(t_i\mid t_{-i}) \cdot	 \mathds{1} \left[ \vT  \in R_i \right]		\right) \\
&=\sum_{\vT \in T} \sum_{i\in[n]} f(\vT) \cdot \pi _i (\vT) \cdot	\phi_i(t_i\mid t_{-i}) \cdot	 \mathds{1} \left[ \vT  \in R_i \right] \quad (\single)\\
& \qquad\qquad+\sum_{\vT \in T}\sum_{i\in[n]} f(\vT) \cdot  \pi _i (\vT) \cdot t_i \cdot  \mathds{1} \left[ \vT  \notin R_i \right] (\nf) \\
\end{align*}

\end{definition}

\notshow{
\begin{definition}
For the  (Non-Favorite) we have that:

\begin{eqnarray*}
 &          \sum_{t \in T} \sum_j \Pr[t] \cdot  \pi_j (t) \cdot t_j \cdot  \mathds{1} \left[ t  \notin R_j \right]  \\
 \leq &     \sum_{t \in T} \sum_j \Pr[t]  \cdot t_j  \cdot  \mathds{1} \left[ t  \notin R_j \right] \\
 = &        \sum_{t \in T} \sum_j \Pr[t_j] \cdot \Pr[t_{-j}\mid t_j]  \cdot t_j  \cdot  \mathds{1} \left[ t  \notin R_j \right] \\
 = & \sum_j \sum_{t_j \in T_j} \Pr[t_j] \cdot t_j \sum_{t_{-j}\in T_{-j}}  Pr\left[t_{-j} \mid t_j \right] \cdot  \mathds{1} \left[ t  \notin R_j \right] \\
 \leq &
 \sum_j \sum_{t_j > r} \Pr[t_j] \cdot t_j \sum_{t_{-j}\in T_{-j}}  Pr\left[t_{-j} \mid t_j \right] \cdot  \mathds{1} \left[ t  \notin R_j \right] \\
 & +
 \sum_j \sum_{t_j \leq r} \Pr[t_j] \cdot t_j \sum_{t_{-j}\in T_{-j}}  Pr\left[t_{-j} \mid t_j \right] \cdot  \mathds{1} \left[ t  \notin R_j \right] \\
 \leq &
  \sum_j \sum_{t_j > r} \Pr[t_j] \cdot t_j Pr\left[ t  \notin R_j \mid t_j \right] (TAIL) \\
 & +
 \sum_j \sum_{t_j \leq r} \Pr[t_j] \cdot t_j (CORE) \\
\end{eqnarray*}

\end{definition}
}

\begin{lemma}\label{lem:nonfavorite to core and tail}
We can bound $\nf$ by $\core$ and $\tail$. More specifically,

\begin{align*}
\nf\leq &	\sum_{i\in[n]}\sum_{t_i > r} f_i(t_i) \cdot t_i\cdot \Pr_{\vT'\sim D}\left[ \vT'  \notin R_i \mid t'_i=t_i \right] \quad(\tail) \\
 & \qquad\qquad+\sum_{i\in[n]} \sum_{t_i \leq r} f_i(t_i) \cdot t_i \quad(\core)
\end{align*}
\end{lemma}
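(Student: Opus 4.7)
\medskip

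\noindent\textbf{Proof plan for Lemma~\ref{lem:nonfavorite to core and tail}.} The goal is to split the sum defining $\nf$ along the marginal of item $i$'s type $t_i$ according to whether $t_i > r$ or $t_i \le r$, and show that the first piece is bounded by $\tail$ and the second by $\core$. The plan is essentially to read $\nf$ as an expectation over $t_i$ (after marginalizing out $t_{-i}$) and then use the trivial bound $\pi_i(\vT)\le 1$.

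First I would drop the allocation probability: since $M$ is IR and IC, $\pi_i(\vT)\in[0,1]$, so
\[
\nf \;\le\; \sum_{i\in[n]}\sum_{\vT\in T} f(\vT)\cdot t_i\cdot \mathds{1}[\vT\notin R_i].
\]
Next, I would factor $f(\vT)=f_i(t_i)\cdot f_{-i}(t_{-i}\mid t_i)$ and interchange the sums to group the inner sum by $t_i$:
\[
\nf \;\le\; \sum_{i\in[n]}\sum_{t_i\in T_i} f_i(t_i)\cdot t_i\cdot \underbrace{\sum_{t_{-i}\in T_{-i}} f_{-i}(t_{-i}\mid t_i)\cdot\mathds{1}[\vT\notin R_i]}_{=\;\Pr_{\vT'\sim D}[\vT'\notin R_i\mid t'_i=t_i]}.
\]

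Now I would split the outer sum over $t_i$ at the threshold $r=\srev(v,D)$. For $t_i>r$, I simply keep the conditional probability factor as-is; the resulting double sum is exactly $\tail$. For $t_i\le r$, I use the trivial bound $\Pr_{\vT'\sim D}[\vT'\notin R_i\mid t'_i=t_i]\le 1$, and the remaining expression $\sum_i\sum_{t_i\le r} f_i(t_i)\cdot t_i$ is exactly $\core$. Adding the two gives the claim.

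There is no real obstacle here: the lemma is a bookkeeping step used to transform $\nf$ into two more tractable terms, and the inequality is tight up to the trivial substitutions $\pi_i(\vT)\le 1$ and $\Pr[\cdot]\le 1$ on the $t_i\le r$ piece. The only thing to watch for is correctly matching the marginalization identity $\sum_{t_{-i}} f_{-i}(t_{-i}\mid t_i)\cdot\mathds{1}[(t_i,t_{-i})\notin R_i]=\Pr_{\vT'\sim D}[\vT'\notin R_i\mid t'_i=t_i]$, which is immediate from the definition of conditional probability.
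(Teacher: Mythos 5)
Your proof is correct and follows essentially the same route as the paper's: drop $\pi_i(\vT)\le 1$, factor $f(\vT)=f_i(t_i)\cdot f_{-i}(t_{-i}\mid t_i)$ and regroup, split at $t_i>r$ versus $t_i\le r$, and bound the inner sum by $1$ in the $\core$ piece. No substantive differences.
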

\begin{proof}
	\begin{align*}
 &        \sum_{\vT \in T}\sum_{i\in[n]} f(\vT) \cdot  \pi _i (\vT) \cdot t_i \cdot  \mathds{1} \left[ \vT  \notin R_i \right] \\
 \leq &   \sum_{\vT \in T}\sum_{i\in[n]} f(\vT) \cdot t_i \cdot  \mathds{1} \left[ \vT  \notin R_i \right]\\
 = &        \sum_{\vT \in T} \sum_{i\in[n]} f_i(t_i) \cdot f_{-i}(t_{-i}\mid t_i)  \cdot t_i  \cdot  \mathds{1} \left[ \vT  \notin R_i \right] \\
 = & \sum_{i\in[n]} \sum_{t_i \in T_i} f_i(t_i) \cdot t_i \sum_{t_{-i}\in T_{-i}}  f_{-i}(t_{-i}\mid t_i) \cdot  \mathds{1} \left[ \vT  \notin R_i \right] \\
 = &
 \sum_{i\in[n]} \sum_{t_i > r}f_i(t_i) \cdot t_i \sum_{t_{-i}\in T_{-i}}  f_{-i}(t_{-i}\mid t_i) \cdot  \mathds{1} \left[ \vT  \notin R_i \right]\\
 & \qquad\qquad +
 \sum_{i\in [n]} \sum_{t_i \leq r} f_i(t_i) \cdot t_i \sum_{t_{-i}\in T_{-i}}  f_{-i}(t_{-i}\mid t_i) \cdot  \mathds{1} \left[ \vT  \notin R_i \right]\\
 \leq &
 \sum_{i\in[n]}\sum_{t_i > r} f_i(t_i) \cdot t_i\cdot \Pr_{\vT'\sim D}\left[ \vT'  \notin R_i \mid t'_i=t_i \right] \quad(\tail) \\
 & \qquad\qquad+\sum_{i\in[n]} \sum_{t_i \leq r} f_i(t_i) \cdot t_i \quad(\core)
\end{align*}
\end{proof}
\section{Missing Proofs from Section~\ref{sec:MRF unit-demand}}\label{sec:appx UD}

\begin{prevproof}{Lemma}{lemma:single_ronen}

For the term \single~ we have that:

\begin{align}
&\sum_{\vT \in T} \sum_{i\in[n]} f(\vT) \cdot \pi _i (\vT) \cdot	\phi_i(t_i\mid t_{-i}) \cdot	 \mathds{1} \left[ \vT  \in R_i \right]  \nonumber \\
=&\sum_{i\in [n]} \sum_{t_{-i} \in T_{-i}} f_{-i}(t_{-i})\cdot\sum_{t_i \in T_i} f_i(t_i\mid t_{-i}) \cdot \pi _i (\vT) \cdot	\phi_i(t_i\mid t_{-i}) \cdot	 \mathds{1} \left[ \vT \in R_i \right] \nonumber \\
\leq &\sum_{i\in [n]} \sum_{t_{-i} \in T_{-i}} f_{-i}(t_{-i})\cdot\sum_{t_i \in T_i} f_i(t_i\mid t_{-i}) \cdot \phi_i(t_i\mid t_{-i})^+ \cdot	 \mathds{1} \left[ \vT \in R_i \right] \nonumber\\
= & \sum_{i\in [n]} \sum_{t_{-i} \in T_{-i}} f_{-i}(t_{-i})\cdot\sum_{t_i : ~(t_i,t_{-i})\in R_i} f_i(t_i\mid t_{-i}) \cdot \phi_i(t_i\mid t_{-i})^+  \label{eq:single_UD} 
%\leq & \sum_j \sum_{t_{-j} \in T_{-j}}  \Pr[t_{-j}] \cdot \sum_{\substack{t_j \in T_j  \\ \phi_j^c(t)>0}} \Pr[t_j\mid t_{-j}]  \cdot \phi_j^c(t) \cdot	 \mathds{1} \left[ t  \in R_j \right] \nonumber \\
%= &\sum_j \sum_{t_{-j} \in T_{-j}}  \Pr[t_{-j}] \sum_{\substack{t_j \in T_j  : \phi_j^c(t)>0 \\ t_j > \max_{j \neq i} (t_j)}} \Pr[t_j\mid t_{-j}]  \cdot \phi_j^c(t) \label{eq:single_UD} 
\end{align}

According to Definition~\ref{def:ironed virtual values}, $\sum_{t_i : ~(t_i,t_{-i})\in R_i} f_i(t_i\mid t_{-i}) \cdot \phi_i(t_i\mid t_{-i})^+=\max_{p\geq \max_{j\neq i} t_j} p\cdot (1- F_i(p\mid t_{-i}))$, which is exactly the revenue of Ronen's auction when bidder $i$ has the highest value and the other bidders have value $t_{-i}$. This completes the proof that $\single \leq \rcopies$.

As the buyer has unit-demand valuation, the \nf~ term is at most the revenue of the second price auction in the \copies ~settings. 
The second price auction can be viewed as a special case Ronen's mechanism, in which the price is always set to be the same as  the second highest bid. Hence, $\nf \leq \rcopies$.
\notshow{We remind the readers that Ronen's auction first identifies the highest bidder,
and offers a take it or leave it price to the highest bidder (condition on the price not lower than the second-highest bid) so that the revenue is maximized. 
The second price auction can be viewed as a special case of this type of mechanism, in which the price is always set to be the same as  the second highest bid. Hence, $\nf \leq \rcopies$.}

Recall that:
\begin{align*}
\rcopies= \sum_{i\in [n]} \sum_{t_{-i} \in T_{-i}} f_{-i}(t_{-i})\cdot \max_{p\geq \max_{j\neq i} t_j} p\cdot (1- F_i(p\mid t_{-i}))  \end{align*}

%Let $S_{i,t_{-i}}$ be the price used by Ronen's mechanism when the highest bidder is $i$ and the other bidders bids are $t_{-i}$. 
Due to Lemma~\ref{lemma:tv_MRF}, 
\begin{align*}
\rcopies &\leq \sum_{i\in [n]} \sum_{t_{-i} \in T_{-i}} f_{-i}(t_{-i})\cdot \exp(4\Delta(\vT))\max_{p\geq \max_{j\neq i} t_j} p\cdot (1- F_i(p)) \\
& \leq \sum_{i\in [n]} \sum_{t_{-i} \in T_{-i}} f_{-i}(t_{-i})\cdot \exp(4\Delta(\vT)) \sum_{t_i : ~(t_i,t_{-i})\in R_i} f_i(t_i) \cdot \phi_i(t_i)^+\\
&\leq \exp(8\Delta(\vT)) \sum_{i\in [n]} \sum_{\vT \in R_i} f(\vT) \cdot \phi_i(t_i)^+\\
&\leq \exp(8\Delta(\vT)) \E_{\vT}\left[\max_{i\in[n]} \phi_i(t_i)^+\right]
\end{align*}
\notshow{
Assume bidder $i^*$ has the highest bid.
By Myerson Lemma $\sum_{t_j \in T_j  : \phi_j^c(t)>0} \Pr[t_j\mid t_{-j}]  \cdot \phi_j^c(t) \mathds{1}[t \in R_j]$ is equal to a posted price mechanism on bidder ${i^*}$,
when the distribution from bidder ${i^*}$ is conditioned on the other values.
Let $S_{i^*,t_{-{i^*}}}$ be that posted price,
then using Lemma~\ref{lemma:tv_MRF} we have that:

\begin{align*}
    \sum_{t_j \in T_j  : \phi_j^c(t)>0} \Pr[t_j\mid t_{-j}]  \cdot \phi_j^c(t) \mathds{1}[t \in R_j] = &S_{i^*,t_{-{i^*}}} Pr [t_{i^*} \geq S_{i^*,t_{-{i^*}}} \mid t_{-i^*} ] \\
    \leq & 
    \exp(4\Delta)\cdot S_{i^*,t_{-{i^*}}} Pr [t_{i^*} \geq S_{i^*,t_{-{i^*}}} ]
\end{align*}

Again by Myerson Lemma we have that:

\begin{align*}
    S_{i^*,t_{-{i^*}}} Pr [t_{i^*} \geq S_{i^*,t_{-{i^*}}} ] = \sum_{\substack{t_j \in T_j  \\ \phi_j(t_j)>\phi_j(S_{i^*,t_{-{i^*}}})}} \Pr[t_j]  \cdot \phi_j(t_j)
\end{align*}

Again using Lemma~\ref{lemma:tv_MRF} we have that:

\begin{align*}
\sum_{\substack{t_j \in T_j  \\ \phi_j(t_j)>\phi_j(S_{i^*,t_{-{i^*}}})}} \Pr[t_j]  \cdot \phi_j(t_j)
\leq \exp(4\Delta) \sum_{\substack{t_j \in T_j  \\ \phi_j(t_j)>\phi_j(S_{i^*,t_{-{i^*}}})}} \Pr[t_j \mid t_{-j}]  \cdot \phi_j(t_j)
\end{align*}

Combining the previews results we get that:

\begin{align*}
    Ronen^{COPIES}\leq \exp(8\Delta) \cdot \sum_j \sum_{t_{-j} \in T_{-j}}  \Pr[t_{-j}] \cdot \sum_{t_j \in T_j  : \phi_j^c(t)>0} \Pr[t_j\mid t_{-j}]  \cdot \phi_j^c(t) \mathds{1}[t \in R_j]
\end{align*}

    We can see that the above inequality each time chooses at most one of the virtual valuation  of the items with respect to the virtual valuation function as if the items are independent.

    Therefore the revenue by Ronen can upper bounded by the $\exp(8\Delta)E[\max_i \phi(x_i)]$
}
\end{prevproof}

\begin{prevproof}{Theorem}{thm:MRF unit-demand}
Let $\tau^*=\median_{\vT}(\max_{i\in [n]} \phi_i(t_i)^+)$
%\min_{\tau}\left\{\tau\in \mathbb{R}: \Pr_{\vT}\left[\max_{i\in [n]} \phi_i(t_i)^+\leq \tau\right]\geq {\exp(4\Delta(\vT))\over 1+\exp(4\Delta(\vT))}\right\}$
. We let price $p_i=\min\{p\in T_i: \phi_i(p)^+\geq \tau^*\}$. We first provide a lower bound of the revenue of the posted-price mechanism under the prices $\{p_i\}_{i\in [n]}$. 

For each item $i\in [n]$, let $\EE_i$ denote the event $\{\vT\in T: t_i\geq p_i\}$ and $\EE'_i$ denote the event $\{\vT\in T: t_j<p_j,~\forall j\neq i\}$. Clearly, the buyer buys item $i$ in event $\EE_i\cap \EE'_i$, so the revenue of the posted-price mechanism is at least 
$$\sum_{i\in[n]} p_i\cdot \Pr_{\vT\sim D}\left[\EE_i\cap \EE'_i\right]\geq \exp(-4\Delta(\vT))\cdot \sum_{i\in[n]} p_i\cdot \Pr_{\vT\sim D}\left[\EE_i\right ] \Pr_{\vT\sim D}\left[ \EE'_i\right].$$ 
Note that $p_i\cdot \Pr_{\vT\sim D}\left[\EE_i\right ] =\sum_{t_i\in T_i} f_i(t_i)\cdot\phi_i(t_i)^+\cdot \ind\left[\phi_i(t_i)^+\geq \tau^*\right]=\tau^*\cdot \Pr_{t_i\sim D_i}[\phi_i(t_i)^+\geq \tau^*]+\E_{t_i\sim D_i}[(\phi_i(t_i)^+-\tau^*)^+]$ and $\Pr_{\vT\sim D}\left[ \EE'_i\right]\geq 1/2$. Hence, the RHS of the inequality above is lower bounded by 
\begin{align*}
	{\exp(-4\Delta(\vT))\over 2}\cdot \sum_{i\in[n]}\tau^*\cdot \Pr_{t_i\sim D_i}[\phi_i(t_i)^+\geq \tau^*]&+\E_{t_i\sim D_i}[(\phi_i(t_i)^+-\tau^*)^+]\\
	&\geq {\exp(-4\Delta(\vT))\over 2}\cdot \left({\tau^*\over 2}+\sum_{i\in[n]}\E_{t_i\sim D_i}[(\phi_i(t_i)^+-\tau^*)^+]\right)
\end{align*}

The inequality is due to the union bound. By Lemma~\ref{lemma:MRF_prophet}, the lower bound is at least ${\exp(-4\Delta(\vT))\over 4}\cdot \E_{\vT}\left[\max_{i\in[n]} \phi_i(t_i)^+\right]$. Combining this conclusion with Lemma~\ref{lemma:single_ronen}, the revenue of the posted-price mechanism is at least $\rev(D)\over 8\exp(12\Delta(\vT))$.

\end{prevproof}
\section{Missing Proofs from Section~\ref{sec:MRF additive}}\label{sec:appx_additive}

\begin{prevproof}{Lemma}{lem:bounding single and tail additive}

\begin{align*}
\single = &\sum_{\vT \in T} \sum_{i\in[n]} f(\vT) \cdot \pi _i (\vT) \cdot	\phi_i(t_i\mid t_{-i}) \cdot	 \mathds{1} \left[ \vT  \in R_i \right]  \\
\leq &\sum_{i\in [n]}\sum_{t_{-i}\in T_{-i}}f_{-i}(t_{-i})\cdot\sum_{\substack{t_i:~ (t_i,t_{-i})\in R_i}} f_i(t_i\mid t_{-i}) \cdot \phi_i(t_i\mid t_{-i})^+\\
= &\sum_{i\in [n]}\sum_{t_{-i}\in T_{-i}}f_{-i}(t_{-i})\cdot\max_{p\geq \max_{j\neq i} t_j} p\cdot (1- F_i(p\mid t_{-i}))
 \\
 \leq &\exp(4\Delta(\vT))\cdot\sum_{i\in [n]}\sum_{t_{-i}\in T_{-i}}f_{-i}(t_{-i})\cdot\max_{p\geq \max_{j\neq i} t_j} p\cdot (1- F_i(p))
 \\
  \leq &\exp(4\Delta(\vT))\cdot\sum_{i\in [n]}\sum_{t_{-i}\in T_{-i}}f_{-i}(t_{-i})\cdot r_i
 \\
\leq &\exp(4\Delta(\vT))\cdot r
\end{align*}

The first equality is due to the definition of $\phi_i(t_i\mid t_{-i})$ (Definition~\ref{def:ironed virtual values}). The second inequality follows from Lemma~\ref{lemma:tv_MRF}. The third and last inequalities follow from the definition of $r_i$ and $r$.
\notshow{ 
By Myerson Lemma, the optimal mechanism is to sell the item at a posted price when the buyer has positive virtual welfare.
If we sample the value for the item from $t_j\mid t_{-j}$, then we have for some value $S_{j,t_{-j}}$ using Lemma~\ref{lemma:tv_MRF} we have that:

\begin{align*}
\sum_{\substack{t_j \in T_j \\ \phi_j^c(t)>0}} \Pr[t_j\mid t_{-j}]  \cdot \phi_j^c(t) \cdot	 \mathds{1} \left[ t  \in R_j \right] 
= & S_{j,t_{-j}}\cdot Pr\left[ t \geq S_{j,t_{-j}}\mid t_{-j} \right]  \\
\leq & \exp(4\Delta) S_{j,t_{-j}}\cdot Pr\left[ t \geq S_{j,t_{-j}}\right]  \\
\leq & \exp(4\Delta) r_j \Pr[t \geq r_j]
\end{align*}

Therefore:

\begin{align*}
(Single) \leq &\sum_j \sum_{t_{-j} \in T_{-j}}  f(t_{-j}) \sum_{\substack{t_j \in T_j  \\ \phi_j^c(t)>0}} \Pr[t_j\mid t_{-j}]  \cdot \phi_j^c(t) \cdot	 \mathds{1} \left[ t  \in R_j \right] \\
\leq & \sum_j \sum_{t_{-j} \in T_{-j}}  \Pr[t_{-j}] \exp(4\Delta) r_j \Pr[t \geq r_j] \\
=  & \exp(4\Delta) \sum_j r_j \Pr[t \geq r_j] \sum_{t_{-j} \in T_{-j}}  \Pr[t_{-j}] \\
=  & \exp(4\Delta) \sum_j r_j \Pr[t \geq r_j]  \\
= & \exp(4\Delta) r 
\end{align*}

}

Similarly, we can bound the term \tail $= \sum_{i\in[n]}\sum_{t_i > r} f_i(t_i) \cdot t_i\cdot \Pr_{\vT'\sim D}\left[ \vT'  \notin R_i \mid t'_i=t_i \right]$.

First, note that $\Pr_{\vT'\sim D}\left[ \vT'  \notin R_i \mid t'_i=t_i \right] \leq \Pr_{\vT'\sim D}\left[ \exists k \neq i : t'_k \geq t_i \mid t'_i= t_i \right]$.
Therefore

\begin{align*}
     \tail \leq & \sum_{i\in[n]}\sum_{t_i > r} f_i(t_i) \cdot t_i\cdot \Pr_{\vT'\sim D}\left[ \exists k \neq i : t'_k \geq t_i \mid t'_i= t_i \right]\\
     \leq & \exp(4\Delta(\vT))\cdot \sum_{i\in[n]}\sum_{t_i > r} f_i(t_i) \cdot t_i\cdot \Pr_{t'_{-i}\sim D_{-i}}\left[ \exists k \neq i : t'_k \geq t_i \right]\\
     \leq &  \exp(4\Delta(\vT))\cdot \sum_{i\in[n]}\sum_{t_i > r} f_i(t_i) \cdot t_i\cdot \left(\sum_{k\neq i}\Pr_{t'_{k}\sim D_{k}}\left[t'_k \geq t_i \right]\right)\\
     \leq &  \exp(4\Delta(\vT))\cdot \sum_{i\in[n]}\sum_{t_i > r} f_i(t_i) \cdot \sum_{k\neq i} r_k\\
        \leq &  \exp(4\Delta(\vT))\cdot \sum_{i\in[n]}r\cdot \sum_{t_i > r} f_i(t_i) \\
        \leq &  \exp(4\Delta(\vT))\cdot \sum_{i\in[n]}r_i\\
        =& \exp(4\Delta(\vT))\cdot r
     \end{align*}

The second inequality is due to Lemma~\ref{lemma:tv_MRF}. The third inequality follows from the union bound. The fourth and sixth inequalities hold because $r_k\geq t_i\cdot \Pr_{t'_{k}\sim D_{k}}\left[t'_k \geq t_i \right]$ and $r_i\geq r\cdot (1-F_i(r))$. 
\notshow{
Using Lemma~\ref{lemma:MRF_tv_product} with condition function $C(T_{-i})=\mathds{1}[\exists k \neq j : t_k \geq t_j]$ we have that:

\begin{align*}
    Pr\left[ \exists k \neq j : t_k \geq t_j \mid t_j \right] \leq \exp(2\Delta) \Pr[t_j] Pr\left[ \exists k \neq j : t_k \geq t_j \right]
\end{align*}

This implies that:

\begin{align*}
 & \sum_j \sum_{t_j > r} \Pr[t_j] \cdot t_j Pr\left[ \exists k \neq j : t_k \geq t_j \mid  t_j \right] \\
 \leq &
\sum_j \sum_{t_j > r} \Pr[t_j] \cdot t_j \exp(2\Delta) Pr\left[ \exists k \neq j : t_k \geq t_j \right] \\ 
= &
\exp(2\Delta) \sum_j \sum_{t_j > r} \Pr[t_j]\cdot   t_j Pr\left[ \exists k \neq j : t_k \geq t_j \right] \\
\end{align*}

If we set consider the mechanism that posts price $t_j$ at each item expect item $j$,
then its expected revenue is at least $t_j \Pr\left[ \exists k \neq j : t_k \geq t_j \right]$,
which is at most $r$.
This implies that:

\begin{align*}
& \exp(2\Delta) \sum_j \sum_{t_j > r} \Pr[t_j]\cdot   t_j \Pr\left[ \exists k \neq j : t_k \geq t_j \right] \\
 \leq &
 \exp(2\Delta)\sum_j \sum_{t_j > r} \Pr[t_j] r  \\
= &
 \exp(2\Delta) \sum_j \Pr_{t\sim T_j}[t>r] r  \\
 \leq &
 \exp(2\Delta) \sum_j r_i  \\
 = &
 \exp(2\Delta) r 
\end{align*}
}

\end{prevproof}

\begin{prevproof}{Lemma}{lemma:cov}
We have that: $$\sum_{\substack{c_i \leq r \\ c_j \leq r}} c_i c_j \Pr_{\vT\sim D}[t_i=c_i \land t_j=c_j] \leq   \exp(4\Delta(t))\cdot \sum_{\substack{c_i \leq r \\ c_j \leq r}} c_i f_i(c_i) \cdot c_j f_j(c_j) %c_j\Pr_{t_{j}\sim D_{j}}[t_j=c_j]  \\
    %\leq & \exp(8\Delta(t))\cdot \sum_{\substack{c_i \leq r \\ c_j \leq r}} c_i f_i(c_i) \cdot c_j\Pr_{\vT \sim D}[t_j=c_j]  \\
    %= & \exp(8\Delta(t))\cdot \sum_{\substack{c_i \leq r \\ c_j \leq r}} c_i f_i(c_i) \cdot c_jf_j(c_j) \\   
        = \exp(4\Delta(t)) \E[C_i] \E[C_j]$$

The inequality follows from Lemma~\ref{lemma:tv_MRF}. Therefore, $\Cov[C_i,C_j] \leq  (\exp(4\Delta(t))-1) \E[C_i] \E[C_j]$.

Note that $\Var[C] =  \sum_{i\in [n]} \Var[C_i^2] + \sum_{i\neq j}\Cov(C_i,C_j) \leq \sum_{i\in [n]} \E[C_i^2] + \sum_{i\neq j}\Cov(C_i,C_j)$.

Using Lemma~9 from \cite{CaiDW16}, we can bound $\sum_{i\in [n]} E[C_i^2]$ by $2r^2$.
Hence,
\begin{align*}
    \Var[C] \leq & 2r^2 + (\exp(4\Delta(\vT))-1)\sum_{i\neq j}  \E[C_i]\E[C_j] \\
    \leq & 2r^2 + (\exp(4\Delta(\vT))-1)\left( \sum_{i\in [n]}  \E[C_i] \right)^2 \\
    = & 2r^2 + (\exp(4\Delta(\vT))-1)\E[C]^2 \\
\end{align*}

\end{prevproof}

\begin{prevproof}{Theorem}{thm:MRF additive}
{
First we present the Paley-Zygmund inequality.
For a non-negative random variable $X$, Paley-Zygmund inequality implies that for $\theta\in[0, 1]$ we have that:
\begin{align*}
    \Pr\left[ X > \theta \E[X] \right] \geq (1-\theta)^2 \frac{1}{1 + \frac{\Var[X]}{\E[X]^2}}
\end{align*}
}
	By the Paley-Zygmund inequality and Lemma~\ref{lemma:cov}, we derive the following inequality:
	\begin{equation}\label{eq:additive core}
		\Pr\left[C\geq {\E[C]\over 2}\right]\geq {1\over 4}\cdot {1\over 1+\Var[C]/\E[C]^2}\geq {1\over 4\left(\exp(4\Delta(\vT))+2r^2/\E[C]^2\right)}.	\end{equation}
	
	If $\E[C]\leq \sqrt{2}r$, then according to Lemma~\ref{lem:bounding single and tail additive}, $$\rev(D)\leq \core+\tail+\single\leq \left(2\exp(4\Delta(\vT))+\sqrt{2}\right)\cdot \srev.$$
	Otherwise, Equation~\eqref{eq:additive core} implies that $$\Pr_{\vT\sim D}\left[\sum_{i\in [n]}t_i\geq {\E[C]\over 2}\right]	\geq \Pr\left[C\geq {\E[C]\over 2}\right]\geq {1\over 4\left(\exp(4\Delta(\vT))+1\right)}.$$
Therefore, if we sell the grand bundle at price $\frac{\E[C]}{2}=\frac{\core}{2}$, it will be sold with probability at least ${1\over 4\left(\exp(4\Delta(\vT))+1\right)}$. Thus $8\left(\exp(4\Delta(t))+1\right)\cdot\brev\geq \core$.
	%Therefore, $8\left(\exp(4\Delta(t))+1\right)\cdot\brev\geq \core$.
	
	Combining everything, we have  $\left(2\exp(4\Delta(\vT))+\sqrt{2}\right)\cdot \srev+8\left(\exp(4\Delta(t))+1\right)\cdot\brev\geq \rev(D).$
\end{prevproof}

\section{Missing Proofs from Section~\ref{sec:MRF XOS}}\label{sec:appx_XOS_single_tail}

\begin{prevproof}{Lemma}{lemma:XOS_single_tail}

We use $r$ to denote $\srev$.
We remind the readers that we use $r$ to denote $\srev$, which is the revenue of the optimal posted price auction, in which we only allow the buyer to purchase at most one item.

We note that \single~ term in the XOS case,
is the same as the the \single~ term in the Unit-Demand case in Section~\ref{sec:MRF unit-demand},
if we consider that the buyer has valuation $V_i(\bm{t})$ for the $i$-th item. In section~\ref{sec:MRF unit-demand}, using Lemma~\ref{lemma:single_ronen} we proved that $\single \leq \rcopies $.
Therefore it is enough to prove that there exists a posted price mechanism that allows the buyer to only pick her favorite item such that its revenue is at least $\frac{\rcopies}{4\exp(12\Delta(\vT))}$. A corollary of Theorem~\ref{thm:MRF unit-demand} is that there exists a posted price Mechanism $M_p$ such that $\rcopies \leq 4\exp(12\Delta(\vT)\rev(M_p)$, which concludes the proof for the term \single.

Next, we consider the term \tail. We remind the readers that $2r$ is a cutoff we use to separate the $(\core)$ and the $(\tail)$ term. The reason we chose this specific value is that we can bound the sum of the marginal probability  that any item has value greater or equal than $2r$.

\begin{lemma}
\label{lemma:XOS_bound_prob}
We have that:
\begin{align*}
    \sum_{i \in [n]} \Pr_{t_i \sim D_i}[V_i(t_i) \geq 2r] \leq \exp(4\Delta(\bm{t}))
\end{align*}
\end{lemma}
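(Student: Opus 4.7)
The plan is to relate the sum $\sum_i \Pr[V_i(t_i) \geq 2r]$ to the probability that \emph{at least one} item has value at least $2r$, and then use the MRF correlation bound in Lemma~\ref{lemma:tv_MRF} to recover the union-bound-style inequality that would be trivial under independence. Throughout, let $r = \srev$ and $E_i := \{t_i \in T_i : V_i(t_i) \geq 2r\}$. Crucially, $E_i$ depends only on the coordinate $t_i$, because $V_i(\cdot)$ is a function of $t_i$ alone (by the abuse of notation introduced in Section~\ref{sec:duality for XOS}).

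First, I would establish an upper bound on $\Pr[\cup_i E_i]$. Consider the posted price mechanism that offers every item at the same price $2r$ and lets the buyer purchase at most one item. Whenever $\max_i V_i(t_i) \geq 2r$ the buyer obtains non-negative utility from some item and hence pays $2r$, so the mechanism's expected revenue equals $2r \cdot \Pr[\cup_i E_i]$. By the definition of $\srev$, this revenue is at most $r$, which gives $\Pr[\cup_i E_i] \leq 1/2$.

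Next, I would lower bound the same probability in terms of $\sum_i \Pr[E_i]$ via the standard disjoint decomposition
\[
\Pr[\cup_i E_i] \;=\; \sum_{i \in [n]} \Pr\!\left[E_i \cap \bigcap_{j<i} \bar{E}_j\right].
\]
Because $E_i \subseteq T_i$ while $\bigcap_{j<i} \bar{E}_j \subseteq T_{-i}$, Lemma~\ref{lemma:tv_MRF} applies and yields
$\Pr[E_i \cap \bigcap_{j<i} \bar{E}_j] \geq \exp(-4\Delta(\vT)) \cdot \Pr[E_i] \cdot \Pr[\bigcap_{j<i} \bar{E}_j]$. The containment $\bigcap_{j<i} \bar{E}_j \supseteq \bigcap_{j\in[n]} \bar{E}_j$ together with the upper bound from the previous step gives $\Pr[\bigcap_{j<i} \bar{E}_j] \geq 1 - \Pr[\cup_i E_i] \geq 1/2$, recycling the first estimate.

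Combining the two bounds yields $\tfrac{1}{2} \geq \Pr[\cup_i E_i] \geq \tfrac{\exp(-4\Delta(\vT))}{2}\sum_i \Pr[E_i]$, which rearranges to the stated inequality $\sum_i \Pr[V_i(t_i) \geq 2r] \leq \exp(4\Delta(\vT))$. I do not anticipate a substantial obstacle: the argument is exactly the independent-items proof with a single invocation of Lemma~\ref{lemma:tv_MRF} at the right place to replace a joint probability by a product of marginals. The only subtlety to be careful about is that the events being decomposed really are supported on disjoint coordinates, so that Lemma~\ref{lemma:tv_MRF} is applicable; this is immediate from the single-coordinate dependence of each $V_j$.
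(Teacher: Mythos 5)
Your proof is correct and takes essentially the same route as the paper: both arguments bound $\Pr[\cup_i E_i]\leq 1/2$ via the posted-price-$2r$ mechanism, decompose the union into disjoint events, apply Lemma~\ref{lemma:tv_MRF} to factor each term into marginals up to an $\exp(-4\Delta)$ loss, and relax the conditioning event to the full complement $\cap_j\bar{E}_j$ to reuse the $1/2$ bound. The only cosmetic difference is the choice of disjointification — you use the ``first index over threshold'' events $E_i\cap\bigcap_{j<i}\bar E_j$ (which actually gives an exact identity for $\Pr[\cup_i E_i]$), whereas the paper uses the ``unique index over threshold'' events $E_i\cap\bigcap_{j\ne i}\bar E_j$ (only a lower bound); both choices yield the same final estimate through an identical invocation of Lemma~\ref{lemma:tv_MRF}.
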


\begin{proof}
We can lower bound $\Pr_{\bm{t} \sim D}[\exists i : {V_i(t_i)} \geq 2r]$ as the sum of the following disjoint events:

\begin{align*}
    &\Pr_{\bm{t} \sim D}[\exists i : {V_i(t_i)} \geq 2r]  \\
    \geq & \sum_{i \in [n]} \Pr_{\bm{t} \sim D}[{V_i(t_i)} \geq 2r \land \max_{j \neq i}  \{{V_j(t_j)}\} < 2r] 
\end{align*}

Using Lemma~\ref{lemma:tv_MRF} with sets $\EE = \{ t_i \in T_i : V_i(t_i) \geq 2r\}$ and $\EE' = \{ t_{-i} \in T_{-i} : \max_{j \neq i} \{{V_j(t_j)}\} < 2r\}$, we have that:

\begin{align*}
    & \sum_{i \in [n]} \Pr_{\bm{t} \sim D}[{V_i(t_i)} \geq 2r \land \max_{j \neq i}  \{{V_j(t_j)}\} < 2r]   \\
    \geq & \sum_{i \in [n]} \Pr_{t_i \sim D_i}[{V_i(t_i)} \geq 2r] \exp(-4\Delta(\bm{t}))  \Pr_{t \sim D}[\max_{j \neq i}  \{{V_j(t_j)}\} < 2r]\\
    \geq & \exp(-4\Delta(\bm{t}))  \Pr_{t \sim D}[ \max_{j}  \{{V_j(t_j)}\} < 2r] \sum_{i \in [n]} \Pr_{t_i \sim D_i}[{V_i(t_i)} \geq 2r]
\end{align*}

Note that $\Pr_{\bm{t} \sim D}[\exists i : {V_i(t_i)} \geq 2r] \leq \frac{1}{2}$.
This is true because if we set the price of every item at $2r$,
then if any item is bought with probability greater than $\frac{1}{2}$,
we have revenue greater than $r$,
a contradiction.
Moreover $\Pr_{\bm{t} \sim D}[\max_{j} \{{V_j(t_j)}\} < 2r] = 1 - \Pr_{\bm{t} \sim D}[\exists i: {V_i(t_i)} \geq 2r] \geq \frac{1}{2}$.
By these observations,
we can conclude that:

\begin{align*}
\frac{1}{2}\geq &\Pr_{\bm{t} \sim D}[\exists i : {V_i(t_i)} \geq 2r]  \\
    \geq & \exp(-4\Delta(\bm{t}))  \Pr_{t \sim D}[ \max_{j}\{{V_j(t_j)}\} < 2r] \sum_{i \in [n]} \Pr_{t_i \sim D_i}[{V_i(t_i)} \geq 2r] \\
    \geq & \exp(-4\Delta(\bm{t}))  \frac{1}{2} \sum_{i \in [n]} \Pr_{t_i \sim D_i}[{V_i(t_i)} \geq 2r], \end{align*}
which implies that    $\sum_{i \in [n]} \Pr_{t_i \sim D_i}[V_i(t_i) \geq 2r] \leq \exp(4\Delta(\bm{t}))$.

\end{proof}

Now we are going to bound the term $\tail$.

For any fixed $t_i \in T_i$,
using Lemma~\ref{lemma:tv_MRF} on sets $\EE = \{ t_i \}$ and $\EE' = \{\vT_{-i} \in \bm{T}_{-i} : \exists j \neq i,  V_j(t_j') \geq V_i(t_i')\}$ we have that:

\begin{align*}
    \tail = &\sum_{i \in [n]} \sum_{\substack{t_i \in T_i \\ {V_i(t_i)} \geq 2r}} f(t_{i}) \cdot  V_i(t_i) \Pr_{\bm{t}'\sim D}\left[\bm{t}'\notin R_i \mid t_i'=t_i \right] \\
    \leq & \sum_{i \in [n]} \sum_{\substack{t_i \in T_i \\ {V_i(t_i)} \geq 2r}} f(t_{i}) \cdot  V_i(t_i) 
    \Pr_{\vT'\sim D}[ \exists j \neq i : V_j(t_j') \geq V_i(t_i) \mid t_i'= t_i  ] \\
    \leq & \sum_{i \in [n]} \sum_{\substack{t_i \in T_i \\ {V_i(t_i)} \geq 2r}} f(t_{i}) \cdot  V_i(t_i) \exp(4\Delta(\bm{t}))
    \Pr_{\vT'\sim D}[ \exists j \neq i : V_j(t_j') \geq V_i(t_i)  ] 
\end{align*}

We consider the mechanism that posts price $V_i(t_i)$ at each item except item $i$,
and allows the buyer to get her favorite item.
The expected revenue of this mechanisms is exactly $V_i(t_i) \Pr[ \exists j \neq i : V_j(t_j) \geq V_i(t_i) ]$,
which is at most $r$.
This implies that:

\begin{align*}
& \exp(4\Delta(\bm{t})) \sum_{i \in [n]} \sum_{\substack{t_i \in T_i: \\ V_i(t_i) \geq 2r}} f(t_i) V_i(t_i) \Pr_{\vT_{-i} \sim D_{-i}}\left[ \exists j \neq i : V_j(t_j) \geq V_i(t_i) \right]\\
 \leq &
 \exp(4\Delta(\bm{t}))\sum_{i \in [n]} \sum_{\substack{t_i \in T_i: \\ V_i(t_i) \geq 2r}} f_i(t_i) r \\
= &
 \exp(4\Delta(\bm{t})) r \sum_{i \in [n]} \Pr_{t_i\sim D_i}[V_i(t_i) \geq 2r]   \\
 \leq &
 \exp(4\Delta(\bm{t})) \exp(4\Delta(\bm{t})) r \\
 = &
 \exp(8\Delta(\bm{t}))\cdot r
\end{align*}

Where the last inequality follows from Lemma~\ref{lemma:XOS_bound_prob}.

\end{prevproof}

\begin{prevproof}{Lemma}{lemma:self_bounding_MRF}
Define $g_i(\vT_{-i}) = v(\vT_{-i},C(\vT_{-i}))$, where $C(\vT_{-i})=\{ j : V_j(\bm{t_{-i}}) < 2 \srev\}$. When $i\notin C(\vT)$, $g(\vT)-g_i(\vT_{-i})= 0$. When $i\in C(\vT)$, $C(\vT)=C(\vT_{-i})\cup \{i\}$. Additionally, $g(\vT)-g_i(\vT_{-i})\geq 0$ and $g(\vT)-g_i(\vT_{-i})\leq V_i(\vT)\leq 2\srev$.  Since $v(\cdot,\cdot)$ is a XOS function, there exists non-negative numbers $\{x_\ell\}_{\ell \in C(\vT)}$ such that $g(\vT) = \sum_{\ell\in C(\vT)} x_\ell$ and $g_i(\vT_{-i}) \geq \sum_{\ell\in C(\vT_{-i})} x_\ell$. Therefore, $\sum_{i\in[n]} \left(g(\vT)-g_i(\vT_{-i})\right)\leq \{x_\ell\}_{\ell \in C(\vT)}= g(\vT)$.

Combining Lemma~\ref{lem:Poincare on MRF},~\ref{lemma:sum_bound}, and the fact that $g(\cdot)$ is $2\srev$-self-bounding, we derive the stated upper bound of the variance of $g(\vT)$.
%Let $\bm{t}\in \bm{T}$. Let $j^*= arg\max_{j \in [k]} \sum_{i \in C(\bm{t})} t^{(j)}_i$. This implies that $v(\vT,C(\vT))=\sum_{i \in C(\bm{t})} t^{(j^*)}_i$. We consider the functions $b_i(\bm{t})=t^{(j^*)}_i  \mathds{1}[0 \leq V_i(\vT) \leq 2\bm{SRev}]$. We note that $t^{(j^*)}_i \leq V_i(t_i)$. Therefore we have that:

%\begin{align*}
 %   0 \leq b_i(\bm{t}) \leq 2 (\srev) \\
  %  \sum_{i \in [n]} b_i(\bm{t}) = v(\vT,C(\vT))
%\end{align*}

%We now consider $\bm{t},\bm{t'} \in \bm{T}$.
%then their difference is at most when %t_{i_1}'=t_{i_2}'=\ldots=t_{i_k}'=0$.
%We note that $v(\bm{t'},C(\bm{t'}))\geq \sum_{i \in C(\bm{t}')} t'^{(j^*)}_i \geq \sum_{i \in C(\bm{t}') :t_i'=t_i} t'^{(j^*)}_i$.
%Therefore we have that:

%\begin{align*}
%v(\bm{t},C(\bm{t}))-v(\bm{t'},C(\bm{t}')) \leq &\sum_{i: t_i \neq t_i'} t^{(j^*)}_i \mathds{1}[0 \leq V_i(t_i) \leq 2\bm{SRev}] = \sum_{i: t_i \neq t_i'} b_i(\bm{t})
%\end{align*}

%We proved that all three properties of the self-bounding function are satisfied.
%This concludes the proof.
\end{prevproof}
\section{Missing Details of the Revenue Benchmark for a XOS Buyer}\label{sec:appx_XOS_duality}
Similar to \cite{CaiZ17},
we are going to apply the duality framework on a relaxed version of the valuation function.

\begin{definition}[Relaxed Valuation (Definition~5 from \cite{CaiZ16_arxiv})]
We define the relaxed subadditive valuation $v^r(\bm{t},S)$ the following way:

\begin{align*}
    v^r(\bm{t},S) =
    \begin{cases}
    v(\bm{t},S \backslash  \{i\} ) + V_{i}({t}_i) ~~~&\text{if $\vT \in R_i$ and $i \in S$} \\
    v(\bm{t},S)& \text{Otherwise}
    \end{cases}
\end{align*}

\end{definition}

The reason that we consider the relaxed valuation function
is because that $v^r$ is ``additive'' across the favorite item and the rest of the items, and this ``additivity'' plays a crucial role in obtaining an analyzable dual. Due to the non-monotonicity of the optimal revenue in multi-item auctions, it is not clear that the optimal revenue w.r.t. the relaxed valuation is higher than the original optimal revenue. The following Lemma shows that the optimal revenue under $v^r$ is not too much smaller than the original optimal revenue, so it suffices to apply the Cai-Devanur-Weinberg duality~\cite{CaiDW16,CaiZ17} on the relaxed valuation $v^r$.

\begin{lemma} [Lemma~2 from \cite{CaiZ16_arxiv}]\label{lem:relaxed_v}
We define by $\sigma_S(\bm{t})$ the probability that the buyer with type $\bm{t}$ receives exactly the set $S$ in Mechanism $M$.
Then:

\begin{align*}
    \rev(M,v,D) \leq 2 \rev(v^r, D) + 2 \sum_{\bm{t} \in \bm{T}} \sum_{S \subseteq 2^{[n]}} f(\bm{t}) \sigma_S(\bm{t}) \left(v^r(\bm{t},S)- v(\bm{t},S)\right),
\end{align*}
where $\rev(v^r,D)$ is the optimal revenue under the relaxed valuation $v^r$
\end{lemma}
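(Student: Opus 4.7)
The plan is to construct an IC and IR mechanism for the relaxed valuation $v^r$ whose expected revenue is at least $\tfrac{1}{2}\rev(M,v,D)$ minus the correction term $\sum_{\bm{t},S} f(\bm{t})\sigma_S(\bm{t})(v^r(\bm{t},S)-v(\bm{t},S))$; multiplying by $2$ then yields the lemma.

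First I would verify that $v^r(\bm{t},S)\ge v(\bm{t},S)$ pointwise. When $\bm{t}\in R_i$ and $i\in S$, subadditivity (implied by XOS) gives $v(\bm{t},S)\le v(\bm{t},S\setminus\{i\})+v(\bm{t},\{i\})=v^r(\bm{t},S)$; in the remaining cases $v^r$ and $v$ coincide. This makes the correction term nonnegative, and the IR constraint of $M$ under $v$ rewrites, with $\Delta(\bm{t}):=\sum_S\sigma_S(\bm{t})(v^r(\bm{t},S)-v(\bm{t},S))\ge 0$, as $p(\bm{t})+\Delta(\bm{t})\le \sum_S\sigma_S(\bm{t})\,v^r(\bm{t},S)$. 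Thus the allocation rule $\sigma$ paired with boosted payments $p(\bm{t})+\Delta(\bm{t})$ remains IR for $v^r$, although IC can fail.

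To recover IC while losing only a factor of $2$, I would exploit the key structural property of the relaxation: $v^r$ is \emph{additively separable} between the reported favorite item and the rest, i.e., for $\bm{t}\in R_i$ one has $v^r(\bm{t},S)=V_i(t_i)\mathds{1}[i\in S]+v(\bm{t},S\setminus\{i\})$. Using this factorization I would split the original mechanism into two sub-mechanisms: $M^{\mathrm{fav}}$, which sells only the reported favorite item with the marginal probability it receives under $\sigma$, and $M^{\mathrm{rest}}$, which allocates the non-favorite subset $S\setminus\{i(\bm{t})\}$ with probability $\sigma_S(\bm{t})$. The additive split of $v^r$ lets me reassign payments---via standard Myerson-style payment identities on each piece---so that each sub-mechanism is independently IC and IR under $v^r$, and so that their combined per-type revenues account for exactly $p(\bm{t})+\Delta(\bm{t})$.

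Finally, a uniform random mixture of $M^{\mathrm{fav}}$ and $M^{\mathrm{rest}}$ is IC/IR under $v^r$ with expected revenue $\tfrac{1}{2}\bigl(\rev(M,v,D)+\sum_{\bm{t}}f(\bm{t})\Delta(\bm{t})\bigr)$, and rearranging yields the stated inequality. The main obstacle will be verifying IC of the two sub-mechanisms under $v^r$; this is precisely where the additive separability of $v^r$ is essential, since $v$ itself does not factor across the favorite/rest split---which is the very reason the relaxation was introduced. The factor $2$ in the bound tracks the uniform-mixing step, and the correction term tracks the conversion from IR-under-$v$ to IR-under-$v^r$.
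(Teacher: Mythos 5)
Note first that the paper cites this statement from Cai--Zhao without reproducing a proof, so there is no in-paper argument to compare your attempt against. Your preliminary observations are correct: subadditivity of XOS valuations gives $v^r\geq v$ pointwise, so the correction term $\Delta(\bm{t})\geq 0$, and the allocation $\sigma$ with boosted payments $p(\bm{t})+\Delta(\bm{t})$ remains IR under $v^r$ since $\sum_S\sigma_S(\bm{t})v^r(\bm{t},S)-p(\bm{t})-\Delta(\bm{t})=\sum_S\sigma_S(\bm{t})v(\bm{t},S)-p(\bm{t})\geq 0$. You also correctly identify additive separability of $v^r$ across the favorite item and the rest as the structural feature the relaxation was designed to provide.

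The gap is the assertion that $M^{\mathrm{fav}}$ and $M^{\mathrm{rest}}$ can each ``independently'' be made IC and IR under $v^r$ via ``standard Myerson-style payment identities,'' with payments summing to $p(\bm{t})+\Delta(\bm{t})$. Neither sub-mechanism is single-dimensional. In $M^{\mathrm{fav}}$, the report $\bm{t}'$ determines both which item $i(\bm{t}')$ is on offer and its winning probability; a buyer of type $\bm{t}\in R_i$ who misreports $\bm{t}'\in R_j$ with $j\neq i$ is offered a lottery over a different item $j$, which she values at $V_j(t_j)$. That is a genuine multidimensional screening problem with no one-dimensional order on types, so a Myerson payment identity has nothing to bite on; the same applies to $M^{\mathrm{rest}}$. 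Moreover, IC of $M$ under $v$ constrains the combined allocation against the single payment $p(\bm{t})$ and does not factor into separate IC conditions on the two pieces, nor does it dictate how $\Delta(\bm{t})$ should be split between them. Even if some choice of payments made each piece IC, nothing forces them to sum to $p(\bm{t})+\Delta(\bm{t})$, which your accounting requires. Finally, the lemma only needs $\rev(v^r,D)\geq\tfrac{1}{2}\rev(M,v,D)-\sum_{\bm{t}}f(\bm{t})\Delta(\bm{t})$, but you target the strictly stronger $\rev(v^r,D)\geq\tfrac{1}{2}\rev(M,v,D)+\tfrac{1}{2}\sum_{\bm{t}}f(\bm{t})\Delta(\bm{t})$, thereby giving away the entire $\sum_{\bm{t}}f(\bm{t})\Delta(\bm{t})$ of slack --- which is exactly the budget a correct proof is likely to spend absorbing the IC losses you gloss over.
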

%\argyrisnote{(Maybe I could write more about the flow? For example how much flow does each node get from the source?)}
Now we are going to show how to upper bound the $\rev(v^r,D)$ with terms similar to the ones we studied in the cases where the valuation function was additive.

\begin{lemma}[Theorem~1 and Lemma~33 from \cite{CaiZ16_arxiv}]
\label{lem:bound_vir_xos}
For a Mechanism $M=(\sigma_S,p)$ and a flow $\bm{\lambda}: T\times T \rightarrow \mathbb{R}$ that satisfied the partial specification (See Figure~3 from \cite{CaiZ16_arxiv}),
we have that:

\begin{align*}
    \rev(M,v^r,D) \leq \sum_{\bm{t} \in T} f(\bm{t}) \sum_{S \subseteq 2^{[n]}} \sigma_S(\bm{t}) \Phi^r(\bm{t},S)
\end{align*}

Where $\Phi^r(\cdot, \cdot): T \times 2^{[n]} \rightarrow \mathbb{R}$ is the virtual valuation function, defined as:

\begin{align*}
    \Phi^r(\bm{t},S) = \begin{cases}
    v(\vT,S \backslash \{i\}) + V_i(t_i)- \frac{1}{f(\bm{t})}\sum_{\bm{t}' \in \bm{T}}\lambda (\bm{t}',\bm{t}) \left( V_i(t_i')-V_i(t_i) \right)~~~~~&\text{if $\vT \in R_i$ and $i\in S$} \\
    v(\bm{t},S) &\text{Otherwise}
    \end{cases}
\end{align*}

For $\bm{t} \in R_i$,
we set $\Psi^r_i (\bm{t}) = V_i(t_i)- \frac{1}{f(\bm{t})}\sum_{\bm{t}' \in \bm{T}}\lambda (\bm{t}',\bm{t}) \left( V_i(t_i')-V_i(t_i) \right) $.
So we have that:

\begin{align*}
    \Phi^r(\bm{t},S) \leq \begin{cases}
    v_i(t_i,S \backslash \{i\}) + \Psi^r_i(\bm{t})~~~~~&\text{if $\vT \in R_i$ and $i\in S$} \\
    v(\bm{t},S) &\text{Otherwise}
    \end{cases}
\end{align*}

\end{lemma}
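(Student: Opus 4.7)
The plan is to derive the bound by applying the Cai--Devanur--Weinberg partial Lagrangian duality directly to the relaxed valuation $v^r$. First I would write the revenue-maximization LP whose variables are the interim allocation probabilities $\sigma_S(\vT)$ and payments $p(\vT)$, with constraints being BIC (for every pair $\vT,\vT'\in T$, the buyer of type $\vT$ prefers reporting truthfully to reporting $\vT'$) and IR (truthful reporting yields nonnegative utility). The objective is $\sum_{\vT} f(\vT)\, p(\vT)$. The dual variables of interest are the nonnegative BIC multipliers $\lambda(\vT',\vT)$, which we interpret as a flow of mass from $\vT'$ to $\vT$.

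Next I would take the partial Lagrangian with respect to only the BIC and IR constraints, leaving the feasibility constraints on $\sigma_S$ intact. Collecting terms, for any valid flow $\lambda$ satisfying the standard partial specification of \cite{CaiZ16_arxiv} (the flow is nonnegative, conserves mass at every non-source/non-sink type, and its net contribution to the payment terms is nonpositive), all payment variables cancel and the revenue is upper bounded by the expected virtual welfare
\begin{equation*}
\rev(M,v^r,D) \;\leq\; \sum_{\vT\in T} f(\vT) \sum_{S\subseteq 2^{[n]}} \sigma_S(\vT)\, \Phi^r(\vT,S),
\end{equation*}
where $\Phi^r(\vT,S) = v^r(\vT,S) - \frac{1}{f(\vT)}\sum_{\vT'} \lambda(\vT',\vT)\bigl(v^r(\vT',S) - v^r(\vT,S)\bigr)$. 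This is exactly the generic step of the duality framework and does not depend on any structural property of $v^r$.

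Now I would exploit the special structure of $v^r$. By definition, whenever $\vT\in R_i$ and $i\in S$, we have $v^r(\vT,S) = v(\vT,S\setminus\{i\}) + V_i(t_i)$, and the dependence on $t_i$ is completely isolated in the term $V_i(t_i)$. Following \cite{CaiZ16_arxiv}, I would restrict attention to flows that only route mass between types that share the same favorite item and the same coordinates outside the favorite item, differing only in the favorite coordinate itself; this is precisely the partial specification referenced in the statement. Under this restriction, the term $v(\vT,S\setminus\{i\})$ is invariant under the flow and drops out of the Myerson-style correction, leaving $\Phi^r(\vT,S) = v(\vT,S\setminus\{i\}) + \Psi^r_i(\vT)$ with $\Psi^r_i(\vT) = V_i(t_i) - \frac{1}{f(\vT)}\sum_{\vT'}\lambda(\vT',\vT)(V_i(t_i')-V_i(t_i))$, as claimed. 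In the remaining cases ($i \notin S$ or no favorite item in $S$) the flow contribution vanishes and $\Phi^r(\vT,S) = v(\vT,S)$, again as claimed.

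The main obstacle is verifying that the partial specification really does make all the payment cross-terms cancel and that the restricted flows are feasible; however, this calculation is exactly Theorem~1 and Lemma~33 of \cite{CaiZ16_arxiv}, so I would simply invoke those results rather than repeating the bookkeeping. The only conceptual novelty relative to the independent-items analysis is that the flow is now allowed to depend on the full type $\vT'$ (not merely on $t'_i$), which is harmless for the cancellation argument and is what later enables our analysis in Section~\ref{sec:MRF XOS} to bound $\Psi^r_i(\vT)$ by the ironed conditional virtual value $\phi(V_i(t_i)\mid \vT_{-i})$ via Lemma~\ref{lemma:tv_MRF}.
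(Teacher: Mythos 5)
Your sketch is correct and takes the same route as the paper: the paper does not reprove this lemma but simply cites Theorem~1 and Lemma~33 of \cite{CaiZ16_arxiv}, exactly as you do after outlining why the cancellation works. Your explanation of why the partial specification makes the $v(\vT,S\setminus\{i\})$ term invariant across the flow (since $\lambda(\vT',\vT)>0$ forces $\vT'_{-i}=\vT_{-i}$ and $v(\cdot,S\setminus\{i\})$ does not read coordinate $i$) correctly captures the structural reason the virtual value collapses to $v(\vT,S\setminus\{i\})+\Psi^r_i(\vT)$.
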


The following lemma provides a way to set a flow that satisfies the partial specifications (See Figure~3 from \cite{CaiZ16_arxiv}).

\begin{lemma} [Adapted Claim~1 from \cite{CaiZ16_arxiv}]
\label{lem:flow_xos}
There exists a flow that satisfies the partial specifications (See Figure~3 by \cite{CaiZ16_arxiv}) such that:

\begin{align*}
    \Psi^r_i(\bm{t}) \leq \phi_i(V_i(t_i) \mid \bm{t}_{-i})
\end{align*}

Where by $\phi_i(V_i(\bm{t}) \mid \bm{t}_{-i})$ we denote the ironed virtual value of  $V_i(t_i)$,
when $t_i$ is sampled from $D_{i \mid \bm{t}_{-i}}$.
\end{lemma}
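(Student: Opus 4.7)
The plan is to construct the flow by a natural decomposition: separately within each ``slice'' obtained by fixing item $i$ and fixing $\bm{t}_{-i}$, mimic the classical Myerson flow for the conditional distribution $D_{i\mid \bm{t}_{-i}}$. I would first recall that the Cai--Devanur--Weinberg flow is only required to be nontrivial within each region $R_i$ (the partial specification from Figure~3 of \cite{CaiZ16_arxiv} routes mass through the source/sink so as to leave the Lagrangian with the stated $\Phi^r$-form). In the independent case of \cite{CaiZ16_arxiv}, the flow within $R_i$ is the single-dimensional Myerson flow applied to the marginal $D_i$, which yields $\Psi^r_i(\bm{t})=\phi_i(V_i(t_i))$. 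The only change we need here is that, since the items are dependent, we should apply Myerson's construction to the conditional distribution $D_{i\mid \bm{t}_{-i}}$, rather than to $D_i$.

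Concretely, I would set $\lambda(\bm{t}',\bm{t})=0$ whenever $\bm{t}'$ and $\bm{t}$ do not share the same $\bm{t}_{-i}$ or do not both lie in the same region $R_i$. Within a slice $\{(t_i,\bm{t}_{-i})\in R_i\}$ (which, by definition of $R_i$, corresponds to the sub-interval $V_i(t_i)\geq \max_{j\neq i} V_j(t_j)$), I would order the types by $V_i$ and route flow from lower-$V_i$ types to adjacent higher-$V_i$ types following Myerson's construction for the conditional CDF $F_i(\cdot\mid \bm{t}_{-i})$. For regular $D_{i\mid \bm{t}_{-i}}$ this gives, at a type $(t_i,\bm{t}_{-i})\in R_i$ with next-larger-value $t_i'$,
\[
\sum_{\bm{t}'}\lambda(\bm{t}',\bm{t})\bigl(V_i(t_i')-V_i(t_i)\bigr)=\bigl(V_i(t_i')-V_i(t_i)\bigr)\cdot \bigl(1-F_i(t_i\mid \bm{t}_{-i})\bigr),
\]
so that $\Psi^r_i(\bm{t})$ reduces (using $f(\bm{t})=f_i(t_i\mid \bm{t}_{-i})\,f_{-i}(\bm{t}_{-i})$ and flow only among shared-$\bm{t}_{-i}$ types, which cancels $f_{-i}(\bm{t}_{-i})$) exactly to the expression for $\phi_i(V_i(t_i)\mid \bm{t}_{-i})$ appearing in Definition~\ref{def:ironed virtual values}. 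For irregular $D_{i\mid \bm{t}_{-i}}$, inside each ironed interval on $[\max_{j\neq i} V_j(t_j),\max \mathrm{supp}(D_{i\mid \bm{t}_{-i}})]$ I would spread the flow uniformly (weighted by $f_i(\cdot \mid \bm{t}_{-i})$) so the resulting $\Psi^r_i$ equals the common ironed value on that interval; this is the standard Myerson ironing implementation of the flow, and it yields $\Psi^r_i(\bm{t})\leq \phi_i(V_i(t_i)\mid \bm{t}_{-i})$ with equality at non-ironed types.

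After this construction, I would verify the partial specification of Figure~3 from \cite{CaiZ16_arxiv} by an almost mechanical check: for each $\bm{t}\in R_i$, the total in-flow minus out-flow is a function of the Myerson revenue curve for $D_{i\mid \bm{t}_{-i}}$ and matches what the specification demands (the classical argument, now applied pointwise in $\bm{t}_{-i}$), and for types outside $R_i$ the flow is zero so conservation holds trivially. Because the construction is completely separable across the ``slices'' indexed by $(i,\bm{t}_{-i})$, conservation in the dependent case follows directly from conservation in the single-dimensional Myerson case, which is already established.

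The main obstacle I anticipate is handling the ironing step cleanly: one has to be careful that the ironing is done with respect to $D_{i\mid \bm{t}_{-i}}$ restricted to the interval $[\max_{j\neq i} V_j(t_j),\max \mathrm{supp}(D_{i\mid \bm{t}_{-i}})]$ (exactly as Definition~\ref{def:ironed virtual values} dictates), not with respect to the unrestricted conditional distribution, because only types in $R_i$ receive flow. Once ironing is set up on this restricted interval, the inequality $\Psi^r_i(\bm{t})\leq \phi_i(V_i(t_i)\mid \bm{t}_{-i})$ follows from the concavity-of-ironed-revenue argument in exactly the way Claim~1 of \cite{CaiZ16_arxiv} does, and dependence across items never enters the argument beyond the substitution of $D_i$ by $D_{i\mid \bm{t}_{-i}}$.
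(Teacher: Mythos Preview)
Your proposal is correct and takes essentially the same approach as the paper: both construct the flow slice-by-slice (fixing $i$ and $\bm{t}_{-i}$), apply Myerson's single-dimensional flow to the conditional distribution $D_{i\mid \bm{t}_{-i}}$ within $R_i$, and then appeal to the standard ironing procedure from \cite{CaiDW16}. One small slip: in the CDW framework the flow $\lambda(\bm{t}',\bm{t})$ is routed from \emph{higher}-$V_i$ types to adjacent \emph{lower}-$V_i$ types (your displayed formula is already consistent with this correct direction, so only the verbal description needs fixing).
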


\begin{proof}
First we are going to describe how to set a flow that satisfies the partial specification requirements such that for $\vT \in R_i$ it holds that $\Psi_i^r(\vT) \leq \phi_i^N(V_i(t_i) \mid \vT_{-i})$,
where by $\phi_i^N(V_i(t_i) \mid \vT_{-i})$ we denote the non-ironed virtual value of $V_i(t_i)$,
when $t_i$ is sampled from $D_{i \mid \bm{t}_{-i}}$.
Then the way to set a flow that satisfies the partial specifications such that for $\vT \in R_i$ it holds that $\Psi_i^r(\vT) \leq \phi_i(V_i(t_i) \mid \vT_{-i})$ is similar to the ironing procedure of Section~4 by \cite{CaiDW16}.

For any two types $\vT,\vT'$,
$\lambda(\vT',\vT)> 0$ only if there exists $i \in[n]$ such that $\vT,\vT' \in R_i$,
$\vT_{-i}=\vT_{-i}$ and $t_i= \argmax \{\hat{t}_i \in T_i: V_i(t_i') > V_i(\hat{t}_i) \}$.
%\argyrisnote{(I think we should move the following sentence)}It is clear that this flow satisfies the partial specifications.
Let $\vT' \in R_i$,
and  $V = \max \{V_i(t_i): V_i(t_i')>V_i(t_i) \}$,
we define $D(\vT') =\{\vT\in T : V_i(t_i) = V \land t_{-i} = t_{-i}'\}$.
Note that $\lambda(\vT',\vT)>0$ only if $\vT \in D(\vT')$.
For any $\vT' \in R_i$ and $\vT \in D(\vT')\cap R_i$,
we set $\lambda(\vT',\vT)$ to be equal to $\frac{f(\bm{t})}{\Pr_{t_i' \sim D_i}[V_i(t_i')=V_i(t_i) \land \vT_{-i}'= \bm{t}_{-i}]}$ fraction of the total in flow at $\vT'$.
We note that for any type $\vT' \in T$, 
the sum of fractions of flows that it pushes to other types is at most one:
\begin{align*}
\sum_{\vT \in D(\vT')} \frac{f(\bm{t})}{\Pr_{t_i' \sim D_i}[V_i(t_i')=V_i(t_i) \land \vT_{-i}'= \bm{t}_{-i}]} = 1
\end{align*}

Therefore if $\sum_{\vT\in T}\lambda(\vT',\vT) < 1$,
we can dump any remaining flow in the sink.
It is clear that this flow satisfies the partial specifications.

Moreover for any $\vT\in R_i$,
the total in flow of $\vT$ is:

\begin{align*}
    \sum_{\vT' \in T} \lambda(\bm{t}',\bm{t})=\frac{f(\bm{t})}{\Pr_{t' \sim D}[V_i(t_i')=V_i(t_i) \land \vT_{-i}'= \bm{t}_{-i}]} \Pr_{t' \sim D}[V_i(t_i')>V_i(t_i) \land \bm{t}_{-i}' = \bm{t}_{-i}]
\end{align*}

Therefore we have that:

\begin{align*}
    \Psi^r_i(\bm{t}) = & V_i(\bm{t}) - \frac{1}{f(\bm{t})}\frac{f(\bm{t})}{\Pr_{t' \sim D}[\{V_i(t_i')=V_i(t_i) \land \bm{t}_{-i}' = \bm{t}_{-i}]} \Pr_{t' \sim D}[{V_i(t_i')>V_i(t_i)} \land \bm{t}_{-i}' = \bm{t}_{-i}] \left( {V_i(t_i')-V_i(t_i)} \right) \\
    = & V_i(\bm{t}) - \frac{\Pr_{t' \sim D}[{V_i(t_i')>V_i(t_i)} \land \bm{t}_{-i}' = \bm{t}_{-i}]}{\Pr_{t' \sim D}[{V_i(t_i')=V_i(t_i)} \land \bm{t}_{-i}' = \bm{t}_{-i}]}  \left( {V_i(t_i')-V_i(t_i)} \right) \\
    = & V_i(\bm{t}) - \frac{\Pr_{t' \sim D}[{V_i(t_i')>V_i(t_i)} \mid \bm{t}_{-i}' = \bm{t}_{-i}]}{\Pr_{t' \sim D}[{V_i(t_i')=V_i(t_i)} \mid \bm{t}_{-i}' = \bm{t}_{-i}]}  \left( {V_i(t_i')-V_i(t_i)} \right)
\end{align*}

Therefore, $\Psi_i^r(\bm{t})$ is equal to the non-ironed virtual valuation of {$V_i(t_i)$},
when $t_i$ is sampled from $D_{i\mid t_{-i}}$.

\end{proof}

Combining Lemma~\ref{lem:relaxed_v}, Lemma~\ref{lem:bound_vir_xos} and Lemma~\ref{lem:flow_xos} we get the following lemma.

\begin{lemma}[Adapted version of Theorem~2 from \cite{CaiZ17}]\label{lemma:benchmark_XOS_first_step}
%\begin{align*}
%    Rev(M,v,T) \leq 4 (\nf) + 2(\single)
%\end{align*}

	\begin{align*}
\rev(M,v, D) \leq &
2\sum_{\bm{t}\in T}f(\bm{t}) \sum_{i\in [n]} \pi_i(\bm{t})\phi(V_i(t_i) \mid \bm{t}_{-i})\mathds{1}[t \in R_i] \quad (\single)\\
&\qquad\qquad + 4\sum_{\bm{t}\in T}f(\bm{t}) \sum_{i\in [n]} v(\vT,[n] \backslash \{i\})\mathds{1}[\vT \in R_i] (\nf)
    \end{align*} \\

\end{lemma}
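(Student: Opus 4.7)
My plan is to follow the composition already set up by the three preceding lemmas in the appendix. First I would invoke Lemma~\ref{lem:relaxed_v} to move from the true valuation $v$ to the relaxed valuation $v^r$, getting
\[
\rev(M,v,D)\;\le\;2\,\rev(M,v^r,D)\;+\;2\sum_{\bm{t}\in T}\sum_{S}f(\bm{t})\,\sigma_S(\bm{t})\bigl(v^r(\bm{t},S)-v(\bm{t},S)\bigr).
\]
Then I would bound the two summands separately and show they fit into the claimed $(\single)+(\nf)$ form.

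For the relaxed revenue, apply Lemma~\ref{lem:bound_vir_xos} using the specific flow supplied by Lemma~\ref{lem:flow_xos}. That flow guarantees $\Psi_i^r(\bm{t})\le\phi_i(V_i(t_i)\mid \bm{t}_{-i})$, hence for $\bm{t}\in R_i$ we have $\Phi^r(\bm{t},S)\le v(\bm{t},S\setminus\{i\})+\phi_i(V_i(t_i)\mid \bm{t}_{-i})$ when $i\in S$, and $\Phi^r(\bm{t},S)=v(\bm{t},S)$ when $i\notin S$. Since $v$ is XOS and hence monotone, $v(\bm{t},S\setminus\{i\})\le v(\bm{t},[n]\setminus\{i\})$ in the first case, and $v(\bm{t},S)\le v(\bm{t},[n]\setminus\{i\})$ in the second (because $S\subseteq[n]\setminus\{i\}$). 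Summing over $S$ and using $\sum_{S\ni i}\sigma_S(\bm{t})=\pi_i(\bm{t})$ and $\sum_{S}\sigma_S(\bm{t})\le 1$ yields, for $\bm{t}\in R_i$,
\[
\sum_{S}\sigma_S(\bm{t})\,\Phi^r(\bm{t},S)\;\le\;\pi_i(\bm{t})\,\phi_i(V_i(t_i)\mid \bm{t}_{-i})\;+\;v(\bm{t},[n]\setminus\{i\}).
\]

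For the correction term, note that when $\bm{t}\in R_i$ and $i\in S$, monotonicity of $v$ (since $\{i\}\subseteq S$) gives $V_i(t_i)\le v(\bm{t},S)$, hence $v^r(\bm{t},S)-v(\bm{t},S)=v(\bm{t},S\setminus\{i\})+V_i(t_i)-v(\bm{t},S)\le v(\bm{t},S\setminus\{i\})\le v(\bm{t},[n]\setminus\{i\})$, while the difference is $0$ when $i\notin S$. Therefore the correction term is bounded by $2\sum_{\bm{t}\in T}f(\bm{t})\sum_{i}v(\bm{t},[n]\setminus\{i\})\mathds{1}[\bm{t}\in R_i]$.

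Adding the two bounds gives exactly $2(\single)+4(\nf)$ as stated; the factor $4$ arises from the $2\cdot 1$ contribution from the relaxed revenue plus the $2$ from the correction. The computation is essentially bookkeeping on top of the three earlier lemmas, so the only substantive step is choosing the right monotonicity/subadditivity inequalities so that $v(\bm{t},S\setminus\{i\})$ and $v(\bm{t},S)$ can both be absorbed into the single term $v(\bm{t},[n]\setminus\{i\})$; I expect no real obstacle, since the relaxed valuation was designed precisely to make this splitting clean.
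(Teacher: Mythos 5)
Your proof is correct and follows exactly the composition the paper intends: Lemma~\ref{lem:relaxed_v} to pass to $v^r$, Lemma~\ref{lem:bound_vir_xos} with the flow from Lemma~\ref{lem:flow_xos} to bound the relaxed revenue, and monotonicity of the XOS valuation to absorb both the $v(\vT,S\setminus\{i\})$ term and the correction $v^r-v$ into $v(\vT,[n]\setminus\{i\})$, which gives precisely the $2A+2B+2B=2(\single)+4(\nf)$ bookkeeping. The paper gives no explicit proof (it just states the lemma follows from combining the three preceding lemmas), so your write-up supplies exactly the missing details, and the one notational subtlety you gloss over (whether $\pi$ belongs to the original mechanism or the optimal $v^r$-mechanism) is inherited directly from the paper and is harmless because the downstream bound on $\single$ in Lemma~\ref{lemma:XOS_single_tail} holds for any IC mechanism.
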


We further decompose the $(\nf)$ term the following way:

%\begin{definition}[The  (\nf) is the same as in \cite{CaiZ17}]
%Let $r=\srev$ be the revenue of the optimal posted price mechanism that allows the buyer to get exactly one item.
%Let $C(\bm{t}) = \{ i : V_i(\bm{t}) < 2 r \}$.
%We decompose $(\nf)$ the following way:\end{definition}

We note that in the case where the buyer is XOS,
we chose $2\bm{SRev}$ as the value that separates the $(\core)$ and the $(\tail)$ term.
We sum up in the following lemma.

\begin{prevproof}{Lemma}{lemma:benchmark_XOS}
\begin{align*}
(\nf) \leq &  \sum_{\bm{t}\in T}f(\bm{t}) \sum_{i\in [n]} v(t,[n] / i)\mathds{1}[t \in R_i] \\
\leq  & \sum_{\bm{t}\in T}f(\bm{t}) \left( v(t,C(\bm{t})) + \sum_{i \in [n]} V_i(t_i) \mathds{1}[V_i(t_i) \geq 2r \land t \not\in R_i] \right) \\
 \leq  & \sum_{\bm{t}\in T}f(\bm{t}) \cdot v(t,C(\bm{t}))
 \\
 & + \sum_{i \in [n]} \sum_{\substack{t_i \in T_i \\ {V_i(t_i)} \geq 2r}} \sum_{\bm{t_{-i}}\in T_{-i}} f((t_i,\bm{t}_{-i}))  \cdot  V_i(t_i) \mathds{1}[(t_i,\bm{t}_{-i}) \not\in R_i]  \\
 \leq  & \sum_{\bm{t}\in T}f(\bm{t}) \cdot v(t,C(\bm{t})) 
 \\
 & + \sum_{i \in [n]} \sum_{\substack{t_i \in T_i \\ {V_i(t_i)} \geq 2r}} \sum_{\bm{t_{-i}}\in T_{-i}} f(\bm{t}_{i}) f(\bm{t}_{-i} \mid t_i)  \cdot  V_i(t_i) \mathds{1}[(t_i,\bm{t}_{-i}) \not\in R_i]  \\
  \leq  & \sum_{\bm{t}\in T}f(\bm{t}) \cdot  v(t,C(\bm{t}))
 \\
 & + \sum_{i \in [n]} \sum_{\substack{t_i \in T_i \\ {V_i(t_i)} \geq 2r}} f(t_{i}) \cdot  V_i(t_i) \sum_{\bm{t_{-i}}\in T_{-i}}  f(\bm{t}_{-i} \mid t_i)  \mathds{1}[(t_i,\bm{t}_{-i}) \not\in R_i]  \\
  \leq  & \sum_{\bm{t}\in T}f(\bm{t}) \cdot v(t,C(\bm{t})) (\core)
 \\
 & + \sum_{i \in [n]} \sum_{\substack{t_i \in T_i \\ {V_i(t_i)} \geq 2r}} f(t_{i}) \cdot  V_i(t_i) \Pr_{\bm{t}'\sim D}\left[\bm{t}'\notin R_i \mid t_i'=t_i \right](\tail)
\end{align*}
The claim follows from the inequality above and Lemma~\ref{lemma:benchmark_XOS_first_step}.
\end{prevproof}
\section{Lower Bound: Polynomial Dependence on $\Delta$}\label{sec:LB_Delta_poly}

In this section,
we prove that for sufficiently large values $m\in \mathbb{N}$,
there exists an  type distribution represented by a MRF with maximum weighted degree $O(m)$,
such that the optimal revenue is at least $\Omega(m^{1/7})$ times the maximum revenue achieved by simple mechanisms.

To prove this statement,
we first modify the construction of Hart and Nisan \cite{HARTN_2019},
where they prove the following Theorem.
We present a high-level idea of the proofs in this section.

\begin{lemma}[Theorem~C from \cite{HARTN_2019}]
There exists a two item correlated distribution $D$ and a constant $c>0$,
such that for any $m \in \mathbb{N}$,
when a buyer with additive valuation is sampled from $D$,  $\frac{\rev(D)}{\brev(D)}\geq c \cdot m^{1/7}$.
\end{lemma}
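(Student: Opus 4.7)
The plan is to construct an explicit correlated two-item distribution on a support of size polynomial in $m$, then carefully tune parameters so that the marginals behave like (or dominate) equal-revenue-style distributions while the joint distribution encodes enough hidden structure for a richer mechanism to exploit. Concretely, I would take support points of the form $(a_i, b_i)$ for $i = 1,\ldots, N$ with $N = \Theta(m^{\alpha})$ for some exponent $\alpha$ to be determined, where the $a_i$ are geometrically spaced and the $b_i$ are designed so that $(a_i, b_i)$ has probability $q_i$ satisfying $q_i \cdot (a_i + b_i)$ is roughly constant. The first goal is to show that with this support structure, the marginals $D_1$ and $D_2$ can be chosen close to equal-revenue distributions, so that by the standard Myerson argument $\srev(D) = O(1)$.

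The second and more delicate step is to bound $\brev(D)$. This requires controlling the distribution of $t_1 + t_2$: I would argue that, due to the designed anti-correlation (roughly, large $a_i$ is paired with small $b_i$ and vice versa), the CDF of $t_1 + t_2$ decays fast enough that no posted grand-bundle price yields revenue more than a constant. This bounding step is where the precise probabilities $q_i$ and values $(a_i, b_i)$ must be balanced; in particular, I expect to use a Paley--Zygmund-type argument or direct summation over posted bundle prices to establish $\brev(D) = O(1)$.

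Finally, I would exhibit a (possibly randomized) mechanism whose menu has $\Omega(m^{1/7})$ active options, each targeted at a distinct type, so that the total expected revenue scales as $m^{1/7}$. The construction uses the correlation to effectively perform type-dependent price discrimination: since $b_i$ is a (known) function of the realized $a_i$, a menu of bundles priced appropriately allows each type $(a_i,b_i)$ to be extracted nearly fully, yielding $\sum_i q_i \cdot \Theta(a_i + b_i) = \Omega(m^{1/7})$ after optimizing $\alpha$. The exponent $1/7$ emerges from balancing three competing constraints --- (a) the equal-revenue structure of marginals, (b) the tail decay needed for small $\brev$, and (c) the menu-size lower bound for $\rev$ --- and is the solution to the resulting optimization in $\alpha$.

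The main obstacle I anticipate is step two: simultaneously achieving both $\srev(D) = O(1)$ and $\brev(D) = O(1)$ is subtle, since most constructions that make bundling weak also break the marginal equal-revenue property, and vice versa. Overcoming this requires a carefully designed non-product distribution where the anti-correlation is strong enough to flatten the bundle CDF but does not destroy the per-coordinate tail behavior --- essentially a tension that must be resolved by optimizing the free parameters, and that ultimately dictates the $m^{1/7}$ exponent rather than a larger or smaller power.
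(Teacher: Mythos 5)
The statement you are proving is a cited result (Hart--Nisan Theorem~C); the paper does not give its own proof, but it does describe the construction in detail when modifying it in Lemmas~\ref{lem:hart_nisan_point}--\ref{lem:hart_nisan_mod_result}, so a comparison is possible. Your sketch captures the outer arc correctly --- make the bundle value $a_i+b_i$ equal-revenue-like so that $\brev(D)=O(1)$, then build a rich menu that extracts much more --- but the central step is wrong or missing.

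The problem is your claim that ``a menu of bundles priced appropriately allows each type $(a_i,b_i)$ to be extracted nearly fully.'' This is false under incentive compatibility: if each type $i$ could be charged $\Theta(a_i+b_i)$ and $q_i(a_i+b_i)\approx c$ for all $i$, the revenue would be $\Theta(Nc)$, i.e.\ linear in the number of support points, contradicting the very $\brev=O(1)$ bound you just established (and also contradicting the known $O(\log m)$ upper bound on $\rev/\brev$). The whole difficulty in Hart--Nisan is quantifying \emph{how much} of $a_i+b_i$ is extractable once IC constraints from all lower types are imposed, and this is governed by the quantity $gap_i=\min_{j<i}(g_i-g_j)\cdot y_i$: Proposition~7.1 gives $\rev/\brev\geq(1-\varepsilon)\sum_i gap_i/\|y_i\|_1$, so the whole exercise reduces to making $\sum_i gap_i$ large. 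The exponent $1/7$ does not emerge from some generic balancing of $\alpha$; it is the output of the specific geometric shell construction (place $N^{3/4}$ directions on the $N$-th shell of radius $\propto\sum_{k\le N}k^{-3/2}$ with angular separation $\Theta(N^{-3/4})$, giving $gap_i=\Theta(N^{-3/2})$ for a point in shell $N$; $m$ points live in shells up to $K=\Theta(m^{4/7})$, and $\sum_i gap_i=\Theta(\sum_{N\le K}N^{3/4}\cdot N^{-3/2})=\Theta(K^{1/4})=\Theta(m^{1/7})$). Your ``anti-correlation'' picture, where $a_i$ is geometrically spaced and $b_i$ is a decreasing function of $a_i$, is a one-parameter family of types and does not encode the two-dimensional angular separation that makes the $gap_i$ quantities sum to $\Omega(m^{1/7})$. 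Without the shell/gap structure and the $\sum gap_i$ lower bound, the argument does not go through.
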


Their construction,
relies on the following lemma.

\begin{lemma}[Proposition 7.5. from \cite{HARTN_2019}]\label{lem:hart_nisan_point}
Let $\{g_i\}_{ 0 \leq i \leq m} \in [0,1]^n$ and $\{y_i\}_{0 \leq i \leq m} \in \mathbb{R}_+^n$ be two sequences of $m+1$ points,
such that $g_0 = (0,\ldots,0)$.
For $i \geq 1$ we define:
\begin{align*}
    gap_i := \min_{0 \leq j < i} (g_i - g_j)\cdot y_i
\end{align*}

%By $(g_i - g_j)\cdot y_i$ we mean the inner product of vector $g_i - g_j$ and vector $y_i$.

For any $m\in \mathbb{N}$,
there exists a sequence $\{ g_i \}_{0 \leq i\leq m}$ in $[0,1]^2$ such that $g_0=(0,0)$ and for each $1 \leq i \leq m$, $||g_i||_2 \leq 1$.
Moreover, if we set $y_i=g_i$ for all $ 0 \leq i \leq m$,
then $gap_i = \Omega\left(i^{-6/7}\right)$.
\end{lemma}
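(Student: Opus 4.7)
This statement is Proposition 7.5 of Hart and Nisan~\cite{HARTN_2019}, so my plan is to invoke their construction directly rather than reprove it. To build intuition, I would first unpack the specialization $y_i = g_i$: here $(g_i - g_j) \cdot y_i = \|g_i\|^2 - g_j \cdot g_i$ is the residual when $g_j$ is projected onto the direction of $g_i$. Thus the lemma asks for a sequence of vectors in the quarter unit disc such that each vector, measured along its own direction, exceeds every earlier vector by $\Omega(i^{-6/7})$.

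Next, I would describe the geometric construction. The points are placed along a convex curve in the quarter disc starting at $g_0=(0,0)$, with non-uniform spacing: earlier points $g_1, g_2, \ldots$ are placed with large separation, and later points are packed more densely in the remaining directions. A Taylor expansion of $g_j$ about $g_i$ along the curve shows that $(g_i - g_j)\cdot g_i$ decomposes (up to higher-order terms) into a first-order radial contribution $T_i \cdot g_i$ times the arc-length gap and a second-order angular contribution proportional to the square of the arc-length gap; the minimizing $j$ over $j < i$ turns out to be one of the curve-neighbours of $g_i$. The $\Omega(i^{-6/7})$ bound then emerges from balancing the total angular budget (bounded by $\pi/2$), the total radial budget (bounded by $1$ since $\|g_i\| \leq 1$), and the per-step gap requirement over $m$ points. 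Setting the three asymptotic rates equal and solving yields angular and radial increments of order $i^{-3/7}$, and hence a quadratic gap of order $i^{-6/7}$, which is the tight rate for this family of constructions.

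The main obstacle, were one to re-derive the statement from scratch, is to pin down the exact extremal curve and parametrization achieving the tight exponent $-6/7$ and to check that the minimum over all $j < i$ (not just curve-neighbours) is indeed attained where Taylor expansion predicts. This is the technical heart of the Hart--Nisan construction. Since our lemma is a direct quotation of their Proposition 7.5, my plan is to cite~\cite{HARTN_2019} for the construction and only verify that the objects they produce satisfy the constraints ($g_0=(0,0)$, $\|g_i\|_2 \leq 1$, and $y_i = g_i$) in the notation used here.
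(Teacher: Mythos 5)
Your plan to invoke Hart--Nisan's Proposition~7.5 directly is exactly what the paper does: the lemma is stated as a verbatim citation and the paper gives no proof, so a citation with a check of the hypotheses is the intended argument. One small caveat: the geometric intuition you sketch (a single convex curve with per-step Taylor expansion and a three-way balance yielding increments of order $i^{-3/7}$) does not quite match Hart--Nisan's actual construction, which places points on concentric \emph{shells} -- the $N$-th shell has radius $\frac{\sum_{k=1}^N k^{-3/2}}{\sum_{k=1}^\infty k^{-3/2}}$ and contains $N^{3/4}$ points at angular spacing $\Theta(N^{-3/4})$, so that a point at index $i$ lies in shell $N = \Theta(i^{4/7})$ and the gap is $\Theta(N^{-3/2}) = \Theta(i^{-6/7})$ from either the angular (quadratic in $\theta$) or radial (linear) term. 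Since you explicitly defer to the cited construction rather than your sketch, this does not affect correctness, but if you ever want to reuse the intuition (as the paper does in its modified Lemma on $\mathit{gap}_i = \Theta(i^{-6/7})$), you would need the shell picture, not the single-curve one.
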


Their construction is developed inductively,
by placing points on ``shells'' of fixed radius.
More specifically, in the $N$-th shell, whose radius is $\frac{\sum_{i=1}^Ni^{-3/2}}{\sum_{i=1}^{\infty}i^{-3/2}}$, 
they place $N^{3/4}$ points so that the angle between any pair of points in the same shell is $\Omega(N^{-3/4})$.
They observed that for all points (except $g_0$),
$||g_i||_2 = \Theta(1)$,
since $\sum_{i=1}^{\infty}{i^{-3/2}}= \Theta(1)$ and $\min_{1 \leq i\leq m}||g_i||_2 = ||g_1||_2 = \Theta(1)$.
%We want to prove the following property.
In their proof,
they only needed a lower bound for each $gap_i$,
but we also need an upper bound.
Lemma~\ref{lem:hart_nisan_mod} provides us with that bound.

\begin{lemma}\label{lem:hart_nisan_mod}
In the construction of the set of points $\{g_i\}_{0 \leq i \leq m}$ in Proposition~7.5 in \cite{HARTN_2019}, if:
\begin{itemize}
\item we place the first point of each shell in the same line that passes through the origin $(0,0)$
\item for the $N$-th shell, whose radius is $\frac{\sum_{i=1}^Ni^{-3/2}}{\sum_{i=1}^{\infty}i^{-3/2}}$, and any point $i$ in that shell (except the first point of that shell), there exists another point $j < i$ in that shell such that the angle between them is $\Theta(N^{-3/4})$,
\end{itemize}
then if we consider $\{y_i\}_{0\leq i \leq m}=\{g_i\}_{0\leq i \leq m}$,
for each point $i$ that is in the $N$-th shell,
we have that $gap_i = \Theta(N^{-3/2}) = \Theta(i^{-6/7})$.
\end{lemma}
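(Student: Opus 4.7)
The plan is to establish matching upper and lower bounds on $gap_i$ and convert them from the shell index $N$ to the point index $i$. I would begin by recording the two quantitative facts that drive everything: since $\sum_{k=1}^\infty k^{-3/2}$ converges and $r_1>0$, the shell radius $r_N := \frac{\sum_{k=1}^N k^{-3/2}}{\sum_{k=1}^\infty k^{-3/2}}$ satisfies $r_N = \Theta(1)$ and $r_N - r_{N-1} = \Theta(N^{-3/2})$. Moreover, the number of points in the first $N$ shells is $\sum_{k=1}^N k^{3/4} = \Theta(N^{7/4})$, so a point $i$ lying in the $N$-th shell satisfies $N = \Theta(i^{4/7})$, giving the equivalence $N^{-3/2} = \Theta(i^{-6/7})$.

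For the upper bound $gap_i = O(N^{-3/2})$, I split into two cases for $i$ in shell $N$. If $i$ is not the first point of its shell, the second modification supplies some $j < i$ in the same shell whose angular separation from $g_i$ is $\theta = \Theta(N^{-3/4})$; then $(g_i - g_j)\cdot g_i = r_N^2(1-\cos\theta) = \Theta(r_N^2 \theta^2) = \Theta(N^{-3/2})$. If $i$ is the first point of shell $N$ (with $N\geq 2$), the first modification places $g_i$ on the same line through the origin as the first point $g_j$ of shell $N-1$, so $g_j = (r_{N-1}/r_N) g_i$ and $(g_i - g_j)\cdot g_i = r_N(r_N - r_{N-1}) = \Theta(N^{-3/2})$. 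Taking the minimum over all prior points, $gap_i = O(N^{-3/2})$ in both cases.

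For the matching lower bound $gap_i = \Omega(N^{-3/2})$, I would reuse the argument from Proposition~7.5 of \cite{HARTN_2019}. That argument rests only on the $\Omega(N^{-3/4})$ pairwise angular separation of same-shell points and the $\Theta(N^{-3/2})$ radial gap $r_N - r_{N-1}$, with cross-shell contributions controlled by the radial gap alone. The modifications only add constraints, so I would show they can be realized \emph{simultaneously} with the required separation by the explicit construction: place the $N^{3/4}$ shell-$N$ points at equal angular spacing $2\pi/N^{3/4} = \Theta(N^{-3/4})$ around the shell, with the first point pinned to the fixed line through the origin. This automatically guarantees pairwise angular separation $\Omega(N^{-3/4})$ and, for every non-first point, a same-shell neighbor at angular distance $\Theta(N^{-3/4})$, validating both bullets of the modified construction. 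Combining the two bounds yields $gap_i = \Theta(N^{-3/2}) = \Theta(i^{-6/7})$.

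The main obstacle will be showing that the two modifications are jointly realizable without degrading the $\Omega(N^{-3/4})$ separation that the Hart-Nisan lower bound relies on; the explicit uniform-spacing construction above resolves this at once. A minor secondary issue is handling the base cases ($N=1$ and the first point overall, for which $gap$ is comparable to $r_1 = \Theta(1)$), which fits into the $\Theta(N^{-3/2})$ bound since $N=\Theta(1)$ there.
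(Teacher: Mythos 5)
Your proof is correct and follows essentially the same route as the paper's: the same two-case upper bound (first point of a shell via the collinear previous-shell point, non-first points via the same-shell neighbor at angle $\Theta(N^{-3/4})$), the same invocation of the original Hart--Nisan argument for the matching lower bound, and the same shell-index-to-point-index conversion $N = \Theta(i^{4/7})$. The explicit equal-spacing construction you give to certify joint realizability is a nice addition but not a departure -- the paper simply asserts this is easy to arrange.
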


\begin{proof}
First we note that there is no restriction that prevents us from placing the first point of each shell in the same line.
This is because different shells,
have different radius so there is no way two points coincide.
It is also trivial to ensure that the for point $i$ in the $N$-th shell,
which is not the first point in the shell,
there exists a point $j<i$ in the $N$-th shell such that the angle between them is $\Theta(N^{-3/4})$.
We note that the only assumption about the set of points that was made in the proof of Proposition~7.5 in \cite{HARTN_2019} (Lemma~\ref{lem:hart_nisan_point}),
was that for the $N$-th shell, the points are placed in a semicircle of the radius we described above and that the angle between any pair of points is $\Omega(N^{-3/4})$.
Therefore the result of Lemma~\ref{lem:hart_nisan_point} applies to this point set too.

{
Now we are going to prove that for the first point $i^*$ in the $N$-th shell,
it holds that $gap_{i^*} = \Theta(N^{-3/2})$.}
Let $j^*$ be the first point of the $N-1$ shell and $i^*$ be the first point of the $N$-th shell,
then $gap_{i^*} \leq (g_{i^*} -g_{j^*})\cdot g_{i^*}=(||g_{i^*}||_2 - ||g_{j^*}||_2)||g_{i^*}||_2 =  \frac{N^{-3/2}}{\sum_{i=1}^{\infty}i^{-3/2}}||g_{i^*}||_2 = O(N^{-3/2})$,
where the first equality holds because point $i^*$ and point $j^*$ lie in the same line that passes through the origin.
Since the results of Lemma~\ref{lem:hart_nisan_point} holds here,
we have that $gap_{i^*} = \Theta(N^{-3/2})$. 

%(Drop the next sentance)
%By reading the proof of Proposition~7.5 in \cite{HARTN_2019},

Next,
we deal with the case where point $i$ is not the first point of the $N$-th shell. In the proof of Proposition~7.5 in \cite{HARTN_2019},
they noted that for two points that have angle $\theta$ between them, it holds that $\cos(\theta) = 1 - \Omega(\theta^2)$.
We note that for $\theta < \frac{\pi}{2}$, 
we can similarly prove that $\cos (\theta)= 1 - \Theta(\theta^2)$ using the Taylor expansion of $\cos (\theta)$.
%Similar to them, we use the fact that for two points $j<i$ in the same shell that have angle $\Theta(N^{-3/4})$ it holds that $\cos(N^{-3/4}) = 1 - \Theta(N^{-3/2})$.
For any point $j$ that is not the first point in the shell, there is another point $i<j$ such that the two point have an angle $\theta=\Theta(N^{-3/4})$.
Since $||g_j||_2 =\Theta(1)$,
we can conclude that $gap_i \leq (g_i-g_j)\cdot g_i = ||g_i||_2^2 - ||g_j||_2||g_i||_2\cos(\theta) = ||g_i||_2^2 \Theta(N^{-3/2}) = \Theta(N^{-3/2})$. Finally, since the $N$-th shell contains $N^{3/4}$ points, if point $i$ belongs to the $N$-th shell, then $i=\Theta(N^{7/4})$. Hence, $gap_i  = \Theta(i^{-6/7})$.
%(Maybe we could replace the next sentence with: "If the $i$-th point belongs in the $N$-th shell, we have that $gap_i \leq O(N^{-3/4})$,
%and since the $j$-th shell has $j^{3/4}$, points, then $i = \Theta(N^{7/4})$. Thus $gap_i = O(N^{-3/2})=O(i^{-6/7})$. Combining with the fact that $gap_i \geq \Omega(i^{-6/7})$ we can conclude that $gap_i  = \Theta(N^{-3/2})$.)Moreover since for any point $i$, $gap_i \geq \Omega(N^{-3/2})$, we can conclude that $gap_i  = \Theta(N^{-3/2})$.
\end{proof}

\begin{lemma}[Modified version of Theorem~C from \cite{HARTN_2019}]\label{lem:hart_nisan_mod_result}
For any sufficiently large $m \in \mathbb{N}$,
there exists a two item correlated distribution $D$ with support $T= \supp(D)$ and an absolute constant $C >2$ such that $\inf_{\vT \in \supp(D)}f(\vT)\geq C^{-m}$.
Moreover, when a buyer with additive valuation is sampled from $D$,
then there exists another absolute constant $c>0$ such that $\frac{\rev(D)}{\brev(D)}\geq c\cdot m^{1/7}$.
\end{lemma}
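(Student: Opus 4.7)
The plan is to recycle the Hart--Nisan construction from their Theorem~C almost verbatim, but instantiate it with the refined point set $\{g_i\}_{0\le i\le m}$ produced by Lemma~\ref{lem:hart_nisan_mod}. The new ingredient we need from Lemma~\ref{lem:hart_nisan_mod} is the two--sided estimate $gap_i = \Theta(i^{-6/7})$; Hart--Nisan only needed the lower bound $\Omega(i^{-6/7})$, and it is precisely the extra upper bound that will keep every probability in the support bounded away from $0$ by an inverse polynomial (hence by $C^{-m}$).

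First I would recall the Hart--Nisan recipe. Given $\{g_i\}_{0\le i\le m}\subset [0,1]^2$ and setting $y_i=g_i$, their construction associates each index $i\in\{0,1,\dots,m\}$ to a specific two--item type $\vT^{(i)}\in\mathbb{R}_+^2$ built from $g_i$ and $gap_i$, assigns $\vT^{(i)}$ a probability mass $p_i$, and certifies optimality of a menu of deterministic lotteries that offers allocation $g_i$ at price $g_i\cdot y_i=\|g_i\|_2^2$. The revenue of this menu is $\Theta\!\bigl(\sum_i p_i\cdot gap_i\bigr)$, and $\{p_i\}$ are chosen so that this sum is $\Omega(m^{1/7})$ while every single--price bundle mechanism and every coordinatewise posted--price mechanism extracts only $O(1)$ revenue. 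All of these revenue estimates depend only on the lower bound $gap_i = \Omega(i^{-6/7})$, which our modified point set inherits from Lemma~\ref{lem:hart_nisan_mod}, and on $\|g_i\|_2=\Theta(1)$, which Lemma~\ref{lem:hart_nisan_mod} also preserves. So $\rev(D)/\brev(D)\ge c\cdot m^{1/7}$ for some absolute constant $c>0$ follows from their analysis transplanted verbatim.

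Second, and this is the only genuinely new step, I would read off the lower bound on the probabilities. In the Hart--Nisan construction the probability $p_i$ scales as a fixed polynomial in $gap_i$ and $\|g_i\|_2$ (normalized so that $\sum_i p_i=1$). Using the two--sided estimate $gap_i=\Theta(i^{-6/7})$ from Lemma~\ref{lem:hart_nisan_mod} and $\|g_i\|_2=\Theta(1)$, each $p_i$ is at least $m^{-O(1)}$. Since for any absolute constant $C>2$ one has $m^{-O(1)}\ge C^{-m}$ for all sufficiently large $m$, we conclude $\inf_{\vT\in\supp(D)} f(\vT)\ge C^{-m}$, which is the first half of the statement.

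The main obstacle is bookkeeping rather than novelty: one has to verify that every inequality in the original Hart--Nisan proof of Theorem~C that was expressed in terms of $gap_i$ continues to be satisfied when the gaps are not only $\Omega(i^{-6/7})$ but also $O(i^{-6/7})$ (in particular that the normalization constant of $\{p_i\}$ is $\Theta(1)$ and that the optimal deterministic menu revenue is unchanged up to constants). Once this is done, combining the two observations above yields both the minimum--probability lower bound $C^{-m}$ and the revenue ratio $c\cdot m^{1/7}$, completing the proof. Note that this is exactly what is needed to feed into the MRF construction of Theorem~\ref{thm:poly_dependence_Delta}: the bound $f(\vT)\ge C^{-m}$ will translate, via the argument in Lemma~\ref{lemma:mix_dobrushin}, into a maximum weighted degree of $O(m)$ for the MRF representation of $D$.
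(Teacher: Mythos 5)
Your high-level plan is right (re-run Hart--Nisan with the refined point set, then read off a lower bound on the support probabilities), but the crucial second step contains a concrete error. You claim that the probability $p_i$ assigned to type $x_i$ "scales as a fixed polynomial in $gap_i$ and $\|g_i\|_2$", and deduce $p_i \geq m^{-O(1)}$. This is false. In Hart--Nisan's Proposition~7.1 the probability of type $x_i$ is $p_i = \frac{\xi_1}{\xi_i} - \frac{\xi_1}{\xi_{i+1}}$ where $\xi_i = \|x_i\|_1 = t_i\|y_i\|_1/gap_i$, and the auxiliary sequence $\{t_i\}$ is required to grow at least geometrically (property~(b): $t_{i+1}/t_i \geq 1/\varepsilon$). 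Consequently $\xi_i$ grows at least geometrically and $p_i$ decays at least geometrically in $i$, so no polynomial lower bound holds. The target $C^{-m}$ \emph{is} exponentially small in $m$, and the proof must embrace this rather than sidestep it.

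The missing idea, and what the paper actually does, is the following. It is not enough that $p_i$ is not too small \emph{given} $gap_i = \Theta(i^{-6/7})$; one must also control the freedom in choosing $\{t_i\}$. Hart--Nisan only impose that $\xi_i$ is increasing, which is a one-sided constraint on $\xi_i/\xi_{i-1}$ and is compatible with $\xi_{i+1}$ being arbitrarily close to $\xi_i$, which would make $p_i = \xi_1/\xi_i - \xi_1/\xi_{i+1}$ arbitrarily small. The paper therefore strengthens property~(a) to ``$\xi_i/\xi_{i-1}\in[2,C]$'' and shows that such a $\{t_i\}$ still exists: since $\frac{\xi_i}{\xi_{i-1}} = \frac{t_i}{t_{i-1}}\cdot\frac{\|g_i\|_1}{\|g_{i-1}\|_1}\cdot\frac{gap_{i-1}}{gap_i}$, the two-sided estimate $gap_i=\Theta(i^{-6/7})$ together with $\|g_i\|_1=\Theta(1)$ gives $\xi_i/\xi_{i-1}=\Theta(t_i/t_{i-1})$, so choosing $t_i/t_{i-1}$ to be a fixed constant (large enough to cover both the $\geq 2$ lower bound and property~(b) with $\varepsilon=1/2$) forces $\xi_i/\xi_{i-1}\in[2,C]$ for an absolute $C$. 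Then $p_i \geq \xi_1/\xi_{i+1} \geq C^{-i} \geq C^{-m}$. Your proposal skips the key observation that a strengthened, \emph{two-sided} constraint on $\xi_i/\xi_{i-1}$ is needed and must be shown to be achievable -- this is exactly where the upper bound on $gap_i$ from Lemma~\ref{lem:hart_nisan_mod} is used, and it yields an exponential (not polynomial) probability lower bound.
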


\begin{proof}
%By reading the proof of Theorem~C in \cite{HARTN_2019}, we can see that the constructed distribution is derived by Proposition~7.1 in \cite{HARTN_2019} using the sequence $\{g_i\}_{0 \leq i\leq m}$ constructed in Proposition~7.5 in \cite{HARTN_2019}. 

Given any sequences $\{g_i\}_{0 \leq i\leq m}$ and $\{y_i\}_{0 \leq m}$ and a target value $\varepsilon$,
Proposition~7.1 in \cite{HARTN_2019} constructs the following distribution $D$: 
(i) Construct a sequence of positive numbers $\{t_i\}_{1\leq i\leq m}$ that increases fast enough, so that (a) $\xi_i:=||x_i||_1$ is increasing, where $x_i:=\frac{t_i y_i}{gap_i}$ and (b) $\frac{t_{i+1}}{t_i}\geq \frac{1}{\varepsilon}$; (ii) The buyer has type $x_i$ with probability $\frac{\xi_1}{\xi_i}- \frac{\xi_1}{\xi_{i+1}}$. Proposition~7.1 shows that for any choice of $\{t_i\}_{1\leq i\leq m}$ that satisfies property (a) and (b) in step (1) of the construction, the corresponding distribution $D$ has $\frac{\rev(D)}{\brev(D)} \geq (1-\varepsilon) \sum_{i=1}^m\frac{gap_i}{||y_i||_1}$.

%such that $\frac{\rev(D)}{\brev(D)} \geq (1-\epsilon) \sum_{i=1}^m\frac{gap_i}{||y_i||_1}$. Let $x_i = \frac{t_i y_i}{gap_i}$ and $\xi_i = ||x_i||_1$ and $\xi_{m+1}=+\infty$, where $t_i$ are some parameters defined so that the sequence $\xi_i$ is increasing and $\frac{t_{i+1}}{t_{i}}\geq\frac{1}{\epsilon}$ for $i\geq 1$.  In the constructed distribution, the agent has type $x_i$ with probability $\frac{\xi_1}{\xi_i}- \frac{\xi_1}{\xi_{i+1}}$.

If we choose $\{g_i\}_{1\leq i\leq m}$ and $y_i=g_i $ for all $i\in[m]$ as in Lemma~\ref{lem:hart_nisan_mod} and $\{t_i\}_{1\leq i\leq m}$ that satisfies property $(a)$ and $(b)$, it is not hard to verify that $\sum_{i=1}^m\frac{gap_i}{||y_i||_1}=\Omega(m^{1/7})$. Hence, the gap between $\rev(D)$ and $\brev(D)$ is as stated in the claim. The problem with this construction is that we cannot lower bound the probability that the rarest type shows up, as $\xi_i$ and $\xi_{i+1}$ can be very close to each other. To fix this issue, we modify the construction by replacing property (a) with a strengthened property (a*)~$\frac{\xi_i}{\xi_{i-1}}\in [2,C]$ for all $i>1$, where $C$ is an absolute constant that will be determined later.  

We first argue that if (a*) is satisfied, then the rarest type shows with sufficiently large probability. More specifically, type $x_i$ shows up with probability $$\frac{\xi_1}{\xi_i}- \frac{\xi_1}{\xi_{i+1}}\geq \frac{\xi_1}{\xi_{i+1}}\geq C^{-i}.$$ 

Next, we argue that for the sequences $\{g_i\}_{1\leq i\leq m}$ and $\{y_i\}_{1\leq i\leq m}$ as described in Lemma~\ref{lem:hart_nisan_mod}, there exists a sequence $\{t_i\}_{1\leq i\leq m}$  that satisfies (a*) and (b). Note that 

\begin{align*}
    \frac{\xi_i}{\xi_{i-1}} = \frac{t_i}{t_{i-1}} \frac{||g_i||_1}{||g_{i-1}||_1}\frac{gap_{i-1}}{gap_i}
\end{align*}

By the definition of $\{g_i\}_{1\leq i\leq m}$, each point $g_i$ is placed in a shell of radius at least $\frac{1}{\sum_{i=1}^\infty i^{-3/2}}$, so $||g_i||_2=\Theta(1)$ and $\frac{||g_i||_2}{||g_{i-1}||_2}=\Theta(1)$. 
Since $g_i\in [0,1]^2$, $||g_i||_1=\Theta(||g_i||_2)$, which implies that $\frac{||g_i||_1}{||g_{i-1}||_1}=\Theta(1)$. According to Lemma~\ref{lem:hart_nisan_mod}, $gap_i=\Theta(N^{-3/2})$ if $i$ belongs to the $N$-th shell, so $\frac{gap_{i-1}}{gap_i}=\Theta(1)$. Hence, there exists two positive absolute constants $C_1$ and $C_2$ such that $C_1\cdot \frac{t_i}{t_{i-1}}\leq \frac{\xi_i}{\xi_{i-1}} \leq C_2\cdot \frac{t_i}{t_{i-1}}$. For the rest of the proof, we take $\varepsilon$ to be $1/2$. If we choose $\{t_i\}_{1\leq i\leq m}$ such that $\frac{t_i}{t_{i-1}}$ to be $\max\{2/C_1, 2\}$ for all $i>1$, $\frac{\xi_i}{\xi_{i-1}}\in [2,C]$ for some absolute constant $C$. As the construction above satisfies both property (a*) and (b), we have $$\frac{\rev(D)}{\brev(D)} \geq \frac{1}{2} \sum_{i=1}^m\frac{gap_i}{||y_i||_1}=\Omega (m^{1/7})$$ for the induced distribution $D$.

\notshow{
By carefully going over the proof,
we can verify that also setting the values of $t_i$ to also satisfy that $\xi_i\geq 2 \xi_{i-1}$,
does not affect the proof.
From now on,
we also make the assumption that $\xi_i \geq 2\xi_{i-1}$ \argyrisnote{ and that we set $t_i$ to the minimum value such that the two constraints $\frac{t_i}{t_{i-1}}\geq \frac{1}{\epsilon}$ and that $\xi_i\geq 2 \xi_{i-1}$,
are both satisfied.}
Now we are going to prove that if we use the sequence $\{g_i\}_{0 \leq i \leq m}$ as described in Lemma~\ref{lem:hart_nisan_mod} and $\{y_i\}_{0 \leq i \leq m} = \{ g_i \}_{0 \leq i \leq m}$,
then there exists a constant $C>0$ such that $\xi_i \leq C \xi_{i-1}$.
Let point $g_i$ belong in  the $N$-th shell:

\begin{align*}
    \frac{\xi_i}{\xi_{i-1}} = \frac{t_i}{t_{i-1}} \frac{||g_i||_1}{||g_{i-1}||_1}\frac{gap_{i-1}}{gap_i}
\end{align*}

By Lemma~\ref{lem:hart_nisan_mod},
if point $j$ belongs in the $M$-th shell,
then $gap_j = \Theta(M^{-3/2})$.
Since $i$ belongs in the $N$-th shell,
then $i-1$ either belongs in the $N$-th shell, or in the $N-1$ shell.
In both cases $\frac{gap_{i-1}}{gap_i} = \Theta(1)$.
As we have noted $||g_i||_2 = \theta(1)$ and $||g_{i-1}||_2 = \theta(1)$,
therefore $\frac{||g_i||_2}{||g_{i-1}||_2} = \Theta(1)$.

Therefore there exists an absolute constant $c'$ such that for all $2 \leq j \leq m$, $\frac{\xi_j}{\xi_{j-1}} \geq c'\cdot\frac{t_j}{t_{j-1}}$.
\argyrisnote{
We remind that the parameters $t_i$ are set to their minimum value so that the two constraints described in the paragraph above are satisfied.
}
As long as $\frac{t_i}{t_{i-1}}\geq \frac{1}{c'}$,
then sequence $\xi_i\geq \xi_{i-1}$.
By setting $\epsilon=1/2$,
the other constraint becomes $\frac{t_i}{t_{i-1}}\geq \frac{1}{\epsilon}=2$.
Thus $\frac{t_i}{t_{i-1}}\leq \max\left(\frac{1}{c'},2\right) = \Theta(1)$.
Therefore we can conclude that there exists a constant $C>0$ such that $\xi_i \leq C \xi_{i-1}$.

By choosing the value of $t_1$ so that $\xi_1 = 1$,
the probability that agent has type $x_i$ becomes $\frac{1}{\xi_i} - \frac{1}{\xi_{i+1}}$.
Since we assumed that $\xi_{i+1} \geq 2\xi_i$,
then $\frac{1}{\xi_i} - \frac{1}{\xi_{i+1}}\geq \frac{1}{2\xi_i}$.
Since $\frac{\xi_i}{\xi_{i-1}} \leq C$,
then $\frac{1}{\xi_i} \geq \frac{1}{C^i}$,
which concludes the proof.
\argyrisnote{By combining the previous two inequalities,
we can infer that the probability that the agent has any specific type $x$ is at least $\frac{1}{2C^m}$.}}
\end{proof}

\notshow{
\argyrisnote{
Argyris: proof by bounding the highest value-------------
I have not finished it yet and it seems to me more complicated than the previous proof---------
\begin{proof}
By reading the proof of Theorem~C in \cite{HARTN_2019},
we can see that the constructed distribution is derived by Proposition~7.1 in \cite{HARTN_2019} with the sequence $\{g_i\}_{0 \leq i\leq m}$ like the one in Proposition~7.1 in \cite{HARTN_2019}.

Given the sequences $\{g_i\}_{0 \leq i\leq m}$ and $\{y_i\}_{0 \leq m}$ and a value $\epsilon$,
Proposition~7.1 constructs a distribution $D$,
such that $\frac{\rev(D)}{\brev(D)} \geq (1-\epsilon) \sum_{i=1}^m\frac{gap_i}{||y_i||_1}$.
Let $x_i = \frac{t_i y_i}{gap_i}$ and $\xi_i = ||x_i||_1$ and $\xi_{m+1}=+\infty$,
where $t_i$ are some parameters defined so that the sequence $\xi_i$ is increasing and $\frac{t_i}{t_{i-1}}\geq\frac{1}{\epsilon}$. 
In the constructed distribution,
the agent has type $x_i$ with probability $\frac{\xi_1}{\xi_i}- \frac{\xi_1}{\xi_{i+1}}$.

Now we are going to prove that if we use the sequence $\{g_i\}_{0 \leq i \leq m}$ as described in Lemma~\ref{lem:hart_nisan_mod} and $\{y_i\}_{0 \leq i \leq m} = \{ g_i \}_{0 \leq i \leq m}$,
then there exists a constant $C>0$ such that $\xi_i \leq C \xi_{i-1}$.
Let point $g_i$ belong in  the $N$-th shell:

\begin{align*}
    \frac{\xi_i}{\xi_{i-1}} = \frac{t_i}{t_{i-1}} \frac{||g_i||_1}{||g_{i-1}||_2}\frac{gap_{i-1}}{gap_i}
\end{align*}

By Lemma~\ref{lem:hart_nisan_mod},
if point $j$ belongs in the $M$-th shell,
then $gap_j = \Theta(M^{-3/2})$.
Since $i$ belongs in the $N$-th shell,
then $i-1$ either belongs in the $N$-th shell, or in the $N-1$ shell.
In both cases $\frac{gap_{i-1}}{gap_i} = \Theta(1)$.
As we have noted $||g_i||_2 = \theta(1)$ and $||g_{i-1}||_2 = \theta(1)$,
therefore $\frac{||g_i||_2}{||g_{i-1}||_2} = \Theta(1)$.

Therefore there exists an absolute constant $c'$ such that for all $2 \leq j \leq m$, $\frac{\xi_j}{\xi_{j-1}} \geq c'\cdot\frac{t_j}{t_{j-1}}$.
Therefore as long as $\frac{t_i}{t_{i-1}}\geq \frac{1}{c'}$,
then $\xi_i\geq \xi_{i-1}$.
By setting $\epsilon=1/2$,
in order for the sequence $t_i$ to also satisfy that $\frac{t_i}{t_{i-1}}\geq \frac{1}{\epsilon}=2$,
it has to be $\frac{t_i}{t_{i-1}}\geq \max\left(\frac{1}{c'},2\right) = \Theta(1)$.
Therefore we can conclude that there exists a constant $C>0$ such that $\xi_i \leq C \xi_{i-1}$.

By choosing the value of $t_1$ so that $\xi_1 = 1$,
then $\xi_i \leq C^i$.
Note that the highest value for the bundle that the agent has is $\max_{0 \leq i\leq m}||x_i||_1=\max_{0 \leq i \leq m} \xi_i = \xi_m$,
since the sequence $\{\xi_i\}_{1 \leq i \leq m}$ is increasing.
Now we consider the distribution $D'$ that has type $x_i$ with with probability $\Pr_{x \sim D}[x=x_i]$ only if $\Pr_{x \sim D}[x=x_i] \geq \frac{1}{2 \xi_m m}$.
Therefore a lower bound on $\inf_{\vT \in \supp(D')} \Pr_{\vT' \sim D'}[\vT' = \vT]$ is $\frac{1}{2 \xi_m m} \geq \frac{1}{2 C^m m}$.
Note that $\rev(D') \geq \rev(D) - \frac{1}{2}$.
This statement is true if we consider the mechanism that achieves the optimal revenue on $D$ and apply it on $D'$ and by noting that we lose at most $\frac{1}{2}$ revenue by the types that are included in $D$ but not $D'$.
Also $\brev(D') \leq \brev(D)$.
This implies the following:
\begin{align*}
    \frac{\rev(D')}{\brev(D')} \geq \frac{\rev(D) - 1/2}{\brev(D)} \geq ....
\end{align*}

\end{proof}

}
}

\begin{prevproof}{Theorem}{thm:poly_dependence_Delta}
By Lemma~\ref{lem:hart_nisan_mod_result},
there exists a type distribution $D$
and constants $c>0,c'>0$ such that $\inf_{\vT \in \supp(D)}\Pr_{\vT' \sim D}[\vT' = \vT] \geq c^m$ and when a buyer has an additive valuation sampled from $D$,
then $\frac{\rev(D)}{\brev(D)}\geq c'\cdot m^{1/7}$.

Using Lemma~\ref{lemma:mix_dobrushin} on $D$ with parameters $\alpha' = \frac{1}{2}$,
we get a MRF $D'$,
such that
its maximum weighted degree is bounded by $\Delta \leq \left|\log\left( \frac{c^{2m}}{2}\right) \right|= |2m \cdot \log(c) -\log(2)| = O (m)$
for sufficiently large $m>0$.
Moreover $\rev(D') \geq \frac{1}{2}\rev(D)$.

Since the marginal distributions of $D$ and $D'$ are the same,
we have that $\srev(D') = \srev(D)$.
We now prove that $\srev(D) \leq 2 \brev(D)$.
Let $rev_1$ be the optimal revenue when we only sell the first item,
and $rev_2$ to be the optimal revenue when we only sell the second item.
We can easily see that $\brev \geq \max(rev_1,rev_2)$.
Since $\srev = rev_1 + rev_2$, we can conclude that $\srev \leq 2 \brev$.

At this moment, we prove that $\brev(D) \leq 2 \srev(D)$.
Let $p^*$ be the price induced by the optimal grand bundle mechanism.
The revenue achieved by posting each item at price $p^*/2$ is a lower bound on $\srev(D)$.
Moreover, if we sell each item at price $p^*/2$,
then we are guaranteed to achieve at least half the revenue induced by $\brev(D)$ and our claim holds.

Thus $\frac{\rev(D')}{\max\{\brev(D'),\srev(D')\}} \geq \frac{1}{2}\frac{\rev(D')}{\srev(D')} \geq \frac{1}{4}\frac{\rev(D)}{\srev(D)} \geq \frac{1}{8}\frac{\rev(D)}{\brev(D)} \geq \frac{c\cdot m^{1/7}}{8}$,
which concludes the proof.
\end{prevproof}

\end{document}